\newcommand{\stoc}{0}
\newcommand{\full}{\text{full version}}
\newcommand{\Full}{\text{Full version}}
\newcommand{\fullcite}{\text{full version}
\cite{LinderRSS25}}
\date{}
\author{Ephraim Linder\thanks{
        Boston University. \texttt{\{ejlinder,sofya,ads22\}@bu.edu}. E.L.\ and A.S.\ were supported in part by NSF awards CCF-1763786 and CNS-2120667. S.R.\ was supported in part by the NSF award DMS-2022446.
    } 
    \and Sofya Raskhodnikova\footnotemark[1] 
    \and Adam Smith\footnotemark[1]\ \footnotemark[2] 
    \and Thomas Steinke\thanks{
        Google DeepMind. \texttt{\{adamdsmith,steinke\}@google.com}}}
\author{Ephraim Linder}
\affiliation{%
  \institution{Boston University}
  \city{Boston, MA}
  \country{United States}}
\email{ejlinder@bu.edu}
\author{Sofya Raskhodnikova}
\affiliation{%
  \institution{Boston University}
  \city{Boston, MA}
  \country{United States}}
\email{sofya@bu.edu}
\author{Adam Smith}
\affiliation{%
  \institution{Boston University}
  \city{Boston, MA}
  \country{United States}}
\email{ads22@bu.edu}
\author{Thomas Steinke}
\affiliation{%
  \institution{Google DeepMind}
  \city{Mountain View, CA}
  \country{United States}}
\email{steinke@google.com}
\title{Privately Evaluating Untrusted Black-Box Functions}
\begin{document}

\ifnum\stoc=0
\maketitle
\fi

\pagenumbering{roman}
\begin{abstract}
We provide tools for sharing sensitive data in situations when the data curator does not know in advance what questions an (untrusted) analyst might want to ask about the data.  
The analyst can specify a program that they want the curator to run on the dataset. We model the program as a black-box function~$f$. 
We study differentially private algorithms, called {\em privacy wrappers}, that, given black-box access to a real-valued function $f$ and a sensitive dataset $x$, output an accurate approximation to $f(x)$. The dataset $x$ is modeled as a 
finite subset of 
a possibly infinite set $\univ$, in which each entry $x$ represents data of one individual. 
A privacy wrapper calls $f$ on the dataset $x$ and on some subsets of $x$ and returns either an approximation to $f(x)$ or a nonresponse symbol $\perp$.  
The wrapper may also use additional information (that is, parameters) provided by the analyst,
but differential privacy is required for {\em all} values of these parameters. Correct setting of these parameters will ensure better accuracy of the privacy wrapper. The bottleneck in the running time of our privacy wrappers is the number of calls to $f$, which we refer to as {\em queries}. Our goal is to design privacy wrappers with high accuracy and small query complexity. 

We introduce a novel setting, called the {\em automated sensitivity detection} setting, where the analyst supplies only the black-box function $f$ and the intended (finite) range of $f$. In contrast, in the previously considered setting, which we refer to as the {\em claimed sensitivity bound} setting, the analyst also supplies additional parameters that describe {\em the sensitivity of} $f$. We design privacy wrappers for both settings and show that our wrappers are nearly optimal in terms of accuracy, locality (i.e., the depth of the local neighborhood of the dataset $x$ they explore), and query complexity. In the {\em claimed sensitivity bound} setting, we  provide the first accuracy guarantees that have no dependence on the size of the universe $\univ$. We also re-interpret and analyze previous constructions in our framework, and use them as comparison points. In addition to addressing the black-box privacy problem, our private mechanisms provide feasibility results for differentially private release of general classes of functions.
\end{abstract}

\ifnum\stoc=1

\begin{CCSXML}
<ccs2012>
   <concept>
       <concept_id>10003752.10003809.10010055</concept_id>
       <concept_desc>Theory of computation~Streaming, sublinear and near linear time algorithms</concept_desc>
       <concept_significance>500</concept_significance>
       </concept>
   <concept>
       <concept_id>10003752.10003809</concept_id>
       <concept_desc>Theory of computation~Design and analysis of algorithms</concept_desc>
       <concept_significance>500</concept_significance>
       </concept>
   <concept>
       <concept_id>10002978</concept_id>
       <concept_desc>Security and privacy</concept_desc>
       <concept_significance>300</concept_significance>
       </concept>
 </ccs2012>
\end{CCSXML}

\ccsdesc[500]{Theory of computation~Streaming, sublinear and near linear time algorithms}
\ccsdesc[500]{Theory of computation~Design and analysis of algorithms}
\ccsdesc[300]{Security and privacy}

\keywords{Data privacy, Differential privacy, Query complexity}
\maketitle

\fi

\ifnum\stoc=0
\newpage
{\small \tableofcontents}
\newpage
\pagenumbering{arabic}
\fi

\section{Introduction}
The goal of this work is to provide tools for sharing sensitive data in situations when the data curator does not know in advance what questions the (untrusted) analyst might want to ask about the data. Instead of putting the analyst through background checks and monitoring their access to data, we would like to provide an automated way for the analyst to interact with the data. We  allow the analyst to specify a program, modeled  as a black-box function~$f$, that they want the curator to run on the dataset. Black-box specification allows analysts to construct arbitrarily complicated programs; it also enables them to obfuscate their programs and thus hide the analysis they intend to perform from their competitors. 

We study differentially private algorithms that, given a sensitive dataset $x$ and black-box access to a real-valued function $f$, output an accurate approximation to $f(x)$.
Each entry in the dataset $x$ comes from some large (finite or countably infinite) universe $\univ$ and represents data of one individual. Let $\dom$ denote the set of finite subsets of $\univ.$
The dataset $x$
is modeled as %
a member of $\dom.$ Two datasets in $\dom$ are neighbors if they differ by the insertion or removal of one element. 
(Our constructions also apply to a more general setting that covers multisets and node-level privacy in graphs; we expand on this in 
\ifnum\stoc=0 \Cref{sec:posets}
\else the full version~\cite{LinderRSS25}
\fi.)
The algorithm run by the data curator calls $f$ on the dataset $x$ and on some {\em subsets} of $x$ and returns either an approximation to $f(x)$ or a nonresponse symbol $\perp$.  Following Kohli and Laskowski \cite{KohliL23}, we call such an algorithm a \emph{privacy wrapper}
if it is differentially private for {\em all} functions $f$.
A privacy wrapper may also use additional information (that is, parameters) provided by the analyst, %
but differential privacy is required for {\em all} values of these parameters. Correct setting of these parameters will ensure better accuracy of the privacy wrapper. The bottleneck in the running time of our privacy wrappers is the number of calls to $f$, which we refer to as {\em queries}. Our goal is to design privacy wrappers with high accuracy and small query complexity.

We introduce a novel setting, called the {\em automated sensitivity detection} setting, where the analyst supplies only the black-box function $f$ and the intended (finite) range of $f$. In contrast, in the previously considered setting \cite{JhaR13,KohliL23,LangeLRV25}, which we refer to as the {\em claimed sensitivity bound} setting, the analyst also supplies additional parameters that describe {\em the sensitivity of} $f$. We design 
the first privacy wrappers in the {\em automated sensitivity detection} setting and new privacy wrappers in the {\em claimed sensitivity bound} setting. We show that our wrappers are nearly optimal in terms of accuracy and locality (i.e., the depth of the local neighborhood of the dataset $x$ they explore, which is a proxy for query complexity). In the {\em claimed sensitivity bound} setting, we  provide the first accuracy guarantees that have no dependence on the size of the universe $\cU$. We also re-interpret and analyze previous constructions in our framework, tailoring the analysis to our setting, and show that our wrappers improve on all previous constructions.
Finally, we prove tight lower bounds for both settings.

In addition to addressing the black-box privacy problem, our private mechanisms provide feasibility results for differentially private release of general classes of functions. Most work on differential privacy considers the easier %
``white box'' setting, when a complete description of the input function $f$ is available. Our results have applications in this setting. We discuss this perspective on our results in \Cref{sec:applications_intro}.

\subsection{Our Contributions}\label{sec:contributions}
Releasing $f(x)$ privately in the black-box setting is especially challenging when the universe of potential data records is large, since it is not even feasible to query $f$ on all the datasets that differ from $x$ by the addition of one data record. Instead, we consider privacy wrappers that query the function $f$ only on large subsets of $x$, and whose accuracy guarantees depend only on the behavior of the function $f$ on such sets. 
Specifically, let $\lambda\in \N$ be the locality parameter. 
Our algorithms query $f$ only on subsets of the input dataset $x$, obtained from $x$ by removing data of at most $\lambda$ individuals.  
Let $\DN_\lambda(x)$, called the  \emph{$\lambda$-down neighborhood of $x$}, denote the collection of all subsets of $x$ of size at least $|x|-\lambda$:
\begin{align}\label{eq:neighborhood-def}
    \DN_\lambda(x) \defeq \{z\subseteq x: |x\setminus z|\leq \lambda\} \, .
\end{align}
Algorithms that query $f$ only on $\DN_\lambda(x)$ are called {\em $\lambda$-down local}. One of our goals in designing wrappers is to provide small locality $\lambda.$ Observe that the query complexity of a $\lambda$-down local algorithm is  
$O(|x|^{\lambda})$.
We focus on down-local algorithms for two reasons: first, such algorithms handle large or even infinite universe of potential data records. Second, it allows us to design privacy wrappers that are accurate for functions $f$ that are ``well behaved'' on the input dataset $x$ and its subsets, but potentially sensitive to the insertion of new data points (such as outliers).

Our accuracy guarantees (for both settings) are formulated in terms of the behavior of $f$ in the region $\DN_\lambda(x)$. 
Given an accuracy parameter $\alpha>0$ and failure probability $\beta\in(0,1)$, an algorithm $\cA$ is {\em $(\alpha,\beta)$-accurate} for function $f$ on input $x$ if $|\cA(x)-f(x)|\leq \alpha$ with probability at least $1-\beta.$
Let $f(S)$ denote the set $\{f(x): s\in S\}$. The {\em diameter} of a set is the difference between its maximum and minimum.
We achieve small $\alpha$ when the diameter of $f(\DN_\lambda(x))$ is small and when function $f$ is ``smooth'' (i.e., has a small Lipschtiz constant) on the domain $\DN_\lambda(x)$.

Unlike our accuracy guarantees, the privacy guarantees of privacy wrappers are unconditional. As formalized in \Cref{{def: privacy wrapper}}, an $(\eps,\delta)$-privacy wrapper is $(\eps,\delta)$-DP for all black-box functions $f$ and for all settings of parameters. %
See \Cref{def:dp} for the definition of $(\eps,\delta)$-differential privacy (DP). We use both pure DP (when $\delta=0$) and approximate DP (when $\delta>0$).  
 
\subsubsection{Automated Sensitivity Detection}\label{sec:intro-automated-sensitivity-detection}

We consider a novel setting when a bound on sensitivity has to be automatically inferred %
instead of being provided by the analyst.  
In contrast, previous work on the black-box privacy problem  (discussed in \Cref{sec:related}) required the analyst to provide a parameter that bounds the sensitivity of the black-box function. 
We circumvent the need for information about the sensitivity from the analyst 
by phrasing the accuracy guarantee 
of the privacy wrapper in terms of the \emph{down sensitivity}  
(called \emph{deletion sensitivity} in some works) of the function at $x$, defined next.

\begin{definition}[Down sensitivity at specified depth]
    \label{def:down-sensitivity} Let $\lambda\in\N$. The {\em down sensitivity at depth %
    $\lambda$} of 
    a function $f:\dom\to\R$  at point $x\in\dom$ is
    \[
    DS^f_\lambda(x):=\max_{\substack{z\in\DN_\lambda(x)}} |f(x)-f(z)|.
    \]
\end{definition}
The down sensitivity 
differs by at most a factor of 2 from the diameter of the set $f(\DN_\lambda(x)).$

As a simple example, consider the average function $\avg(x) = \frac 1{|x|}\sum_{i=1}^{|x|} x_i$, where $x_i \in \R$. The value of $\avg(x)$ can change arbitrarily with the insertion of a single additional value to $x$. However, the down sensitivity at depth $\lambda$ of $\avg$ at $x$ is finite: for every $x$, it is at most $diameter(x) \cdot \frac{\lambda}{|x|-\lambda}$%
\ifnum\stoc=0\ (see, e.g., \Cref{cor:average}). 
\else .
\fi
We discuss more sophisticated examples in \Cref{sec:applications_intro}.

\definecolor{verybad}{RGB}{255,230,230} 
\definecolor{optimal}{RGB}{230,250,255}
\definecolor{suboptimal}{RGB}{255,230,200}

\begin{table*}[ht]
    \caption{\rm Results in the setting with the automated sensitivity detection (for functions $f:\dom\to\cY$, where the range $\cY\subset\R$ has size $\rangesize$). Locality $\lambda$ is expressed in terms of the privacy parameters $\eps,\delta$, failure probability $\beta$, and range size $\rangesize$; it does not depend on $|\univ|$. Only the third
    row describes a privacy wrapper because previous rows require an assumption on the function $f$ for privacy. }
    \label{table:results-automated-sensitivity-detection}
    \centering
    \begin{tabular}{|c|c||c|c|c|}
        \hline
         Algorithm & Reference& Privacy &  Accuracy $\alpha$& Down Locality $\lambda$  \\
         \hline
         \hline
         \ShI& \cite{FangDY22}&
         \cellcolor{verybad} 
         only for  & %
         $DS^f_\lambda(x)$ & 
         $\lambda_{(\eps,0)}:=O\paren{\dfrac 1 \eps \log \dfrac{\rangesize}{\beta}}$ \\
         & & 
         \cellcolor{verybad}
         monotone $f$ &  & \\   \hline
         {\sf \small Modified}  &\Cref{lem:shifted-inverse-both} &
         \cellcolor{verybad}
         only for &
         $DS^f_\lambda(x)$ & 
         $\lambda_{(\eps,\delta)}:=\dfrac 1 \eps \log\dfrac 1 {\delta} \cdot 2^{O(\log^* \rangesize)}$ \\
         \ShI & & 
         \cellcolor{verybad} 
         monotone $f$ & & \\  
        \hline 
          \hline
         \ASDalgshort & \Cref{thm:generalized-shifted-inverse-mechanism} & for all $f$   & $DS^f_\lambda(x)$ & 
         $\min(\lambda_{(\eps,0)},\lambda_{(\eps,\delta)})$
         \\  
          \hline
          \hline
          Lower Bound & 
          \Cref{cor: locality lower bound}
          & for all
          $f$ &$DS^f_\lambda(x)$&$\Omega\left(\dfrac{1}{\eps}\log\min\left(\dfrac{\rangesize}{\beta},\dfrac1\delta\right)\right)$\\
          \hline
    \end{tabular}
\end{table*}

We provide the first privacy wrapper in the automated sensitivity detection setting. The guarantees of the wrapper, which we dub \ASDalgfull, are stated in \Cref{thm:generalized-shifted-inverse-mechanism}. %
\ASDalgfull{}
works for all functions $f:\dom\to\cY$, where the range $\cY\subseteq \R$ is finite.\footnote{The analyst does need to provide a description of the range $\cY$, but that description need not be trusted—it can be enforced for each query by replacing outputs outside the range. Similarly, the analyst needs to provide an upper bound on the running time of $f$, which can be enforced by terminating the program after the given time.}
For privacy parameters $\eps,\delta$ and failure probability $\beta$, the accuracy of our wrapper is $\alpha=DS^f_\lambda(x)$, where locality $\lambda$ is  $O\big({\frac 1 \eps \log \frac{|\cY|}{\beta}}\big)$ for pure DP and  $O\big(\frac 1 \eps \log\frac 1 {\delta} \log^{*}|\cY|\big)$ for approximate DP. 
In fact, \ASDalgfull{}
returns a value between the minimum and the maximum of $f(z)$ for $z$ in the $\lambda$-down neighborhood of dataset $x$. 
The locality $\lambda$ of our privacy wrapper does not depend on the size of the universe~$\univ.$ 

The starting point for the design of 
\ASDalgfull{}
is the Shifted Inverse Sensitivity Mechanism (\ShI) of Fang, Dong, and Yi \cite{FangDY22}. This is a privacy mechanism for releasing a value of a \emph{monotone} function evaluated on dataset $x$. A function $g:\dom\to\R$ is monotone if $g(x)\leq g(y)$ for all $x,y\in \dom$ such that $x\subset y$.
The \ShI mechanism is $(\eps,0)$-DP for all monotone functions $g$ and has locality that depends logarithmically on the size of the range.  We 
generalize their construction and 
present an approximate-DP variant\footnote{\label{footnote:blog}This variant appeared in a blog post by one of the authors \cite{DPorg-down-sensitivity}.} of the \ShI mechanism that has only $2^{\log^*}$ dependence on the size of the range (\Cref{lem:shifted-inverse-both}). 
Both versions of \ShI release the value of a monotone function $g$ at the data set $x$ privately, with error bounded by $DS^g_\lambda(x)$.
However, they are {\em not} privacy wrappers, because 
they can violate differential privacy for general, black-box functions---the privacy proof relies crucially on the promise that $g$ is monotone.

The wrapper we design, \ASDalgfull, is private for all functions $f: \dom\to\cY$ with finite range $\cY$ and extends the accuracy guarantees of (both versions of) \ShI from monotone functions to all functions.
It works by running \ShI (or our modification thereof) on a carefully selected ``monotonization'' of function $f$---see \Cref{sec:techniques} for more detail.

To complete the picture for automated sensitivity detection setting, we  provide a lower bound \ifnum\stoc=0 (in \Cref{cor: locality lower bound})\fi~ on the locality $\lambda$ of any privacy wrapper that achieves accuracy equal to down sensitivity at depth $\lambda$. Our lower bound matches the first term (corresponding to the pure DP) in the guarantee for \ASDalgshort{} and nearly matches the second term (corresponding to approximate DP). 
We also provide a query complexity lower bound, \Cref{thm: query lower bound} 
(and \Cref{rem: query lower bound}), 
showing that the query complexity we achieve cannot be significantly improved, even if the locality requirement is relaxed. 
\ifnum\stoc=1
Proofs of our lower bounds appear in the \fullcite.
\fi
Our guarantees for \ASDalgshort{} are  compared to guarantees for both variants of \ShI and the locality lower bound in \Cref{table:results-automated-sensitivity-detection}.

\subsubsection{Privacy Wrappers with Claimed Sensitivity Bound}\label{sec:wrapper-with-provided-c}
We also investigate the claimed sensitivity bound setting, which has been addressed in previous work, where  the analyst provides a sensitivity bound $c$  along with a black-box function $f$.

Sensitivity and the related notion of Lipschitz functions play a fundamental role in private data analysis. Intuitively, sensitivity measures how small modifications of the dataset affect the value of the function. 
Given a constant $c>0$ and a domain $D\subseteq \dom$, a function $f:\dom\to\R$ is \emph{$c$-Lipschitz over $D$} if $|f(x)-f(y)|\leq c$ for all neighbors $x,y\in D$.
The smallest constant $c$ for which function $f$ is $c$-Lipschitz on $\dom$ is called the {\em Lipschitz constant} or the {\em (global) sensitivity} of $f$. 
A basic result is that the 
Laplace mechanism, which
releases $f(x)$ with Laplace noise scaled to $\frac c \eps$, is ($\eps,0)$-differentially private \cite{DworkMNS06}.
Most work on differential privacy considers the ``white box'' case, where a complete description of $f$ is available, and one analyzes sensitivity analytically.

The problem of privately evaluating black-box functions was first considered by Jha and Raskhodnikova~\cite{JhaR13}. In their setting, 
in addition to the black-box function $f$, the analyst provides\footnote{Equivalently, the analyst provides a rescaled function $f/c$ instead of $f$, and the curator presumes a sensitivity bound of 1.}  a sensitivity parameter $c$.
The data curator must guarantee differential privacy whether or not the provided bound is correct, but the mechanism's accuracy depends on the bound being correct and as tight as possible. 
Jha and Raskhodnikova \cite{JhaR13} and follow-up work~\cite{AwasthiJMR14,LangeLRV25} handle the case where the data universe $\univ$ is finite, and they investigate the function's behavior in a ball around the input that includes both insertions and deletions. This means the data universe has to be small to achieve reasonable query complexity, and also that the function must be robust to both insertions and deletions of data points.

\bgroup
\def\arraystretch{1}
\begin{table*}[ht]
\caption{\rm 
    Results for the {\em claimed sensitivity bound} setting %
    (for functions $f:\dom\to\R$). Privacy guarantees hold for all settings of parameters. When $f$ is $c$-Lipschitz on $\DN_\lambda(x)$, each wrapper %
    returns $f(x)+Z$ where $\Ex [Z] = 0$ and $Z$ is a Laplace distribution. %
    The table lists the scale parameter (roughly, standard deviation) of the noise variable. This noise scale and locality $\lambda$ are expressed in terms of the privacy parameters $\eps$ and $\delta$. %
    The lower bound on down locality is stated for mechanisms with (optimal) error scale $O(c/\eps)$. 
    }
    \label{table:results}
    \centering
    \begin{tabular}{|c|c||c|c|c|c|}
    \hline
         \multirow{2}{*}{Algorithm} &\multirow{2}{*}{Reference} &Privacy& Accuracy  & Subexponential  & Down Locality   \\
         &&Guarantee& Assumption& Error Scale & $\lambda$\\
         \hline
         \hline
         \multirow{2}{*}{Cummings-Durfee} & \multirow{2}{*}{\cite{CummingsD20}}& \multirow{2}{*}{$(\eps,0)$-DP} & $c$-Lipschitz %
         &\multirow{2}{*}{$\dfrac{c}{\eps}$} & 
          \cellcolor{suboptimal}
          $|x|$ \\
           &&&  on %
          $\cP(x)$ && \cellcolor{suboptimal}   \\           
          \hline 
          (Our analysis of)  &
          \ifnum\stoc=0 Prop.~\ref{prop:KL-analysis}
          \else Full version 
          \fi
          & $(\eps,\delta)$-DP& $c$-Lipschitz %
          &
          \cellcolor{suboptimal}
          $\Theta\Bparen{\lambda \cdot \dfrac{1}{\eps}}$ 
          &
          $O\paren{\dfrac {1} \eps \log \dfrac {1}{\delta }}$ \\
          TAHOE \cite{KohliL23} & \ifnum\stoc=1 of our work
\cite{LinderRSS25}\fi&& on $\DN_\lambda(x)$& 
          \cellcolor{suboptimal}
          $=\Theta\Bparen{ \dfrac{1}{\eps^2}\log\dfrac1\delta}$ &
          \\
          \hline
          \hline 
        \algSE\ & \ifnum\stoc=1 Theorem\else Thm.\fi
         ~\ref{thm: subset extension}
        &$(\eps,\delta)$-DP& $c$-Lipschitz
          &
          $O\Bparen{\dfrac{c}{\eps}}$ & 
          $O\paren{\dfrac {1} \eps \log \dfrac {1}{\delta }}$ \\
           && &on $\DN_\lambda(x)$ & & \\ 
          \hline
          \hline
         Lower Bounds&  \cite{GhoshRS09} and& $(\eps,\delta)$-DP&
          $c$-Lipschitz %
          & $\dfrac{c}{\eps}$ & 
        $\Omega\Bparen{\dfrac{1}{\eps}\log\dfrac1\delta}$
          \\
          &\ifnum\stoc=1 Theorems\else Thms.\fi ~\ref{thm: locality lower bound},\ref{thm: query lower bound}
          & & & 
           \cite{GhoshRS09} & 
           \ifnum\stoc=1 Theorems\else Thms.\fi~\ref{thm: locality lower bound},\ref{thm: query lower bound}
          \\   
          \hline
    \end{tabular}
    
\end{table*}
\egroup

We consider instead the setting introduced by Kohli and Laskowski \cite{KohliL23}, where the curator is limited to evaluating the black-box at subsets of the actual input $x$.  The privacy wrapper for this setting in  \cite{KohliL23}, called TAHOE, has locality $\lambda$ independent of the data set size~$|x|$. In  \cite{KohliL23}, the accuracy of TAHOE is analyzed for some special cases, under distributional assumptions, and empirically. We aim to get a privacy wrapper in this setting that is private for all $f$ and $c$, and satisfies the following type of accuracy: For every $f$ and $x$ such that $f$ is $c$-Lipschitz on $\DN_\lambda(x)$ (where $\lambda$ varies by mechanism), the wrapper outputs $f(x) +Z$ where $Z$ is drawn from a Laplace distribution. Our goal is to simultaneously minimize $\lambda$ and the scale of $Z$.

We provide a novel privacy wrapper, \algSE, for this setting with accuracy and locality guarantees that are each optimal up to constant factors. \Cref{table:results} summarizes our results and compares them to the guarantees of existing privacy mechanisms that query $f$ only on the subsets of dataset $x$, which we summarize briefly here.
In particular, we provide a self-contained accuracy analysis of TAHOE in \ifnum\stoc=0 \Cref{prop:KL-analysis}\else the \fullcite\fi. It adds Laplace noise that is larger by a factor of $\log(1/\delta)/\eps$ than the scale $c/\eps$ that can be used in the white-box setting when the function $f$ is promised to be $c$-Lipschitz on its entire domain. (This latter scale is known to be optimal, even for the special case where $f$ is a simple counting query~\cite{GhoshRS09}.)

We observe that the literature on \emph{Lipschitz extensions}---generally thought of as ``white box'' objects---also provides a different, incomparable privacy wrapper for this black-box setting. Specifically, given $c$ and $f$, Cummings and Durfee~\cite{CummingsD20} construct a function $f_c$ that is $c$-Lipschitz (everywhere), and for which $f_c(x)$ is computable based only on the values of $f$ on the subsets of $x$. Furthermore, $f_c(x)$ equals $f(x)$ whenever $f$ is $c$-Lipschitz on the entire power set $\cP(x)$ of $x$, that is, on the neighborhood $\DN_{|x|}(x)$. This construction implies a privacy wrapper: given $c$, $f$ and $x$, compute $f_c(x)$ and release it with noise scale $c/\eps$. The resulting wrapper has optimal error but very high locality $|x|$ and query complexity $2^{|x|}$.

Our \algSE\ privacy wrapper, \ifnum\stoc=0 whose performance is stated in \Cref{thm: subset extension}, \fi achieves the best of both worlds: the accuracy of Cummings-Durfee and the locality of TAHOE. We describe its construction, which departs significantly from the existing approaches, in \Cref{sec:techniques}.

Both the error scale and locality 
of our privacy wrappers
are essentially optimal. The optimality of error follows the work of Ghosh et al.~\cite{GhoshRS09}, mentioned above. To prove optimality of locality, we give two lower bounds: one which bounds locality directly \ifnum\stoc=0 (\Cref{thm: locality lower bound})\fi, and the other which bounds query complexity \ifnum\stoc=0 (\Cref{thm: query lower bound})\fi, and thus locality by implication. \ifnum\stoc=1 The proofs of our lower bounds appear in the \fullcite.\fi

\subsubsection{Privacy Wrappers with Claimed Sensitivity Bound for Bounded-Range Functions}

\ifnum\stoc=0 
In \Cref{sec: small diameter,sec:GenShI-with-nice-noise}, we 
\else 
In the \full, we 
\fi show that one can achieve improved guarantees in the claimed sensitivity bound setting for functions with small intended range. We present our  results for this setting in \Cref{tab:small-range}.

\begin{table*}[ht]
    \caption{\rm Results in the claimed sensitivity bound setting with a public and bounded range (for functions  $f:\dom\to [0,r]$ and for $(\eps,0)$-DP).}\label{tab:small-range}
    \centering
    \begin{tabular}{|c|c||c|c|c|}
    \hline
         \multirow{2}{*}{Algorithm}  & Reference&
         \multirow{2}{*}{Accuracy}  & Subexponential  & \multirow{2}{*}{Down Locality $\lambda$}  \\
         & Assumption && Error Scale  & \\
         \hline
         \hline
          Small Diameter   & \ifnum\stoc=0 \multirow{2}{*}{Thm. \ref{thm: small diameter}}\else \multirow{4}{*}{Full version of our work
\cite{LinderRSS25}}\fi &
          &
            \multirow{4}{*}{$O\Bparen{\dfrac{c}{\eps}}$} & 
          \multirow{2}{*}{$\dfrac{2r}{c}$} 
          \\
          Subset Extension  & &  $c$-Lipschitz  &  &\\  
          \ifnum\stoc=1\cline{1-1}\else\cline{1-2}\fi\cline{5-5}
          Double   Monotonization & \ifnum\stoc=0 Thm. \ref{thm:GenShi-with-nice-noise} \fi&  on $\DN_\lambda(x)$
           &
          & 
          $O\paren{\dfrac 1 \eps \log\dfrac{r}{c\beta}}$ 
          \\           
          \hline
          \hline
          \multirow{2}{*}{Lower Bounds} 
          &  \cite{GhoshRS09} and &\multirow{2}{*}{$c$-Lipschitz} 
          &  $\dfrac{c}{\eps}$ & 
        $\tilde \Omega\Bparen{\dfrac{1}{\eps}\log\dfrac{\eps r}{c\beta}}$
          \\
          &\ifnum\stoc=0 Thms. \ref{thm: locality lower bound} and \ref{thm: query lower bound}\else full version of our work
\cite{LinderRSS25}\fi & &
           \cite{GhoshRS09} &\ifnum\stoc=0
          Thms. \ref{thm: locality lower bound} and \ref{thm: query lower bound}\else\Full~ of our work
\cite{LinderRSS25}\fi \\   
          \hline
    \end{tabular}
\end{table*}

The Double Monotonization mechanism extends the guarantees of \algSE{} to the setting of bounded range, replacing the $\log(1/\delta)$ factor in the locality with $\log r$. At a technical level, it builds on \ASDalgshort{}, modifying it to take advantage of the (claimed) Lipschitz constant provided by the analyst. 

In contrast, the Small Diameter Subset Extension mechanism is based on a novel approach to \emph{local Lipschitz filters}. 
This approach follows the spirit of \cite{JhaR13}, who initiated an approach based on sublinear-time algorithms that was later studied in \cite{AwasthiJMR14,LangeLRV25}.
As a corollary of our techniques, we construct a local Lipschitz filter \ifnum\stoc=0 (\Cref{cor: lipschitz filter}) \fi that improves on \cite{LangeLRV25}. The resulting privacy wrapper achieves the best of both the algorithms of \cite{LangeLRV25}, and \cite{CummingsD20}.%

\subsubsection{Applications of  Privacy Wrappers to White-Box Setting}
\label{sec:applications_intro}

\newcommand{\mypar}[1]{\medskip\noindent \emph{#1}:}

\ifnum\stoc=0
In \Cref{sec:applications}, we 
\else 
Formal statements and proofs from this section appear in the \fullcite. We, 
\fi
give several applications of our privacy wrappers in the white-box privacy setting, both recovering known results, and obtaining immediate improvements to existing results. Our privacy wrappers offer a unified derivation of a range of results that, a priori, seem to require different techniques.

\mypar{Private mean estimation} As a simple illustration of our results, we show how they recover known bounds \cite{NissimRS07,DworkL09} on private mean estimation in one dimension. If the dataset $x\in\mathbb{R}^n$ is contained in an unknown interval of radius $\sigma$, then applying each of our wrappers to the average function leads to privately releasing $\mu$ with error roughly~$\frac \sigma {\epsilon n}$. \ifnum\stoc=0 See \Cref{cor:average} for details. \fi

\mypar{Empirical risk minimization} \ifnum\stoc=0 In \Cref{sec:ERM}, we 
\else
We 
\fi
improve upon the \emph{user-level} (also known as \emph{person-level}) private empirical risk minimization algorithm of \cite{GhaziKKMMZ23-user-sco} for the setting of one dimensional parameter spaces. In particular, \ifnum\stoc=0 in \Cref{thm:ERM-autosense,thm:ERM-SE}, \fi we show that in empirical risk minimization for a one-dimensional parameter space $\cY$, the dependence on privacy parameters $\eps$ and $\delta$ can be reduced from $O\big(\frac 1 {\eps^{5/2}}\log^2\frac 1\delta\big)$ to the minimum of \ifnum\stoc=1 the terms\fi~$O\big(\frac 1 {\eps^{3/2}}\log\frac 1\delta\big)$ and $\frac 1\eps\big(\log\frac 1 \delta\big)\exp(O\log^*|\cY|)$.
The first term in the minimum follows immediately by substituting the \algSE\ mechanism for the relevant subroutine in the private empirical risk minimization algorithm of \cite{GhaziKKMMZ23-user-sco}. The second term leverages the \ASDalgshort\ mechanism, as well as a straightforward extension of one of the key tools used in \cite{GhaziKKMMZ23-user-sco} to bound the Lipschitz constant on $\DN_\lambda(x)$ for $\lambda\geq |x|/2$.

\mypar{Estimating graph parameters} Our results give simpler derivations of the rate at which one can estimate the parameter $p$ of an Erdős–Rényi graph model $G(n,p)$ subject to node privacy guarantees. There is a node-private algorithm \cite{BorgsCSZ18} that, for all $p$, given a graph drawn from $G(n,p)$,  generates an estimate $\hat p$ with additive error 
$    \frac {1} n + O\bparen{\frac{\sqrt{\log n} }{ \eps n^{3/2}}}
    $  for $p$ bounded away from 0 and 1 (efficient algorithms achieving a similar bound were later given by \cite{SealfonU21,ChenDHS24erdosrenyi}). 
    This bound was surprising since the non-private estimator is a simple sum whose local sensitivity, which is $\Theta(\frac 1n)$ on typical graphs from $G(n,p)$, is too large to obtain the optimal rate by simply applying the Laplace mechanism. Such a strategy would lead to error $\frac 1{\eps n}$, instead of $\frac 1{\eps n^{3/2}}$.

We can rederive this bound using the observation that, for graphs generated from $G(n,p)$, the non-private estimator---which reports the density of edges in the input $G$---has down sensitivity   $\approx \frac{\lambda\sqrt{\log n}}{n^{3/2}}$ at depth $\lambda$, with high probability. Applying our results for automated sensitivity detection directly implies a similar feasibility result to that of \cite{BorgsCSZ18}. See \ifnum\stoc=0 \Cref{sec:ER-density}\else the \fullcite\fi~for details.

\subsection{Techniques}
\label{sec:techniques}

\paragraph{Monotonization and the Shifted Inverse Mechanism.}
Our wrappers with automatic sensitivity detection use the Shifted Inverse mechanism of \cite{FangDY22} as a starting point. That mechanism relies on the promise, for both privacy and accuracy, that its input function $f$ is monotone. We show 
how to transform an arbitrary function $f$ into a monotone function $g$ with two  additional locality properties: (a) the values of $g$ on $\DN_{\lambda}(x)$ depend only on the values of $f$ on a slightly larger down-neighborhood (say $\DN_{2\lambda}(x)$) and (b) the image $g(\DN_\lambda(x))$ is included in the image $f(\DN_{2\lambda}(x))$. Property (a) allows us to compute $g(x)$ locally (looking only at $\DN_{2\lambda}(x)$), and property (b) means that $DS_\lambda^g(x) \leq DS_{2\lambda}^f(x)$. 
We dub this transformation \emph{monotonization} (\Cref{def:monotonization}). It uses the input privately, by measuring its size $|x|$ (with Laplace noise) and setting a lower bound $\ell$ which is in the interval $[|x|-2\lambda, |x|-\lambda]$ with high probability. The monotization of $f$ is then 
$$\monfl(x)=\max\big(\{f(z): z\subseteq x, |z|\geq \ell\}\cup\{-\infty
\}\big),$$   
where ``$-\infty$'' is a lower bound on the analyst-specified range of $f$.

We combine this both with the original shifted inverse mechanism of \cite{FangDY22} as well as a new, $(\eps,\delta)$-private variant, described in \Cref{sec:shifted-inverse-monotone}, that achieves better dependency on the range size. This latter improvement comes from abstracting the original version as a reduction to the \emph{generalized interior point problem} \cite{BunDRS18,CohenLNSS23}. Monotonization and the resulting wrapper are described in  \Cref{sec:genShI}.

\paragraph{Subset Extension.} 
Our starting point for the claimed sensitivity bound setting is the TAHOE algorithm of \cite{KohliL23}. We first briefly describe (in a way that fits well with our adaptation).
As with monotonization, we first (privately) compute and release a lower bound $\ell$ on the size of $x$, which lies in the range $[|x|-3\lambda, |x|-2\lambda]$ (with high probability). This gives a ``floor" below which we do not need to read $f$ and bounds the locality by $3\lambda$. The next step is to attempt to find a subset of $\DN_{2\lambda}(x)$  on which $f$ is Lipschitz. Specifically, we say a subset $u$ of $x$ is \emph{$\ell$-stable} if $f$ is Lipschitz when restricted to the subsets of $u$ of size at least $\ell$. Kohli and Laskowksi show several structural properties of these sets. Let $\stabset{\ell}{h_0}{f}(x)$ denote the collection of  $\ell$-stable subsets of $x$ of size at least $h_0 \defeq \frac{\ell+|x|}{2}$. Then $f$ is Lipschitz on the domain $\stabset{\ell}{h_0}{f}(x)$. Furthermore, for every $h\geq h_0$, if $x'$ is a neighboring dataset of $x$ that is larger (by one insertion), then 
$$
\paren{
\begin{array}{c}
    \stabset{\ell}{h+1}{f}(x) \text{ not} \\
    \text{empty}
\end{array}
}
\implies 
\paren{
\begin{array}{c}
    \stabset{\ell}{h+1}{f}(x') \text{ not} \\
    \text{empty}
\end{array}
}
\implies
\paren{
\begin{array}{c}
    \stabsetlhf(x) \text{ not} \\
    \text{empty}
\end{array}
} \, .
$$
TAHOE can then be described as follows: first, select a publicly-released $\ell$ and unreleased $h$, both noisy (according to Laplace like distributions) and likely to satisfy $\frac{\ell+|x|}{2}\leq h\leq |x|$. Next, check if $\stabsetlhf(x)$ is empty, and release this bit. Finally, if $\stabsetlhf(x)$ is not empty, then pick an arbitrary largest set $u$ in $\stabsetlhf(x)$  and release $f(u) + Lap(\lambda c /\eps)$.  We can think of $f(u)$ as an approximation to $f(x)$ which is exact when $f$ is Lipschitz on all of $\DN_{3\lambda}(x)$ (since then $u=x$).
Privacy goes through because the bit indicating the emptiness of $\stabsetlhf(x)$ is randomized (by the randomness of $h$) and differentially private; and the diameter of $\stabset{\ell}{h_0}{f}(x')$ is $O(\lambda)$, so the sensitivity of $f(u)$ is $O(c\lambda)$ no matter how $u$ is chosen.  

We develop two different improvements over TAHOE, each of which modifies the structure above so that, roughly, the approximation to $f(x)$ has sensitivity $O(c)$ even when $f$ is not Lipschitz. 

Our first major departure from the TAHOE approach is to transform $f$ to get a new function $\hat f: x\mapsto  \frac 1 2 (\frac 1 c f(x) + |x|)$ that is monotone and Lipschitz whenever $f$ is Lipschitz. Unlike the monotonization described in the previous section, this transformation comes with no guarantees for arbitrary $f$. However, it allows us to choose a nearly-canonical representative in the set $\stabsetlhf(x)$, which is a point $u$ that maximizes $\hat f(u)$ over $\stabsetlhf(x)$. When $\hat f$ is monotone, maximizing the function value and maximizing the size of $u$ coincide.  

Our second major departure is to \emph{average} over the choices of $h$ rather than choosing $h$ randomly. In \algSE, we set $\ell$ using the truncated Laplace mechanism and average over the choice of $h$ in a range determined by $\ell$. For neighboring datasets $x$ and $x'$, we show a coupling between the function estimates computed using different $(\ell,h)$ pairs such that coupled $(\ell,h)$ values lead to similar estimates of $f(x)$ and $f(x')$ with high probability. Averaging over $h$ turns this high-probability Wasserstein distance bound into an $O(1)$ upper bound on the difference between estimates (when all checks for existence of stable sets pass), allowing us to release the estimate with little noise. 

This description hides a number of challenges that arise in the analysis. See \ifnum\stoc=0 \Cref{sec:subset-extension} \else the \fullcite\fi~for details.

\paragraph{Lower Bound Techniques.}
\ifnum\stoc=1
Proofs of our lower bounds appear in the \full.
\fi~
We provide two lower bounds: one on the locality of accurate privacy wrappers---via reduction from well-studied problem in the privacy literature---and another on the actual query complexity, via a new argument closer to the techniques in the property testing literature.

In order to prove the lower bound on locality, \ifnum\stoc=0 \Cref{thm: locality lower bound},\fi~we reduce from the ``point distribution problem" \ifnum\stoc=0 described in \Cref{sec:pd problem}\fi. In the point distribution problem, an algorithm is given a multiset $s$ of $n$ elements from some universe $\cY$ as input. For all $y\in\cY$, the algorithm must output $y$ whenever $s$ consists only of identical copies of $y$. Standard packing arguments suffice to show a lower bound on the size of the multiset for any differentially private algorithm that solves the point distribution problem. Our reduction then proceeds by arguing that a privacy wrapper that is $\lambda$-down local can be used to solve the point distribution problem with mutlisets of size $\lambda+1$.

Our second lower bound, \ifnum\stoc=0 \Cref{thm: query lower bound},\fi~states that every $(\eps,\delta)$-privacy wrapper that is $(\alpha,\beta)$-accurate on Lipschitz functions with range size $\rangesize$ must have query complexity $\abs{x}^{\Omega\paren{\frac{1}{\eps}\log\min\paren{\frac\rangesize\beta,\frac1\delta}}}$. Thus, \ifnum\stoc=0 \Cref{thm: query lower bound} immediately implies that\fi~the query complexity of our privacy wrappers is optimal. Additionally, while all of our privacy wrappers are down-local, \ifnum\stoc=0 \Cref{thm: query lower bound}\else the query complexity lower bound\fi~applies even to privacy wrappers that do no satisfy this guarantee (i.e., privacy wrappers that query $f$ on arbitrary datasets $z$).
To prove our query complexity lower bound, we take advantage of the fact that if two points $x,z\in\dom$ are ``close", but $f(x)$ and $f(z)$ are ``far", then every privacy wrapper must be inaccurate on $x$ or $z$. Leveraging this property, we construct distributions $\cN$ and $\cP$ over pairs $(x,f)$ where $x$ is a dataset and $f$ is a function from $[n]^*$ to $ \R$ with the following properties: First, every algorithm that gets query access to $f$ and input $x$ must make many queries to $f$ to distinguish whether $(x,f)\in\supp(\cN)$ or $(x,f)\in\supp(\cP)$. And second, every privacy wrapper that is accurate for Lipschitz functions is inaccurate when $(x,f)\in\supp(\cN)$ and accurate when $(x,f)\in\supp(\cP)$. Combining these two observation, we show that a query-efficient privacy wrapper that is accurate for all Lipschitz functions can be used as a subroutine to construct a low query algorithm for distinguishing $\cN$ and $\cP$. It follows that privacy wrapper that is accurate for all Lipschitz functions must make many queries to $f$.

\subsection{Related Work}
\label{sec:related}

\paragraph{Private Evaluation of Black-box Functions.} Privacy in the context of black-box functions was first explicitly considered by \cite{JhaR13}. They connect the claimed sensitivity setting with the concept of \textit{local filter} from the sublinear algorithms literature. This line of work \cite{JhaR13,AwasthiJMR14,LangeLRV25} constructs privacy wrappers from local filters for Lipschitz functions. Their constructions query the function $f$ on inputs obtained by both insertions and deletions. Their query complexity depends on the universe size, and the analyst's sensitivity bound must allow for the insertion of arbitrary outliers in the dataset.

\paragraph{Local Sensitivity and Robustness.} Soon after the introduction of differential privacy, the research community aimed to identify properties of a function that allow for accurate differentially private approximation. A key concept was the \textit{local sensitivity}~\cite{NissimRS07} of $f$ at $x$, the maximum change in $f$ that can be incurred by inserting or removing one element (or a small number of elements) in $x$. Starting with \cite{NissimRS07} and \cite{DworkL09}, a rich line of work found different ways to enforce and take advantage of (variants of) low local sensitivity~(e.g., \cite{AsiD20,AsiUZ23,HopkinsKMN23}, to mention only a few). 
Low local sensitivity within a neighborhood of insertions and deletions is equivalent to the notion of \emph{adversarial robustness} from the robust statistics literature. 
Many natural estimators do not have that type of robustness: for example, sample means and ordinary least-squares regression estimates  can be moved arbitrarily far by the insertion of a single outlier. They often require function-specific modifications (such as trimming, huerization, and so forth) to make them robust. Indeed, there is a large literature in the design of robust versions of popular estimators~\cite{HuberRbook,MaronnaMY06book}.

\paragraph{Look Down: Lipschitz Extensions and Down Sensitivity.} Another rich line of work in the privacy literature develops techniques for settings where we expect a computation of interest to be much more sensitive to insertions than removals of entries from the input dataset. One branch, initially motivated by node-private algorithms for graphs~\cite{BlockiBDS13,ChenZ13,KasiviswanathanNRS13,BorgsCS15,RaskhodnikovaS16,RaskhodnikovaS16survey,DayLL16,BorgsCSZ18,SealfonU21,KalemajRST23,JainSW24}, developed \emph{Lipschitz extensions}, which extend a function  from a subdomain of input datasets that 
has low sensitivity 
to the entire domain. The subdomains of interest were generally closed under removals but not insertions. The concept of down sensitivity (at depth 1) was introduced by Chen and Zhou~\cite{ChenZ13} to help describe these subdomains; they also studied the Lipschitz constant of $f$ on the power set of $x$. (For the supermodular functions studied in \cite{ChenZ13,KasiviswanathanNRS13}, these coincide.)
These works generally focused on efficient constructions in the white-box setting, mostly for monotone functions. However, some extensions, such as that of \cite{CummingsD20}, can be interpreted as a black-box construction (as in \Cref{table:results}). 

\paragraph{Down Sensitivity at Bounded Depth.} Two recent works on privacy study mechanisms that look only at the behavior of the function on large subsets of the input.  
Fang, Dong, and Yi~\cite{FangDY22} focus on white-box approaches to releasing monotone functions without an a-priori sensitivity bound, while Kohli and Laskowski~\cite{KohliL23} consider the black-box, claimed sensitivity setting. The KL mechanism was recently extended to %
general outputs in~\cite{GhaziKKMMZ23-user-sco,GhaziKKMMZ23-user-few-examples}, though their improvements do not affect our setting. 

\paragraph{Resiliency in Robust Statistics.} 
The two notions of accuracy that we consider---expressed in terms of the diameter of $f(\DN_\lambda(x))$ and the Lipschitz constant on $\DN_\lambda(x)$---correspond to two different notions of resiliency. The diameter-based notion corresponds (up to reparameterization) to the definition of \textit{resiliency} \cite{SteinhardtCV18}, whereas the Lipschitz notion has not been considered explicitly before in the statistics and learning literature, to our knowledge.

One can interpret the information-theoretic results of \cite{SteinhardtCV18} as a generic transformation that takes a function $f$ and produces a version that is robust in the neighborhood of an input $x$ whenever $x$ is resilient with respect to the original function $f$. 
However, we are not aware of a generic way to transform that into a differentially private mechanism using previous work, without still needing to explore the values of $f(z)$ for all sets $z$ that differ from $x$ in $\lambda$ insertions and deletions. 
We make the \cite{SteinhardtCV18} transformation explicit and discuss its consequences in \ifnum\stoc=0 \Cref{sec:resilience}\else the \fullcite\fi. 

\paragraph{Generic Statistical Results.} 
Dwork and Lei \cite{DworkL09} and Smith \cite{Smith11} give generic transformations that create differentially private versions of statistical estimators. These results are incomparable to ours, since they rely on specific properties of the estimators, such as asymptotic normality and low bias.

\section{Preliminaries}

In this section, we formally define privacy wrappers, our main object of study. Let $\cU$ be an arbitrary countable universe of elements and $\dom$ be the set of finite subsets of $\cU$. First, we recall the definition of differential privacy. 

\begin{definition}[Neighboring sets, differential privacy \cite{DworkMNS06}]\label{def:dp}
    Two sets $x,y\in\dom$ are \emph{neighbors} if either $x=y\cup \{i\}$ or $y=x\cup\{i\}$ for some $i\in\cU$. A randomized algorithm $\cM$ is \emph{$(\eps,\delta)$-differentially private} (DP) if for all neighboring $x,y\in\dom$ and all measurable subsets $E$ of the set of outputs of $\cM$, 
    $$\Pr[\cM(x)\in E]\leq e^\eps\Pr[\cM(y)\in E]+\delta.$$
    When $\delta=0,$ this guarantee is referred to as {\em pure differential privacy}; the guarantee with $\delta>0$ is called {\em approximate differential privacy}.
\end{definition}

\begin{definition}[Diameter]
    The \emph{diameter} of a bounded set $Y\subset \R$ is $\sup_Y(x)-\inf_Y(x)$. Moreover for all $f:\dom\to\R$ and $S\subseteq \dom$, we define $f(S)$ as the set $\{f(x): x\in S\}$.
\end{definition}

\begin{definition}[Lipschitz functions and global sensitivity]\label{def:lipschitz}
Fix $c\geq 0$. Given a domain $D\subseteq \dom$, a function $f:\dom\to\R$ is \emph{$c$-Lipschitz over $D$} if $|f(x)-f(y)|\leq c$ for all neighbors $x,y\in D$. For brevity, we use ``Lipschitz'' instead of ``1-Lipschitz''. When $D=\dom$, we just say ``$c$-Lipschitz'', without specifying the domain. The smallest constant $c$ for which function $f$ is $c$-Lipschitz is called the {\em Lipschitz constant} or the {\em (global) sensitivity} of $f$.
    \end{definition}

 \begin{definition}[Monotone functions] 
  Given a domain $D\subseteq \dom$, a function $f:\dom\to\R$ is {\em monotone} on $D$ if $f(x)\leq f(y)$ for all $x,y\in D$ such that $x\subset y$.
\end{definition}

Next, we define privacy wrappers. Informally, a privacy wrapper is a differentially private algorithm $\cW$ that gets two types of inputs: public and sensitive. The public inputs consist of a function $f$, to which the wrapper has query access, and a possibly empty list of parameters. The sensitive input is a data set $x\in \dom$. The algorithm run with query access to $f$ is denoted $\cW^f$, and its output on dataset $x$ is denoted $\cW^f(x)$. (We treat $\cW^f(x)$  as a random variable.) We omit explicit notation for the parameters.

\begin{definition}[Privacy wrapper  \cite{KohliL23}]
	\label{def: privacy wrapper}
Fix a universe $\cU$ and privacy parameters $\eps>0$ and $\delta\in[0,1)$. Consider a randomized algorithm $\cW$ that gets 
as input a dataset $x\in\dom$, additional parameters, and query access to a function $f:\dom\to\R$. It then produces output in $\R\cup\{\perp\}$. Algorithm $\cW$ is an $(\eps,\delta)$-privacy wrapper if, for every function $f$ and every choice of the additional parameters, the algorithm $\cW^f$ is $(\eps,\delta)$-differentially private. 
\end{definition}

\begin{definition}[Accuracy]
    We define two types of accuracy guarantees for a privacy wrapper $\cW$:  
    \begin{itemize}
        \item\textbf{Accuracy:} $\cW$ is $(\alpha,\beta)$-accurate for 
        a function $f$ and a dataset $x$ if 
        \ifnum\stoc=0
        $$
        \Pr\big[|\cW^f(x)-f(x)|\geq\alpha \big]\leq\beta.
        $$
        \else
        $
        \Pr\big[|\cW^f(x)-f(x)|\geq\alpha \big]\leq\beta.
        $
        \fi
        \item\textbf{Distribution:} $\cW$ \emph{has noise distribution $\cD$} for 
        a function $f$ and a dataset $x$ if 
        \ifnum\stoc=0
        $$
        \cW^f(x)\sim f(x) + \cD.
        $$
        \else
        $
        \cW^f(x)\sim f(x) + \cD.
        $
        \fi
    \end{itemize}
\end{definition}

Below, we present the Laplace distribution and corresponding Laplace mechanism. 

\begin{definition}[$\Lap$ and $\TLap$ distributions]
    \label{def: laplace}
    The \emph{Laplace} distribution, denoted $\Lap(b)$, is defined over $\R$ by the probability density function $f(x)=\frac1{2b}e^{-|x|/b}$. The \emph{truncated Laplace} distribution, denoted $\TLap(b,\tau)$, is given by the probability density function $f(x)=a_{b,\tau}\cdot\frac1{2b}e^{-|x|/b}$ when $|x|\leq\tau$ and $0$ otherwise, where  $a_{b,\tau}$ is a normalizing constant.
\end{definition}
\begin{fact}[Laplace mechanism \cite{DworkMNS06}]
    \label{fact: laplace mechanism}
    Fix $\eps>0$ and let $f:\dom\to \R$ be a $c$-Lipschitz function. Then, the algorithm that gets a query $x\in \dom$ as input, samples $J\sim \Lap(\frac c\eps)$, and outputs $g(x)=f(x)+J$, is $(\eps,0)$-DP. Additionally, for all $\alpha>0$, we have $|g(x)-f(x)|\leq \frac c\eps\ln\frac1\alpha$ with probability at least $1-\alpha$. Moreover, the same mechanism implemented with noise $J\sim\TLap(\frac c\eps,\frac c\eps\ln\frac1\delta)$ is $(\eps,\delta)$-DP.
\end{fact}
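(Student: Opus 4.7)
The plan is to establish the three claims in the fact—pure DP for the Laplace mechanism, the accuracy bound, and approximate DP for the truncated Laplace mechanism—separately, using direct computations with the Laplace density.

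For the pure DP claim, I would fix two neighbors $x,y\in\dom$ and any measurable $E\subseteq\R$. Since $f$ is $c$-Lipschitz, $|f(x)-f(y)|\leq c$. Letting $b=c/\eps$, at each point $z$ the ratio of densities of $g(x)$ and $g(y)$ is
\[
\frac{\exp(-|z-f(x)|/b)}{\exp(-|z-f(y)|/b)}=\exp\!\left(\frac{|z-f(y)|-|z-f(x)|}{b}\right)\leq\exp\!\left(\frac{|f(x)-f(y)|}{b}\right)\leq e^{\eps},
\]
by the reverse triangle inequality and the choice of $b$. Integrating over $E$ yields $\Pr[g(x)\in E]\leq e^{\eps}\Pr[g(y)\in E]$, i.e.\ $(\eps,0)$-DP.

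For the accuracy claim, I would use the standard Laplace tail bound: for $J\sim\Lap(b)$, integrating the symmetric density on $[t,\infty)$ gives $\Pr[|J|\geq t]=e^{-t/b}$. Setting $t=b\ln(1/\alpha)=(c/\eps)\ln(1/\alpha)$ immediately gives the stated $(1-\alpha)$-probability bound on $|g(x)-f(x)|$.

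For the truncated case, I would again fix neighbors $x,y$ and, without loss of generality, assume $f(x)\leq f(y)$ with $\Delta:=f(y)-f(x)\in[0,c]$. Split any measurable $E\subseteq\R$ into $E_1:=E\cap[f(y)-\tau,f(y)+\tau]$, the part inside the support of $g(y)$ with $\tau=(c/\eps)\ln(1/\delta)$, and $E_2:=E\setminus E_1$. On $E_1$ the pointwise density-ratio argument from the pure case applies verbatim, since the common normalizing constant $a_{b,\tau}=1/(1-e^{-\tau/b})$ cancels; this gives $\Pr[g(x)\in E_1]\leq e^{\eps}\Pr[g(y)\in E_1]$. For the remaining mass, $E_2\cap\mathrm{supp}(g(x))\subseteq[f(x)-\tau,f(y)-\tau)$, so $\Pr[g(x)\in E_2]$ is bounded by $\Pr_{J\sim\TLap(b,\tau)}[J\in[-\tau,\Delta-\tau)]$, which evaluates to $\tfrac{a_{b,\tau}}{2}\,e^{-\tau/b}(e^{\Delta/b}-1)$. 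Substituting $e^{-\tau/b}=\delta$ and $e^{\Delta/b}\leq e^{\eps}$ gives a bound of the desired order in $\delta$. The main obstacle will be tracking the constants arising from $a_{b,\tau}$ to confirm that the additive slack is actually bounded by $\delta$ in the regime the paper needs; this is a routine but careful verification, potentially requiring either a slightly sharper choice of $\tau$ in the large-$\eps$ regime or an additional small-$\delta$ assumption.
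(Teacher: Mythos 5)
The paper does not prove this statement at all---it is recorded as a known fact with a citation to \cite{DworkMNS06}---so there is no internal proof to compare against, and your proposal should be judged on its own. The pure-DP part (pointwise density-ratio bound via the reverse triangle inequality and the Lipschitz bound $|f(x)-f(y)|\le c$) and the accuracy part (the exact tail identity $\Pr[|J|\ge t]=e^{-t/b}$ with $t=\tfrac{c}{\eps}\ln\tfrac1\alpha$) are the standard arguments and are correct as written.

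For the truncated case your decomposition and computation are also correct, and the issue you flag at the end is genuine rather than a routine constant-chase: with $b=\tfrac c\eps$, $\tau=\tfrac c\eps\ln\tfrac1\delta$ and $\Delta=f(y)-f(x)\le c$, the mass of $g(x)$ on $\mathrm{supp}(g(x))\setminus\mathrm{supp}(g(y))$ is exactly $\tfrac{a_{b,\tau}}{2}e^{-\tau/b}\bigl(e^{\Delta/b}-1\bigr)\le \tfrac{\delta\,(e^{\eps}-1)}{2(1-\delta)}$, and this is at most $\delta$ only when $e^{\eps}\le 3-2\delta$, i.e.\ roughly $\eps\lesssim 1$. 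For larger $\eps$ the fact as literally stated (truncation radius $\tfrac c\eps\ln\tfrac1\delta$) does not give $(\eps,\delta)$-DP---indeed if $\ln\tfrac1\delta<\tfrac\eps2$ the two supports can be disjoint---and one must enlarge the truncation to $\tau=\tfrac c\eps\ln\bigl(1+\tfrac{e^{\eps}-1}{2\delta}\bigr)$ (the standard optimal truncated Laplace) or restrict to $\eps=O(1)$. So your argument is complete modulo stating one of these two fixes explicitly; the imprecision you uncovered lies in the constant-free phrasing of the fact itself, not in your approach.
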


We repeatedly use the following tail bound for Laplace random variables. 

\begin{fact}[Laplace tails]\label{fact:laplace} For all $s>0$ and $\beta\in (0,1)$, if $Z\sim \Lap(s)$ then $\Pr\Bparen{|Z|\geq s \ln\frac 1\beta} = \beta$.
\end{fact}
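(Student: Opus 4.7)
The plan is to compute the tail probability directly from the density of $\Lap(s)$, which is $f(z) = \frac{1}{2s} e^{-|z|/s}$. By the symmetry of this density around $0$, for any $t \geq 0$ we have $\Pr[|Z| \geq t] = 2 \Pr[Z \geq t]$, and a single direct integration gives
\[
\Pr[Z \geq t] = \int_t^\infty \frac{1}{2s} e^{-z/s} \, dz = \frac{1}{2} e^{-t/s}.
\]
Thus $\Pr[|Z| \geq t] = e^{-t/s}$. Substituting $t = s \ln(1/\beta)$ yields $\Pr[|Z| \geq s \ln(1/\beta)] = e^{-\ln(1/\beta)} = \beta$, which is the claim.

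There is essentially no obstacle: the fact is a one-line consequence of integrating the Laplace density and using its symmetry around the origin. The only small care needed is to note that the statement writes the tail probability as an equality rather than an inequality, which is fine because the Laplace distribution is continuous and the density is positive everywhere, so the endpoint contributes nothing and the closed form $e^{-t/s}$ is exact for all $t \geq 0$.
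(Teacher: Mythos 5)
Your computation is correct and is exactly the standard one-line integration of the Laplace density that the paper implicitly relies on; the paper states this as a fact without proof, so there is nothing to compare beyond noting your argument is the canonical derivation. Your remark that the equality (rather than inequality) holds because the distribution is continuous is also correct.
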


We use the following well known facts regarding the composition of differentially private mechanisms and postprocessing. 
\ifnum\stoc=1
(See \cite{BookAlgDP}).
\else
These can be found in 
\cite{BookAlgDP}.
\fi

\begin{fact}[Composition]
\label{fact:composition}
Fix $\eps_1,\eps_2>0$ and $\delta_1,\delta_2\in(0,1)$. Suppose $\cM_1$ and $\cM_2$ are (respectively) $(\eps_1,\delta_1)$-DP and $(\eps_2,\delta_2)$-DP. Then, the mechanism that, on input $x$, outputs $(\cM_1(x),\cM_2(x))$ is $(\eps_1+\eps_2,\delta_1+\delta_2)$-DP. 
\end{fact}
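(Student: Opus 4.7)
The plan is to prove this standard basic composition theorem by applying the DP guarantees of the two mechanisms in turn and using independence. Fix neighboring datasets $x,y \in \dom$ and a measurable event $E \subseteq \R \times \R$; the goal is to show
\[
\Pr[(\cM_1(x), \cM_2(x)) \in E] \leq e^{\eps_1+\eps_2} \Pr[(\cM_1(y), \cM_2(y)) \in E] + \delta_1 + \delta_2.
\]
By independence of $\cM_1$ and $\cM_2$ (which I would make explicit by assuming the two mechanisms use independent randomness, as is standard for sequential composition), the joint distribution of their outputs factors as a product measure, so the left-hand side can be written as an iterated integral by conditioning on the first coordinate, with slices $E_s = \{t : (s,t) \in E\}$.

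The cleanest route to the exact bound $\delta_1+\delta_2$ uses the standard ``bad-event'' characterization of approximate DP: for each $\cM_i$ and each pair of neighbors, there is a measurable set $B_i$ in the range of $\cM_i$ with $\Pr[\cM_i(x) \in B_i] \leq \delta_i$ such that, restricted to outputs outside $B_i$, the density ratio of $\cM_i(x)$ to $\cM_i(y)$ is bounded pointwise by $e^{\eps_i}$. I would first derive this characterization (a brief argument via the Hahn--Jordan decomposition of the signed measure $\mu_{\cM_i(x)} - e^{\eps_i}\mu_{\cM_i(y)}$ suffices). Given it, I would split $E$ into its ``good'' part $E \cap (B_1^c \times B_2^c)$ and the remaining ``bad'' part where at least one coordinate lies in the corresponding $B_i$. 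On the good part, the two pointwise density bounds multiply---this is where independence is crucially used---to give the factor $e^{\eps_1+\eps_2}$, bounding the good contribution by $e^{\eps_1+\eps_2}\Pr[(\cM_1(y), \cM_2(y)) \in E]$. The bad part contributes at most $\Pr[\cM_1(x) \in B_1] + \Pr[\cM_2(x) \in B_2] \leq \delta_1 + \delta_2$ by a union bound, completing the argument.

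The main obstacle is obtaining the additive term $\delta_1+\delta_2$ rather than the weaker $e^{\eps_2}\delta_1 + \delta_2$. A naive strategy that first applies $(\eps_2,\delta_2)$-DP pointwise inside the conditional integral, and then applies the layer-cake inequality $\Ex_{s \sim \cM_1(x)}[h(s)] \leq e^{\eps_1}\Ex_{s \sim \cM_1(y)}[h(s)] + \delta_1$ (valid for any $h: \R \to [0,1]$, and derived directly from DP applied to the super-level sets of $h$) to $h(s)=\Pr[\cM_2(y) \in E_s]$, yields only $e^{\eps_2}\delta_1 + \delta_2$. Avoiding the extraneous $e^{\eps_2}$ factor requires working directly with densities on the good set, which is precisely what the bad-event decomposition provides. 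Since this fact is entirely standard, the shortest acceptable proposal is simply to cite \cite{BookAlgDP}, as the paper already does.
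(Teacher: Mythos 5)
The paper itself gives no proof of this fact: it is quoted as standard and attributed to \cite{BookAlgDP}, so your closing remark that citing the textbook suffices is exactly what the paper does. Your sketched proof, however, has a genuine gap: the ``bad-event characterization'' you invoke is false as stated. Approximate DP does \emph{not} imply that for neighboring inputs there is a set $B_i$ with $\Pr[\cM_i(x)\in B_i]\le\delta_i$ outside of which the density ratio of $\cM_i(x)$ to $\cM_i(y)$ is pointwise at most $e^{\eps_i}$; that stronger property (sometimes called probabilistic or pointwise DP) is strictly stronger than $(\eps,\delta)$-DP. Concretely, let $\cM_1(x)$ put mass $(0.6,0.4)$ and $\cM_1(y)$ put mass $(0.4,0.6)$ on two outputs, with $e^{\eps_1}=1.1$ and $\delta_1=0.16$: every event satisfies both inequalities of \Cref{def:dp} (the tight one is $0.6\le 1.1\cdot 0.4+0.16$), yet the ratio at the first output is $1.5>e^{\eps_1}$, so any admissible $B_1$ must contain it and has probability $0.6>\delta_1$ under $\cM_1(x)$. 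Relatedly, the Hahn--Jordan decomposition of $\mu_{\cM_1(x)}-e^{\eps_1}\mu_{\cM_1(y)}$ does not deliver what you claim: for the positive set $B_1$ it yields only the \emph{excess-mass} bound $\int_{B_1}\bigl(p_1-e^{\eps_1}q_1\bigr)\,d\mu\le\delta_1$ (by applying DP to $B_1$), not $\Pr[\cM_1(x)\in B_1]\le\delta_1$. Consequently your union bound for the ``bad part,'' $\Pr[\cM_1(x)\in B_1]+\Pr[\cM_2(x)\in B_2]\le\delta_1+\delta_2$, is unjustified.

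The overall plan is nonetheless repairable with a small change, and the repaired argument is the standard textbook one. Write $p_i\le\min(p_i,e^{\eps_i}q_i)+(p_i-e^{\eps_i}q_i)_+$, where the excess term has total integral at most $\delta_i$ by DP. Using independence, bound $\Pr[(\cM_1(x),\cM_2(x))\in E]$ by first replacing $p_1$ with its truncation (paying at most $\delta_1$, since the inner factor is at most $1$), then replacing $p_2$ with its truncation inside the conditional integral (paying at most $\delta_2$, multiplied by $\int\min(p_1,e^{\eps_1}q_1)\le 1$), and finally using $\min(p_i,e^{\eps_i}q_i)\le e^{\eps_i}q_i$ on the good part, which gives $e^{\eps_1+\eps_2}\Pr[(\cM_1(y),\cM_2(y))\in E]+\delta_1+\delta_2$. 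Your diagnosis that the naive conditioning argument only yields $e^{\eps_2}\delta_1+\delta_2$ is correct; the fix is the truncated-density (excess-mass) decomposition, not a bad event of probability at most $\delta_i$.
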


\begin{fact}[Postprocessing]
\label{fact:postprc}
Fix $\eps>0$ and $\delta\in(0,1)$. Let~$\cA$ be an algorithm and $\cM$ be an $(\eps,\delta)$-DP mechanism. Then the algorithm that, on input $x$, runs $\cA$ on the output of $\cM(x)$, is $(\eps,\delta)$-DP.  
\end{fact}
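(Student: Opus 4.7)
The statement to be shown is the standard postprocessing property of differential privacy: composing an $(\eps,\delta)$-DP mechanism $\cM$ with any (possibly randomized) algorithm $\cA$ preserves the privacy guarantee. My plan is to reduce directly to \Cref{def:dp} by fixing neighboring datasets and a measurable output event, and then pushing the event through $\cA$ back to the output space of $\cM$, where the assumed DP guarantee applies.

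First I would fix neighboring $x,y\in\dom$ and a measurable subset $E$ of the output space of $\cA$. Let $\cM'$ denote the composed algorithm $x\mapsto \cA(\cM(x))$. The goal is to show $\Pr[\cM'(x)\in E]\leq e^\eps \Pr[\cM'(y)\in E]+\delta$. I would begin with the easier case where $\cA$ is deterministic: the preimage $E'=\cA^{-1}(E)$ is then a measurable subset of the output space of $\cM$, so $\Pr[\cM'(x)\in E]=\Pr[\cM(x)\in E']\leq e^\eps\Pr[\cM(y)\in E']+\delta=e^\eps\Pr[\cM'(y)\in E]+\delta$, where the inequality uses the $(\eps,\delta)$-DP of $\cM$ applied to the event $E'$.

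Next I would handle general randomized $\cA$ by conditioning on its random coins. Represent $\cA$ as a deterministic map $\cA_r$ parameterized by coins $r\sim \rho$ drawn independently of $\cM$'s randomness. Then
\begin{align*}
\Pr[\cM'(x)\in E]
&= \mathop{\Ex}_{r\sim\rho}\bigl[\Pr[\cM(x)\in \cA_r^{-1}(E)]\bigr]\\
&\leq \mathop{\Ex}_{r\sim\rho}\bigl[e^\eps \Pr[\cM(y)\in \cA_r^{-1}(E)]+\delta\bigr]\\
&= e^\eps \Pr[\cM'(y)\in E]+\delta,
\end{align*}
where the middle line applies the deterministic case pointwise in $r$, and the outer equalities use independence of $r$ from $\cM$'s internal randomness together with Fubini/Tonelli. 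Since $x,y,E$ were arbitrary, this establishes that $\cM'$ is $(\eps,\delta)$-DP.

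The only real subtlety is a measurability check: one must verify that $\cA_r^{-1}(E)$ is measurable for almost every $r$ and that the integrand above is measurable in $r$, which is standard for the Borel/product $\sigma$-algebras used implicitly by \Cref{def:dp}. I do not expect any genuine obstacle here; the argument is entirely routine and included for completeness so that the paper's later composition/postprocessing manipulations are on solid footing.
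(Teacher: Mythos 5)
Your argument is correct: the reduction to the deterministic case via preimages of measurable events, followed by averaging over the postprocessor's independent coins (with the $\delta$ term passing through the expectation unchanged), is the standard proof of post-processing for approximate DP. The paper does not prove \Cref{fact:postprc} at all—it states it as a well-known fact and cites \cite{BookAlgDP}—and your write-up is essentially the textbook argument from that reference, so there is nothing to reconcile.
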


\section{Privacy Wrappers with Automated Sensitivity Detection}\label{sec:generalized-shifted-inverse}

In this section, we state and prove \Cref{thm:generalized-shifted-inverse-mechanism}, which provides privacy wrappers for situations when the analyst gives no information about the sensitivity of the black-box function $f$ they provide.

\begin{theorem}[\ASDalgfull\ privacy wrapper]
    \label{thm:generalized-shifted-inverse-mechanism}
    For every universe $\cU$, privacy parameters $\eps>0$ and $\delta\in[0,1)$, error probability $\beta \in [0,1)$,  and finite range $\cY\subset\R$ with size $\rangesize \defeq |\cY|$, 
    there exist a $\lambda$-down-local privacy wrapper $\cW$ such that for every  function $f:\dom\to\cY$ and dataset $x\in\dom$, %
    with probability at least $1-\beta$,
    \[ 
    \cW^f(x) \in [\min f \bparen{\DN_\lambda(x)}, \max f \bparen{\DN_\lambda(x)}] \, ,
    \]
    where $\beta,\delta,\lambda$ satisfy the following: 
    \begin{enumerate}
        \item If $\beta>0$, then $\delta=0$ and $\lambda = O\bparen{\frac1\eps\log\frac{k}\beta}$;
        \item If $\delta>0$, then $\beta=0$ and $\lambda = \frac 1 \eps \cdot 2^{O(\log^* \rangesize )}\log \frac 1 \delta$. (In this case, the privacy wrapper is correct with probability 1.) 
    \end{enumerate}
\end{theorem}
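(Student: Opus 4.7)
The plan is to combine the monotonization idea sketched in \Cref{sec:techniques} with the Shifted Inverse mechanism of \cite{FangDY22} (for pure DP) or its refinement in \Cref{lem:shifted-inverse-both} (for approximate DP). Neither ShI variant is a privacy wrapper on its own---their privacy proofs rely on the input function being monotone---but for any fixed threshold $\ell \in \N$, the function
$$\monfl(x') \defeq \max\Bigl(\{f(z) : z \subseteq x',\ |z|\geq \ell\}\cup\{-\infty\}\Bigr)$$
\emph{is} monotone in $x'$ by construction. Once $\ell$ is publicly released, $\monfl$ is a fixed monotone function, so ShI can be safely applied to it on the private input $x$.

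The wrapper would operate in two stages. In Stage~1, release $\ell := |x| + J - c$ for an appropriate offset $c$, where $J \sim \Lap(2/\eps)$ in the pure-DP case and $J \sim \TLap(2/\eps,\, \tfrac{2}{\eps} \ln \tfrac{2}{\delta})$ in the approximate-DP case. Choose $c$ and the noise scale so that $\ell \in [|x| - 2\lambda_0, |x| - \lambda_0]$ with failure probability at most $\beta/2$ (using \Cref{fact:laplace}) in the pure-DP case, or with probability $1$ (using the truncation of $\TLap$) in the approximate-DP case, where $\lambda_0$ denotes the locality of the appropriate ShI variant. In Stage~2, run that ShI variant with remaining privacy budget $(\eps/2, \delta/2)$ on the (now public) monotone function $\monfl$ and private input $x$, and return its output.

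Privacy follows by composition (\Cref{fact:composition}) and postprocessing (\Cref{fact:postprc}): Stage~1 is $(\eps/2,0)$- or $(\eps/2,\delta/2)$-DP by \Cref{fact: laplace mechanism}, and conditional on the released $\ell$, Stage~2 is $(\eps/2,0)$- or $(\eps/2,\delta/2)$-DP by the ShI guarantees applied to the monotone function $\monfl$. For accuracy, condition on the good event that $\ell$ lies in the target range. Then every $x' \in \DN_{\lambda_0}(x)$ satisfies $|x'|\geq \ell$, so $\monfl(x')$ equals $f(z^\star)$ for some $z^\star \in \DN_{2\lambda_0}(x)$, and in particular is not $-\infty$. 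The ShI accuracy guarantee on the monotone $\monfl$ places its output in $[\min\monfl(\DN_{\lambda_0}(x)), \max\monfl(\DN_{\lambda_0}(x))]$, which is contained in $[\min f(\DN_{2\lambda_0}(x)), \max f(\DN_{2\lambda_0}(x))]$. Setting $\lambda = 2\lambda_0$ and substituting the known locality of the two ShI variants yields the two cases of the theorem.

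The main obstacle is achieving the \emph{zero} failure probability required in the approximate-DP case ($\beta = 0$ when $\delta > 0$). This rules out ordinary Laplace noise for the threshold and forces us to use $\TLap$ in Stage~1, so that $\ell$ lies in the target range deterministically rather than with high probability. It also requires that the modified ShI of \Cref{lem:shifted-inverse-both} output a value in $\monfl(\DN_{\lambda_0}(x))$ deterministically. The $2^{O(\log^* \rangesize)}$ factor in the approximate-DP locality (replacing the $\log(\rangesize/\beta)$ factor of the pure-DP case) is inherited directly from the improved ShI variant; the reduction above simply preserves it.
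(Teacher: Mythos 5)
Your proposal is correct and takes essentially the same route as the paper's proof of \Cref{thm:generalized-shifted-inverse-mechanism}: privately release a noisy level $\ell$ close to (but below) $|x|$, run the appropriate \ShI variant on the level-$\ell$ monotonization $\monfl$, obtain privacy by composition since $\monfl$ is monotone for \emph{every} $f$ once $\ell$ is public, and obtain accuracy by observing that the values of $\monfl$ on $\DN_{\lambda_0}(x)$ are values of $f$ on $\DN_{2\lambda_0}(x)$, so the \ShI output lands in $[\min f(\DN_{2\lambda_0}(x)), \max f(\DN_{2\lambda_0}(x))]$ with $\lambda=2\lambda_0$. The only deviation is your use of truncated Laplace noise for $\ell$ in the approximate-DP case so that the range containment and locality hold deterministically, whereas the paper's algorithm uses plain Laplace noise in both cases and a union bound over failure events; your refinement is compatible with the paper's building blocks and matches the probability-$1$ claim of case 2.
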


In particular, the accuracy guarantee of this wrapper implies that $|\cW^f(x)-f(x)|\leq DS^f_\lambda(x)$. 
In order to state results in this section more compactly, we define a single function $\lambda(\eps,\delta,\beta,\rangesize)$ that captures both cases of \Cref{thm:generalized-shifted-inverse-mechanism}, with the convention that it is invoked with exactly one of $\delta$ and $\beta$ being nonzero:  
\begin{align}\label{eq:lambda-function}
 \lambda(\eps,\delta,\beta,\rangesize) =
\begin{cases}
O\left(\frac1\eps\log\frac{\rangesize}\beta\right)& \text{for } \beta>0 \text{ and }\delta= 0,\\
    \frac 1 \eps \cdot 2^{O(\log^* \rangesize )}\log \frac 1 \delta & \text{for } \delta>0 \text{ and }\beta= 0.
\end{cases} 
\end{align}

In \Cref{sec:shifted-inverse-monotone}, we describe a privacy mechanism that works under a promise that the function $f$ is monotone. In \Cref{sec:genShI}, we transform it to a privacy wrapper that satisfies the conditions of \Cref{thm:generalized-shifted-inverse-mechanism}.

\subsection{Shifted Inverse Mechanism: Promised Monotone Functions}\label{sec:shifted-inverse-monotone}

In this section, we describe a privacy mechanism that works under a promise that the black-box function $f$ is monotone. It can be viewed as a privacy wrapper under the promise, but we do not call it a privacy wrapper to stress that, in contrast to our privacy wrappers, this mechanism is {\em not} private when the promise is broken. Our mechanism is a novel variant~\cite{DPorg-down-sensitivity}
of the {\em Shifted Inverse (\ShI)} mechanism of Fang, Dong, and Yi \cite{FangDY22}. The original \ShI mechanism satisfies {\em pure} differential privacy for all monotone functions $f$. Our variant gets better dependence on the size of the range of function $f$ at the expense of providing only approximate differential privacy.

\begin{lemma}[Shifted Inverse Mechanism with approximate or pure DP]
\label{thm:genSIM}  
\label{lem:shifted-inverse-both}
    There exists a function $\lambda_{\text{ShI}}(\eps,\delta,\beta,k)$ satisfying \eqref{eq:lambda-function} such that, for every universe $\cU$, privacy parameters $\eps>0$ and $\delta\in[0,1)$, failure probability $\beta\in(0,1)$,  range size $\rangesize$, and range $\cY\subset\R$ of size $\rangesize$, there exists a mechanism $\cM$  such that 
for every \textbf{monotone} function $f:\dom\to\cY$:
     \begin{itemize}
         \item the mechanism $\cM^f$ is $(\eps,\delta)$-DP and
     \item
     for $\lambda=\lambda_{\text{ShI}}(\eps,\delta,\beta, k)$
     and every dataset $x\in\dom$, the mechanism $\cM^f$ is $\lambda$-down local and,  with probability at least $1-\beta$, satisfies 
     \begin{equation}
         f(x) - DS_\lambda^f(x) \leq \cM^f(x) \leq f(x) \, .
     \end{equation}
      \end{itemize}

\end{lemma}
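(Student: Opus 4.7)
The plan is to handle the two parameter regimes separately with two different constructions, both of which crucially leverage monotonicity of $f$.

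For the pure DP case ($\delta=0$, $\beta>0$), I would implement the Shifted Inverse Sensitivity Mechanism of Fang, Dong, and Yi~\cite{FangDY22}. For each candidate output $y\in\cY$, define a score $q_y(x)$ measuring the minimum number of deletions from $x$ needed to reach a subset $z \subseteq x$ that witnesses $y$ as a valid output (with an appropriate additive shift in the definition). Monotonicity of $f$ makes $q_y(\cdot)$ $1$-Lipschitz in the neighbor relation on $\dom$, via a coupling that pairs minimizing subsets for neighboring inputs. I would apply the exponential mechanism over $\cY$ with scores $-q_y(x)$. The standard utility analysis for the exponential mechanism, together with the fact that $q_{f(x)}(x)=0$, yields an output $y$ with $q_y(x)=O\bparen{\tfrac 1\eps \log \tfrac \rangesize \beta}$ with probability $1-\beta$, which by construction of $q_y$ places $y$ in $[f(x)-DS_\lambda^f(x), f(x)]$ for $\lambda=O\bparen{\tfrac 1\eps\log \tfrac \rangesize \beta}$, while the queries made are confined to $\DN_\lambda(x)$.

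For the approximate DP case ($\delta>0$, $\beta=0$), I would reduce to the generalized interior-point (GIP) problem, whose state-of-the-art private solver~\cite{CohenLNSS23} requires only $n=\tfrac 1\eps\cdot 2^{O(\log^* \rangesize)}\log \tfrac 1\delta$ ordered samples from a totally-ordered universe of size $\rangesize$. I would define candidates $y_j \defeq \min\{f(z):z\subseteq x,\,|z|=|x|-j\}$ for $j=0,1,\ldots,n$. Monotonicity makes the sequence non-increasing in $j$, and for $j\leq \lambda$ every $y_j$ lies in $[f(x)-DS_\lambda^f(x),f(x)] \cap \cY$. Moreover, for neighbors $x$ and $x'=x\cup\{e\}$, a short case analysis splitting on whether a minimizing subset of $x'$ contains $e$ yields the sandwich $y_j(x)\leq y_j(x')\leq y_{j-1}(x)$: the upper direction uses that minimizers of size $|x|-j+1$ in $x$ embed into $x'$, and the lower direction uses $f(z_0\cup\{e\})\geq f(z_0)$. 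This is precisely the single-index shift-stability condition needed by the GIP solver. Invoking GIP on the candidate sequence returns, with probability~$1$, a value in the interior, hence in $[f(x)-DS_\lambda^f(x),f(x)]$, while $(\eps,\delta)$-DP is inherited from the solver via postprocessing of the candidates as black-box inputs.

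The main obstacle will be verifying the shift-stability property $y_j(x)\leq y_j(x')\leq y_{j-1}(x)$ carefully, and arguing that the full reduction to GIP is privacy-preserving: although the candidates $y_j$ are constructed as deterministic functions of the sensitive input $x$, the GIP solver treats them as an unordered multiset and its privacy guarantees rely on this stability under single-element edits. A secondary technical point is controlling the effect of $j=0$ (where $y_0=f(x)$ leaks $f(x)$ directly) by either shifting the indexing or dropping the endpoint before passing to GIP, so that the index-shift guarantee holds uniformly. The analogous coupling for the pure DP case is embedded in the $1$-Lipschitzness of $q_y$ and is handled within the Fang-Dong-Yi framework without further modification.
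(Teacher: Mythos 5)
Your overall route is the same as the paper's: for pure DP you run the Fang--Dong--Yi exponential mechanism, and for approximate DP you reduce to the generalized interior point problem and invoke the solver of Bun et al.\ / Cohen et al. Your candidate sequence $y_j=\min\{f(z):z\subseteq x,\ |z|=|x|-j\}$ is a dual view of the paper's inverse loss $\ell^f(x,y)=\min\{|x\setminus s|:s\subseteq x,\ f(s)\le y\}$ (for monotone $f$, $\ell^f(x,y)=\min\{j: y_j(x)\le y\}$), and the interleaving $y_j(x)\le y_j(x')\le y_{j-1}(x)$ that you verify is exactly equivalent to the paper's key lemma that $\ell^f$ has sensitivity $1$ in its first argument.

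The genuine gap is in how you obtain privacy in the approximate-DP case. Privacy is \emph{not} ``inherited from the solver via postprocessing of the candidates'': postprocessing applies to functions of an already-private output, whereas your candidates are computed from the raw dataset, and a single insertion into $x$ can move \emph{every} $y_j$, so a solver that is DP with respect to single-element edits of an unordered multiset gives you nothing directly. The interleaving property is the right ingredient, but it must be converted into the form the generalized interior point solver actually requires: a function $g(x,j)\in[0,1]$, nondecreasing in $j$, with sensitivity at most $1/\lambda$ in its first argument. The paper does this by setting $g(x,j)=\max\bigl\{0,\,1-\tfrac{1}{\lambda+1}\ell^f(x,y_j)\bigr\}$, i.e., the truncated rank/CDF of your candidate sequence; your interleaving is then precisely the statement that $g$ has sensitivity $1/(\lambda+1)$, and the solver's $(\eps,\delta)$-DP guarantee is stated with respect to that sensitivity (the truncation at depth $\lambda+1$ is also what yields $\lambda$-down locality). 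Two smaller points: the $j=0$ endpoint needs no special treatment---outputting $f(x)$ itself is allowed, and privacy does not come from hiding it---and in the pure-DP case the ``appropriate additive shift'' must implement a two-sided score such as the paper's $\min\{g(x,j),\,1-g(x,j-1)\}$; a score of $-\ell^f(x,y)$ alone gives every $y\ge f(x)$ the best score and would violate the upper bound $\cM^f(x)\le f(x)$, though this is handled if you literally run the Fang--Dong--Yi shifted score as you intend.
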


The shifted inverse mechanism reduces the task of constructing a privacy wrapper for a monotone function to a generalized interior point problem. We state the definition of this problem from \cite{BunDRS18}. A different---but equivalent---formulation appears in \cite{CohenLNSS23}.

\begin{definition}[Generalized Interior Point Problem {\cite{BunDRS18}}]\label{def:GIPP}
    A function $g : \dom \times [k] \to [0,1]$ gives a generalized interior point problem with sensitivity $\Delta$ if it satisfies the following.
    \begin{itemize}
        \item The function $g$ has sensitivity $\Delta$ in its first argument; i.e., $|g(x,j)-g(y,j)| \le \Delta$ for all neighboring $x,y \in \dom$ and all $j \in [k]$.
        \item The function $g$ is %
        nondecreasing in its second argument; i.e., $0 \le g(x,j)\le g(x,j+1) \le 1$ for all $j \in [k-1]$ and all $x \in \dom$.
    \end{itemize}
    For notational convenience, define $g(x,0)=0$ and $g(x,k+1)=1$ for all $x \in \dom$.

    A solution to the generalized interior point problem given by $g$ on an input $x \in \dom$ is an index $j \in [k+1]$ such that $g(x,j)>0$ and $g(x,j-1)<1$.
\end{definition}

To understand where this problem comes from, consider the (non-generalized) interior point problem \cite{BunNSV15}: We are given $x \in \dom$ where $\univ=[k]$ and seek to output $j \in [k]$ such that $\min x \le j \le \max x$. Being between the minimum and maximum means being in the ``interior'' of the dataset; hence the name. This is a relaxation of the problem of finding a median. We can convert this into a generalized interior point problem by setting $g(x,j) = \frac{|x \cap [0, j]|}{\max\{|x|,1/\Delta\}}$. Then $g(x,j)>0 \iff j \ge \min x$ and $g(x,j-1)<1 \iff j-1 < \max x \iff j \le \max x$, assuming $|x| \ge 1/\Delta$.

The complexity of the generalized interior point problem is measured by the sensitivity $\Delta$, which roughly corresponds to the reciprocal of the sample complexity $n=1/\Delta$.

Under pure DP, we can solve the generalized interior point problem using the exponential mechanism. I.e., 
\ifnum\stoc=0
$\Pr[\cM(x)=j] \propto \exp\left(\frac{\varepsilon}{2\Delta}\min\{g(x,j),1-g(x,j-1)\}\right)$.
\else
$$\Pr[\cM(x)=j] \propto \exp\left(\frac{\varepsilon}{2\Delta}\min\{g(x,j),1-g(x,j-1)\}\right).$$
\fi 
The sample complexity is $n=1/\Delta=O(\log(k)/\varepsilon)$. 
Under concentrated DP \cite{DworkR16,BunS16} or Gaussian DP \cite{DongRS19}, we can solve the generalized interior point problem using noisy binary search \cite{KarpK07} over the index $j \in [k+1]$ where we add Gaussian noise to each value $g(x,j)$. The sample complexity is $O(\sqrt{\log k})$.
Under approximate DP, we are able to obtain dramatically better sample complexity.

\begin{proposition}
[\cite{BunDRS18,CohenLNSS23}]\label{prop:gipp}
    For all $\varepsilon,\delta\in(0,1)$ and $k \in \mathbb{N}$, there exists a parameter $\lambda=O\left(\frac{\log(1/\delta)}{\varepsilon}  \cdot 2^{O(\log^* k)}\right)$ such that the following holds.
    Let $g : \dom \times [k] \to [0,1]$ be a generalized interior point problem with sensitivity $\Delta \le 1/\lambda$.
    Then there exists an $(\varepsilon,\delta)$-differentially private algorithm $\cM : \dom \to [k+1]$ 
    which, on each input $x\in\dom$,
    outputs a solution to the generalized interior point problem given by $g$ on input $x$.
\end{proposition}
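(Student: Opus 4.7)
The plan is to derive \Cref{prop:gipp} from known algorithms for the \emph{standard} interior point problem of \cite{BunNSV15}, whose approximate-DP sample complexity was brought to $O\bparen{2^{O(\log^* k)}\cdot \log(1/\delta)/\varepsilon}$ by \cite{BunDRS18} and subsequently by \cite{CohenLNSS23}. Recall that the standard problem takes a dataset $y\in [k]^n$ as input and asks for some $j$ with $\min y \le j \le \max y$; the target is to set $\lambda$ in our statement equal to the optimal $n$ for this problem.

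First, I would examine the interface through which the algorithms of \cite{BunDRS18,CohenLNSS23} access their input. Upon inspection, these recursive algorithms interact with $y$ only through its empirical CDF $\hat F_y(j) = \tfrac{1}{n}|\{i : y_i \le j\}|$, which is nondecreasing in $j$ and whose value changes by at most $1/n$ between neighboring datasets. This is precisely the structural information encoded by $g$ in \Cref{def:GIPP}: the sensitivity bound $\Delta \le 1/\lambda$ plays the role of $1/n$, and monotonicity in $j$ is given by assumption. Replacing every invocation of $\hat F_y(\cdot)$ inside the standard algorithm with a corresponding invocation of $g(x,\cdot)$ therefore yields a mechanism with an identical privacy guarantee, since the original privacy analysis only uses $1/n$-sensitivity of the queries (via \Cref{fact:composition} and \Cref{fact:postprc}).

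Next, I would verify that correctness transfers. In the standard problem, a returned index $j$ satisfies $\hat F_y(j) > 0$ and $\hat F_y(j-1) < 1$, which are equivalent to $j \ge \min y$ and $j \le \max y$. Under the substitution $\hat F_y \mapsto g(x,\cdot)$ these become $g(x,j) > 0$ and $g(x,j-1) < 1$, which is exactly the defining condition of a generalized interior point in \Cref{def:GIPP}. Correctness thus carries over verbatim, and the inherited sample complexity gives the claimed bound on $\lambda$.

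The main obstacle, I expect, is confirming that the algorithm of \cite{CohenLNSS23} (or \cite{BunDRS18}) can genuinely be phrased in the ``CDF-query only'' form above: some presentations manipulate their dataset via sorting, recursive subsampling, or median selection, and rewriting these as post-processing of (possibly noisy) count queries is conceptually routine but requires care to ensure no access to individual data points slips through. Once this step is in place, the rest of the argument is mechanical, and the stated $\lambda = O\bparen{2^{O(\log^* k)} \log(1/\delta)/\varepsilon}$ follows immediately.
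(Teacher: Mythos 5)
There is a genuine gap, located exactly at what you call the ``main obstacle.'' Note first that the paper does not re-derive \Cref{prop:gipp} from the standard interior point problem at all: it cites results that already solve the \emph{abstract} problem---\cite{BunDRS18} formulate and solve the generalized interior point problem as stated in \Cref{def:GIPP}, and \cite{CohenLNSS23} solve an equivalent formulation (quasi-concave optimization over an arbitrary low-sensitivity quality function)---and the only additional work the paper does is to note how to boost the success probability from $9/10$ to $1$. Your plan instead starts from the dataset version of the interior point problem and asserts that the $2^{O(\log^* k)}\log(1/\delta)/\varepsilon$ algorithms access their input only through the empirical CDF, so that $\hat F_y(\cdot)$ can be swapped for $g(x,\cdot)$. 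This substitution is not a mechanical interface change: a generalized instance $g$ is just a monotone function of sensitivity $\Delta$ in $x$ and need not be the CDF of \emph{any} dataset of $1/\Delta$ points, and the known approximate-DP algorithms genuinely manipulate data points in ways with no CDF analogue---for example, the recursion underlying \cite{BunNSV15} pairs up input points and constructs a new dataset over a domain of size $O(\log k)$, and the algorithm of \cite{CohenLNSS23} sorts and slices the actual data. Re-expressing those steps as post-processing of monotone, low-sensitivity queries is precisely the content of the generalization; it is not ``conceptually routine,'' and it is exactly what the cited works avoid by working with the abstract formulation from the outset. As written, your argument proves the proposition only modulo the very claim that needs proof.

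A second, smaller omission: \Cref{prop:gipp} requires the mechanism to output a valid solution with probability $1$, and this is actually used downstream (the $\delta>0$ case of \Cref{thm:generalized-shifted-inverse-mechanism} has $\beta=0$). The standard interior-point algorithms you would inherit from, as well as \cite{CohenLNSS23}, only guarantee constant or high success probability, so correctness does not ``carry over verbatim''; the paper explicitly handles this by modifying the algorithm or invoking a generic amplification reduction \cite{DPorg-fail-prob}, at the cost of constant factors in the privacy parameters and an additive $O(\log(1/\delta)/\varepsilon)$ in the sample complexity. Your proposal should either incorporate such an amplification step or weaken the statement to a with-high-probability guarantee and track the resulting $\beta$ through the wrapper.
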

Proposition \ref{prop:gipp} guarantees that the output is a solution to the generalized interior point problem with probability 1.
Cohen et al.~\cite{CohenLNSS23} only guarantee a success probability of $9/10$. Their algorithm can be modified to achieve success probability 1 or, alternatively, we can use a generic reduction \cite{DPorg-fail-prob} that amplifies the success probability to 1 (at the expense of a constant factor increase in the privacy parameters and an additive $O(\log(1/\delta)/\varepsilon)$ in the sample complexity).

Now we %
present our generalization of the 
\ShI mechanism from \cite{FangDY22}.
We begin by defining the inverse loss function: Given a function $f : \dom \to \mathbb{R}$, we define $\ell^f : \dom \times \mathbb{R} \to [0,\infty]$ by\footnote{There is an annoying technicality: If $y < f(\emptyset)$, then $\ell^f(x,y)=\min\emptyset=+\infty$.}
\begin{equation}
    \ell^f(x,y) := \min\{ |x \setminus s| : s \subseteq x, f(s) \le y \} . \label{eq:invloss}
\end{equation}
In words, $\ell^f(x,y)$ is the number of points that need to be removed from the input $x$ until the  value of the function $f$ becomes less than or equal to $y$.
We can invert the inverse loss function to recover the original function:
For all $f,x,y$, we have
\ifnum\stoc=0
\begin{equation}
    \ell^f(x,y) = 0 \iff f(x) \le y
    ~~\text{ or, equivalently, }~~
    \ell^f(x,y) > 0 \iff f(x) > y \label{eq:invloss0}
\end{equation}
\fi
\ifnum\stoc=1
\begin{align}
        &\ell^f(x,y) = 0 \iff f(x) \le y \nonumber\\
    \text{or, equivalently, \ \ \ \ }&
    \ell^f(x,y) > 0 \iff f(x) > y.
    \label{eq:invloss0}
\end{align}
\fi
Hence,
\begin{equation}
    f(x) = \min \{ y \in \mathbb{R} : \ell^f(x,y) = 0 \}
    =  \sup \{ y \in \mathbb{R} : \ell^f(x,y) > 0 \} .
\end{equation}
We can also relate the inverse loss to down sensitivity:
\begin{equation}
    \ell^f(x,y) \le \lambda \implies y \ge f(x) - DS^f_\lambda(x). \label{eq:invlossDS}
\end{equation}
Combining \eqref{eq:invloss0} and \eqref{eq:invlossDS} tells us that, if we can find $y \in \mathbb{R}$ such that $0<\ell^f(x,y)\le\lambda$, then $f(x) - DS^f_\lambda(x) \le y < f(x)$.
Such a set $y$ is precisely what the shifted inverse mechanism tries to find.

The advantage of the inverse loss function is that it has low sensitivity, even when $f$ has high sensitivity. However, this only holds when $f$ is monotone. The following lemma encapsulates an elegant insight of Fang et al.~\cite{FangDY22}. 

\begin{lemma}[Sensitivity of inverse loss function for monotone functions]\label{lem:invsens}
    Let $f : \dom \to \mathbb{R}$ be monotone. 
    Define $\ell^f : \dom \times \mathbb{R} \to [0,\infty]$ as in \eqref{eq:invloss}.
    Then $\ell^f$ has sensitivity $1$ in its first argument. I.e., for all $x,x'\in\dom$ and all $y \in \mathbb{R}$, $|\ell^f(x,y)-\ell^f(x',y)| \le |x \setminus x'|+|x' \setminus x|$.
\end{lemma}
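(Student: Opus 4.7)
The plan is to prove the stronger one-sided bound $\ell^f(x',y) - \ell^f(x,y) \le |x'\setminus x|$ by exhibiting an explicit witness subset of $x'$, and then finish by symmetry. Fix $y\in\R$ and assume first that $\ell^f(x,y)<\infty$, so there exists $s\subseteq x$ with $f(s)\le y$ and $|x\setminus s|=\ell^f(x,y)$. The natural witness for $x'$ is $s':=s\cap x'$. Two things have to be checked: that $s'$ is feasible for the minimization defining $\ell^f(x',y)$, and that $|x'\setminus s'|$ is not much bigger than $|x\setminus s|$.

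Feasibility is where monotonicity enters. Since $s'\subseteq s$, monotonicity of $f$ gives $f(s')\le f(s)\le y$, so $s'$ is a valid subset of $x'$ for the definition of $\ell^f(x',y)$. For the size bound, observe that elements of $x'\setminus s' = x'\setminus(s\cap x') = x'\setminus s$ split disjointly as
\[
x'\setminus s \;=\; \bigl((x'\cap x)\setminus s\bigr)\;\sqcup\;(x'\setminus x),
\]
using that $s\subseteq x$. The first piece is contained in $x\setminus s$, so has size at most $\ell^f(x,y)$, and hence
\[
\ell^f(x',y) \;\le\; |x'\setminus s'| \;\le\; \ell^f(x,y) + |x'\setminus x|.
\]
Swapping the roles of $x$ and $x'$ yields the symmetric inequality, and together these give $|\ell^f(x,y)-\ell^f(x',y)| \le \max(|x\setminus x'|,|x'\setminus x|)$, which is at most $|x\setminus x'|+|x'\setminus x|$ as claimed; in particular the sensitivity with respect to the neighboring relation is $1$.

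The only remaining subtlety is the infinite case. If $\ell^f(x,y)=\infty$, then no subset of $x$ maps into $(-\infty,y]$; by monotonicity this forces $f(\emptyset)>y$, hence $\ell^f(x',y)=\infty$ as well and the bound holds under the convention $|\infty-\infty|=0$. Conversely, whenever $\ell^f(x,y)<\infty$ the construction above shows $\ell^f(x',y)<\infty$ too, so the two quantities are finite together. There is no real obstacle here—the entire content of the lemma is the monotonicity step $s'\subseteq s \Rightarrow f(s')\le f(s)$, which is exactly the reason the argument fails for general $f$ and motivates the monotonization step needed for the full wrapper.
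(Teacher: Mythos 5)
Your proof is correct and is essentially the paper's argument: the paper factors the same construction through the intermediate set $x\cap x'$ as two claims (monotonicity makes $\ell^f$ decrease under taking subsets; adding back elements costs at most their number), whereas you build the witness $s\cap x'$ directly, and your one-sided bounds (and the careful treatment of the $\infty$ case) are fine. No gap.
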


\ifnum\stoc=1 
    The proof of \Cref{lem:invsens} (due to \cite{FangDY22}) appears in the \fullcite.
\else
We present a proof for completeness. 
\begin{proof}
    Fix $y \in \mathbb{R}$ and $x,x'\in\dom$. 
    We break the proof into two claims:
    \begin{itemize}
    \item\textbf{Claim I:} If $x'' \subset x'$ and $f$ is monotone, then $\ell^f(x'',y) \le \ell^f(x',y)$.
    \item\textbf{Claim II:} If $x'' \subset x$, then $\ell^f(x,y) \le \ell^f(x'',y) + |x \setminus x''|$.\footnote{Claim II does not require monotonicity, but Claim I does.}
    \end{itemize} 
    Assuming the claims, the lemma can be proved by setting $x'' = x \cap x'$. Then 
    \ifnum\stoc=0
    \[
    \ell^f(x,y) \overset{\text{Claim II}}{\le} \ell^f(x'',y) + |x \setminus x''| \overset{\text{Claim I}}{\le} \ell^f(x',y) + |x \setminus x''| = \ell^f(x',y) + |x \setminus x'|,
    \] 
    \fi
    \ifnum\stoc=1
    \[
    \begin{split}
    \ell^f(x,y) \overset{\text{Claim II}}{\le} \ell^f(x'',y) + |x \setminus x''| &\overset{\text{Claim I}}{\le} \ell^f(x',y) + |x \setminus x''| \\
    & = \ell^f(x',y) + |x \setminus x'|,
    \end{split}
    \]
    \fi
    which establishes $\ell^f(x,y)-\ell^f(x',y) \le |x \setminus x'|+|x' \setminus x|$. The other direction follows by symmetry.
    \begin{proof}[Proof of Claim I]
        Let $x_* = x' \setminus x'' \subset x'$ . Since $x'' \subseteq x'$, we have $x'' = x' \setminus x_*$.
        
        Let $s' \subset x'$ be such that $\ell^f(x',y)=|x'\setminus s'|$ and $f(s') \le y$.
        Let $s'_* = s' \setminus x_*$.
        Since $s'_* \subseteq s'$, we have $f(s'_*) \le f(s') \le y$ by monotonicity.
        Also, since $s' \subseteq x'$, we have $s'_* \subset x' \setminus x_* = x''$.
        Thus
        \begin{align*}
            \ell^f(x'',y) &= \min\{ |x'' \setminus s''| : s'' \subseteq x'', f(s'') \le y \} \\
            &\le |x'' \setminus s'_* | \\
            &= |(x' \setminus x_*) \setminus ( s' \setminus x_*) | \\
            &\le |x' \setminus s'| \\
            &= \ell^f(x',y).\qedhere
        \end{align*}
    \end{proof}
    \begin{proof}[Proof of Claim II]
        Let $s'' \subset x''$ be such that $\ell^f(x'',y)=|x'' \setminus s''|$ and $f(s'') \le y$.
        Since $x'' \subseteq x$, we have $s'' \subseteq x$ and, hence,
        \begin{align*}
            \ell^f(x,y) &= \min\{ |x \setminus s| : s \subseteq x, f(s) \le y \} \\
            &\le |x \setminus s''| \\
            &= |x'' \setminus s''| + |x \setminus x''| \\
            &= \ell^f(x'',y) + |x \setminus x''|.\qedhere
        \end{align*}
    \end{proof}
This completes the proof of \Cref{lem:invsens}.
\end{proof}

\fi
Now we can prove our result on the shifted inverse mechanism.
Note that the differential privacy guarantee of $\cM^f$ depends on the monotonicity of $f$. 

\begin{proof}[Proof of \Cref{thm:genSIM}]
    As in \eqref{eq:invloss}, define $\ell^f : \dom \times \mathbb{R} \to [0,\infty]$ by
    \[\ell^f(x,y) := \min\{ |x \setminus s| : s \subseteq x, f(s) \le y \}.\]
    By \Cref{lem:invsens}, $\ell^f$ has sensitivity $1$ in its first argument. 
    Also $\ell^f$ is non-increasing in its second argument. I.e., $y_1 \le y_2 \implies \ell^f(x,y_1) \ge \ell^f(x,y_2)$.
    Let $\mathcal{Y}=\{y_1 \le y_2 \le \cdots \le y_k\}$ be an ordered enumeration of $\mathcal{Y}$.
    Define $g : \dom \times [k] \to [0,1]$ by \[g(x,j) := \max\Big\{ 0 , 1 - \frac{1}{\lambda+1} \ell^f(x,y_j)\Big\} .\]
    Then $g$ gives a generalized interior point problem with sensitivity $\Delta=\tfrac{1}{\lambda+1}$.

    We claim that, if $j$ is a solution to the generalized interior point problem given by $g$, then \[f(x) - DS_\lambda^f(x) \le y_j \le f(x).\]
    To prove the claim, suppose $j \in [k+1]$ is a solution to the generalized interior point problem given by $g$.
    Then $g(x,j)>0$, which implies $\ell^f(x,y_j)<\lambda+1$ (i.e., $\ell^f(x,y_j)\le\lambda$) and, hence, $y \ge f(s) \ge f(x) - DS^f_{\lambda}(x)$ for some $s \subseteq x$ with $|x \setminus s| \le \lambda$.
    Also $g(x,j-1)<1$, which implies $\ell^f(x,y_{j-1})>0$ and, hence, $f(x)>y_{j-1}$. Since $f(x)=y_{j'}$ for some $j' \in [k]$, we have $f(x) \ge y_j$. 
    (If $j=1$, then $y_{j-1}$ is undefined, but the conclusion $f(x) \ge y_j $ still holds trivially because $y_j = y_1 = \min \mathcal{Y}$.) 
    (Note that $j=k+1$ is not a valid solution since this requires $g(x,k)<1$, which implies $\ell^f(x,y_k)>0$, which implies $f(x)>y_k = \max \mathcal{Y}$---a contradiction.)

    Given this claim, it now suffices to solve the generalized interior point problem given by $g$.
    For the case of approximate differential privacy ($\delta>0$), we can apply the algorithm given by \Cref{prop:gipp}. This yields the second term in the minimum.
    For the case of pure differential privacy ($\delta=0$), we can apply the exponential mechanism. (This case  was already analyzed by Fang, Dong, and Yi \cite{FangDY22}. We include this analysis for completeness.) That is, our algorithm is defined by 
    \[\Pr[\cM^f(x)=y_j] = \frac{\exp\left(\frac{\varepsilon}{2\Delta} \min \{ g(x,j) , 1 - g(x,j-1) \}\right)}{\sum_{\ell \in [k]} \exp\left(\frac{\varepsilon}{2\Delta} \min \{ g(x,\ell) , 1 - g(x,\ell-1) \}\right)},\] where we define $g(x,0)=0$. 
    Since $g$ has sensitivity $\Delta = \tfrac{1}{\lambda+1}$ in its first argument, $\cM^f$ is $(\varepsilon,0)$-differentially private.
    In terms of utility, with probability $\ge 1-\beta$ over $j \gets M(x)$, we have 
    \ifnum\stoc=0
    \[ 
    \min \{ g(x,j) , 1 - g(x,j-1) \} \ge \max_{\ell \in [k]} \min \{ g(x,\ell) , 1 - g(x,\ell-1)\} - \frac{2\Delta}{\varepsilon} \log (k/\beta) .
    \]
    \fi
    \ifnum\stoc=1
    \[ 
    \begin{split}
    \min \{ g(x,j) , 1 - g(x,j-1) \} 
    &\ge \max_{\ell \in [k]} \min \{ g(x,\ell) , 1 - g(x,\ell-1)\} \\
    &- \frac{2\Delta}{\varepsilon} \log (k/\beta) .
    \end{split}
    \]
    \fi
    Thus it suffices to show that \[\max_{\ell \in [k]} \min \{ g(x,\ell) , 1 - g(x,\ell-1)\} - \frac{2\Delta}{\varepsilon} \log (k/\beta) > 0.\]
    Since $g(x,0)=0$ and $g(x,k)=1$, there exists some $\ell \in [k]$ such that $g(x,\ell)>\frac12$ and $g(x,\ell-1)\le\frac12$, which implies $\min \{ g(x,\ell) , 1 - g(x,\ell-1)\} \ge \frac12$.
    Thus it suffices to have $\frac{2\Delta}{\varepsilon} \log (k/\beta) < \frac12$, which is equivalent to \[\lambda = \frac{1}{\Delta}-1 > \frac{4}{\varepsilon} \log(k/\beta) - 1.  \]

    Finally, we consider what access to the function $f$ is required to execute $\mathcal{M}$. 
    We must compute $g(x,j)$ for each $j \in [k]$, which depends on $\ell^f(x,y_j)$.
    Note that $\ell^f(x,y)$ only depends on the values $f(s)$ for $s \subseteq x$.
    
    Observe that $g(x,j)=1 \iff \ell^f(x,y_j) \ge \lambda+1$. That is, the exact value of $\ell^f(x,y)$ does not matter past the threshold $\lambda+1$. So we do not need to compute the exact value of $\ell^f(x,y)$ in this case.
    Therefore, we compute $g(x,j)$ using only the values of $f(s)$ on $s \in \DN_\lambda(x)$.
    In symbols, $g(x,j)$ is the maximum of 0 and the following expression:
    \begin{align*}
        1 &- \frac{1}{\lambda+1} \ell^f(x,y_j)
        = 1 - \frac{1}{\lambda+1} \min\{ |x \setminus s| : s \subseteq x, f(s) \le y \} \\       
        &=  1 - \frac{1}{\lambda+1} \min\left( \{\lambda+1\} \cup \{ |x \setminus s| : s \subseteq x, f(s) \le y \}\right) \\       
        &=1 - \frac{1}{\lambda+1} \min\left( \{\lambda+1\} \cup \{ |x \setminus s|  : s \in \DN_\lambda(x), f(s) \le y \}\right) .\qedhere
    \end{align*}
\end{proof}

\subsection{\ASDalgfull: 
 A Wrapper for General Functions
}
\label{sec:genShI}

The two variants of the Shifted Inverse mechanism discussed in the previous section are not privacy wrappers because they are only private under the promise that the function $f$ is monotone. In this section, we generalize \ShI to work for all functions.

\begin{proof}[Proof of \Cref{thm:generalized-shifted-inverse-mechanism}]
The main idea in the generalization of the \ShI mechanism is to construct a monotone function $g$ from the original function $f$ and use $g$ in the \ShI mechanism. The value of $g$ at point $x$ 
will be computed from the down neighborhood of $x$. We parameterize $g$ by the lowest level (i.e., the set size) we include in the down neighborhood. 
\begin{definition}[Monotonization of $f$]\label{def:monotonization} Fix a universe $\univ$ and a range $\cY\subseteq \R$.
 For each $\ell\in\Z$, the {\em level-$\ell$ monotonization} of a function $f:\dom\to\cY$ is the function $\monfl:\dom\to\cY$ defined by\footnote{ We use the convention that if $\cY$ is unbounded below, then $\inf(\cY)=-\infty$.}
$$\monfl(x)=\max\big(\{f(z): z\subseteq x, |z|\geq \ell\}\cup\{\inf(\cY)
\}\big).$$   
\end{definition}

   The following properties of monotonization follow directly from its definition.
\begin{observation}[Properties of monotonization]\label{observation:monotonization}
 For a level $\ell\in\Z$ and a function $f:\dom\to\R$, let $\monfl$ be the level-$\ell$ monotonization of $f$. Then the following properties hold:
 \begin{enumerate}
     \item\label{item1:mono} The function $\monfl$ is monotone.
     \item\label{item2:mono} If $f$ is monotone then $\monfl=f$. 
     \item\label{item3:mono} The value $\monfl(x)$ can be computed by querying $f$ on all subsets of $x$ of size at least $\ell.$        
 \end{enumerate}
\end{observation}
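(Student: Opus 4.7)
The three properties are each essentially bookkeeping about Definition \ref{def:monotonization}, so the plan is to verify them one by one directly from the definition, without invoking any of the earlier machinery (such as \ShI or the generalized interior point problem). This observation will then be combined with \Cref{lem:shifted-inverse-both} to complete the proof of \Cref{thm:generalized-shifted-inverse-mechanism}, by choosing $\ell$ via a private, Laplace-based lower bound on $|x|$ and running a shifted inverse mechanism on $\monfl$ rather than $f$.

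For property~\ref{item1:mono}, let $x \subseteq y$ in $\dom$. Every $z \subseteq x$ with $|z|\geq \ell$ is also a subset of $y$ with $|z|\geq \ell$, so the collection $\{f(z) : z \subseteq x, |z|\geq \ell\}$ is contained in the corresponding collection for $y$. Adjoining the constant $\inf(\cY)$ and taking the maximum of each preserves this inclusion, so $\monfl(x) \leq \monfl(y)$. For property~\ref{item2:mono}, assume $f$ is monotone. Whenever $|x|\geq \ell$, the set $x$ itself lies in the collection over which the maximum is taken, and monotonicity of $f$ gives $f(z)\leq f(x)$ for each $z \subseteq x$; since $f(x)\geq \inf(\cY)$, the max equals $f(x)$. (The edge case $|x|<\ell$ produces $\monfl(x)=\inf(\cY)$ — this case is avoided in the proof of \Cref{thm:generalized-shifted-inverse-mechanism} by ensuring $\ell \leq |x|-\lambda$ with high probability.) Property~\ref{item3:mono} is read off the formula in Definition \ref{def:monotonization}: enumerate all $z \in \{w \subseteq x : |w|\geq \ell\}$, query $f(z)$, take the maximum with $\inf(\cY)$; no values of $f$ outside this collection appear.

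The main (very mild) subtlety is the caveat in property~\ref{item2:mono} that $|x|\geq \ell$ is needed for the equality $\monfl(x)=f(x)$ to hold pointwise. This is not really an obstacle but it is what forces the main theorem's proof to spend a little privacy budget privately estimating $|x|$ before invoking monotonization. With this caveat noted, all three items reduce to direct unpacking of the definition and the proof is complete.
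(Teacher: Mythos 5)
Your verification is correct and matches the paper's treatment: the paper offers no written proof, asserting that all three items ``follow directly from its definition,'' and your argument is exactly that direct unpacking (subset inclusion of the max-sets for item~\ref{item1:mono}, $x$ itself attaining the max for item~\ref{item2:mono}, and reading off the formula for item~\ref{item3:mono}). Your flagged caveat that item~\ref{item2:mono} requires $|x|\geq\ell$ (else $\monfl(x)=\inf(\cY)$) is a fair observation about an edge case the paper leaves implicit, and you correctly note it is harmless since the proof of \Cref{thm:generalized-shifted-inverse-mechanism} ensures $\ell\leq|x|-\lambda/2$ with high probability.
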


Mechanism \ASDalgshort{} is stated in \Cref{alg:genshi}. It first uses Laplace mechanism to choose appropriate level $\ell$ and then runs \ShI with query access to the monotonization of function $f$ at level $\ell.$ It uses our version of \ShI from \Cref{lem:shifted-inverse-both}.
\begin{algorithm}[htb]
\caption{\ASDalgshort}\label{alg:genshi} 
	\begin{algorithmic}[1]
            \Statex \textbf{Parameters:} privacy parameters $\eps>0$ and $\delta\in[0,1),$ failure probability $\beta\in(0,1)$, and finite range $\cY\subset\R$
	        \Statex \textbf{Input:} dataset $x\in\dom$  and query access to $f:\dom\to\cY$
	        \Statex \textbf{Output:} $y\in \R$ 
            \State 
            Set $\lambda\gets 2 \cdot \lambda_{\text{ShI}}(\eps/2,\delta,\beta/2, |\cY|)$, where $\lambda_{\text{ShI}}$ is given by \Cref{lem:shifted-inverse-both}.
            \ifnum\stoc=1 \quad \else \newline \fi
            \Comment{$\lambda$ is set so that \ShI run with the parameter \ifnum\stoc=0 settings below uses depth parameter $\lambda/2$.\fi}
            \ifnum\stoc=1
            \\ \Comment{\ \ settings below uses depth parameter $\lambda/2$.}
            \fi
            \State
             {\bf Release} $\ell\gets\lfloor|x|-\frac 34 \lambda+Z\rfloor$ where $Z\sim\Lap(\frac 2\eps)$.\label{step:release-ell} 
            \State\label{step:run-ShI}Run \ShI from \Cref{lem:shifted-inverse-both} with privacy parameters $\frac \eps 2$ and $\delta$, failure probability $\frac \beta 2$, range $\cY$, input dataset $x$, and query access to the level-$\ell$ monotonization $\monfl$ of $f$ (see \Cref{def:monotonization}) and {\bf return} the answer.
	\end{algorithmic}
 \end{algorithm}

Next, we analyze privacy of \ASDalgshort. \Cref{step:release-ell} 
uses the Laplace mechanism. Since $|x|$ (and, consequently, $\ell$) is a Lipschitz function of $x$, \Cref{fact: laplace mechanism} guarantees that this step is $(\eps/2,0)$-DP. Since $\monfl$ is monotone, the \ShI mechanism run in \Cref{step:release-ell} is $(\eps/2,\delta)$-DP. By composition (\Cref{fact:composition}), \ASDalgshort{} is $(\eps,\delta)$-DP for all functions $f$. 

The two failure events we consider are (1) the noise variable $Z$ in \Cref{step:release-ell} has large absolute value, $|Z|>\frac 2 \eps \ln \frac 2 \beta$, and (2) \ShI fails. Each of these events happens with probability at most $\beta/2$, by \Cref{fact:laplace} and the 
setting of parameters given to \ShI. 
By the union bound over these two events, the overall failure probability is at most $\beta.$

Now, we analyze accuracy of \ASDalgshort{}. Suppose that neither failure event occurred. Then $|Z|\leq \frac 2 \eps \ln \frac 2\beta \leq \frac \lambda 4$. (We assume w.l.o.g.\ that $c$ is sufficiently large for this inequality to hold.) Consequently, 
\begin{align}\label{eq:bounds-on-ell}
|x|-\lambda\leq\ell\leq|x|-\frac \lambda 2.    
\end{align}

Let $\cW$ denote \Cref{alg:genshi} and $\cM$ denote the \ShI mechanism. Recall that $\lambda$ is set so that \ShI in \Cref{step:run-ShI}  is run with the depth parameter $\lambda/2$. By the accuracy guarantee of \ShI, we get
\begin{align*}
\monfl(x) - DS_{\lambda/2}^{\monfl}(x) \leq \cM^{\monfl}(x) \leq \monfl(x) \, .    
\end{align*}
By construction of the privacy wrapper, $\cW^f(x)=\cM^{\monfl}(x)$. Since $\ell\leq |x|-\lambda/2$, the set of subsets of $x$ of size at least $\ell$ is nonempty. By the definition of monotonization $\monfl$, the fact that $\ell\geq |x|-\lambda$, and the definition of the down sensitivity, we get
\ifnum\stoc=0
\begin{align*}
\cW^f(x) = \cM^{\monfl}(x) 
\leq \monfl(x) 
= \max\big(\{f(z): z\subseteq x, |z|\geq \ell\}\big)
\leq \max_{z\in\DN_\lambda(x)} f(z)
\leq f(x) + DS_\lambda^f(x).
\end{align*}
\fi
\ifnum\stoc=1
\begin{align*}
\cW^f(x) 
&= \cM^{\monfl}(x) 
\leq \monfl(x) 
= \max\big(\{f(z): z\subseteq x, |z|\geq \ell\}\big)\\
&\leq \max_{z\in\DN_\lambda(x)} f(z)
\leq f(x) + DS_\lambda^f(x).
\end{align*}
\fi
Using monotonicity of $\monfl$ and the definition of the down sensitivity, we get
\ifnum\stoc=0
\begin{align*}
\cW^f(x) 
&=\cM^{\monfl}(x)
\geq \monfl(x) - DS_{\lambda/2}^{\monfl}(x)\\
&=\monfl(x) - \max_{x'\in\DN_{\lambda/2}(x)} (\monfl(x) -\monfl(x'))
=\min_{x'\in\DN_{\lambda/2}(x)} \monfl(x').
\end{align*}
\fi
\ifnum\stoc=1
\begin{align*}
\cW^f(x) 
&=\cM^{\monfl}(x)
\geq \monfl(x) - DS_{\lambda/2}^{\monfl}(x)\\
&=\monfl(x) - \max_{x'\in\DN_{\lambda/2}(x)} (\monfl(x) -\monfl(x'))\\
&=\min_{x'\in\DN_{\lambda/2}(x)} \monfl(x').
\end{align*}
\fi
By \eqref{eq:bounds-on-ell} and definition of monotonization, for all $x'\in\DN_{\lambda/2}(x)$, there exists $x''\in\DN_{\lambda}(x)$ such that $\monfl(x')=f(x'')$. Thus, 
\begin{align*}
    \cW^f(x)
    &\geq\min_{x'\in\DN_{\lambda/2}(x)} \monfl(x') 
    \\ 
    &\geq \min_{x''\in\DN_{\lambda}(x)} f(x'')
    \qquad = \quad   f(x)  - DS_\lambda^f(x).    
\end{align*}
Thus, with probability at least $1-\beta$, we get $|\cW^f(x)-f(x)|\leq DS^f_\lambda(x)$, as claimed.

Finally, to evaluate $\monfl$, the wrapper only needs to query $f$ on the subsets of the dataset $x$ of size at least~$\ell$. Recall that $\ell$ satisfies \eqref{eq:bounds-on-ell} with probability at least $1-\beta.$ When it does, all queries of $\cW$ are within $\DN_\lambda(x),$ completing the proof of \Cref{thm:generalized-shifted-inverse-mechanism}.
\end{proof}

\section{Privacy Wrappers with Claimed Sensitivity 
\ifnum\stoc =0 Bound \fi
}
\label{sec:extension_based}

\ifnum\stoc=1
Now we state our main result for the setting when the analyst provides a sensitivity bound $c$ along with a black-box function $f$.
\Cref{thm: subset extension} gives an $(\eps,\delta)$-privacy wrapper for functions with unbounded range.
\else
In this section, we state and prove \Cref{thm: subset extension}, our main result for the setting when the analyst provides a sensitivity bound $c$ along with a black-box function $f$.
\Cref{thm: subset extension} gives an $(\eps,\delta)$-privacy wrapper for functions with unbounded range.
\fi

\begin{theorem}[\ifnum\stoc =0 
    \algSE\ 
\else
    Subset extension
\fi 
privacy wrapper]
    \label{thm: subset extension}
    There exists a constant $a>0$ such that for every universe $\cU$, privacy parameters $\eps>0,\delta\in(0,1)$, and Lipschitz constant $c>0$, there exists an $(\eps,\delta)$-privacy wrapper $\cW$ over $\cU$ with noise distribution $\Lap\paren{\frac {a\cdot c}{\eps}}$ for all $c$-Lipschitz functions $f:\dom\to\R$ and all $x\in\dom$. Moreover, $\cW$ is $O\paren{\frac1\eps\log\frac{1}{\delta}}$-down local for all $x\in\dom$.
\end{theorem}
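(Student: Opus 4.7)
The plan is to follow the three-stage blueprint from \Cref{sec:techniques}: the wrapper first privately releases a floor $\ell$ on $|x|$ (to bound locality), then computes a smoothed estimator $\bar f(x)$ by averaging canonical representatives of $\ell$-stable subsets over a range of candidate sizes $h$, and finally outputs $\bar f(x) + \Lap(ac/\eps)$ for a sufficiently large constant $a$. The privacy budget will be split roughly as $\eps/3$ for $\ell$ and $2\eps/3$ for the noisy release, with $\delta$ spent entirely on the truncation in Stage~1.

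For Stage~1, since $x \mapsto |x|$ is $1$-Lipschitz, I would set $\ell = \lfloor |x| - \tau - Z \rfloor$ with $Z \sim \TLap(3/\eps, \tau)$ and $\tau = \Theta\bigl(\tfrac{1}{\eps}\log\tfrac{1}{\delta}\bigr)$, so that this release is $(\eps/3,\delta)$-DP by \Cref{fact: laplace mechanism} and deterministically satisfies $|x| - O\bigl(\tfrac{1}{\eps}\log\tfrac{1}{\delta}\bigr) \le \ell \le |x|-1$. This yields the claimed down-locality $\lambda = O\bigl(\tfrac{1}{\eps}\log\tfrac{1}{\delta}\bigr)$ and ensures $x \in \stabset{\ell}{|x|}{f}(x)$ whenever $f$ is $c$-Lipschitz on $\DN_\lambda(x)$.

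For Stage~2, define the transformation $\hat f(z) := \tfrac{1}{2c}f(z) + \tfrac{|z|}{2}$; when $f$ is $c$-Lipschitz, $\hat f(z\cup\{i\}) - \hat f(z) \in [0,1]$, so $\hat f$ is both monotone and $1$-Lipschitz on $\DN_\lambda(x)$. Set $h_0 := \lceil (\ell+|x|)/2\rceil$ and, for each integer $h \in \{h_0, \dots, |x|\}$, let $u_h$ be an $\hat f$-maximizer in $\stabset{\ell}{h}{f}(x)$ under a fixed set-deterministic tie-break (with a canonical ``skip'' convention such as $u_h := u_{h-1}$ when $\stabset{\ell}{h}{f}(x) = \emptyset$). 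Release
\[
\bar f(x) \;=\; \frac{1}{|x|-h_0+1} \sum_{h=h_0}^{|x|} f(u_h)
\]
plus $\Lap(ac/\eps)$ noise. When $f$ is $c$-Lipschitz on $\DN_\lambda(x)$, monotonicity of $\hat f$ forces $u_h = x$ at every $h$, so $\bar f(x) = f(x)$ identically and the output is distributed exactly as $f(x) + \Lap(ac/\eps)$, matching the theorem's distributional conclusion.

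The main obstacle is proving that $\bar f$ has $O(c)$ sensitivity for \emph{arbitrary} $f$, which is what the $(\eps,\delta)$-DP guarantee rests on. Fix neighbors $x' = x \cup \{i\}$ and condition on a fixed released $\ell$. I would set up a coupling that pairs each $h$ on the $x$-side with either $h$ or $h+1$ on the $x'$-side, using the lattice structure of $\ell$-stable sets (in the spirit of the structural claims behind TAHOE) together with the near-monotonicity in $h$ of the $\hat f$-maximizing selection, to argue that coupled canonical representatives $u_h(x)$ and $u_{h'}(x')$ differ by at most the element $i$. For such coupled indices, the contribution to $|\bar f(x) - \bar f(x')|$ is at most $c$ per index (from the analyst's nominal sensitivity bound), and the averaging denominator $|x|-h_0+1 = \Theta(\lambda)$ keeps the total change at $O(c)$. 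The delicate piece is the at most $O(1)$ indices where stability may hold on one side and fail on the other: choosing the $\hat f$-maximizer is precisely what is needed there, since it forces the canonical representative toward the largest stable set and bounds the damage at these boundary terms so that the averaging absorbs it. Once an $O(c)$ conditional sensitivity bound is in hand, the Laplace step is $(2\eps/3,0)$-DP by \Cref{fact: laplace mechanism}, composition with Stage~1 (\Cref{fact:composition}) yields the overall $(\eps,\delta)$-DP guarantee, and the distributional claim for $c$-Lipschitz $f$ is immediate from the exact identity $\bar f(x) = f(x)$ derived in Stage~2.
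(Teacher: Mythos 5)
Your Stage 1 (truncated-Laplace floor $\ell$), your conditional monotonization $\hat f$, and your accuracy argument for Lipschitz $f$ (monotonicity forces the maximizer to be $x$ itself, so $\bar f(x)=f(x)$ exactly) all match the paper's construction. But there is a genuine gap at the step your whole privacy claim rests on: you assert that the averaged estimator $\bar f$ has $O(c)$ sensitivity for \emph{arbitrary} $f$, and it does not. The coupling you sketch implicitly uses the analyst's claimed Lipschitz bound (``at most $c$ per index from the analyst's nominal sensitivity bound''), which is circular -- privacy must hold precisely when that claim is false. Concretely, take neighbors $x\subset x'=x\cup\{i\}$ and an adversarial $f$ whose only $\ell$-stable subsets of size $\geq h_0=\lceil(\ell+|x|)/2\rceil$ have size barely $h_0$: a stable $p\subseteq x$ and a stable $q\subseteq x'$ of size $\approx h_0$ can intersect in fewer than $\ell$ elements, so $\ell$-stability imposes \emph{no} relation between $f(p)$ and $f(q)$, and one can make $f(q)-f(p)$ an arbitrary $M$ while $\hat f$-maximization selects $p$ on the $x$-side and $q$ on the $x'$-side. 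With your ``skip'' convention this discrepancy propagates to essentially every $h$ in the window, and even if it affected only $O(1)$ indices, $M/\tau$ is unbounded; no fixed noise scale $\Lap(ac/\eps)$ can then give $(\eps,\delta)$-DP. The bounded-diameter argument that makes averaging work (\Cref{claim:diameter_bound}, feeding into \Cref{lem:T_bound}) is only valid under the premise that the largest stable subset comfortably exceeds the halfway point, $\maxstablf(x)-\tau>\tfrac12(|x|+\ell)$, which is not automatic.

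This is exactly why the paper's \algSE\ mechanism has a component your proposal omits: a propose-test-release step that privately checks whether $\maxstablf(x)$ exceeds $\tfrac12(|x|+\ell)+5\tau$ (with truncated noise) and outputs $\perp$ otherwise. \Cref{claim:bad_set_test_bound} shows that whenever the proxy fails to be well-behaved across a neighboring pair, the test outputs $1$ on \emph{both} datasets deterministically, so both return $\perp$; when it passes, \Cref{lem:T_bound} gives $O(1)$ sensitivity of the averaged proxy (in units of $c$), not by an $O(c)$ per-index bound but by the interleaving/telescoping of \Cref{lem: structure} plus the diameter bound divided by the window length $\tau$. Two further repairs your write-up would need even with a test added: the averaging should be anchored at $\maxstablf(x)$ (window $\{\maxstablf(x)-\tau,\dots,\maxstablf(x)\}$, compared across neighbors via \Cref{claim:m-l}) rather than at the size-dependent endpoints $\{h_0,\dots,|x|\}$; and you should average the stabilized values $\mstablhf(x)$ (equivalently $\hat f$ at the maximizer) and invert the affine map at the end, $2\,\proxyT{\ell}{\tau}{f}(x)-|x|$, rather than averaging $f(u_h)$ directly, since the interleaving property is proved for the stabilization of $\cmonf$.
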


Since the \algSE\ mechanism is down local, it can be viewed as the following general feasibility result: Given a function $f$ and a dataset $x$, Lipschitzness of $f$ on large subsets of $x$ suffices for private and accurate release of $f(x)$.  

\ifnum\stoc=1
We defer the construction and proof to the full version \cite{LinderRSS25}.
\fi

\ifnum\stoc=0 %

\subsection{Stabilization and Conditional-Monotonization Operators and Their Properties}\label{sec:operators}

In this section, we introduce the $(\ell,h)$-stabilization operator, $\stab{\ell}{h}{\cdot}$, where the parameters $\ell$ and $h$ can be intuitively thought of as set sizes, and define the conditional-monotonization operator, $\cmon{\cdot}{}$. 
These operators are applied in the proof \Cref{thm: subset extension} as follows: given a function $f$, we first transform $f$ into $\cmonf$ and subsequently transform $\cmonf$ into $\mstablhf$. The composition of the two operators has three important properties. First, the value $\mstablhf(x)$ can be computed by querying $f$ on the down neighborhood of $x$; second, if $f$ is a Lipschitz function then $f(x)$ can be efficiently recovered from $\mstablhf(x)$; and third, for all neighboring $x\subset y$, the sequences $\{\mstablhf(x)\}_{h\geq\ell}$ and $\{\mstablhf(y)\}_{h\geq \ell}$ are ``interleaved". We use the first property to ensure that our mechanism is down local, the second to ensure it is accurate for the class of Lipschitz functions, and the third to guarantee that our mechanism is differentially private.

We start by recalling a notion of stability from \cite{KohliL23}. Given a function $f$, a point $u\in\dom$ is stable with respect to $f$ if, intuitively, $f$ is Lipschitz on large subsets of $u$. 

\begin{definition}[$\ell$-stable \cite{KohliL23}]
    \label{def: stability}
    Let $f:\dom\to\cY$ where $\cY\subseteq \R$. For $\ell\in\Z$, a point $x\in\dom$ is \emph{$\ell$-stable} with respect to $f$ if $|x|\geq\ell$ and %
    $f$ is Lipschitz over the domain $\{x'\subseteq x: |x'|\geq\ell\}$. 
\end{definition}

The key observation made in \cite{KohliL23} is that if $x$ and $y$ are $\ell$-stable and $|x\cap y|\geq \ell$ then $|f(x)-f(y)|\leq 2(\max(|x|,|y|)-\ell)$. They directly apply this observation to obtain an $(\eps,\delta)$-privacy wrapper.
There is no accuracy analysis provided in \cite{KohliL23}. For completeness, we show in \Cref{sec:Kohli-Laskowski} that an (adjusted) version of their algorithm has
$(\frac1{\eps^2}\log\frac1\delta\log\frac1\beta,\beta)$-accuracy for the class of Lipschitz functions.
In the proof of \Cref{thm: subset extension}, we use our new  operator, $(\ell,h)$-stabilization, to obtain an $(\eps,\delta)$-privacy wrapper with the stronger guarantee of $(\frac1\eps\log\frac1\beta,\beta)$-accuracy.

Next, we define the $(\ell,h)$-stabilization operator $\stab{\ell}{h}{\cdot}$. For all $f:\dom\to\R$, all $x\in\dom$, and all $\ell\leq h\leq |x|,$ the function $\stablhf$ evaluated at $x$ returns the maximum value achieved by $f$ on the $\ell$-stable subsets of $x$ with at least $h$ elements. Note that $h$ can be less than $\ell$, and when this setting of parameters is realized, $\stablhf=\stab{\ell}{\ell}{f}$ (since all $\ell$-stable subsets have size at least $\ell$).
The definition is illustrated in \Cref{fig:stabset}.

\begin{definition}[$\stabset{\ell}{h}{f}$, $(\ell,h)$-stabilization $\stab{\ell}{h}{f}$]\label{def:stab-functions}
    Let $f:\dom\to\cY$ where $\cY\subseteq \R$. For all $\ell, h\in \Z$, let $\stabsetlhf(x) = \{ x' \subseteq x: |x'|\geq h \text{\ and $x'$ is $\ell$-stable w.r.t.\ $f$}\}$. 
      Define the \emph{$(\ell,h)$-stabilization of $f$} as the function\footnote{Recall that we use the convention that if $\cY$ is unbounded below, then $\inf(\cY)=-\infty$.}
    \[
    \stablhf(x)=\max\Bparen{\bset{f(x'): x'\in \stabsetlhf(x)}\cup\bset{\inf(\cY)}}.
    \]
\end{definition}
    \begin{figure}[ht]
        \centering        \includegraphics[scale=.45]{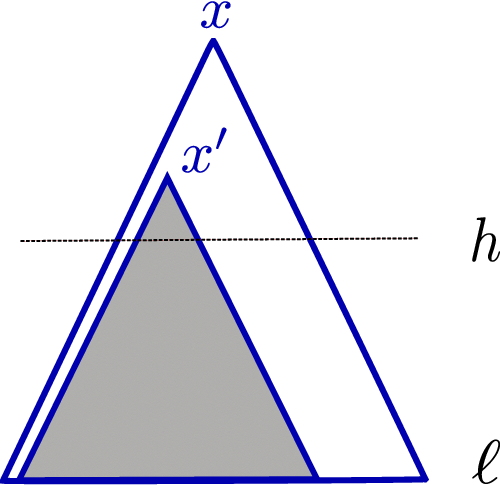}
        \caption{A set $x$ with a subset $x'$ of size at least $h$. If a function $f$ is Lipschitz on the shaded region---that is, on the subsets of $x'$ of size at least $\ell$, then $x'$ is $\ell$-stable with respect to $f$. Using our notation, $x'\in \stabsetlhf(x)$.}\label{fig:stabset}
    \end{figure}
    
  \Cref{lem: structure} identifies several important structural properties of the $(\ell,h)$-stabilization operator. First, we show that the sequences $\{\mstablhf(x)\}_{h\geq\ell}$ and $\{\mstablhf(y)\}_{h\geq \ell}$ are ``interleaved" for neighboring $x$ and $y$; this will be important to prove privacy of our privacy wrapper. Second, we prove that whenever $f$ is monotone and Lipschitz then $\stablhf(x)=f(x)$; this will be important for analyzing accuracy.

\begin{lemma}[Structure of $\stablhf{}$]
        \label{lem: structure}
        For all $f:\dom\to\cY$, where $\cY\subseteq\R$, and all  $\ell,h\in\Z$, where $h\geq\ell$:
        \begin{enumerate}\itemsep0em
            \item\label{item: structure 1} The function $\stab{\ell}{\cdot}{f}(u)$ is nonincreasing on  $\{\ell,\ell+1,\dots\}$, that is, $\stablhf(u)\geq \stab{\ell}{h+1}{f}(u)$.
            \item\label{item: structure 2} Let $u,v\in\dom$ be neighbors such that $v\subset u$. Then 
            \[
            \stab{\ell}{h+1}{f}(u)-1\leq\stab{\ell}{h}{f}(v) \, \leq \stab{\ell}{h}{f}(u).
            \]
            \item\label{item: structure 3} Let $u\in\dom$ and suppose that $h\leq|u|$. If the restriction of $f$ to the domain $\DN_{|u|-\ell}$ is Lipschitz and monotone then $\stablhf(u)=f(u)$.
        \end{enumerate}
    \end{lemma}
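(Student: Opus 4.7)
\medskip

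\noindent\textbf{Proof plan.} My overall strategy is to treat the three claims separately, noting that each follows from a direct manipulation of $\stabsetlhf$ together with the observation that the property of being $\ell$-stable is \emph{intrinsic} to a set $x'$ and $f$ (it depends only on subsets of $x'$), so it does not depend on what ambient set $x'$ sits in.

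For Part~\ref{item: structure 1}, I would observe that $\stabset{\ell}{h+1}{f}(u)\subseteq \stabsetlhf(u)$, since the only difference between the two collections is the cardinality lower bound. Hence the maximum defining $\stab{\ell}{h+1}{f}(u)$ is over a subfamily of the one defining $\stablhf(u)$, and adjoining $\inf(\cY)$ to both sets preserves the inequality.

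For Part~\ref{item: structure 2}, write $u=v\cup\{i\}$. The upper bound $\stablhf(v)\le\stablhf(u)$ follows from $\stabsetlhf(v)\subseteq\stabsetlhf(u)$: any $x'\subseteq v$ with $|x'|\ge h$ is also a subset of $u$ of the same size, and $\ell$-stability of $x'$ does not depend on the ambient set. For the lower bound, pick a maximizer $w\in\stabset{\ell}{h+1}{f}(u)$ achieving $f(w)=\stab{\ell}{h+1}{f}(u)$; if no such $w$ exists, $\stab{\ell}{h+1}{f}(u)=\inf(\cY)$ and the bound is vacuous. If $i\notin w$, then $w\in\stabsetlhf(v)$ directly (since $|w|\ge h+1\ge h$) and the bound follows with slack $0$. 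If $i\in w$, set $w'=w\setminus\{i\}$; then $w'\subseteq v$, $|w'|=|w|-1\ge h\ge\ell$, and $w'$ inherits $\ell$-stability from $w$ because subsets of $w'$ of size $\ge\ell$ are also subsets of $w$ of size $\ge\ell$. Moreover, $w$ and $w'$ are neighbors both lying in the Lipschitz region of $w$ (they have size $\ge\ell$), so $|f(w)-f(w')|\le 1$, giving $\stablhf(v)\ge f(w')\ge f(w)-1$.

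For Part~\ref{item: structure 3}, I would first observe that $u\in\stabsetlhf(u)$: subsets of $u$ of size $\ge\ell$ are exactly $\DN_{|u|-\ell}(u)$, on which $f$ is Lipschitz by hypothesis, so $u$ is $\ell$-stable; and $|u|\ge h$. Hence $\stablhf(u)\ge f(u)$. For the reverse inequality, any $x'\in\stabsetlhf(u)$ has $x'\subseteq u$ with $|x'|\ge h\ge\ell$, so $x'\in\DN_{|u|-\ell}(u)$, and monotonicity of $f$ on this domain gives $f(x')\le f(u)$; taking the max shows $\stablhf(u)\le f(u)$. (Note $\inf(\cY)\le f(u)$ trivially, since $f(u)\in\cY$.)

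The main subtlety I anticipate is Part~\ref{item: structure 2}, specifically ensuring the ``$-1$'' slack is correct when $i\in w$: one must verify that the single edge between $w$ and $w\setminus\{i\}$ actually sits inside the region where $w$'s $\ell$-stability supplies Lipschitzness. This is exactly where the inequality $|w'|\ge\ell$ is used, and this is why the bound needs $h\ge\ell$ (assumed in the lemma).
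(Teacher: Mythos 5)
Your proof is correct and takes essentially the same route as the paper's: containment of the families $\stabset{\ell}{h+1}{f}(u)\subseteq\stabsetlhf(u)$ for Part~1, passing from a maximizer in $\stabset{\ell}{h+1}{f}(u)$ to its trace on $v$ (using that $\ell$-stability is intrinsic and that the deleted-element step stays in the Lipschitz region since $|w'|\geq h\geq\ell$) for Part~2, and using $u$ itself as a witness together with monotonicity for Part~3. The only cosmetic differences are your explicit case split on whether $i\in w$ (the paper handles both cases at once via $v'=u'\cap v$) and your word ``vacuous'' for the empty case, which more precisely holds because $\stablhf(v)\geq\inf(\cY)=\stab{\ell}{h+1}{f}(u)$, exactly as the paper notes.
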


    \begin{figure}[ht]
        \centering
        \includegraphics[scale=.45]{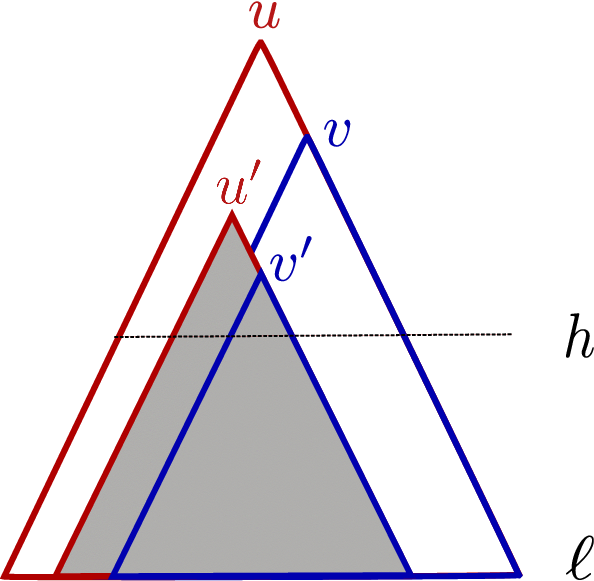}
        \caption{Sets $u$ and $v$ with $\ell$-stable subsets $u'$ and $v'$, each of size at least $h$. Notice that every stable subset of $u$ is at distance $1$ from a stable subset of $v$.}\label{fig:interleaving}
    \end{figure}
    
    \begin{proof}
        When $h<0$, then by \Cref{def:stab-functions}, the sets $\stabsetlhf(u)$ and $\stabset{\ell}{h+1}{f}(u)$ are the same. Thus, for each item in \Cref{lem: structure}, we can without loss of generality assume $h\geq 0$. We encourage the reader to reference \Cref{fig:interleaving} throughout the proof.
        
        To prove \Cref{item: structure 1}, notice that $\stabset{\ell}{h+1}{f}(u)\subseteq \stabsetlhf(u)$. By definition of $\stablhf(u)$ (a max over the set $\stabsetlhf(u)$), and the fact that $h\geq \ell$, we see that $\stab{\ell}{h+1}{f}(u)\leq \stablhf(u)$. 
        
        Next, we prove \Cref{item: structure 2}. To prove the second inequality, observe that if $v\subset u$ then $\stabsetlhf(v)\subseteq \stabsetlhf(u)$. Inspecting the definition of $\stablhf$, we see that $\stablhf(v)\leq \stablhf(u)$. To prove the first inequality, suppose that $\stabset{\ell}{h+1}{f}(u)\neq\emptyset$. Then $h<|u|$ and hence $\ell\leq h\leq |v|$. Moreover, for each
        $u'\in \stabset{\ell}{h+1}{f}(u)$, the neighbor $v'=u'\cap v$ is a subset of $v$ and, since $|u'|\geq h+1$, we have $|v'|\geq h\geq\ell$. Since $u'$ is $\ell$-stable and $v'$ is a subset of $u$ with $|v'|\geq h\geq \ell$ we see that $v'$ is $\ell$-stable and $v'\in \stabsetlhf(v)$. Since $|u'|=|v'|+1$ we have $f(v')\geq f(u')-1$. It follows that $\stab{\ell}{h+1}{f}(u)-1\leq \stablhf(v)$ whenever $\stabset{\ell}{h+1}{f}(u)\neq\emptyset$. On the other hand, if $\stabset{\ell}{h+1}{f}(u)=\emptyset$ then $\stab{\ell}{h+1}{f}(u)=\inf(\cY)$ which by definition is at most $\stablhf(v)$. 
        
        To prove \Cref{item: structure 3}, fix $u\in\dom$ and $\ell\leq h\leq |u|$. Suppose $f$ is Lipschitz and monotone on the domain $\DN_{|u|-\ell}$. %
        Then $u\in\stabsetlhf(u)$ and $f(u)\geq f(v)$ for all $v\in\stabsetlhf(u)$. Hence, $\stablhf(u)=f(u)$.
\end{proof}

We now define the conditional-monotonization operator $\cmonf$. Informally, \Cref{lem: cmonf} states that if $f$ is Lipschitz then $\cmonf$ is Lipschitz and monotone. 
Recall that by \Cref{lem: structure}, 
whenever $g$ is Lipschitz and monotone 
we have $\stab{\ell}{h}{g}(x)=g(x)$. It follows that when $f$ is Lipschitz then $\mstablhf(x)=\cmonf(x)$.

\begin{definition}[Conditional monotonization $\cmonf$]
    \label{def: cmonf}
    Fix $f:\dom\to\R$ and define the conditional monotonization of $f$ as the function 
    \[
        \cmonf(x)=\tfrac{1}{2}(f(x)+|x|).
    \]
\end{definition}

\begin{lemma}[Lispchitz to monotone Lipschitz]
    \label{lem: cmonf}
    Fix a function $f:\dom\to\R$, a point $x\in\dom$, and an integer $\tau\in\Z$. If $f$ is Lispchitz on $\DN_\tau(x)$ then the function $\cmonf$ is Lipschitz and monotone on $\DN_\tau(x)$.
\end{lemma}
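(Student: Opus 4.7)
The plan is to verify the two properties, Lipschitz and monotone, directly from the definition $\cmonf(x') = \tfrac{1}{2}(f(x') + |x'|)$ and the hypothesis that $f$ is Lipschitz on $\DN_\tau(x)$. Both properties should fall out of a one-line calculation combined with a small observation about $\DN_\tau(x)$ being closed under taking intermediate subsets.

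For the Lipschitz part, I would take neighbors $u, v \in \DN_\tau(x)$, and WLOG assume $v \subset u$ with $|u| = |v| + 1$. The Lipschitz assumption on $f$ yields $|f(u) - f(v)| \le 1$, and since $|u| - |v| = 1$, the quantity $\cmonf(u) - \cmonf(v) = \tfrac{1}{2}(f(u) - f(v)) + \tfrac{1}{2}$ lies in $[0,1]$. This simultaneously gives a $1$-Lipschitz bound and (as a bonus) the sign needed for monotonicity across neighbors.

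For monotonicity between arbitrary comparable pairs $v \subset u$ in $\DN_\tau(x)$, the first step is to note that $\DN_\tau(x)$ is ``interval-closed'': any $w$ with $v \subseteq w \subseteq u$ satisfies $|x \setminus w| \le |x \setminus v| \le \tau$, hence $w \in \DN_\tau(x)$. So one can walk from $v$ to $u$ through a chain of neighbors staying in $\DN_\tau(x)$, and applying the neighbor bound from the previous paragraph along each edge of the chain gives $\cmonf(v) \le \cmonf(u)$.

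I don't anticipate a real obstacle here; the only thing to watch is to state explicitly the interval-closure of $\DN_\tau(x)$ so that the chain argument actually uses the Lipschitz hypothesis only on pairs where it is assumed. Everything else is a two-line arithmetic check.
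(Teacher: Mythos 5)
Your proof is correct and takes essentially the same route as the paper's: both reduce to the one-line computation $\cmonf(u)-\cmonf(v)=\tfrac12\bigl(f(u)-f(v)\bigr)+\tfrac12\in[0,1]$ for neighbors $v\subset u$ in $\DN_\tau(x)$. The only difference is that you spell out the interval-closure of $\DN_\tau(x)$ and the chain argument needed to pass from neighboring pairs to arbitrary comparable pairs for monotonicity, a step the paper leaves implicit.
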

\begin{proof}
Suppose $f$ is Lipschitz on $\DN_\tau(x)$. Consider the function $g(x)=f(x)+|x|.$ Let $u,v\in\DN_{\tau}(x)$ be neighbors such that $v\subset u.$  Since $f$ is Lipschitz, $f(u)-f(v)$ is in $[-1,1]$, so $g(u)-g(v)=f(u)-f(v)+1$ is in $[0,2].$ Thus, function $g$ is monotone and 2-Lipschitz. Hence, $\cmonf=\frac g 2$ %
is monotone and Lipschitz. 
\end{proof}

Given a function $f$, we successively apply the operators $\cmon{\cdot}{}$ and $\stab{\ell}{h}{\cdot}$ to transform $f$ into a well behaved function. We obtain the following crucial property of the composition of the two operators: By Lemmas~\ref{lem: structure} and \ref{lem: cmonf}, if $f$ is Lipschitz, then $\mstablhf(x)=\cmonf(x)$. By the definition of $\cmonf$, we get $2\mstablhf(x)-|x|=f(x)$ (for appropriate choices of $\ell$ and $h$). That is, for Lipschitz $f$, we can easily recover the value $f(x)$ from the value of the transformed function on $x.$

\subsection{\algSE\ and Proof of \Cref{thm: subset extension}}
\label{sec:subset-extension}

In this section, we present the \algSE\ mechanism (\Cref{alg: subset extension}) and use it to prove \Cref{thm: subset extension}. 
One of the key ideas employed by our mechanism is that %
for all functions $f$ and neighbors $v\subset u$, the diameter of $f$ on the set of $\ell$-stable subsets of $u$ and $v$ can be bounded above by $|u|-\ell$. We take advantage of this observation via a carefully defined proxy function and a preliminary test step that, on input $x$, ensures there is a sufficiently large $\ell$-stable subset of $x$. 

\begin{proof}[Proof of \Cref{thm: subset extension}]
Our main tool in the proof of \Cref{thm: subset extension} is the following proxy function. 

\begin{definition}[Proxy function $\proxyTltf$]
    \label{def: subset extension proxy}
    Let $f:\dom\to\R$ and fix $\ell\in\Z$. Let 
    \[
    \maxstab{\ell}{f}(x)=\max\set{|u|:u\in \stabset{\ell}{\ell}{f}(x)}.
    \]
    That is, $\maxstablf(x)$ is the size of the largest subset of $x$ that is $\ell$-stable with respect to $f$. When there is no ambiguity, we write $\maxstabl$ instead of $\maxstablf$. Fix $\tau\in\N$ and define the following function:
    \[
    \proxyTltf(x)=\Ex_{h\sim\{\maxstabl(x)-\tau,\dots, \maxstabl(x)\}}\bbrackets{\mstablhf(x)}.
    \]
\end{definition}

Next, we present the \algSE\ mechanism (\Cref{alg: subset extension}) and complete the proof of \Cref{thm: subset extension} by arguing that \Cref{alg: subset extension} is a privacy wrapper with the desired properties.

\begin{algorithm}[H]
	\begin{algorithmic}[1]
		\caption{\label{alg: subset extension} \algSE\ Mechanism}
        \Statex \textbf{Parameters:} privacy parameters $\eps>0$ and $\delta\in(0,1)$
	    \Statex \textbf{Input:} $x\in\dom$, query access to $f:\dom\to\R$, Lipschitz constant $c>0$
	    \Statex \textbf{Output:} $y\in\R \cup \{\perp\}$
        \State $\eps_0\gets\frac \eps 3, \delta_0\gets\frac \delta 2, q\gets 20$, and $\tau\gets\lceil\frac1{\eps_0}\ln\frac1{\delta_0}\rceil$
		\State  \textbf{release} $\ell\gets\lceil|x|-q\tau+ R_0\rceil$ where $R_0\sim \TLap(\frac1{\eps_0},\tau)$
        \Comment{\Cref{def: laplace}}
            \State \textbf{release} $b\gets  \Ind\Bset{ \maxstablf(x)+R_1\leq\frac12(|x|+\ell)+5\tau}$ where $R_1\sim \TLap(\frac2{\eps_0},2\tau)$
        \If{$b=0$}  
        \State \Return $2\proxyT{\ell}{\tau}{f}(x)-|x|+Z$ where $Z\sim\Lap\paren{\frac {10q}{\eps_0}}$
        \Else{ \Return$\perp$}
        \EndIf
	\end{algorithmic}
\end{algorithm}

Before completing the proof of \Cref{thm: subset extension} we bound the sensitivity of the proxy function $\proxyTltf$ defined above.

\subsubsection{Bounding the sensitivity of the proxy function $\proxyTltf$}
\label{sec:T_bound}

To bound the sensitivity of $\proxyTltf$, we first relate the sizes of the largest $\ell$-stable subsets of two neighboring datasets. 

\begin{claim}[Sensitivity of $\maxstabl$]
    \label{claim:m-l}
    Let $f:\dom\to\R$ and fix $\ell\in\Z$ and $\tau\in\N$. Fix two neighbors $u,v\in\dom$ such that $v\subset u$ and $|v|\geq \ell$. Then $\maxstabl(v)=\maxstabl(u)$ or $\maxstabl(v)=\maxstabl(u)-1$.
    \end{claim}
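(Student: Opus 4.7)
The plan is to establish the two inequalities $\maxstabl(u)-1 \leq \maxstabl(v) \leq \maxstabl(u)$ separately, each by a short set-manipulation argument that relies only on the intrinsic nature of $\ell$-stability (\Cref{def: stability} depends on a set and $f$, not on any containing set).

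For the upper bound $\maxstabl(v)\leq \maxstabl(u)$, I would observe that since $v\subset u$, any $\ell$-stable subset of $v$ is already a subset of $u$, and hence is an $\ell$-stable subset of $u$. Taking the maximum over sizes yields $\maxstabl(v)\leq \maxstabl(u)$.

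For the lower bound $\maxstabl(v) \geq \maxstabl(u)-1$, let $w \subseteq u$ be a witness to $\maxstabl(u)$, i.e., an $\ell$-stable set with $|w|=\maxstabl(u)$. Let $w' := w \cap v$, so $|w'| \geq |w|-1 = \maxstabl(u)-1$. The key sub-step is that $w'$ is $\ell$-stable: the domain $\{w'' \subseteq w' : |w''| \geq \ell\}$ on which Lipschitzness must be checked is contained in the analogous domain for $w$, on which $f$ is Lipschitz by hypothesis. Thus, provided $|w'|\geq\ell$, the set $w'$ witnesses $\maxstabl(v)\geq \maxstabl(u)-1$.

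The only subtle step is the edge case $\maxstabl(u)=\ell$, where $|w'|$ could in principle drop below $\ell$ and fail the size requirement in \Cref{def: stability}. To handle this, I would use the assumption $|v|\geq \ell$: any subset of $v$ of size exactly $\ell$ is trivially $\ell$-stable because its stability domain $\{w'' \subseteq w' : |w''| \geq \ell\}$ is just the singleton $\{w'\}$, on which Lipschitzness is vacuous. Therefore $\maxstabl(v) \geq \ell = \maxstabl(u)$ in this case, completing the lower bound. Combining both bounds and noting both quantities are integers gives $\maxstabl(v)\in\{\maxstabl(u)-1,\maxstabl(u)\}$.
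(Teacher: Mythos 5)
Your proposal is correct and follows essentially the same route as the paper: the upper bound is immediate from $v\subset u$, and the lower bound comes from intersecting a maximum-size $\ell$-stable witness in $u$ with $v$. Your explicit treatment of the edge case $\maxstabl(u)=\ell$ (where the intersection drops to size $\ell-1$ and the size requirement in \Cref{def: stability} fails) is a point the paper's own proof glosses over, and your fix via $|v|\geq\ell$ and the vacuous Lipschitzness of size-$\ell$ sets is exactly right.
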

    \begin{proof}
        Since $v\subset u$, we have $\maxstabl(v)\leq \maxstabl(u)$. Let $p$ be an $\ell$-stable subset of $u$ with $|p|=\maxstabl(u)$. If $p$ is a subset of $v$ then $\maxstabl(v)=\maxstabl(u)$. Otherwise, consider the set $q=p\cap v$. Then $q$ is $\ell$-stable  because $p$ is $\ell$-stable. Since $q\subset v$ and $|q|=|p|-1$, it follows that $\maxstabl(v)\geq\maxstabl(u)-1$.
    \end{proof}

Next, we bound the diameter of the image of $\cmonf$ on the set of sufficiently large $\ell$-stable subsets of two neighboring datasets. We use the following notation: For all functions $g:\dom\to\R$ and sets $S\subset \dom$, let $g(S)=\{g(x) : x\in S\}$. 

\begin{claim}[Bounded diameter]
    \label{claim:diameter_bound}
    Let $f,\ell,\tau, u,$ and $v$ be as in the premise of \Cref{claim:m-l}, and suppose $\maxstabl(v)-\tau>\frac12(|v|+\ell)$. Then the diameter of the set $\cmonf\paren{\stabset{\ell}{\maxstabl(u)-\tau}{f}(u)\cup \stabset{\ell}{\maxstabl(v)-\tau}{f}(v)}$ is at most $|u|-\ell$. 
\end{claim}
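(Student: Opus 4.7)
The plan is to show that any two sets $w_1,w_2$ in $S := \stabset{\ell}{\maxstabl(u)-\tau}{f}(u)\cup \stabset{\ell}{\maxstabl(v)-\tau}{f}(v)$ have a large intersection (of size at least $\ell$), and then leverage $\ell$-stability to control $|\cmonf(w_1)-\cmonf(w_2)|$ by walking through the common ``anchor'' $w_1\cap w_2$.

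First, I would establish the intersection bound $|w_1\cap w_2|\geq \ell$ for all $w_1,w_2\in S$. By \Cref{claim:m-l}, $\maxstabl(v)\in\{\maxstabl(u)-1,\maxstabl(u)\}$, so $\maxstabl(v)\leq\maxstabl(u)$ and every $w\in S$ satisfies $|w|\geq \maxstabl(v)-\tau$; moreover $w\subseteq u$ in either branch of the union (since $v\subset u$). From $|w_1\cap w_2|\geq |w_1|+|w_2|-|u|$ and $|u|=|v|+1$, together with the claim's hypothesis $\maxstabl(v)-\tau>\tfrac12(|v|+\ell)$, I get $|w_1\cap w_2|\geq 2(\maxstabl(v)-\tau)-|v|-1>\ell-1$, i.e.\ $|w_1\cap w_2|\geq \ell$.

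Next, I would use $\ell$-stability to bound $|f(w_1)-f(w_2)|$ via $w_1\cap w_2$. Since $w_1$ is $\ell$-stable and $w_1\cap w_2$ is a subset of $w_1$ of size at least $\ell$, walking from $w_1$ down to $w_1\cap w_2$ one element at a time gives $|f(w_1)-f(w_1\cap w_2)|\leq |w_1\setminus w_2|$; symmetrically $|f(w_2)-f(w_1\cap w_2)|\leq |w_2\setminus w_1|$. Triangle inequality gives $|f(w_1)-f(w_2)|\leq |w_1\triangle w_2|$.

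Finally, using $\cmonf(w)=\tfrac12(f(w)+|w|)$, I would write
\[
\cmonf(w_1)-\cmonf(w_2)=\tfrac12\bigl[(f(w_1)-f(w_2))+(|w_1|-|w_2|)\bigr].
\]
Setting $a=|w_1\setminus w_2|$ and $b=|w_2\setminus w_1|$, the identity $\tfrac12(a+b+|a-b|)=\max(a,b)$ yields
\[
|\cmonf(w_1)-\cmonf(w_2)|\leq \max(|w_1\setminus w_2|,|w_2\setminus w_1|)=\max(|w_1|,|w_2|)-|w_1\cap w_2|.
\]
Combining $|w_i|\leq |u|$ with $|w_1\cap w_2|\geq\ell$ gives the desired diameter bound $|u|-\ell$.

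The main obstacle is the intersection lower bound in the first step: one must handle the ``mixed'' case where $w_1$ lies in the $u$-branch and $w_2$ in the $v$-branch of $S$, keeping track of the size offsets induced by \Cref{claim:m-l} and carefully tracing how the claim's hypothesis $\maxstabl(v)-\tau>\tfrac12(|v|+\ell)$ translates, via inclusion-exclusion and the identity $|u|=|v|+1$, into $|w_1\cap w_2|\geq \ell$. Once that hurdle is cleared, the $\ell$-stability argument and the algebraic manipulation of $\cmonf$ follow straightforwardly.
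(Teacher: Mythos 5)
Your proof is correct and takes essentially the same route as the paper: both arguments show every pair $w_1,w_2$ in the union has $|w_1\cap w_2|\geq\ell$ (using \Cref{claim:m-l}, the hypothesis $\maxstabl(v)-\tau>\tfrac12(|v|+\ell)$, $|u|=|v|+1$, and integrality), and then pass through the intersection using $\ell$-stability and the triangle inequality. The only minor difference is in the last step: the paper invokes \Cref{lem: cmonf} to make $\cmonf$ itself Lipschitz on the stable subsets and bounds the difference by $|w_1\setminus w_2|+|w_2\setminus w_1|\leq|u|-\ell$, whereas you unfold $\cmonf=\tfrac12(f+|\cdot|)$ and obtain the slightly sharper bound $\max(|w_1\setminus w_2|,|w_2\setminus w_1|)\leq|u|-\ell$; both yield the claimed diameter bound.
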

\begin{proof}
Consider sets $p,q$ in $ \stabset{\ell}{\maxstabl(u)-\tau}{f}(u)\cup \stabset{\ell}{\maxstabl(v)-\tau}{f}(v).$
We prove the claim by comparing $\cmonf(p)$ and $\cmonf(q)$ to $\cmonf(p\cap q)$. 
Since $p$ and $q$ are $\ell$-stable, the function $f$ is Lipschitz on $\{p'\subseteq p: |p'|\geq\ell\}$ and on $\{q'\subseteq q: |q'|\geq\ell\}$. By \Cref{lem: cmonf}, the function $\cmonf$ is also Lipschitz on these sets. Next, we demonstrate that $p\cap q$ belongs to both of them by showing that $|p\cap q|\geq \ell$.

Since $\maxstabl(v)-\tau>\frac12(|v|+\ell)$ and $\maxstabl(u)\geq\maxstabl(v)$, we get that
$\min(|p|,|q|)>\frac12(|v|+\ell)$. Consequently,
    $$\max(|u\setminus p|,|u\setminus q|)< |u|-\frac12(|v|+\ell)= \frac12(|u|-\ell+1).$$
Since  $2\max(|u\setminus p|,|u\setminus q|)$ and $|u|-\ell+1$ are integers, we obtain that    
     \begin{align}\label{eq:max-cordinality}
     2\max(|u\setminus p|,|u\setminus q|)\leq |u|-\ell,
     \end{align}
     and hence
     $
     |p\cap q|\geq|u|-|u\setminus p|-|u\setminus q|\geq 2\max(|u\setminus p|,|u\setminus q|)\geq \ell.
     $

Therefore, $p\cap q$ is in  $\{p'\subseteq p: |p'|\geq\ell\}$ and in $\{q'\subseteq q: |q'|\geq\ell\}$. Since $\cmonf$ is Lipschitz on these two sets, we get that $\cmonf$ is Lipschitz on $\{p,p\cap q, q\}$.
By the triangle inequality,
\begin{align}
    |\cmonf(p)-\cmonf(q)|
    &\leq|\cmonf(p)-\cmonf(p\cap q)|+|\cmonf(q)-\cmonf(p\cap q)|\nonumber\\
     &\leq|p\setminus (p\cap q)|+|q\setminus (p\cap q)| \label{eq:cmonf-is-Lipschitz}\\
    &\leq|u\setminus q|+|u\setminus p| \label{eq:p-q-subsets-of-u}\\
    &\leq |u|-\ell \label{eq:bound-on-diameter-of-cmonf},
\end{align}
where \eqref{eq:cmonf-is-Lipschitz} holds because $\cmonf$ is Lipschitz on $\{p,p\cap q, q\}$, then \eqref{eq:p-q-subsets-of-u} holds because $p$ and $q$ are subsets of $u$, and \eqref{eq:bound-on-diameter-of-cmonf} holds by \eqref{eq:max-cordinality}.
Thus, the diameter of $\cmonf\paren{\stabset{\ell}{\maxstabl(u)-\tau}{f}(u)\cup \stabset{\ell}{\maxstabl(v)-\tau}{f}(v)}$ is at most $|u|-\ell$.
\end{proof}

Next, we use Claims~\ref{claim:m-l} and \ref{claim:diameter_bound} to bound the sensitivity of $\proxyTltf$.

\begin{lemma}[Sensitivity of $\proxyTltf$]
    \label{lem:T_bound} 
   Let $f,\ell,u,\tau,$ and $v$ be as in the premise of \Cref{claim:diameter_bound}.
    Then
    \[
    |\proxyTltf(u)-\proxyTltf(v)|\leq 1+\frac{2(|u|-\ell)}{\tau}.
    \]
\end{lemma}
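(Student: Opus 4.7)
My plan is to couple the two averages term-by-term and then telescope. By \Cref{claim:m-l}, setting $m_u := \maxstabl(u)$ and $m_v := \maxstabl(v)$, we have $m_v \in \{m_u, m_u - 1\}$. In the equal-window case ($m_v = m_u$) I will pair the summation index $h_v$ with $h_u$; in the shifted-window case ($m_v = m_u - 1$) I will pair $h_v = h_u - 1$. Either way the paired indices differ by at most one, so item~\ref{item: structure 2} of \Cref{lem: structure} (applied to $\cmonf$) controls each pointwise difference $\mstab{\ell}{h_u}{f}(u) - \mstab{\ell}{h_v}{f}(v)$ up to an additive $1$ in terms of a difference of $\mstab{\ell}{\cdot}{f}(u)$ values at nearby indices.

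In the equal-window case, item~\ref{item: structure 2} yields $0 \leq \mstab{\ell}{h}{f}(u) - \mstab{\ell}{h}{f}(v) \leq \bigl(\mstab{\ell}{h}{f}(u) - \mstab{\ell}{h+1}{f}(u)\bigr) + 1$ for $h \in [m_u-\tau, m_u-1]$. Summing over this range produces a telescoping sum equal to $\mstab{\ell}{m_u-\tau}{f}(u) - \mstab{\ell}{m_u}{f}(u) + \tau$, and the telescoping difference has absolute value at most $|u|-\ell$ by \Cref{claim:diameter_bound}, since both endpoints lie in $\cmonf(\stabset{\ell}{m_u-\tau}{f}(u))$, a set of diameter at most $|u|-\ell$. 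I handle the remaining boundary term $h = m_u$ directly: both $\mstab{\ell}{m_u}{f}(u)$ and $\mstab{\ell}{m_u}{f}(v)$ lie in $\cmonf(\stabset{\ell}{m_u-\tau}{f}(u) \cup \stabset{\ell}{m_u-\tau}{f}(v))$, so their difference is again at most $|u|-\ell$ by \Cref{claim:diameter_bound}. Adding everything and dividing by $\tau+1$ produces the claimed bound.

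In the shifted-window case, pairing $h_v = h_u - 1$ with item~\ref{item: structure 2} gives $\mstab{\ell}{h}{f}(u) - \mstab{\ell}{h-1}{f}(u) \leq \mstab{\ell}{h}{f}(u) - \mstab{\ell}{h-1}{f}(v) \leq 1$ for $h \in [m_u-\tau, m_u]$. The upper bound sums to at most $\tau+1$, while the lower bound telescopes to $\mstab{\ell}{m_u}{f}(u) - \mstab{\ell}{m_u-\tau-1}{f}(u)$, which lies in $[-(|u|-\ell),\, 0]$ by item~\ref{item: structure 1} of \Cref{lem: structure} and \Cref{claim:diameter_bound}. This yields an even tighter bound than required in this case.

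The main obstacle is the boundary term $h = m_u$ in the equal-window case: applying item~\ref{item: structure 2} there would reference $\mstab{\ell}{m_u+1}{f}(u) = \inf(\cY)$, which is vacuous because $\stabset{\ell}{m_u+1}{f}(u)$ is empty by definition of $\maxstabl$. I sidestep this by isolating the single $h = m_u$ term and bounding it directly via the diameter bound. A secondary subtlety in the shifted-window case is that the telescoped lower bound references $\stabset{\ell}{m_u-\tau-1}{f}(u)$, whose size threshold sits just below what \Cref{claim:diameter_bound} is stated for; but the proof of \Cref{claim:diameter_bound} uses only that every set involved has size greater than $\tfrac{1}{2}(|v|+\ell)$, which still holds for size at least $m_v - \tau$ under the lemma's premise, so the diameter bound extends to the required family.
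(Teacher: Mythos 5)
Your proof is correct and follows essentially the same route as the paper's: expand both expectations, pair the indices, control the paired terms via the interleaving inequality of \Cref{item: structure 2} of \Cref{lem: structure} so that they telescope, and absorb the leftover boundary term using \Cref{claim:diameter_bound}, then divide by the window length. The only divergence is your shifted-window case, where the paper instead keeps identical-$h$ pairing over the common window and bounds the single leftover pair (heights $\maxstabl(u)$ for $u$ and $\maxstabl(v)-\tau$ for $v$) directly by \Cref{claim:diameter_bound} as stated, whereas your pairing $h_v=h_u-1$ needs the mild extension of that claim to $\stabset{\ell}{\maxstabl(v)-\tau}{f}(u)$ --- which you correctly justify, since the claim's proof only uses that every set involved has size exceeding $\tfrac12(|v|+\ell)$.
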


\begin{proof}
    For convenience, we introduce the following notation: For all $h\geq \ell$, define the function $g_h:\dom\to\R$ by $g_h(z)=\mstablhf(z)$. Additionally, let $H$ denote the set $\{\maxstabl(u)-\tau,\dots,\maxstabl(u)-1\}$.

    First, we expand the definition of $\proxyTltf$ to get
    \[
    \abs{\proxyTltf(u)-\proxyTltf(v)}=\abs{\Ex_{h_1\sim\{\maxstabl(u)-\tau,\dots,\maxstabl(u)\}}[g_{h_1}(u)]-\Ex_{h_2\sim\{\maxstabl(v)-\tau,\dots,\maxstabl(v)\}}[g_{h_2}(v)]}.
    \]  
    By definition of $H$, the random variable $h_1$ is supported on 
    the set $H\cup\{\maxstabl(u)\}$. By \Cref{claim:m-l}, the support of the random variable $h_2$ is contained in the set 
    $H\cup\{\maxstabl(v), \maxstabl(v)-\tau\}$. By the law of total expectation and the triangle inequality, 
    \begin{align}
    \abs{\proxyTltf(u)-\proxyTltf(v)}
    &\leq \abs{\Ex_{h\in H}[g_h(u)-g_{h}(v)]}\nonumber\\
    &+\abs{g_{\maxstabl(u)}(u)-g_{\maxstabl(v)}(v)}\cdot\frac{\Ind\brackets{\maxstabl(u)=\maxstabl(v)}}{\tau}\label{eq:difference-possibility1}\\
    &+\abs{g_{\maxstabl(u)}(u)-g_{\maxstabl(v)-\tau}(v)}\cdot\frac{\Ind\brackets{\maxstabl(u)=\maxstabl(v)+1}}{\tau}.\label{eq:difference-possibility2}
    \end{align}
    At most one of \eqref{eq:difference-possibility1} and \eqref{eq:difference-possibility2} is nonzero. By \Cref{claim:diameter_bound}, and the fact that $g_h=\mstablhf$ is the maximum over a set of points with bounded diameter, each of them is at most $\frac{|u|-\ell}{\tau}$.
    
    Next, we bound $\abs{\Ex_{h\in H}[g_h(u)-g_{h}(v)]}$. By \Cref{item: structure 2} 
 of \Cref{lem: structure}, we have $g_{h+1}(u)-1\leq g_h(v)\leq g_{h}(u)$ for all $h\in H$. The upper bound on $g_h(v)$ allows us to remove the absolute value, and the lower bound allows us to replace $g_h(v)$ by $g_{h+1}(u)-1$---that is,
    \[
    \abs{\Ex_{h\in H}[g_h(u)-g_{h}(v)]}\leq \Ex_{h\in H}[g_h(u)-g_{h+1}(u)+1]=1+\frac{g_{\maxstabl(u)-\tau}(u)-g_{\maxstabl(u)}(u)}{\tau}\leq 1+\frac{|u|-\ell}{\tau},
    \]
    where the equality follows since all but the first and last terms in the expectation telescope, and the final inequality follows from \Cref{claim:diameter_bound}. Putting it all together yields the desired conclusion of
    \[
    \abs{\proxyTltf(u)-\proxyTltf(v)}\leq 1+\frac{2(|u|-\ell)}{\tau}.\qedhere
    \]
\end{proof}

\subsubsection{Completing the proof of \Cref{thm: subset extension}}  

To prove \Cref{thm: subset extension}, we show that \Cref{alg: subset extension} is $(\eps,\delta)$-DP and $O(\frac1\eps\log\frac1{\delta})$-down local, and, whenever $f$ is Lipschitz, outputs $f(x)+\Lap(\frac{10q}\eps)$.

\paragraph{Privacy.} Fix a function $f:\dom\to\R$. To analyze the privacy of  $\cW^f$ (\Cref{alg: subset extension}), we consider the steps of $\cW^f$ as separate algorithms defined as follows:

\begin{enumerate}
    \item Let $\cL(x)$ be the algorithm that releases $\lceil|x|-q\tau+R_0\rceil$ where $R_0\sim\TLap(\frac1{\eps_0},\tau)$, and let $\widehat\cL(x)$ denote the set of possible outputs of $\cL(x)$.
\end{enumerate}

Additionally, for all fixed $\ell\in\Z$,  

\begin{enumerate}[resume]
    \item Let $\cT_\ell(x)$ be the algorithm that releases 
    $b\gets \Ind\Bset{ \maxstablf(x)+R_1\leq\frac12(|x|+\ell)+5\tau}$ where $R_1\sim \TLap(\frac2{\eps_0},2\tau)$.
    \item Let $\cA_\ell(x)$ be the algorithm which releases $2\proxyT{\ell}{\tau}{f}(x)-|x|+Z$ where $Z\sim\Lap\paren{\frac {10q}{\eps_0}}$.
    \item Let $\cP_\ell(x)$ be the algorithm which releases $\cA_\ell(x)$ if $\cT_\ell(x)=0$ and returns $\perp$ otherwise. 
\end{enumerate} 

To prove that $\cW^f$ is private, we first argue that $\cP_\ell$ is private for all $\ell\in\widehat\cL(x)\cup\widehat \cL(y)$. The proof follows the propose-test-release framework of \cite{DworkL09}. First, we show that the ``test" algorithm $\cT_\ell$ is $(\eps_0,\delta_0)$-DP. 

\begin{definition}
    \label{def:indistinguishable}
    Random variables $Z$ and $Z'$ over $\R$ are $(\eps,\delta)$-indistinguishable, denoted $Z\approx_{\eps,\delta} Z$, if for all measurable sets $E\subseteq\R$, we have $\Pr[Z\in E]\leq e^\eps\Pr[Z'\in E]+\delta$ and $\Pr[Z'\in E]\leq e^\eps\Pr[Z\in E]+\delta$. 
\end{definition}

    \begin{claim}
        \label{claim:private_test}
        Fix $\ell\in \Z$ and neighbors $x,y\in\dom$ such that $\ell\leq \min(|x|,|y|)$. Then $\cT_\ell(x)\approx_{\eps_0,\delta_0}\cT_\ell(y)$.
    \end{claim}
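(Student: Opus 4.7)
The plan is to interpret $\cT_\ell$ as the Laplace mechanism applied to a low-sensitivity function, followed by post-processing. Specifically, define $\phi(z) = \maxstablf(z) - \tfrac{1}{2}|z|$, and observe that the algorithm's output can be rewritten as
\[
\cT_\ell(z) = \Ind\bset{\phi(z) + R_1 \leq \tfrac{1}{2}\ell + 5\tau},
\]
since the subtracted quantities $\tfrac{1}{2}\ell$ and $5\tau$ do not depend on the dataset. This is a post-processing of the noisy release $\phi(z) + R_1$, so by \Cref{fact:postprc} it suffices to show that the mechanism $z \mapsto \phi(z) + R_1$ is $(\eps_0, \delta_0)$-DP on pairs of neighbors $x, y$ satisfying $\ell \leq \min(|x|, |y|)$.

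The first substantive step is bounding the sensitivity of $\phi$. Without loss of generality assume $y \subset x$ so that $|y| = |x| - 1 \geq \ell$. By \Cref{claim:m-l} applied with $u = x, v = y$, we have $\maxstabl(x) - \maxstabl(y) \in \{0, 1\}$. Combined with $|x| - |y| = 1$,
\[
\phi(x) - \phi(y) = \bparen{\maxstabl(x) - \maxstabl(y)} - \tfrac{1}{2}\bparen{|x| - |y|} \in \Bset{-\tfrac{1}{2}, \tfrac{1}{2}},
\]
so $\phi$ has sensitivity at most $\tfrac{1}{2}$ on the relevant pairs of neighbors.

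Next I would invoke the truncated Laplace mechanism (\Cref{fact: laplace mechanism}) to conclude that $\phi(z) + R_1$ is $(\eps_0, \delta_0)$-DP. The noise scale $b = 2/\eps_0$ yields a pure-DP parameter of $\tfrac{\Delta}{b} = \tfrac{1/2}{2/\eps_0} = \eps_0/4 \leq \eps_0$, and the truncation parameter $2\tau \geq \tfrac{2}{\eps_0}\ln(1/\delta_0)$ is more than adequate (roughly a factor of 4 larger than the value $\tfrac{\Delta}{\eps_0}\ln(1/\delta_0) = \tfrac{1}{2\eps_0}\ln(1/\delta_0)$ that \Cref{fact: laplace mechanism} would require for sensitivity $\tfrac{1}{2}$ and privacy parameters $(\eps_0, \delta_0)$). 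Both conditions are verified using the parameter setting $\tau = \lceil \tfrac{1}{\eps_0}\ln\tfrac{1}{\delta_0}\rceil$ from \Cref{alg: subset extension}. Together with the post-processing step above, this gives the claim.

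The only genuinely delicate point is the sensitivity calculation, which critically uses the hypothesis $\ell \leq \min(|x|, |y|)$ to apply \Cref{claim:m-l}; everything else is bookkeeping with the parameter choices. I would therefore state the sensitivity bound as a short standalone observation, check the parameters against \Cref{fact: laplace mechanism}, and close by invoking post-processing.
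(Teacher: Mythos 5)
Your proof is correct and follows essentially the same route as the paper's: both view $\cT_\ell$ as a post-processing of a truncated-Laplace release of a dataset-dependent quantity whose sensitivity is controlled via \Cref{claim:m-l} (the paper uses $g(z)=\maxstablf(z)-\tfrac12(|z|+\ell)+2\tau$, bounds its sensitivity by $2$, and then the scale $\tfrac2{\eps_0}$ and truncation $2\tau$ match \Cref{fact: laplace mechanism} directly). One small nit: with your sharper sensitivity bound of $\tfrac12$, the truncation that \Cref{fact: laplace mechanism} requires scales with the noise scale actually used, i.e.\ it is $\tfrac{2}{\eps_0}\ln\tfrac{1}{\delta_0}$ rather than the $\tfrac{1}{2\eps_0}\ln\tfrac{1}{\delta_0}$ you quote, but since $2\tau\geq\tfrac{2}{\eps_0}\ln\tfrac{1}{\delta_0}$ the conclusion stands.
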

    \begin{proof}
        Let $g(x)=\maxstablf(x)-\frac12(|x|+\ell)+2\tau$. By \Cref{claim:m-l}, we have $|\maxstabl(x)-\maxstabl(y)|\leq 1$, and hence $|g(x)-g(y)|\leq 2$. By \Cref{fact: laplace mechanism}, the mechanism that releases $g(x)+\TLap(\frac{2}{\eps_0},2\tau)$ is $(\eps_0,\delta_0)$-DP. Since $\cT_\ell(x)$ is a postprocessing of this mechanism, \Cref{fact:postprc} implies the claim. 
    \end{proof}

 Next, we argue that if $\proxyTltf$ is not Lipschitz on the set $\{x,y\}$, then $\cT_\ell(x)$ and $\cT_\ell(y)$ both output $1$. Let $G_\ell=\set{(x,y) \colon \text{$x,y\in\dom$ are neighbors and } \abs{\proxyTltf(x)-\proxyTltf(y)}\leq 3q}$.

    \begin{claim}
    \label{claim:bad_set_test_bound}
        Let $x,y\in\dom$ be neighbors and fix $\ell\in \widehat\cL(x)\cup \widehat\cL(y)$. If $(x,y)\not\in G_\ell$ then $\cT_\ell(x)=\cT_\ell(y)=1$. 
    \end{claim}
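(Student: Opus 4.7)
The plan is to prove the claim as the contrapositive of the sensitivity bound in \Cref{lem:T_bound}. The key observation is that the test $\cT_\ell$ uses truncated Laplace noise $R_1\in[-2\tau,2\tau]$, so the event $\cT_\ell(z)=1$ holds with probability $1$ whenever $\maxstablf(z)\leq\tfrac12(|z|+\ell)+3\tau$. So the strategy is: assume $|\proxyTltf(x)-\proxyTltf(y)|>3q$, use \Cref{lem:T_bound} to conclude that its hypothesis must fail, and then show that the conclusion $\maxstabl(v)\leq\tau+\tfrac12(|v|+\ell)$ is enough to guarantee $\cT_\ell(x)=\cT_\ell(y)=1$ deterministically.

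I would carry this out in the following order. First, WLOG assume $v\subset u$ with $\{u,v\}=\{x,y\}$, so $|u|=|v|+1$. Next, I would use $\ell\in\widehat\cL(x)\cup\widehat\cL(y)$ together with the fact that $R_0\in[-\tau,\tau]$ (since $R_0\sim\TLap(1/\eps_0,\tau)$) to establish the two size bounds $\ell\leq|v|$ and $|u|-\ell\leq 1+(q+1)\tau$; this uses that $q=20$ and $\tau\geq 1$ to absorb the $\pm 1$ coming from neighboring cardinalities and ceilings. In particular, the first bound ensures $|v|\geq \ell$, which is the hypothesis from \Cref{claim:m-l} (and hence of \Cref{lem:T_bound}).

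The main quantitative step is to verify the inequality
\begin{equation*}
3q\;>\;1+\tfrac{2(|u|-\ell)}{\tau},
\end{equation*}
using the bound $|u|-\ell\leq 1+(q+1)\tau$ together with $q=20$ and $\tau\geq 1$; this simplifies to $(q-3)\tau>2$, which holds. By the contrapositive of \Cref{lem:T_bound}, the hypothesis $\maxstabl(v)-\tau>\tfrac12(|v|+\ell)$ must fail, i.e., $\maxstabl(v)\leq\tau+\tfrac12(|v|+\ell)$. Then \Cref{claim:m-l} yields $\maxstabl(u)\leq \maxstabl(v)+1\leq \tau+\tfrac12(|u|+\ell)+\tfrac12$.

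Finally, to conclude, I would plug these bounds into the definition of $\cT_\ell$: for every $R_1\in[-2\tau,2\tau]$,
\begin{equation*}
\maxstablf(v)+R_1\;\leq\;3\tau+\tfrac12(|v|+\ell)\;\leq\;5\tau+\tfrac12(|v|+\ell),
\end{equation*}
and similarly $\maxstablf(u)+R_1\leq 3\tau+\tfrac12+\tfrac12(|u|+\ell)\leq 5\tau+\tfrac12(|u|+\ell)$, so both $\cT_\ell(u)=1$ and $\cT_\ell(v)=1$ almost surely. The step I expect to require the most care is pinning down the precise constants: the choice $q=20$ leaves comfortable slack, but one must be careful with the $\pm 1$ from neighboring cardinalities, the ceiling in the definition of $\ell$, and the factor of $2$ difference between $\tau$ (the support of $R_0$) and $2\tau$ (the support of $R_1$), so that both the ``bound $|u|-\ell$'' step and the ``absorb $R_1$ into $5\tau$'' step line up simultaneously.
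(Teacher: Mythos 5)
Your proposal is correct and follows essentially the same route as the paper: WLOG order the neighbors, use the truncation of $R_0$ to get $\ell\leq|v|$ and $|u|-\ell\leq 1+(q+1)\tau$, check $1+2(|u|-\ell)/\tau\leq 3q$, apply the contrapositive of \Cref{lem:T_bound} to get $\maxstabl(v)-\tau\leq\frac12(|v|+\ell)$, transfer to the larger set via \Cref{claim:m-l}, and absorb $R_1\in[-2\tau,2\tau]$ and the $\pm1$ size shift into the $5\tau$ slack. The constant bookkeeping you flag as the delicate part works out exactly as in the paper's proof.
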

    \begin{proof}
        Assume w.l.o.g. that $x\subset y$. By the definition of $\cL$, since $\ell\in\widehat\cL(x)\cup \widehat\cL(y)$, we have $\ell\geq |x|-(q+1)\tau$. Thus, $|y|-\ell\leq 1+(q+1)\tau$ and $1+2(|y|-\ell)/\tau\leq 3q$. Now, by \Cref{lem:T_bound}, if $(x,y)\not\in G_\ell$ then $\maxstabl(x)-\tau\leq \frac12(|x|+\ell)$. Since the randomness $R_1$ sampled by $\cT_\ell$ is at most $2\tau$, we have $\maxstabl(x)+R_1\leq \frac12(|x|+\ell)+3\tau$, and therefore $\cT_\ell(x)=1$. To see why $\cT_\ell(y)=1$, recall that \Cref{claim:m-l} implies $\maxstabl(x)\geq \maxstabl(y)-1$. Therefore, $\maxstabl(y)+R_1\leq \frac12(|x|+\ell) + 3\tau+1$. Since $\tau\geq 1$ and $|y|=|x|+1$, we have $\maxstabl(y)+R_1\leq \frac12(|y|+\ell)+5\tau$, and therefore $\cT_\ell(y)=1$. 
    \end{proof}
    
Next, we use Claims~\ref{claim:private_test} and \ref{claim:bad_set_test_bound} to prove \Cref{lem:fixed_ell_privacy}, which states that $\cP_\ell$ is DP for all $\ell\in\widehat\cL(x)\cup\widehat\cL(y)$.

\begin{lemma}[Privacy for fixed $\ell$]
    \label{lem:fixed_ell_privacy}
    Let $x,y$ and $\ell$ be as in \Cref{claim:bad_set_test_bound}. Then $\cP_\ell(x)\approx_{2\eps_0,\delta_0}\cP_\ell(y)$.
\end{lemma}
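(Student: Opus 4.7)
The plan is to prove this by a case split on whether the pair $(x,y)$ lies in the ``good set'' $G_\ell$, which was designed precisely to make the propose-test-release skeleton of $\cP_\ell$ go through. Intuitively, the test $\cT_\ell$ provides the first $\eps_0$ of the privacy budget (and all of the $\delta_0$ slack), while the release $\cA_\ell$ provides a second $\eps_0$ using the sensitivity bound $3q$ that $G_\ell$ guarantees.

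\textbf{Case 1: $(x,y)\notin G_\ell$.} By \Cref{claim:bad_set_test_bound}, both $\cT_\ell(x)$ and $\cT_\ell(y)$ output $1$ with probability $1$, so $\cP_\ell(x)=\cP_\ell(y)=\perp$ deterministically. Thus $\cP_\ell(x)\approx_{0,0}\cP_\ell(y)$, which trivially implies the desired $(2\eps_0,\delta_0)$-indistinguishability.

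\textbf{Case 2: $(x,y)\in G_\ell$.} I would bound the sensitivity of the release map $z\mapsto 2\proxyT{\ell}{\tau}{f}(z)-|z|$ between $x$ and $y$. The definition of $G_\ell$ gives $|\proxyT{\ell}{\tau}{f}(x)-\proxyT{\ell}{\tau}{f}(y)|\leq 3q$, and $\big||x|-|y|\big|=1$ since $x,y$ are neighbors, so the total sensitivity is at most $6q+1\leq 10q$ (using $q=20$). Consequently, adding $Z\sim\Lap(10q/\eps_0)$ makes $\cA_\ell$ satisfy $\cA_\ell(x)\approx_{\eps_0,0}\cA_\ell(y)$ by \Cref{fact: laplace mechanism}. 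Combining this with $\cT_\ell(x)\approx_{\eps_0,\delta_0}\cT_\ell(y)$ from \Cref{claim:private_test}, adaptive composition (\Cref{fact:composition}) yields that the joint release $(\cT_\ell(\cdot),\cA_\ell(\cdot))$ is $(2\eps_0,\delta_0)$-indistinguishable on $(x,y)$.

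Finally, since $\cP_\ell$ is a deterministic postprocessing of this pair (output $\cA_\ell$ when the test bit is $0$, else output $\perp$), \Cref{fact:postprc} gives $\cP_\ell(x)\approx_{2\eps_0,\delta_0}\cP_\ell(y)$, completing the proof. The main subtlety is ensuring the sensitivity computation in Case 2 correctly incorporates both the $3q$ from $G_\ell$ and the $\pm 1$ from $|z|$; everything else is a mechanical application of the earlier claims and the standard DP composition and postprocessing facts. Note that the case analysis is legitimate because membership in $G_\ell$ depends only on the deterministic pair $(x,y)$ and the fixed value $\ell$, not on the algorithm's internal randomness.
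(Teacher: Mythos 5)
Your proposal is correct and follows essentially the same route as the paper: a case split on membership of $(x,y)$ in $G_\ell$, using \Cref{claim:bad_set_test_bound} to get deterministic $\perp$ outputs in the bad case, and the $6q+1\leq 10q$ sensitivity bound with the Laplace mechanism, \Cref{claim:private_test}, composition, and postprocessing in the good case. The added remark that membership in $G_\ell$ is determined by $(x,y,\ell)$ alone is a valid (implicit in the paper) justification of the case analysis.
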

\begin{proof}
    Consider the following two cases. 
    \subparagraph{Case 1.} Suppose $(x,y)\in G_\ell$. By the definition of $G_\ell$,
    \[
    \abs{2\proxyTltf(x)-|x| - 2\proxyTltf(y) + |y|}\leq 6q+1.
    \]  
    Since the noise is sampled from $\Lap\paren{\frac{10q}{\eps_0}}$, 
    \Cref{fact: laplace mechanism} (about privacy of
    the Laplace mechanism) implies that $\cA_\ell(x)\approx_{\eps_0,0}\cA_\ell(y)$. Moreover, since $\ell\in\widehat\cL(x)\cup\widehat\cL(y)$, we have $\ell\leq\min(|x|,|y|)$, and hence, by  \Cref{claim:private_test}, we have $\cT_\ell(x)\approx_{\eps_0,\delta_0}\cT_\ell(y)$. Thus, Facts~\ref{fact:composition} and \ref{fact:postprc} (about composition and postprocessing) 
imply that $\cP_\ell(x)\approx_{2\eps_0,\delta_0}\cP_\ell(y)$, which completes the analysis of the first case. 
    
    \subparagraph{Case 2.} Suppose $(x,y)\not\in G_\ell$. Then, since %
    $\ell\in \widehat\cL(x)\cup\widehat\cL(y)$, \Cref{claim:bad_set_test_bound} implies $\cT_\ell(x)=\cT_\ell(y)=1$. Therefore, $\cP_\ell(x)$ and $\cP_\ell(y)$ both output $\perp$.     
    
    It follows that in both cases $\cP_\ell(x)\approx_{2\eps_0,\delta_0} \cP_\ell(y)$, which completes the proof \Cref{lem:fixed_ell_privacy}.
\end{proof}

It remains to prove that $\cW^f$ is $(\eps,\delta)$-DP. Since $x$ and $y$ are neighbors, the Laplace mechanism (see \Cref{fact: laplace mechanism}) implies that $\cL(x)\approx_{\eps_0,\delta_0} \cL(y)$. Additionally, \Cref{lem:fixed_ell_privacy} implies that $\cP_\ell(x)\approx_{2\eps_0,\delta_0}\cP_\ell(y)$ for all $\ell\in \widehat\cL(x)\cup\widehat\cL(y)$. Since $\cW^f(x)$ first releases $\ell\sim \cL(x)$ and then releases $\cP_\ell(x)$, basic composition (\Cref{fact:composition}) implies that $\cW^f(x)\approx_{3\eps_0,2\delta_0} \cW^f(y)$. Since $\eps_0=\eps/3$ and $\delta_0=\delta/2$, algorithm $\cW^f$ is $(\eps,\delta)$-DP.

\paragraph{Locality.}
Next, we prove the down locality guarantee. By the setting of $\ell$ in \Cref{alg: subset extension} and the fact that $|R_0|\leq \tau$, we have $|x|-\ell\leq 2q\tau$. Therefore, $\cW$ need only query $f$ on $\DN_{2q\tau}(x)$. Since $\tau=O(\frac1{\eps_0}\ln\frac{1}{\delta_0}+1)=O(\frac1{\eps}\log\frac{1}{\delta}+1)$ the locality is $O(\frac1\eps\log\frac1\delta + 1)$.

\paragraph{Accuracy.}
Observe that whenever $x$ is $\ell$-stable, we have $\maxstabl(x)=|x|$. Therefore, 
\[
\maxstabl(x)-\frac12(|x|+\ell)-5\tau \geq \frac12(q\tau-\tau-1)-5\tau\geq q\tau/2-6\tau.
\]
Since $q>16$ we have $\maxstabl(x)-\frac12(|x|+\ell)-5\tau>2\tau$.
Since $|R_1|\leq 2\tau$, algorithm $\cT_\ell(x)$ outputs $0$ for all $\ell\in\widehat\cL(x)$. Hence, for all $\ell\in\widehat\cL(x)$ algorithm $\cP_\ell(x)$ outputs $2\proxyTltf(x)-|x|+Z$ where $Z\sim\Lap\paren{\frac{10q}{\eps_0}}$. By \Cref{lem: cmonf}, if $f$ is Lipschitz then $\cmonf$ is Lipschitz and monotone. 
Thus, by \Cref{lem: structure}, we obtain $\stab{\ell}{h}{\cmonf}(x)=\cmonf(x)$, and therefore $2\proxyTltf(x)-|x|=f(x)$.
\end{proof}

\fi %

\section{Locality Lower Bound}
\label{sec: locality lower bound}
\newcommand{\pl}{\textsf{pl}}
\newcommand{\med}{\textsf{med}}
In this section, we 
\ifnum\stoc =0 prove 
\else state \fi
a lower bound on the down locality of every privacy wrapper with an $(\alpha,\beta)$-accuracy guarantee for constant functions, and a lower bound on privacy wrappers that achieve the same accuracy guarantee as that of \Cref{thm:generalized-shifted-inverse-mechanism}. 

\begin{theorem}[Locality lower bound]
    \label{thm: locality lower bound}
    For all $\alpha>0$, all $\eps,\delta,\beta\in(0,1)$, and all $r>2\alpha$, every $(\eps,\delta)$-privacy wrapper that is $\lambda$-down local, and $(\alpha,\beta)$-accurate for all constant functions $f:\dom\to\set{2\alpha,4\alpha,...,r}$ and $x\in\dom$, must have $\lambda\geq \Omega\bparen{\frac{1}{\eps}\log\min\bparen{\frac{r}{\alpha\cdot\beta},\frac1\delta}}$.
\end{theorem}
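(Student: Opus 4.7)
The plan is to execute the point distribution reduction sketched in the techniques section. Given a $\lambda$-down local $(\eps,\delta)$-privacy wrapper $\cW$ satisfying $(\alpha,\beta)$-accuracy for constant functions with range $\cY = \{2\alpha, 4\alpha, \ldots, r\}$, I will combine the accuracy hypothesis and group privacy on a disjoint family of hard datasets to derive the inequality
\[
1 - \beta \;\leq\; e^{2(\lambda+1)\eps}\bigl( \tfrac{\beta}{k-1} + 2(\lambda+1)\delta \bigr),
\]
where $k = |\cY|$. A standard two-regime case analysis of this inequality then gives the claimed bound. (WLOG $k \geq 2$, else the claimed bound is dominated by its $\log(1/\delta)$ half, which one recovers from the same argument with just two datasets.)

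For the construction I take the universe $\cU = \cY \times \N$; for each $y \in \cY$ the dataset $x_y = \{(y,1),(y,2),\ldots,(y,\lambda+1)\}$ has size $\lambda+1$, and the sets $\{x_y\}_{y \in \cY}$ are pairwise disjoint. I define a single function $F : \dom \to \R$ so that on every nonempty $z$, $F(z)$ is the first coordinate of the element of $z$ with smallest second coordinate (and $F(\emptyset) = 0$). The key property is that $F$ agrees with the constant function $g_y \equiv y$ on every nonempty subset of $x_y$.

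Two clean reductions then follow from this setup. First, since $\cW$ is $\lambda$-down local and $|x_y| = \lambda+1$, on input $x_y$ the wrapper only queries $F$ inside $\DN_\lambda(x_y) = \{z \subseteq x_y : |z| \geq 1\}$, so $\cW^F(x_y)$ and $\cW^{g_y}(x_y)$ have identical output distributions; applying the accuracy hypothesis to the constant function $g_y$ then gives $\Pr[\cW^F(x_y) \in [y-\alpha, y+\alpha]] \geq 1-\beta$ for every $y \in \cY$. Second, $\cW^F$ is itself an $(\eps,\delta)$-DP mechanism on datasets (by the privacy wrapper property applied to the fixed function $F$), and the disjoint sets $x_y, x_{y'}$ lie at Hamming distance $2(\lambda+1)$, so group privacy gives
\[
\Pr[\cW^F(x_{y'}) \in E] \;\leq\; e^{2(\lambda+1)\eps}\Pr[\cW^F(x_y) \in E] + 2(\lambda+1)\, e^{(2\lambda+1)\eps}\delta
\]
for every measurable $E$.

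The proof then finishes with a standard packing/averaging step. Since $r > 2\alpha$ the intervals $\{[y-\alpha,y+\alpha]\}_{y \in \cY}$ are pairwise disjoint; fixing any $y_0 \in \cY$, accuracy at $y_0$ forces $\sum_{y \neq y_0}\Pr[\cW^F(x_{y_0}) \in [y-\alpha,y+\alpha]] \leq \beta$, so some $y^\star \neq y_0$ obeys $\Pr[\cW^F(x_{y_0}) \in [y^\star-\alpha,y^\star+\alpha]] \leq \beta/(k-1)$. Plugging this together with the accuracy lower bound at $y^\star$ into the group-privacy inequality produces the displayed bound above. Solving this bound separately in the regimes $2(\lambda+1)\delta \leq \beta/(k-1)$ and its complement yields $\lambda \geq \Omega(\log(k/\beta)/\eps)$ and $\lambda \geq \Omega(\log(1/\delta)/\eps)$ respectively, and substituting $k = \Theta(r/\alpha)$ gives the theorem. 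The one delicate point is the approximate-DP regime, where $\lambda$ sits inside the $2(\lambda+1)\delta$ factor; I plan to handle this by a bootstrap, first deducing a crude estimate $\lambda \leq O(\log(1/\delta)/\eps)$ and then substituting it back to absorb the $\log\log(1/\delta)$ slack.
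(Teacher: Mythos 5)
Your proposal is essentially the paper's own argument: the paper proves this theorem by reducing to the point distribution problem --- it runs the wrapper on a size-$(\lambda+1)$ encoded multiset with a function (there, the plurality) that coincides with a constant function on the entire $\lambda$-down neighborhood, so down-locality transfers the accuracy guarantee for constant functions --- and then invokes a packing-based hardness lemma for that problem with sample complexity $\lambda+1$. Your disjoint datasets $x_y$ and the function $F$ are the same reduction, with the packing lemma written out inline (group privacy over distance $2(\lambda+1)$ plus pigeonhole over the intervals $[y\pm\alpha]$, which share endpoints, costing only a factor $2$) instead of cited.

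The one substantive caveat is exactly the point you flag. From $1-\beta\le e^{2(\lambda+1)\eps}\bigl(\tfrac{\beta}{k-1}+2(\lambda+1)\delta\bigr)$, the branch $\lambda=\Omega\bigl(\tfrac1\eps\log\tfrac1\delta\bigr)$ does not follow for all $\eps,\delta\in(0,1)$: when $\eps\ll\delta$ the inequality is already satisfied with $\lambda$ of order $1/\delta$, and your bootstrap only absorbs the slack when $\delta$ is polynomially smaller than $\eps$ (roughly $\delta\le\eps^{1+\Omega(1)}$); likewise the $\log(k/\beta)$ branch needs $\beta$ bounded away from $1$. This does not put you behind the paper: the cited point-distribution hardness lemma requires the same restrictions (outputting a uniformly random element of the multiset is $(0,1/n)$-DP and solves point distribution perfectly, so no bound of the form $\tfrac{c}{\eps}\log\min(k/\beta,1/\delta)$ can hold for very small $\eps$ and moderate $\delta$), and the paper's query-complexity lower bound imposes $\delta\le\eps^2$ for the same reason. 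So your writeup is correct in the standard regime; just state that regime (small constant $\beta$, $\delta$ sufficiently small relative to $\eps$) explicitly rather than for all $\eps,\delta,\beta\in(0,1)$.
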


An important feature of \Cref{thm: locality lower bound} is that it holds even for privacy wrappers that are only accurate for constant functions. Since constant functions are a subset of Lipschitz functions, the lower bound implies that the locality of many of our constructions is tight. Additionally, we deduce an analogous lower bound for privacy wrappers that are $(DS_\lambda^f(x),\beta)$-accurate. Such privacy wrappers are, in particular, $(\alpha,\beta)$-accurate for constant functions (the down sensitivity is zero) and all $\alpha\geq0$. Taking $\alpha=\frac 1 2$ in \Cref{thm: locality lower bound} we obtain \Cref{cor: locality lower bound}.

\begin{corollary}
    \label{cor: locality lower bound}
    For all $\eps,\delta,\beta\in(0,1)$, and $r\in \N$, every $(\eps,\delta)$-privacy wrapper that is $\lambda$-down local, and $(DS_\lambda^f(x),\beta)$-accurate on all functions $f:\dom\to[r]$ and all inputs $x$ must have $\lambda\geq \Omega\bparen{\frac{1}{\eps}\log\min\bparen{\frac{r}{\beta},\frac1\delta}}$.
\end{corollary}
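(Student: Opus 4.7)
To prove \Cref{cor: locality lower bound}, I first observe that it is a direct consequence of \Cref{thm: locality lower bound} with $\alpha=1/2$. Indeed, if a wrapper is $(DS_\lambda^f(x),\beta)$-accurate on every $f:\dom\to[r]$, then for each constant function $f\equiv y$ with $y\in\{1,\dots,r\}$ the down sensitivity vanishes, so the wrapper is in particular $(\tfrac12,\beta)$-accurate for constant functions with range $\{2\cdot\tfrac12,4\cdot\tfrac12,\dots,r\}=[r]$. Applying \Cref{thm: locality lower bound} with $\alpha=\tfrac12$ yields $\lambda\geq \Omega\bparen{\tfrac1\eps\log\min\bparen{r/((1/2)\beta),1/\delta}}=\Omega\bparen{\tfrac1\eps\log\min(r/\beta,1/\delta)}$.

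The bulk of the work is thus to prove \Cref{thm: locality lower bound}, which I plan to do by reduction from the \emph{point distribution} problem (PD). An algorithm solves PD over a universe $\cY\subset\R$ with sample size $n$ and failure probability $\beta$ if, for every $y\in\cY$ and every multiset $s=(y,\dots,y)\in\cY^n$, it returns $y$ with probability at least $1-\beta$; neighboring multisets differ in exactly one coordinate. A standard packing argument---considering the $|\cY|$ disjoint events $\{\cA(s_y)=y\}$, applying group privacy between the constant inputs $s_y$, and using that their probabilities sum to at most $1$---shows that any $(\eps,\delta)$-DP solver requires $n\geq\Omega\bparen{\tfrac1\eps\log\min(|\cY|/\beta,1/\delta)}$.

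For the reduction, set $\cY=\{2\alpha,4\alpha,\dots,r\}$ (so $|\cY|=\Theta(r/\alpha)$) and $n=\lambda+1$. Given the hypothesized wrapper $\cW$, let $\univ=[n]\times\cY$ and, for each multiset $s$, let $x_s=\{(i,s_i):i\in[n]\}\in\dom$. Crucially, I will define a \emph{single} function $f:\dom\to\R$ independent of $s$: on non-empty $z$, let $f(z)$ be the plurality of the second coordinates of elements of $z$ (with a fixed tie-break), and set $f(\emptyset)=0$. Define $\cA(s)$ to be the rounding of $\cW^f(x_s)$ to the nearest element of $\cY$. When $s=(y,\dots,y)$, every non-empty $z\subseteq x_s$ has all second coordinates equal to $y$ and hence $f(z)=y$; since $|x_s|=\lambda+1$ and $\cW$ is $\lambda$-down local on $x_s$, it only queries non-empty subsets, so $\cW^f(x_s)$ is distributed identically to $\cW^{g_y}(x_s)$ for the constant function $g_y\equiv y\in\cY$. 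The wrapper's accuracy hypothesis yields $|\cW^{g_y}(x_s)-y|\leq\alpha$ with probability at least $1-\beta$, and rounding to the $2\alpha$-spaced grid $\cY$ recovers $y$. For privacy, since $f$ does not depend on $s$, neighboring multisets $s,s'$ produce $x_s,x_{s'}$ at insertion/deletion distance $2$, so group privacy for $\cW$ gives $\cW^f(x_s)\approx_{2\eps,(1+e^\eps)\delta}\cW^f(x_{s'})$; postprocessing by rounding preserves this. Combining $\cA$'s guarantees with the PD lower bound and substituting $|\cY|=\Theta(r/\alpha)$ yields $\lambda+1\geq\Omega\bparen{\tfrac1\eps\log\min(r/(\alpha\beta),1/\delta)}$, proving \Cref{thm: locality lower bound}.

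The main obstacle is arranging the reduction so that (i) the wrapper sees a function that is fixed independently of the PD instance $s$---necessary because the wrapper's DP guarantee only covers a fixed $f$ while $x$ varies---and simultaneously (ii) the wrapper's queries on $x_s$ look identical to queries on a truly constant function when $s$ is all-$y$, so that the constant-function accuracy guarantee can be invoked. The plurality ``read-out'' design of $f$, combined with the choice $n=\lambda+1$ (which guarantees the wrapper never queries the empty set, where $f$ could accidentally be inconsistent with the constant-$y$ function), is what threads both needles simultaneously; the remaining work is routine bookkeeping of group privacy under approximate DP, which produces the $1/\delta$ branch of the minimum in the bound.
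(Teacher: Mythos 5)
Your proposal is correct and follows essentially the same route as the paper: the corollary is deduced from the locality lower bound theorem by taking $\alpha=\tfrac12$ (constant functions have zero down sensitivity), and that theorem is proved by a reduction to the point distribution problem in which the wrapper is run on a fixed plurality read-out function over labeled copies of the multiset, with $n=\lambda+1$ and $\lambda$-down locality ensuring the wrapper cannot distinguish the plurality function from a truly constant one. Your extra bookkeeping (explicit group-privacy step for replacement neighbors and the empty-set caveat) only affects constants and matches the paper's intent.
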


Our next theorem states that the locality of any privacy wrapper that achieves the same accuracy guarantee as that of \Cref{thm:generalized-shifted-inverse-mechanism} must depend on the cardinality of the range of the function. 

\begin{theorem}[Dependence on range for automated sensitivity detection]
    \label{thm:univ-dependent-lower-bound}
    Fix $\eps,\delta,\beta\in(0,1)$. Let $\cW$ be an $(\eps,\delta)$-privacy wrapper that is $\lambda$-down local, and has the following accuracy guarantee: For all $f:\dom\to[r]$, and $x\in\dom$
    \[
    \Pr\brackets{\cW^f(x)\in [\min f \bparen{\DN_\lambda(x)}, \max f \bparen{\DN_\lambda(x)}]}\geq1-\beta.
    \]
    Then $\cW$ must have locality $\lambda=\Omega\paren{\log^*(r)}$.    
\end{theorem}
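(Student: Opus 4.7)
The proof reduces from the interior point problem, for which Bun, Nissim, Stemmer, and Vadhan~\cite{BunNSV15} established a sample complexity lower bound of $\Omega(\log^* r)$ for any $(\eps,\delta)$-differentially private algorithm in the regime of constant privacy parameters and constant success probability. Recall that given a multiset $s=(y_1,\ldots,y_n)\in[r]^n$, an interior point algorithm must output some $y^*\in[\min_i y_i,\max_i y_i]$.

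\textbf{The reduction.} Given a wrapper $\cW$ as in the hypothesis with locality $\lambda$, set $n:=\lambda+1$ and take the universe $\cU:=[n]\times[r]$. Given an interior point input $s=(y_1,\ldots,y_n)$, define
\[
x_s:=\{(i,y_i):i\in[n]\}\in\dom
\]
(a set of size $n$, since distinct indices prevent repetitions) and the \emph{$s$-independent} function $f:\dom\to[r]$ by
\[
f(z):=\max\{y:(i,y)\in z\}\text{ for }z\neq\emptyset,\qquad f(\emptyset):=1.
\]
The interior point algorithm runs $\cW^f(x_s)$ and returns its output.

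\textbf{Accuracy.} Since $|x_s|-\lambda=1$, every non-empty $z\subseteq x_s$ lies in $\DN_\lambda(x_s)$. The singletons $\{(i,y_i)\}$ satisfy $f(\{(i,y_i)\})=y_i$, while $f(z)\in\{y_1,\ldots,y_n\}$ for every non-empty $z\subseteq x_s$. Hence
\[
[\min f(\DN_\lambda(x_s)),\max f(\DN_\lambda(x_s))]=[\min_i y_i,\max_i y_i],
\]
so by the accuracy hypothesis on $\cW$, the output is a valid interior point with probability at least $1-\beta$.

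\textbf{Privacy and conclusion.} If $s,s'$ differ in a single entry (say position $i$), then $x_s$ and $x_{s'}$ differ by removing $(i,y_i)$ and inserting $(i,y'_i)$, i.e.\ they are two steps apart in the add/remove neighboring relation used by $\cW$. Group privacy therefore yields a $(2\eps,(1+e^\eps)\delta)$-DP algorithm for interior point on $n=\lambda+1$ samples with success probability at least $1-\beta$. Invoking the $\Omega(\log^* r)$ lower bound gives $\lambda+1=\Omega(\log^* r)$, hence $\lambda=\Omega(\log^* r)$.

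\textbf{Main obstacle.} The principal subtlety is the mismatch between the replace-one neighboring relation natural for interior point and the add/remove relation of the wrapper; group privacy bridges this at the cost of constants in $(\eps,\delta)$, absorbed by the asymptotic notation. A secondary technical point is to ensure the interior point lower bound applies for constant $\delta\in(0,1)$: this is the regime covered by the standard packing-based argument underlying the $\Omega(\log^* r)$ bound, with constants depending on $(\eps,\delta,\beta)$.
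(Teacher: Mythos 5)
Your proof is correct and follows essentially the same route as the paper: a reduction to the interior point problem of \cite{BunNSV15}, encoding the instance as a set of index--value pairs and querying the wrapper on a simple order statistic whose range over the down-neighborhood is exactly $[\min_i y_i,\max_i y_i]$ (you use the maximum, the paper uses the median; you handle the replace-one vs.\ add/remove neighbor mismatch explicitly via group privacy, which the paper absorbs into its multiset formulation of the hardness lemma). One small caveat: the $\Omega(\log^* r)$ interior-point lower bound is not obtained by a packing argument (packing only yields the $\log$-type bounds used for the point-distribution problem), but since you invoke the hardness result as a black box, exactly as the paper does, this mischaracterization does not affect the correctness of your reduction.
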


\ifnum\stoc=1  %
We defer  proofs of \Cref{thm: locality lower bound,thm:univ-dependent-lower-bound}  to the full version \cite{LinderRSS25}.
\else

In the remainder of the section we prove \Cref{thm: locality lower bound,thm:univ-dependent-lower-bound}. The proofs proceed via reductions from the ``point distribution problem", and ``interior point problem" respectively. 

\begin{remark}[Between sets and multisets] 
\label{rem:set_multiset_map} One syntactic difficulty that arises in our proofs is that our privacy wrappers are defined for functions over sets, and the point distribution and interior point problems are concerned with multisets. We circumvent this issue by defining a mapping from multisets to sets, and a mapping from sets to multisets. This allows us to apply our privacy wrappers to functions over multisets. 

For a set $\cY$, let $\widetilde{\cY}$ denote the set of finite multisets of elements in $\cY$. Define the map $\phi$ by sending each multiset $s\in\widetilde{\cY}$ to a set of tuples $\phi(s)\in(\N\times\N)^*$. The map $\phi(s)$ sends each element $j\in s$ to the element $(j,i)$ for a unique $i\in\N$ (i.e., $\phi$ assigns unique labels to the elements of $s$). We also define the map $\psi:(\N\times\N)^*\to\widetilde{\cY}$ by setting $\psi(x)$ to the multiset consisting of the projection of every tuple $t\in x$ onto its first coordinate. Notice that $\psi(\phi(s))=s$ and that $|\phi(s)|=|s|$. In the remainder of the section, we will use the maps $\phi$ and $\psi$ to complete the proofs of \Cref{thm: locality lower bound,thm:univ-dependent-lower-bound}.
\end{remark}

\subsection{The Point Distribution Problem and The Proof of \Cref{thm: locality lower bound}} 
\label{sec:pd problem}

In this section, we define the point distribution problem, and prove \Cref{thm: locality lower bound}. 
Recall that $\widetilde{\cY}$ denotes the set of finite \emph{multisets} of elements in $\cY$.

\begin{definition}[Point distribution problem, sample complexity]
    \label{def: pd problem}
    Fix a set $\cY$, an integer $n\in\N$, and a failure probability $\beta\in(0,1)$. An algorithm $\cA$ \emph{solves the point distribution problem over $\cY$ with probability at least $1-\beta$ and sample complexity $n$} if for all $y\in\cY$ and input 
    $s\in\widetilde{\cY}$ such that $|s|=n$ the algorithm $\cA$ outputs $y$ with probability at least $1-\beta$ whenever $s$ consists of $n$ identical copies of $y$. 
\end{definition}

Our reduction will show that an $(\eps,\delta)$-privacy wrapper that is $\lambda$-down local can be used as a subroutine to solve the point distribution problem with sample complexity $\lambda+1$. Hence, in order to prove a lower bound on the locality $\lambda$ of every privacy wrapper, we require a lower bound on the sample complexity of any algorithm that solves the point distribution problem. 

\begin{lemma}[Point distribution hardness]
    \label{lem: pd hardness}
    There exists a constant $c>0$ such that for all sets $\cY$, and all privacy parameters $\eps,\delta\in(0,1)$, every $(\eps,\delta)$-DP algorithm that solves the point distribution problem over $\cY$ with probability at least $1-\beta$ must have sample complexity $n\geq \frac c\eps\log\min(\frac{|\cY|}{\beta},\frac1\delta)$. 
\end{lemma}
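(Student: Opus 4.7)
The plan is a standard differential privacy packing argument: distinguish the $|\cY|$ candidate outputs by exhibiting a ``packing'' of far-apart inputs, and use the DP definition (together with group privacy) to force the sample size $n$ to grow. For each $y\in\cY$, let $s_y\in\widetilde{\cY}$ denote the multiset consisting of $n$ identical copies of $y$. The accuracy guarantee of $\cA$ gives $\Pr[\cA(s_y)=y]\geq 1-\beta$ and, since distinct outputs are disjoint events, $\Pr[\cA(s_y)=y']\leq\beta$ for every $y'\neq y$. Two multisets $s_y$ and $s_{y'}$ with $y\neq y'$ differ by exactly $2n$ single-element insertions and deletions, so applying group privacy at distance $k=2n$ (using that an $(\eps,\delta)$-DP mechanism is $(2n\eps, 2ne^{2n\eps}\delta)$-DP at this distance) yields, for any fixed $y_0\in\cY$, the inequality $\Pr[\cA(s_{y_0})=y] \geq e^{-2n\eps}(1-\beta) - 2n\delta$.

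Summing the disjoint events $\{\cA(s_{y_0})=y\}$ over $y\in\cY$ produces the master packing inequality
\[
1 \;\geq\; |\cY|\left(e^{-2n\eps}(1-\beta) - 2n\delta\right),
\qquad\text{i.e.,}\qquad
e^{-2n\eps}(1-\beta) \;\leq\; \tfrac{1}{|\cY|} + 2n\delta.
\]
Separately, applying group privacy between just $s_y$ and $s_{y'}$ for two distinct $y, y' \in \cY$ yields a complementary two-point inequality $1-\beta \leq e^{2n\eps}(\beta + 2n\delta)$.

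The desired lower bound is then extracted by case analysis. Assume $\beta\leq 1/2$ (else the statement is trivial). Splitting the master inequality based on whether $1/|\cY|$ or $2n\delta$ dominates on the right gives $n\geq\Omega(\frac{1}{\eps}\log|\cY|)$ in the first case, and $ne^{2n\eps}\geq \Omega(1/\delta)$ in the second, which solves to $n\geq \Omega(\frac{1}{\eps}\log(1/\delta))$. Together these furnish $n\geq\Omega(\frac{1}{\eps}\log\min(|\cY|, 1/\delta))$. The two-point inequality, under the assumption $2n\delta\leq 1/4$ (otherwise the previous bound already delivers the stronger $\log(1/\delta)/\eps$ conclusion), yields $n\geq\Omega(\frac{1}{\eps}\log(1/\beta))$. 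A short case split on whether $|\cY|/\beta \leq 1/\delta$ then verifies that the maximum of these two lower bounds is at least $\Omega(\frac{1}{\eps}\log\min(|\cY|/\beta, 1/\delta))$, as desired.

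The main technical subtlety is extracting a clean $\log(1/\delta)/\eps$ bound from the implicit inequality $ne^{2n\eps}\gtrsim 1/\delta$, since $n$ appears on both sides. This is the Lambert-$W$ regime and can be handled either by the standard bound $W(x)\geq \ln x - \ln\ln x$ or, more elementarily, by contradiction: assuming $n < c_0 \log(1/\delta)/\eps$ for a sufficiently small absolute constant $c_0>0$ makes the left side too small (e.g., $n e^{2n\eps} < \frac{1}{\eps}\log(1/\delta)\cdot\delta^{-1/2}$, which is $o(1/\delta)$ for $\delta$ small). The remaining arguments---the group privacy setup, the two-point bound, and the final combination of lower bounds---are routine and mirror standard packing lower bounds in the DP literature (e.g., \cite{BunDRS18, CohenLNSS23}).
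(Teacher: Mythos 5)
Your proposal follows the standard packing route that the paper itself defers to (it cites a textbook rather than proving the lemma): a $|\cY|$-point packing via group privacy for the $\log|\cY|$ and $\log(1/\delta)$ terms, plus a two-point argument for the $\log(1/\beta)$ term. The setup, the master inequality $1\ge|\cY|\bigl(e^{-2n\eps}(1-\beta)-2n\delta\bigr)$, and the two-point inequality are all correct, and the combination for the $\log|\cY|$ and $\log(1/\beta)$ parts is fine up to constants.

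The gap is exactly at the step you flag as the main subtlety: extracting $n\ge\Omega(\tfrac1\eps\log\tfrac1\delta)$ from $ne^{2n\eps}\gtrsim 1/\delta$. Your justification---assume $n<c_0\log(1/\delta)/\eps$, conclude $ne^{2n\eps}<\tfrac1\eps\log(1/\delta)\cdot\delta^{-1/2}$, and call this ``$o(1/\delta)$ for $\delta$ small''---silently requires $\eps\gtrsim\sqrt{\delta}\log(1/\delta)$, which is not implied by $\eps,\delta\in(0,1)$. Concretely, for $\eps=\delta=10^{-4}$ the inequality $ne^{2n\eps}\ge 1/(8\delta)$ is already satisfied at $n\approx 1.2\cdot 10^{3}$, far below $\tfrac{c}{\eps}\log\tfrac1\delta\approx 10^{5}c$. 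Moreover, in the regime $\delta\gtrsim\eps$ this cannot be repaired by better bookkeeping, because the statement with $1/\delta$ is itself false there: the mechanism that outputs the mode of a Bernoulli$(\delta)$ subsample of its input is $(0,\delta)$-DP and solves the point distribution problem with $n=O(\tfrac1\delta\log\tfrac1\beta)$ samples, so fixing $\delta=1/2$, $\beta=1/3$ and letting $\eps\to0$ violates the claimed bound for every constant $c$. What the packing argument actually yields is the $\eps/\delta$ form: bound the accumulated group-privacy slack by the geometric series $\delta\tfrac{e^{2n\eps}-1}{e^{\eps}-1}\le\tfrac{\delta}{\eps}e^{2n\eps}$, so that after dividing by $e^{2n\eps}$ the additive error is the $n$-free quantity $\delta/\eps$ rather than your $2n\delta$; this gives $n\ge\Omega\bigl(\tfrac1\eps\log\min(\tfrac{|\cY|}{\beta},\tfrac{\eps}{\delta})\bigr)$ directly, with no Lambert-$W$ issue. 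That is precisely the form the paper records in its detailed query-complexity lemma (which has $\eps/\delta$ inside the logarithm), and it implies the lemma as stated only under a side condition such as $\delta\le\eps^{2}$ (as assumed in the paper's query lower bound theorem) or $\delta\le\eps\beta$. So: right approach and correct $\log|\cY|$ and $\log(1/\beta)$ components, but the $\log(1/\delta)$ extraction as written fails; you should switch to the geometric-series accounting and either prove the $\eps/\delta$ version or state the needed restriction on $\delta$.
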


The proof of \Cref{lem: pd hardness} proceeds via standard packing arguments and can be found in \cite{BookAlgDP}.

\begin{proof}[Proof of \Cref{thm: locality lower bound}]

Fix parameters $\eps,\delta,\alpha$ and $r$ as in \Cref{thm: locality lower bound}. In order to prove the lower bound, we will construct a universe $\cY$, and an algorithm $\cA$ that calls an $(\eps,\delta)$-privacy wrapper $\cW$ with locality $\lambda$, and solves the point distribution problem over $\cY$ with probability at least $1-\beta$ and sample complexity $\lambda+1$. \Cref{lem: pd hardness}, then implies that $\lambda\geq \Omega(\frac1\eps\log\min(\frac{|\cY|}{\beta},\frac1\delta))$. We state and prove this reduction formally below. 

\begin{lemma}[Reduction from point distribution]
    \label{lem: reduction from pd}
    Fix parameters $\alpha>0$, $r\geq2\alpha$, and $\eps,\delta\in (0,1)$. Let $\cY=\{2\alpha,4\alpha,\dots,r\}$ and $\cU=\N\times\N$. Let $\cW$ be an $(\eps,\delta)$-privacy wrapper over $\cU$ that is $(\alpha,\beta)$ accurate for all constant functions $f:\dom\to\cY$ and all inputs $x\in\dom$. Suppose that $\cW$ is $\lambda$-down local for some $\lambda\in\N$. Then there exists an algorithm $\cA$ that solves the point distribution problem over $\cY$ with probability at least $1-\beta$ and sample complexity $\lambda+1$.
\end{lemma}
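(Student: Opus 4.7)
The plan is to construct an $(\eps,\delta)$-DP algorithm $\cA$ that, given a multiset $s\in\widetilde{\cY}$ of size $n=\lambda+1$, invokes $\cW$ as a subroutine and returns the unique element $y$ whenever $s$ consists of $n$ identical copies of $y$. On input $s$, the algorithm first forms $x:=\phi(s)\in\dom$, choosing the fresh labels inside $\phi$ so that for a multiset neighbor $s'=s\cup\{y'\}$ one has $\phi(s')=\phi(s)\cup\{(y',i')\}$ with $i'$ not already used in $\phi(s)$ (for instance by maintaining a monotone counter, so that set insertions exactly track multiset insertions). Next, $\cA$ defines the \emph{global}, $s$-independent function $f:\dom\to\cY$ by $f(z):=\max\psi(z)$ whenever $z$ is nonempty and $\psi(z)\subseteq\cY$, and $f(z):=2\alpha$ otherwise. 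Finally, $\cA$ computes $\hat v\gets\cW^f(x)$ and outputs the element of $\cY$ nearest to $\hat v$ (and a fixed default if $\hat v=\perp$).

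For privacy, the key point is that $f$ does not depend on $s$: the algorithm $\cA$ is the composition of $\phi$, the randomized map $\cW^f(\cdot)$, and a deterministic rounding step. Since, by construction, neighboring multisets $s,s'$ map to neighboring sets $\phi(s),\phi(s')\in\dom$, the $(\eps,\delta)$-DP guarantee of $\cW$ combined with post-processing (\Cref{fact:postprc}) yields $(\eps,\delta)$-DP for $\cA$.

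For correctness, suppose $s$ consists of $n$ copies of some $y\in\cY$, so $\phi(s)=\{(y,i_1),\ldots,(y,i_n)\}$. Then every nonempty $z\subseteq\phi(s)$ satisfies $\psi(z)\subseteq\{y\}$, so $f(z)=y$. Because $\cW$ is $\lambda$-down local and $|x|=\lambda+1$, every query issued by $\cW^f(x)$ lies in $\DN_\lambda(x)=\{z\subseteq x:|z|\geq 1\}$, and on each such query $f$ agrees with the globally constant function $g_y\equiv y$. Hence $\cW^f(x)$ and $\cW^{g_y}(x)$ have identical distributions, and the $(\alpha,\beta)$-accuracy of $\cW$ on the constant function $g_y$ gives $|\hat v-y|<\alpha$ with probability at least $1-\beta$. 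Since consecutive elements of $\cY$ are $2\alpha$ apart, rounding $\hat v$ to the nearest element of $\cY$ unambiguously recovers $y$.

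The main technical obstacle is ensuring that the labeling performed by $\phi$ respects the neighboring structure, so that a single insertion on the multiset side becomes a single insertion on the set side; this is handled by the fresh-counter labeling scheme. A secondary subtlety is that the $(\alpha,\beta)$-accuracy hypothesis applies only to \emph{globally} constant functions, whereas the function $f$ we build is not globally constant; this is overcome by invoking $\lambda$-down locality to swap $f$ with the constant function $g_y$ without changing the output distribution. Beyond these two observations, the proof reduces to the already-stated black-box facts about $\cW$ together with elementary rounding.
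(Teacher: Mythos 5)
Your proposal is correct and follows essentially the same route as the paper's proof: embed the multiset into $\dom$ via $\phi$, run $\cW$ on a fixed function that is constant (equal to $y$) on every nonempty subset of $\phi(s)$, use $\lambda$-down locality with $|x|=\lambda+1$ to argue the wrapper cannot distinguish this function from a globally constant one so the $(\alpha,\beta)$-accuracy guarantee applies, and then round to the nearest element of $\cY$ using the $2\alpha$ spacing. The only differences are cosmetic — you use $\max\circ\psi$ where the paper uses a truncated plurality function, and you spell out the privacy of $\cA$ and the neighbor-preserving labeling of $\phi$, which the paper leaves implicit.
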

\begin{proof}
    The main idea in the reduction is to simulate $\cW$ on the plurality function. Notice that for every multiset $s\in\widetilde{\cY}$ consisting of identical copies of some $y\in\cY$, the plurality function is constant on subsets of $s$ of size at least $1$. Hence, if $\lambda<|s|$ then the $(\alpha,\beta)$-accuracy guarantee implies $\cW$ will output a value $a$ such that $|a-y|\leq\alpha$ with probability at least $1-\beta$. Since $\cY=\{2\alpha,4\alpha,\dots,r\}$, the elements of $\cY$ all differ by at least $2\alpha$. It follows that with probability at least $1-\beta$, the output of $\cW$ will be sufficient to exactly recover the plurality of $s$. We remark that although the plurality function is not constant on the entire domain, since $\cW$ is only allowed to make queries in $\DN_\lambda(s)$, a region where the plurality function is constant, it cannot distinguish between the plurality function and a function that is constant everywhere. Hence, it must satisfy the $(\alpha,\beta)$-accuracy guarantee. 
    
    Using the maps $\phi$ and $\psi$ defined in \Cref{rem:set_multiset_map}, we formally demonstrate the reduction. Let $\pl:(\N\times\N)^*\to\cY$ be the function that sends $x$ to the plurality of $\psi(x)$, with range truncated to the set $\cY$, that is, if $\pl(x)\not\in\cY$ then set $\pl(x)=r$. Let $\cA$ be the following algorithm: On input $s\in\widetilde{\cY}$ such that $|s|=\lambda+1$, simulate $\cW$ with query access to $\pl$ and input $\phi(s)$. Next, let $a\gets\cW^{\pl}(\phi(s))$ and output $\arg\min\{|j-a|: j\in\cY\}$. Suppose $s\in\widetilde{\cY}$ consists of identical copies of an element $y\in\cY$. Then for all nonempty subsets $\phi(s')\subseteq \phi(s)$ we have $\pl(\phi(s'))=y$. Since $\lambda<|s|=|\phi(s)|$ the function $\pl$ is constant on the domain $\DN_{\lambda}(\phi(s))$. By the $(\alpha,\beta)$-accuracy guarantee $|\cW^\pl(\phi(s))-y|\leq\alpha$ with probability at least $1-\beta$. Since the elements of $\cY$ all differ by at least $\alpha$ the algorithm $\cA$ outputs $y$ with probability at least $1-\beta$. Hence, $\cA$ solves the point distribution problem over $\cY$ with probability at least $1-\beta$ and sample complexity $\lambda+1$.
\end{proof}

To complete the proof of \Cref{thm: locality lower bound}, we combine \Cref{lem: pd hardness,lem: reduction from pd}, and the fact that $|\cY|=\frac r{2\alpha}$ to obtain $\lambda\geq \Omega\paren{\frac 1\eps\log\min(\frac{|r|}{2\alpha\beta},\frac1\delta)}$. 
\end{proof}

\subsection{The Interior Point Problem and The Proof of \Cref{thm:univ-dependent-lower-bound}}

In this section, we introduce the interior point problem and complete the proof of \Cref{thm:univ-dependent-lower-bound}.

\begin{definition}[Interior point problem]
    \label{def: ip problem}
    Fix a set $\cY$, an integer $n\in\N$, and a failure probability $\beta\in(0,1)$. An algorithm $\cA$ \emph{solves the interior point problem over $\cY$ with probability at least $1-\beta$ and sample complexity $n$} if for all inputs 
    $s\in\widetilde{\cY}$ of size $n$, the algorithm $\cA$ outputs $y\in\brackets{\min\{i\in s\},\max\{i\in s\}}$ with probability at least $1-\beta$. 
\end{definition}

Our next reduction shows that an $(\eps,\delta)$-privacy wrapper that is $\lambda$-down local, and satisfies the accuracy guarantee of \Cref{thm:univ-dependent-lower-bound}, can be used to solve the interior point problem with sample complexity $\lambda+1$. To complete the proof of \Cref{thm:univ-dependent-lower-bound}, we use the following result of \cite{BunNSV15}.

\begin{lemma}[Interior point hardness (Theorem 1.2 \cite{BunNSV15})]
    \label{lem:ip hardness}
    There exists a constant $c>0$ such that for all sets $\cY$, and all privacy parameters $\eps,\delta\in(0,1)$, every $(\eps,\delta)$-DP algorithm that solves the interior point problem over $\cY$ with probability at least $1-\beta$ must have sample complexity $n\geq c\log^*|\cY|$. 
\end{lemma}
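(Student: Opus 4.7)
The plan is to recover this bound via the recursive domain-shrinking argument of Bun, Nissim, Stemmer, and Vadhan. Write $T = |\cY|$ and, for fixed parameters $\eps, \delta, \beta$, let $n(T)$ denote the minimum sample complexity of any $(\eps,\delta)$-DP algorithm that solves the interior point problem on an ordered domain of size $T$ with failure probability at most $\beta$. I would prove $n(T) \geq c \log^* T$ by an induction on $n$ that is driven by a reduction shrinking the domain by a logarithm at each step.

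\textbf{Base case.} For $n=1$, I would rule out arbitrarily large domains via a simple packing / group-privacy argument. On each singleton input $\{i\}$, an algorithm solving IP must output $i$ with probability $\geq 1-\beta$; since all singletons are within distance $2$ of each other in the replace-one metric, group privacy gives $\Pr[M(\{j\})=i] \geq e^{-2\eps}(1-\beta) - 2\delta$ for every $i \neq j$. Summing these lower bounds over $i$ forces $T = O(1)$, i.e., $n(T) \geq 2$ once $T$ exceeds a constant.

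\textbf{Inductive reduction.} The heart of the argument is to show that an $(\eps,\delta)$-DP $n$-sample algorithm $M$ for IP on $[T]$ yields an $(O(\eps), O(\delta))$-DP $(n-1)$-sample algorithm $M'$ for IP on a much smaller domain $[T']$ with $T' = \Omega(\log T)$. Given an input $S' = (z_1,\ldots,z_{n-1}) \in [T']^{n-1}$, $M'$ constructs an auxiliary dataset $S \in [T]^{n}$ by embedding each $z_i$ into $[T]$ through a carefully chosen monotone map (for instance, treating $z_i$ as a bit position in a binary string of length $\Theta(\log T)$), and appends one extra sample produced by a differentially private summary of $S'$. It then runs $M(S)$ and post-processes the output, using the ordering preserved by the embedding, to extract a point of $[T']$ that lies between $\min S'$ and $\max S'$.

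\textbf{Iteration and main obstacle.} Iterating the reduction $n-1$ times collapses an $n$-sample algorithm on $[T]$ into a $1$-sample algorithm on a domain of size $\log^{(n-1)} T$, with the privacy parameters $(\eps,\delta)$ blown up by only a constant factor per step; absorbing that factor into the final constant $c$, the base case forces $\log^{(n-1)} T = O(1)$ and hence $n \geq c\log^* T$, as required. The main technical obstacle is carrying out the inductive step cleanly: the embedding $[T'] \hookrightarrow [T]$ together with the appended auxiliary sample must be designed so that (i) an interior point of the embedded dataset in $[T]$ deterministically decodes to an interior point of $S'$ in $[T']$, and (ii) the map $S' \mapsto S$ is itself $(O(\eps), O(\delta))$-DP, so that composing with $M$ yields only a constant loss in the privacy parameters. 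Getting these two properties simultaneously---a structure-preserving, privacy-preserving lift from a logarithmically smaller domain---is exactly the combinatorial core of the BNSV construction and the only part of the argument that is not routine.
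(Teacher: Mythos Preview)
The paper does not prove this lemma; it is stated with attribution to \cite{BunNSV15} and used as a black box. So there is no ``paper's own proof'' to compare against --- your proposal is effectively a sketch of the original BNSV argument.

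Your high-level structure (recursive domain shrinkage by a logarithm, iterated down to a constant-size domain where a one-sample packing argument applies) is indeed the skeleton of the BNSV lower bound. However, there is a genuine gap in your handling of the privacy parameters. You write that the privacy parameters are ``blown up by only a constant factor per step'' and that you will ``absorb that factor into the final constant $c$.'' This does not work: if each of the $n-1$ iterations multiplies $(\eps,\delta)$ by a constant $C>1$, the final parameters are $(C^{n-1}\eps,\, C^{n-1}\delta)$. Your base case (the one-sample packing bound) requires $\eps'$ to be bounded; once $\eps'$ is exponential in $n$, the bound $T = O(e^{2\eps'})$ is vacuous and the induction collapses.

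The actual BNSV reduction avoids this entirely: it preserves $(\eps,\delta)$ \emph{exactly} at each step. The key is that the embedding $S' \mapsto S$ is not an $(\eps,\delta)$-DP summary of the whole dataset (your property (ii)) --- rather, it is a per-sample randomized map, so that neighboring inputs in $[T']^{n-1}$ yield neighboring inputs in $[T]^{n}$, and the appended extra element is fixed (not a function of $S'$ at all). Then $M' = \text{decode} \circ M \circ \text{embed}$ is $(\eps,\delta)$-DP simply because $M$ is, with no composition cost. Getting the decoding to succeed (your property (i)) under this constraint is where the combinatorics lies, and you correctly flag that as the crux --- but your stated property (ii) is the wrong target and, as written, would make the argument fail.
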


\begin{proof}[Proof of \Cref{thm:univ-dependent-lower-bound}]
Fix parameters $\eps,\delta$ and $r$ as in \Cref{thm:univ-dependent-lower-bound} and let $\cY=[r]$. In order to prove the lower bound, we will construct an algorithm $\cA$ that uses an $(\eps,\delta)$-privacy wrapper $\cW$ that has locality $\lambda$, and satisfies the accuracy guarantee of \Cref{thm:univ-dependent-lower-bound}, to solve the interior point problem over $\cY$ with probability at least $1-\beta$ and sample complexity $\lambda+1$. 

\begin{lemma}[Reduction from interior point]
\label{lem: reduction from ip}
Fix parameters $\eps,\delta,\beta\in(0,1)$, and $r\in\N$. Let $\cY=[r]$, and $\univ=[r]\times \N$. Let $\cW$ be an $(\eps,\delta)$-privacy wrapper over $\univ$ that is $\lambda$-down local, and suppose that for all $f:\dom\to\cY$ and $x\in\dom$, the wrapper outputs $\cW^f(x)\in\brackets{\min f(\DN_\lambda(x)), \max f(\DN_\lambda(x)}$ with probability at least $1-\beta$. Then there exists an algorithm $\cA$ that solves the interior point problem over $\cY$ with probability at least $1-\beta$ and sample complexity $\lambda+1$.     
\end{lemma}
\begin{proof}
    Let $\med:\widetilde{\cY}\to\cY$ be the a function that outputs a median of $x$ for all $x\in\widetilde{\cY}$. Notice that for all $x\in\widetilde{\cY}$ such that $\lambda>|x|$, we have $\min \med(\DN_\lambda(x))\geq \min\{i\in x\}$, and $\max\med(\DN_\lambda(x))\leq \max\{i\in x\}$. Below, we use this fact to prove the reduction from interior point. 

    Recall the maps $\phi$ and $\psi$ defined in \Cref{rem:set_multiset_map}, and let $\med':\dom\to\cY$ be the function which takes as input $x\in\dom$, and returns $\med(\psi(x))$. Let $\cA$ be the following algorithm for outputting an interior point of a set $x\in\widetilde{\cY}$ such that $|x|>\lambda$. On input $x$, simulate $\cW$ with query access to $\med'$ and input $\phi(x)$ and output the result. 
    
    To see why $\cA$ solves the interior point problem, observe that by the down locality and accuracy guarantees of $\cW$, we have $\cW^{\med'}(\phi(x))\in\brackets{\min\med'(\DN_\lambda(\phi(x)), \max\med'(\DN_\lambda(\phi(x))}$ with probability at least $1-\beta$. Since this is the same as the interval $\brackets{\min\med(\DN_\lambda(x)),\max\med(\DN_\lambda(x))}$, the above analysis implies that $\cW^{\med'}(\phi(x))$ is an interior point of $x$ with probability at least $1-\beta$, and hence $\cA$ solves the interior point problem with probability at least $1-\beta$ and sample complexity $\lambda+1$.  
\end{proof}

Combining \Cref{lem:ip hardness,lem: reduction from ip} yields $\lambda=\Omega\paren{\log^*(r)}$.
\end{proof}

\fi %

\section{Query Complexity Lower Bound}
\label{sec: query lower bound}

In this section, we 
\ifnum\stoc =1 state
\else prove \Cref{thm: query lower bound}, \fi
a lower bound on the query complexity of a privacy wrapper over universe $\cU=[n]$ with a weak accuracy guarantee for the class of Lipschitz functions.

\paragraph{From General Universes to the Hypercube $\{0,1\}^{n}$.}

For the remainder of this section we represent $\dom=\cP([n])$ using $\zo^n$. Each point $x\in\zo^n$ is an indicator string for the corresponding set $\{i:x_i=1\}$ in $\dom$, and the order is given by the usual subset relation $\subseteq$. 

\begin{theorem}[Query complexity with provided sensitivity bound]
    \label{thm: query lower bound}
    Fix $\constb\in(0,1)$ sufficiently small. Let $\cW$ be an $(\eps,\delta)$-privacy wrapper over $\univ=[n]$ that is $(\alpha,\beta)$-accurate for the class of Lipschitz functions $f:\dom\to[0,r]$. Suppose $\alpha<r/2$, $\eps,\beta\in(0,b)$, and $\delta\in[0,\eps^2]$. Let 
    \ifnum\stoc=1 $\eta$ denote $\frac 1\eps\log\min(\frac{r}{\alpha\beta},\frac1\delta)$ and \fi
    $q$ be the worst case expected query complexity of $\cW$.
    \begin{enumerate}
        \item\label{item: query thm 1} If \ifnum\stoc=1 $\eta \else $\frac 1\eps\log\min(\frac{r}{\alpha\beta},\frac1\delta) \fi
        \leq r\leq n^{0.49}$ then $\displaystyle q=n^{\ifnum\stoc=1 \Omega(\eta) \else\Omega\left(\tfrac1\eps\log\min\left(\tfrac{r}{\alpha\beta},\tfrac1\delta\right)\right)\fi}$.
        \item\label{item: query thm 2} If $r\leq\min(\ifnum\stoc=1 \eta\else\frac 1\eps\log\min(\frac{r}{\alpha\beta},\frac1\delta)\fi, n^{0.49})$ then $q=n^{\Omega(r)}$.
        \item\label{item: query thm 3} If $\alpha\leq\eps n$ then $q\geq \exp(\Omega(\min(\frac1\eps,\sqrt n)))$.
        
    \end{enumerate}
\end{theorem}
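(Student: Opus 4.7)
The plan is to execute the reduction sketched in the Techniques subsection: construct two distributions $\cN$ and $\cP$ over input pairs $(x,f)$ so that (i) on $\cP$, the function $f$ is Lipschitz and the wrapper's hypothesized $(\alpha,\beta)$-accuracy gives $\cW^f(x) \in [f(x)-\alpha,f(x)+\alpha]$ with probability at least $1-\beta$; (ii) on $\cN$, the planted pair $(x,f)$ contains a nested chain $x=x_0\supset x_1\supset\cdots\supset x_\lambda$ of length $\lambda=\Theta\bigl(\tfrac{1}{\eps}\log\min(\tfrac{r}{\alpha\beta},\tfrac{1}{\delta})\bigr)$ along which $f$ jumps by a gap much larger than $\alpha$; and (iii) the conditional distributions of $f$ under $\cN$ and $\cP$ are identical on every query outside a small ``evidence region'' $R\subseteq\dom$ that contains the chain and its close subsets, with $R$ planted at a uniformly random location determined by the choice of chain.

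To establish property (ii), I would apply group privacy---the $\lambda$-fold composition of $(\eps,\delta)$-DP via \Cref{fact:composition}---to obtain that $\cW^f(x_0)$ and $\cW^f(x_\lambda)$ are $(\lambda\eps,\lambda e^{\lambda\eps}\delta)$-indistinguishable. Combined with a standard packing over roughly $r/\alpha$ candidate ``answer bins'' (so that accuracy forces the wrapper to localise its output into one specific bin out of many), this forces $e^{\lambda\eps}(\beta+\lambda\delta)\gtrsim 1$, which fails for the chosen $\lambda$, making at least one endpoint of the chain an input on which $\cW$ is inaccurate with constant probability. The packing-over-bins step is what produces the $\log(r/\alpha)$ factor that, together with the $\log(1/\beta)$ and $\log(1/\delta)$ factors from the privacy composition, combine into the $\log\min\bigl(\tfrac{r}{\alpha\beta},\tfrac{1}{\delta}\bigr)$ exponent.

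For property (iii), the evidence region $R$ is planted by picking a uniformly random size-$\lambda$ support set in $[n]$ that determines the chain. Any fixed $q$-query strategy hits $R$ with probability $O\bigl(q\cdot |R|/\binom{n}{\lambda}\bigr) = q\cdot n^{-\Omega(\lambda)}$, so by a Yao min-max argument, distinguishing $\cN$ from $\cP$ with constant advantage requires $q^*=n^{\Omega(\lambda)}$ queries---yielding case~\ref{item: query thm 1}. Case~\ref{item: query thm 2} is the same argument truncated at chain length $\lambda=\Theta(r)$ so that the forced value jumps along the chain still fit inside $[0,r]$ at integer spacing. Case~\ref{item: query thm 3} calls for a different, coarser construction: a packing of $\exp(\Omega(\min(1/\eps,\sqrt n)))$ pairwise-disjoint supports on $[n]$ combined with the standard exponential DP packing lower bound.

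The delicate step will be engineering the hidden function on $R$ so that three requirements hold simultaneously: (a) along the chain it produces the required value jump; (b) it stays in $[0,r]$ (the source of the $\log r$ dependence in case~\ref{item: query thm 2} and the $\min$ in case~\ref{item: query thm 1}); and (c) outside $R$, every value it outputs matches exactly the value of a fixed Lipschitz baseline function also used by $\cP$, so that no query-bounded tester can detect the plant from the function values it sees. Balancing these three constraints while preserving the full $\log\min(r/(\alpha\beta),1/\delta)$ exponent is where I expect the real technical work of the proof to lie.
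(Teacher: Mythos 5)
Your high-level plan---two distributions $\cN,\cP$, Yao's principle, accuracy on the Lipschitz side, and a group-privacy-driven inaccuracy argument on the planted side---is the same strategy the paper uses (\Cref{lem: query lower bound,lem: indisinguishable,lem: inaccuracy}), but two of your three technical pillars have genuine gaps as written. First, the indistinguishability step: you bound the probability of hitting the evidence region by $O\bigl(q\,|R|/\binom{n}{\lambda}\bigr)$, treating $R$ as a small object planted at one of $\binom{n}{\lambda}$ locations. But the region where the planted function deviates from the Lipschitz baseline is not small: to create a jump exceeding $\alpha$ while staying in $[0,r]$, the hidden ``second cone'' must be nonzero on a ball of radius up to $r$ around the hidden center, which has cardinality $n^{\Theta(r)}$, so $q\,|R|/\binom{n}{\lambda}$ is vacuous whenever $r>\lambda$. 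Moreover the querier knows $x$ and queries adaptively. The correct statement is a per-query bound on the probability, over the random plant, that a \emph{fixed} query is closer to the hidden center than to $x$; this is what the paper's \Cref{claim: B bound} proves, yielding $(O(\rho\Gamma)/n)^{\Gamma/2}$ per query, and it is the step your counting heuristic does not deliver. Second, your inaccuracy step asserts that packing over $r/\alpha$ answer bins plus group privacy ``forces $e^{\lambda\eps}(\beta+\lambda\delta)\gtrsim 1$.'' Note the wrapper owes you no accuracy on the planted (non-Lipschitz) function, so accuracy at the hidden endpoint must itself be derived---the paper does this by making the construction exchangeable in $(x,y)$ ($F^{k,s}_{x,y}=F^{s,k}_{y,x}$ with $k,s$ random from $\{2\alpha,\dots,\rho\}$, \Cref{def: distributions}) and then proving \emph{two} different inaccuracy bounds in \Cref{lem: inaccuracy}: one using the randomness of the hidden height (giving an $e^{\Gamma\eps}\cdot\alpha/\rho$ term, which caps $\lambda$ at $\frac1\eps\log\frac{\rho}{\alpha}$) and one using the $x\leftrightarrow y$ symmetry together with the accuracy guarantee (needed precisely when $\beta\le\alpha/\rho$). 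Your single bound does not produce the $\alpha/r$ term, does not split these regimes, and hence does not recover the full $\log\min\bigl(\tfrac{r}{\alpha\beta},\tfrac1\delta\bigr)$ exponent.

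Separately, your plan for item~\ref{item: query thm 3} is off-target: a packing of disjoint supports with the standard exponential DP packing argument lower-bounds sample-size/accuracy trade-offs, not the number of oracle queries to $f$, which is the quantity being bounded here. The paper gets item~\ref{item: query thm 3} from the \emph{same} construction by setting $\alpha=\eps n$ and $\rho=3\eps n$ in \Cref{lem: query lower bound}, so that $\kappa=\Theta(1/\eps)$ and the base of the exponent degrades to $2$. Finally, your chain-of-subsets plant (second center at depth $\lambda$ below $x$) is plausibly fixable along the paper's lines, but the paper instead places the second center at a uniformly random point at odd Hamming distance $\Gamma$ from $x$ (oddness guaranteeing $\Delta(x,z)\neq\Delta(y,z)$, which is what makes the exchangeability argument clean); if you keep the down-set version you would need to redo both the revealing-point probability calculation and the symmetry argument, neither of which is automatic.
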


\ifnum\stoc =1 
The proof is deferred to the full version \cite{LinderRSS25}.
\fi

\paragraph{Query complexity vs locality bounds}
A lower bound on query complexity directly implies a lower bound on locality, since a $\lambda$-down local algorithm makes at most $\binom{|x|}{\lambda}$ distinct queries. However, the locality lower bound implied by \Cref{thm: query lower bound} 
is weaker than \Cref{thm: locality lower bound,thm:univ-dependent-lower-bound}---specifically, the locality bounds implied by \Cref{thm: query lower bound} do not capture the correct dependence on the range size~$r$.  
When
$\delta=0$, the locality lower bound given by \Cref{thm: locality lower bound} is $\frac1\eps\log\frac{r}{\alpha\beta}$, whereas the bound implied by \Cref{thm: query lower bound} is at most $\frac1\eps\log\frac{n}{\alpha\beta}$.
When $\delta>0$,  in the automated sensitivity detection setting, \Cref{thm:univ-dependent-lower-bound} implies that the locality must have at least $\log^*$ dependence on the range; in contrast, the locality lower bound implied by \Cref{thm: query lower bound} has no dependence on the range.

\begin{remark}[Tightness of our results in the automated sensitivity detection setting]
    \label{rem: query lower bound}
    Since all Lipschitz functions $f$ have $DS^f_\alpha(x)\leq \alpha$ for all datasets $x$, a privacy wrapper that is $(DS^f_\alpha(x),\beta)$-accurate for \textit{all} functions $f$ and datasets $x$ is also $(\alpha,\beta)$-accurate for all \textit{Lipschitz} functions $f$ and datasets $x$. It follows that \Cref{thm: query lower bound} also holds for privacy wrappers that are $(DS^f_\alpha(x),\beta)$-accurate for all functions $f:\dom\to\{0,1,\dots,r\}$. Recall that \Cref{thm:generalized-shifted-inverse-mechanism} gives a privacy wrapper for the automated sensitivity detection setting that has query complexity $n^{\lambda(\eps,\delta,\beta,r)}$. In the setting where $\delta=0$, \Cref{item: query thm 1} of \Cref{thm: query lower bound} implies that the query complexity of this privacy wrapper cannot be improved. In the setting of $\delta>0$, the query complexity of our privacy wrapper differs from the lower bound in \Cref{item: query thm 1} of \Cref{thm: query lower bound} by a factor of $2^{O\paren{\log^*r}}$. 
\end{remark}

\begin{remark}[Tightness our results in the claimed sensitivity bound setting]
    Recall that \Cref{thm: subset extension} gives a privacy wrapper with $(\Theta(\frac1\eps\log\frac1\beta),\beta)$-accuracy for the class of Lipschitz functions that has query complexity $n^{O(\frac1\eps\log\frac1\delta)}$. By \Cref{item: query thm 1} of \Cref{thm: query lower bound}, this query complexity is tight for the setting where $r$ is unbounded. 
    \ifnum\stoc  =1 
    The results of this section also show tight bounds on the query complexity of our privacy wrappers for the bounded range setting, which are presented and discussed in the full version.
    \else
    Moreover, in \Cref{sec: small diameter,sec:GenShI-with-nice-noise}, we give two privacy wrappers for the setting where the range of $f$ is $[0,r]$. In particular, \Cref{thm: small diameter} gives a privacy wrapper with query complexity $n^{O(r)}$, and \Cref{thm:GenShi-with-nice-noise} gives a privacy wrapper that has query complexity $n^{O\paren{\frac{1}{\eps}\log\frac{r}{\beta}}}$ with probability at least $1-\beta$. By \Cref{item: query thm 2,item: query thm 1} of \Cref{thm: query lower bound}, the query complexity of these privacy wrappers is optimal.
    \fi
\end{remark}

\ifnum\stoc = 0 %

\paragraph{Lower bounds for relaxations of our setting}
Next, we highlight some important features of \Cref{thm: query lower bound}. In particular, we explain how the lower bound also applies to privacy wrappers subject to qualitatively weaker requirements than the ones satisfied by our constructions from Sections \ref{sec:generalized-shifted-inverse}, \ref{sec:extension_based}, \ref{sec: small diameter}, and \ref{sec:GenShI-with-nice-noise}.

First, although the result is formulated for the hypercube, the lower bound applies for privacy wrappers over any set $\cU$ of size at least $n$, by fixing an arbitrary subset of $\univ'\in\univ$ with $|\univ'|=n$, and considering functions $f:(\univ')^*\to\R$.

Second, all of our wrappers make queries only on large subsets of the input dataset (``down local'' in \Cref{def: privacy wrapper}). The lower bound applies to wrappers that can query $f$ at any dataset contained in $[n]$, regardless of its size or relation to the input set. And additionally, the lower bound also applies to privacy wrappers that are only guaranteed  to be accurate on functions that are Lipschitz on the entire domain. In other words, the difficulty of building a privacy wrapper is not due to locality, per se. Our ``hard instances'', defined below, show that the challenge lies in finding regions where the Lipschitz constraint might be violated. 

Third, \Cref{item: query thm 3} of \Cref{thm: query lower bound} demonstrates a lower bound for privacy wrappers that have considerably worse accuracy guarantees than our constructions. In particular, it states that any privacy wrapper that has a very weak accuracy guarantee ($\alpha\approx\eps n$) requires $\exp\paren{\Omega\paren{1/\eps}}$ queries. \\

\Cref{thm: query lower bound} follows from the following more detailed lemma. It relates the minimum query complexity of any privacy wrapper to the size of the dataset size and the desired privacy and accuracy parameters.

\begin{lemma}[Detailed query complexity lower bound]
    \label{lem: query lower bound}
    There exist constants $\consta\in\N$ and $\constb\in(0,1)$, such that for all sufficiently large $n\in\N$, all $\eps,\beta\in(0,\constb]$, $\delta\in[0,\eps\constb)$, $\rho,\alpha\in\N$ such that $\rho\in[\consta, \constb n]$ and $\alpha<\rho/2$, if $\cW$ is an $(\eps,\delta)$-privacy wrapper over $\univ=[n]$ that is $(\alpha,\beta)$-accurate for the class of Lipschitz functions $f:\dom\to[0,\rho]$, then there exists a function $f:\dom\to[0,\rho]$ and a dataset $x\in\dom$ such that $\cW^f(x)$ has expected query complexity $\left(\frac{n}{\rho\kappa}\right)^{\Omega(\kappa)}$, where $\kappa=\min\left(\rho,\frac{n}{2\rho},\frac 1\eps\log\min\left(\frac{\rho}{\alpha\beta},\frac\eps\delta\right)\right)$. 
\end{lemma}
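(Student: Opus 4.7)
The plan is to prove Lemma~\ref{lem: query lower bound} by reducing it to a hidden-structure distinguishing problem, as outlined in the techniques section. Specifically, I would construct two distributions $\cP$ and $\cN$ over pairs $(x, f)$, with $f:\dom\to[0,\rho]$ Lipschitz, such that (i) every algorithm that reads $x$ and queries $f$ as a black box requires at least $(n/(\rho\kappa))^{\Omega(\kappa)}$ expected queries to distinguish $\cP$ from $\cN$ with constant advantage, and (ii) every $(\eps,\delta)$-privacy wrapper that is $(\alpha,\beta)$-accurate on all Lipschitz functions must itself produce noticeably different output distributions under $\cP$ versus $\cN$. The lemma then follows: if $\cW$'s worst-case expected query complexity were below $(n/(\rho\kappa))^{\Omega(\kappa)}$, then $\cW$ itself would give a distinguisher and contradict (i).

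For the construction, I would partition $[n]$ into $m := \lfloor n/(2\rho)\rfloor$ blocks $B_1,\ldots,B_m$ of size $2\rho$ and use a uniformly random planted subset $T \in \binom{[m]}{\kappa}$. For each $T$, let $f_T := \sum_{i \in T} \phi_i$, where each $\phi_i:\dom\to[0,1]$ is a $1$-Lipschitz ``signature'' depending only on $x \cap B_i$ and nonzero only when $x \cap B_i$ matches a carefully chosen ``witness'' pattern. Summing $\kappa \le \rho$ such signatures keeps $f_T$ in $[0,\rho]$ and $1$-Lipschitz overall; with the bumps chosen so that they take two well-separated values at a canonical reference dataset $x$, the collection $\{f_T(x)\}_T$ forms a $2\alpha$-separated packing indexed by $T$. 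The distributions $\cP$ and $\cN$ are then obtained by taking two disjoint classes of $T$'s whose $f_T(x)$ values differ by at least $2\alpha$ on $x$.

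Given this setup, the two pillars of the proof proceed in parallel. For the query lower bound, the uniform prior on $T$ together with a standard hitting-set / Yao-style argument shows that any query whose intersection with $B_i$ does not match the witness pattern carries no information about whether $i \in T$; a union bound over queries combined with a martingale analysis then shows that an algorithm making $q$ queries incurs total-variation distance at most $q \cdot (\kappa/m)^{\Omega(1)}$ between its transcripts under $\cP$ and $\cN$, which forces $q \ge (m/\kappa)^{\Omega(\kappa)} = (n/(\rho\kappa))^{\Omega(\kappa)}$. For the accuracy-plus-privacy side, the $2\alpha$-separated packing of $\{f_T(x)\}_T$, combined with standard packing lower bounds against $(\eps,\delta)$-differential privacy (see, e.g., \cite{BookAlgDP}), shows that any $(\alpha,\beta)$-accurate wrapper must reliably identify the class of $T$ at datasets of Hamming distance $O(\kappa)$; this is precisely where the third term in the definition of $\kappa$, namely $\Omega\bparen{\frac{1}{\eps}\log\min(\rho/(\alpha\beta),\eps/\delta)}$, enters the lower bound.

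The main obstacle is the joint realization of three properties of the signature functions $\phi_i$: (a) each $\phi_i$ is $1$-Lipschitz across every pair of neighboring subsets of $\dom$, so that $f_T$ is genuinely Lipschitz and the wrapper's accuracy guarantee applies; (b) $f_T(x)$ differs by at least $2\alpha$ between the $\cP$ and $\cN$ classes of $T$; and (c) the witness sets of distinct $\phi_i$'s are disjoint and small enough that a random query misses them all with high probability. Naive step-function bumps fail (a), while bumps that are globally smooth over all of $B_i$ fail (c); the resolution is a slow ramp supported only on subsets whose restriction to $B_i$ has one specific cardinality, together with a choice of the canonical $x$ forcing queries to either exactly hit or completely miss the witness. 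A secondary obstacle is handling both the pure-DP ($\delta = 0$) and the approximate-DP ($\delta > 0$) regimes, which correspond to the two terms inside the inner minimum of $\kappa$ and require two slightly different packing arguments---in particular, a fingerprinting-style generalization is needed to obtain the $\log(\eps/\delta)$ factor.
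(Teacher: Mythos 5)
There is a genuine gap, and it lies at the heart of your construction: you place the distinguishing signal in the value $f_T(x)$ at the canonical input $x$ itself ("the collection $\{f_T(x)\}_T$ forms a $2\alpha$-separated packing... $\cP$ and $\cN$ are obtained by taking two disjoint classes of $T$'s whose $f_T(x)$ values differ by at least $2\alpha$ on $x$"). Any algorithm — private or not — can query $f$ once at the point $x$ and read off this value, so your indistinguishability claim (i) is false for your own distributions: one query gives advantage close to $1$, and no hitting-set or martingale argument can rescue it. There is an inherent tension in your plan: for the wrapper to distinguish via accuracy you need $f_T(x)$ to depend on the class, but for the query lower bound you need every query, including the query at $x$, to be uninformative. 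Relatedly, since you insist that \emph{all} functions in both families are Lipschitz, the privacy constraint never actually bites: a wrapper that is accurate on all Lipschitz functions is perfectly consistent on your whole family (our constructions are exactly such wrappers), so no "packing lower bound against $(\eps,\delta)$-DP at Hamming distance $O(\kappa)$" can be extracted — DP packing arguments need the \emph{same} function to take far-apart values at nearby \emph{datasets}, not different functions taking far-apart values at the same dataset.

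The paper's proof resolves this tension differently: under $\cN$ the pair is $(x, F^{k,s}_{x,y})$, a non-Lipschitz function made of two cones of independent random heights $k,s$ centered at $x$ and at a hidden $y$ with $\Delta(x,y)=\Gamma$; under $\cP$ it is the single cone $f^k_x$. Crucially $F^{k,s}_{x,y}(x)=k=f^k_x(x)$, so the value at $x$ (and at every non-"revealing" query) is identically distributed under the two distributions, and hitting a revealing point requires guessing a $\Gamma/2$-subset of the hidden flip set, giving the $\paren{n/(\rho\Gamma)}^{\Omega(\Gamma)}$ query bound. The wrapper is then shown inaccurate on $\cN$ not via accuracy at $x$ but via \emph{group privacy} over distance $\Gamma$ combined with the symmetry of $(x,y)$ and the random $2\alpha$-separated heights: a private wrapper must answer similarly on $x$ and $y$, yet $f(x)=k$ and $f(y)=s$ are far apart, so it errs at one of them with constant probability. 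This is where the $\frac1\eps\log\min\paren{\frac{\rho}{\alpha\beta},\frac\eps\delta}$ term in $\kappa$ enters (via $e^{\Gamma\eps}$ and the $\delta/\eps$ accumulation in group privacy — no fingerprinting argument is needed). Your proposal is missing exactly this mechanism: a hidden nearby dataset at which the function value disagrees, with the disagreement invisible to queries but enforced on the wrapper by privacy. As a secondary point, your quantitative step is also internally inconsistent: a per-query total-variation contribution of $(\kappa/m)^{\Omega(1)}$ would only force $q\geq(m/\kappa)^{\Omega(1)}$, not $q\geq(m/\kappa)^{\Omega(\kappa)}$; the exponential-in-$\kappa$ bound requires each query to find the hidden structure with probability that is itself exponentially small in $\kappa$, as in the paper's revealing-point analysis.
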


The choice of $\kappa$ in the lemma statement ensures that the base of the exponent in the query lower bound, $\frac{n}{\rho\kappa}$, is always at least 2.

\begin{proof}[Proof of \Cref{thm: query lower bound}]
    Plugging in parameters to \Cref{lem: query lower bound}, we prove each item of \Cref{thm: query lower bound}. To prove \Cref{item: query thm 1} we set $\rho=r$, then, since $\rho\leq n^{0.49}$ we see that $\kappa=\frac1\eps\log\min(\frac{\rho}{\alpha\beta},\frac\eps\delta)$, and thus $\frac{n}{\rho\kappa}=n^{\Theta(1)}$, which completes the proof of \Cref{item: query thm 1}. To prove \Cref{item: query thm 2} we set $\rho=r$, then, since $r\leq\min(\frac 1\eps\log\min(\frac{r}{\alpha\beta},\frac1\delta), n^{0.49})$, we see that $\kappa=\rho$ and $\frac{n}{\rho\kappa}=n^{\Theta(1)}$, which completes the proof of \Cref{item: query thm 2}. Last, to prove \Cref{item: query thm 3} we set $\alpha=\eps n$ and $\rho=3\eps n$, observe that this implies $\kappa=\frac{n}{2\rho}=\Omega(\frac1\eps)$, and $\frac{n}{\rho\kappa}=2$, completing the proof of \Cref{item: query thm 3}.
\end{proof}

\paragraph{Construction of Hard Distributions} We prove \Cref{lem: query lower bound} by constructing a pair of distributions that cannot be distinguished by any query-efficient algorithm but can be distinguished using $\cW$. Let $\Delta(x,y)$ denote the Hamming distance between $x$ and $y$.

\begin{definition}[Functions $f^k_x$ and $F^{k,s}_{x,y}$. Distributions $\cN$, $\cP$, and $\cD$]
    \label{def: distributions} Fix $\Gamma,\rho,n\in\N$. %
    For all $x\in\zo^n$ and $k\in[0,\rho]$ define $f^k_x:\zo^n\to[0,\rho]$ by
    \[
     f^k_{x}(z)=\max(k-\Delta(x,z),0).
    \]
    Additionally, for all $x,y\in\dom$ and $k,s\in[0,\rho]$ define $F^{k,s}_{x,y}:\zo^n\to[0,\rho]$ by 
    \[ F^{k,s}_{x,y}(z) = \begin{cases} 
          f^{k}_x(z) & \Delta(x,z)<\Delta(y,z) \\
          f^{s}_y(z) & \Delta(y,z)<\Delta(x,z).
       \end{cases}
    \]
    For all $\rho,\Gamma,n\in\N$ such that $\Gamma$ is odd and $\Gamma\leq \min(\rho,n)$, and all $\alpha>0$ such that that $2\alpha$ divides $\rho$, let $\cN[\alpha,\rho,\Gamma,n]$ and $\cP[\alpha,\rho,\Gamma,n]$ be distributions over $(x,f)$ where $f:\zo^n\to[0,\rho]$ and $x\in\zo^n$ are obtained by the following sampling procedure: 
        \begin{enumerate}
            \item Sample $x\sim\zo^n$ and $y\sim\{z:\Delta(x,z)=\Gamma\}$ uniformly at random.
            \item Sample $k,s\sim\{2\alpha,4\alpha,6\alpha,\dots,\rho\}$ uniformly without replacement.
            \item If sampling from $\cN$, return $(x,F^{k,s}_{x,y})$. If sampling from $\cP$, return $(x,f^{k}_x)$.
        \end{enumerate}
    We omit the parameters $\alpha,\rho,\Gamma,n$ when they are clear from context.
\end{definition}

\begin{figure}[H]
    \includegraphics[scale=.4]{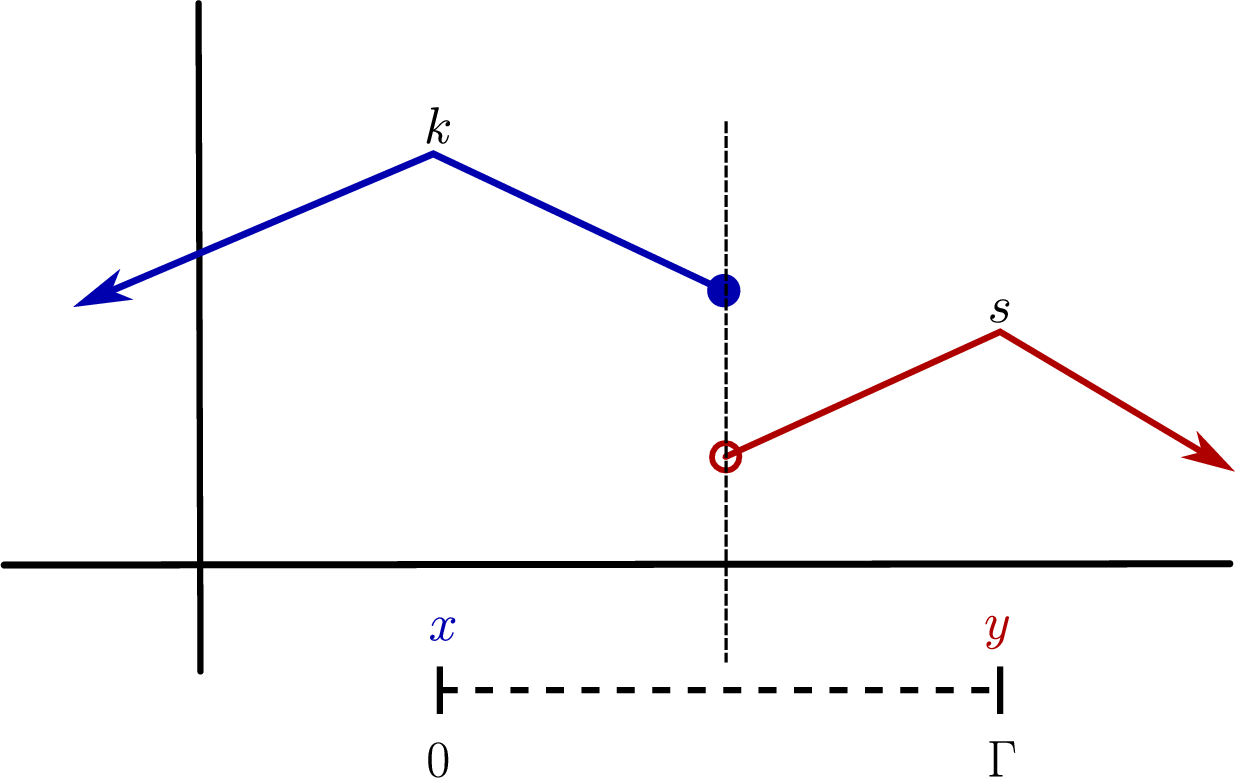}
    \hfill
    \includegraphics[scale=.4]{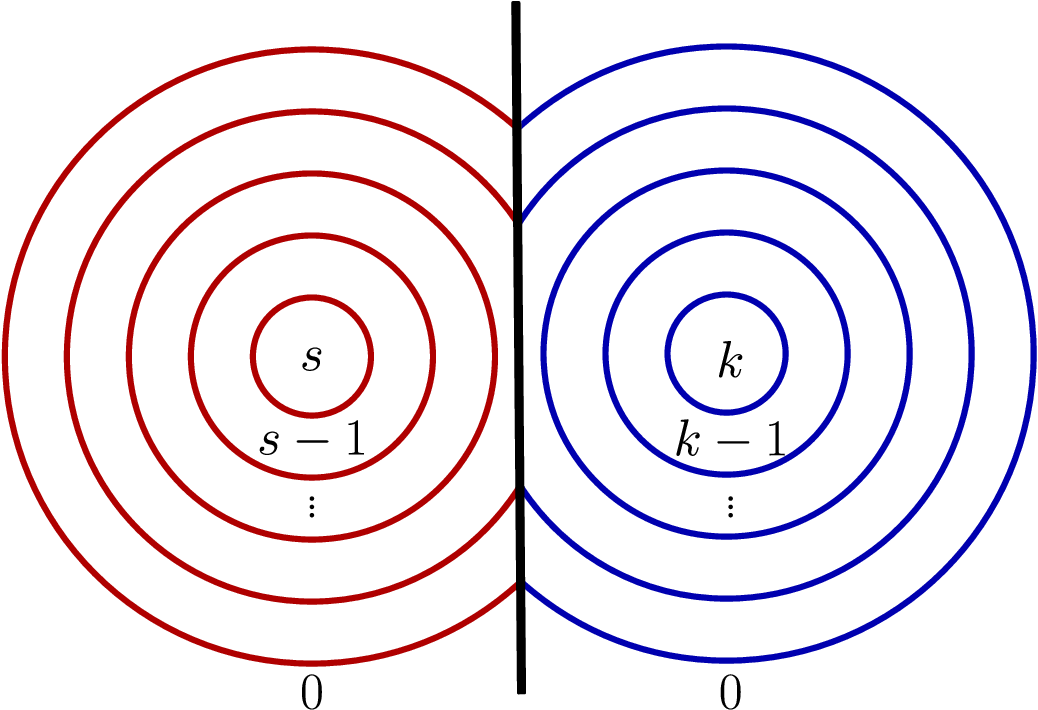}
    \caption{Function $F^{k,s}_{x,y}$ side-view (left) and top-view (right). Observe that $F^{k,s}_{x,y}$ contains a large ``jump" between the ball around $x$ and the ball around $y$.}\label{fig:hard_function}
\end{figure}

\begin{remark}
    The distance $\Delta(x,y)=\Gamma$ is chosen to be odd so that for all $z\in\zo^n$ we have $\Delta(x,z)\neq\Delta(y,z)$. This ensures that $F^{k,s}_{x,y}=F^{s,k}_{y,x}$---that is, the distribution $\cN$  is symmetric with respect to $x$ and $y$. 
\end{remark} 

One can visualize the graph of $f^k_x$ as an inverted cone of height $k$ centered at $x$, and the graph of $F^{k,s}_{x,y}$ as two inverted cones of height $k$ and $s$, centered at $x$ and $y$ respectively. Intuitively, functions $f^k_x$ are always Lipschitz, while functions $F^{k,s}_{x,y}$ are guaranteed to be non-Lipschitz whenever $\max(k,s)>\Delta(x,y)/2$.

At a high level, our privacy wrappers distinguish $\cN$ from $\cP$ as follows: Functions sampled from $\cN$ are sometimes non-Lipschitz, while functions sampled from $\cP$ are always Lipschitz. Thus, every privacy wrapper $\cW$ that is $(\alpha,\beta)$-accurate on Lipschitz functions satisfies $|\cW^f(x)-f(x)|\leq \alpha$ with probability at least $1-\beta$ whenever $(x,f)\sim\cP$. However, when $(x,f)\sim\cN$ the privacy wrapper satisfies $|\cW^f(x)-f(x)|\leq \alpha$ with probability much less than $1-\beta$ (over the randomness of both the privacy wrapper and the distribution $\cN$). Hence, such a privacy wrapper can be used to distinguish $\cN$ from $\cP$.   

\subsection{Proof of Query Complexity Lower Bound (\Cref{lem: query lower bound})}

The main steps in the lower bound proof are \Cref{lem: indisinguishable}, which relates the query complexity of an algorithm to its advantage in distinguishing $\cN$ and $\cP$, and \Cref{lem: inaccuracy}, which upper bounds the probability that a privacy wrapper outputs an accurate answer on inputs sampled from $\cN$. Our proof proceeds by contradiction: we show that if $q\leq \left(\frac{n}{\rho\kappa}\right)^{O(\kappa)}$, then one can construct a distinguisher which violates the advantage bound given by \Cref{lem: indisinguishable}.

\begin{lemma}[Indistinguishability]
    \label{lem: indisinguishable}
    Let $\constb,c\in(0,1)$ be sufficiently small constants, and let $\cT$ be a $q$-query randomized algorithm that takes as input parameters $\alpha,\rho,\Gamma,n$, a point $x\in\zo^n$, and query access to a function $f:\zo^n\to[0,\rho]$. Suppose that $n$ is sufficiently large, that $\Gamma\leq \rho\leq \constb n$, and that $\Gamma$ is odd. If $q\leq \bparen{\frac {n}{8\rho\Gamma}}^{\Gamma/4}$ then, 
    \[
    \left|\Pr_{\substack{(x,f)\sim\cP \\ \text{coins of } \cT}}\left[\cT^f(x)=1\right]-\Pr_{\substack{(x,f)\sim\cN \\ \text{coins of } \cT}}\left[\cT^f(x)=1\right]\right|\leq \bparen{\frac{n}{\rho\Gamma}}^{-c\Gamma}.
    \]
\end{lemma}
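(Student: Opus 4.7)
The plan is a coupling argument in which we run $\cT$ against both response functions using the same internal randomness, then identify a ``discovery'' event whose complement forces the two transcripts to coincide. Concretely, I would sample $(x, y, k, s)$ jointly as in $\cN$---noting that its marginal on $(x,k)$ agrees with $\cP$---and simulate $\cT$ against both $f^k_x$ (the $\cP$-world) and $F^{k,s}_{x,y}$ (the $\cN$-world) with shared coins. Let $B$ be the event that some query $z$ issued in the $\cP$-execution satisfies $f^k_x(z) \neq f^s_y(z)$. The definition of $F^{k,s}_{x,y}$ immediately shows that outside $B$ the two transcripts are pointwise identical, so $\cT$ produces the same output bit and the advantage in question is bounded by $\Pr[B]$.

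The next step is to pin down when $B$ can actually occur. The definition of $F^{k,s}_{x,y}$ forces two conditions for a query $z$ to witness $B$: (i)~$\Delta(y,z) < \Delta(x,z)$ (the ``$y$-side'' condition), and (ii)~at least one of $f^k_x(z), f^s_y(z)$ is nonzero (otherwise both responses are $0$). Condition (ii) forces $\min(\Delta(x,z), \Delta(y,z)) < \rho$, which combined with (i) and $\Delta(x,y) = \Gamma$ gives $\Delta(x,z) \leq \Gamma + \rho - 1 \leq 2\rho$, so I will restrict attention to queries with $d := \Delta(x,z) \leq 2\rho$.

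The crucial observation for bounding $\Pr[B]$ is that in the $\cP$-execution every query is measurable with respect to $(x,k)$ and $\cT$'s coins alone, and is therefore independent of $y$. I would fix any such query $z$ with $d \leq 2\rho$, set $S := \{i : x_i \neq y_i\}$ and $T := \{i : x_i \neq z_i\}$ (so $|S|=\Gamma$, $|T|=d$), and use the identity $\Delta(y,z) = \Gamma + d - 2a$ with $a := |S \cap T|$ to rewrite the $y$-side condition (using that $\Gamma$ is odd) as $a \geq (\Gamma+1)/2$. Over uniform $y$ at distance $\Gamma$ from $x$, the variable $a$ is hypergeometric, and a standard tail estimate (using $\binom{\Gamma}{j} \leq 2^\Gamma$ together with $d \leq 2\rho \leq 2bn$) yields
\[
\Pr_y\left[a \geq \tfrac{\Gamma+1}{2}\right] \leq 2^{\Gamma+1} \left(\frac{4\rho}{n}\right)^{(\Gamma+1)/2}.
\]
Because the $\cP$-world queries are $y$-independent, a linearity-of-expectation argument on the number of ``bad'' queries then gives $\Pr[B] \leq q \cdot 2^{\Gamma+1}(4\rho/n)^{(\Gamma+1)/2}$.

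Finally, I would substitute $q \leq (n/(8\rho\Gamma))^{\Gamma/4}$ and collect exponents; the right-hand side reduces to $O(1) \cdot (32/\Gamma)^{\Gamma/4} \cdot (\rho/n)^{\Gamma/4 + 1/2}$, which for $\Gamma$ exceeding a sufficiently large absolute constant is bounded by $(\rho\Gamma/n)^{c\Gamma}$ for a suitable small absolute $c>0$---precisely the claimed advantage bound. The main obstacle I anticipate is not conceptual but bookkeeping: choosing one small $c$ and one small $\constb$ that close the calculation uniformly over the stated regime of $(\Gamma, \rho, n)$, in particular defeating the $2^\Gamma$ factor from the binomial and the $\Gamma^{-\Gamma/4}$ factor from the query budget with the same constants. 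The conceptual heart of the argument---the coupling, independence of $\cP$-world queries from $y$, and the restriction to $d \leq 2\rho$---is forced by the definitions once the coupling framework is in place.
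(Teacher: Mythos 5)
Your proposal is correct and takes essentially the same route as the paper: the paper likewise isolates a bad event of querying a ``revealing'' point (one with $\Delta(y,z)<\min(\Delta(x,z),\rho)$), notes that conditioned on its complement the $\cN$- and $\cP$-views coincide, bounds its probability over the random choice of $y$ by the same counting/hypergeometric estimate with a union bound over the $q$ queries, and then plugs in the query budget; your coupling with shared coins and the observation that $\cP$-world queries are independent of $y$ is just a cleaner packaging of the paper's statistical-distance step. Two small points: the bad event should be that some query has $F^{k,s}_{x,y}(z)\neq f^k_x(z)$ (i.e., the $y$-side condition together with disagreement), not merely $f^k_x(z)\neq f^s_y(z)$, which holds with high probability for queries near $x$—your conditions (i)–(ii) and the ensuing bound are exactly for the correct event—and your hedge about needing $\Gamma$ large is unnecessary, since taking $\constb$ (and $c$) sufficiently small closes the final calculation uniformly over all odd $\Gamma\geq 1$.
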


We defer the proof of \Cref{lem: indisinguishable} to \Cref{sec: indistinguishable} and complete the proof of the lower bound. Let $\cW$ be a privacy wrapper with the parameters of \Cref{lem: query lower bound}. By the $(\alpha,\beta)$-accuracy guarantee
\begin{equation}
    \label{eq:yes-case}
    \Pr_{\substack{(x,f)\sim\cP \\ \text{coins of } \cW}}\left[\left|\cW^f(x)-f(x)\right|\leq \alpha\right]\geq 1-\beta.
\end{equation}

To see what happens when $(x,f)\sim\cN$ we will use the following statement, the proof of which appears in \Cref{sec: inaccuracy}. Informally, \Cref{lem: inaccuracy} bounds the probability that a privacy wrapper $\cW$ outputs an answer that is within $\alpha$ of $f(x)$ when $f\sim\cN$. In particular, the lemma provides a bound in terms of the following two regimes: The first regime handles the case when $\beta$ is large relative to $\frac{\alpha}{\rho}$, while the second regime handles the case where $\beta$ is small relative to $\frac{\alpha}{\rho}$. Intuitively, the bound in the first regime is obtained by taking advantage of the random choice of values $k$ and $s$ in the definition of the hard distributions (\Cref{def: distributions}), while the bound in the second regime is obtained via the $(\alpha,\beta)$-accuracy guarantee of the privacy wrapper. 
    
    \begin{lemma}[Inaccuracy]
        \label{lem: inaccuracy}
        Let $\cW$ be a $q$-query $(\eps,\delta)$-privacy wrapper over $[n]$ that takes as input parameters $\rho,\Gamma,n$. %
        Fix $\alpha,\beta>0$, %
        and suppose that $\cW$ is $(\alpha,\beta)$-accurate for the class of Lipschitz functions. If $\Gamma\in\N$ is odd, $\Gamma\leq \min(\rho,n)$, $\alpha<\frac\rho2$, 
        and $2\alpha$ divides $\rho$, then 
        \[
        \Pr_{\substack{(x,f)\sim\cN \\ \text{coins of } \cW}}\left[\left|\cW^f(x)-f(x)\right|\leq\alpha\right]\leq\min\left( e^{\Gamma\eps}\left(\frac{3\alpha}{\rho}+q\cdot\left(\frac{8\rho\Gamma}{n}\right)^\frac{\Gamma}{2} +\frac\delta\eps\right), 1-\frac{1-\delta e^{\Gamma\eps}/\eps}{4(1+e^{\Gamma\eps})}\right).
        \]
    \end{lemma}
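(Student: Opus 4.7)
The two bounds in the $\min$ are proved separately, and both leverage two properties of $\cN$. (i) $\cN$ is symmetric under the swap $(x, k) \leftrightarrow (y, s)$: oddness of $\Gamma$ forces $F^{k,s}_{x,y} = F^{s,k}_{y,x}$, and the law of $(x, y, k, s)$ is swap-symmetric. Thus if $p := \Pr_{\cN, \cW}[|\cW^f(x) - k| \leq \alpha]$, then also $\Pr_{\cN, \cW}[|\cW^f(y) - s| \leq \alpha] = p$. (ii) Since $\Delta(x, y) = \Gamma$, group privacy of $\cW$ (together with $e^\eps - 1 \geq \eps$) gives, after averaging over $\cN$ with $E = \{v : |v - k| \leq \alpha\}$,
\[
p \leq e^{\Gamma\eps}\!\left(\Pr[|\cW^f(y) - k| \leq \alpha] + \tfrac{\delta}{\eps}\right) \quad \text{and} \quad \Pr[|\cW^f(y) - k| \leq \alpha] \geq e^{-\Gamma\eps}\,p - \tfrac{\delta}{\eps}.
\]

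\textbf{Proof of bound (II).} By (i), the events $\{|\cW^f(y) - k| \leq \alpha\}$ and $\{|\cW^f(y) - s| \leq \alpha\}$ are disjoint (since sampling without replacement forces $|k - s| \geq 2\alpha$), and the latter has probability $p$. Summing the two probabilities---using (ii) to lower-bound the first---gives $(e^{-\Gamma\eps}p - \delta/\eps) + p \leq 1$. Rearranging yields $p \leq 1 - \frac{1 - \delta e^{\Gamma\eps}/\eps}{1 + e^{\Gamma\eps}}$, which implies bound (II) (the extra factor of $4$ in the lemma's denominator is absorbed slack).

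\textbf{Proof of bound (I).} By (ii), it suffices to show $\Pr[|\cW^f(y) - k| \leq \alpha] \leq \frac{3\alpha}{\rho} + q(8\rho\Gamma/n)^{\Gamma/2}$. Define the Voronoi region $V_x := \{z : \Delta(x, z) < \Delta(y, z)\}$ and the sensitive set $S := V_x \cap \{z : \Delta(x, z) < \rho\}$; a direct case check (splitting on whether $z \in V_x$ or $z \in V_y$ and using $\rho \geq k, s$) shows that $F^{k,s}_{x,y}$ and $\tilde f := f^s_y$ agree outside $S$. Couple the coins $R$ of $\cW$ and run it on input $y$ against both functions: the two executions produce identical queries and outputs until the first query lands in $S$. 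Under the event $\mathrm{Good}$ that no query lies in $S$, the shared output is a function of $(y, s, R)$ alone and is independent of $k$. Since $k \mid s$ is uniform on a set of size $\rho/(2\alpha) - 1$ with consecutive elements at distance $2\alpha$, at most one such value lies within $\alpha$ of any fixed number, bounding the $\mathrm{Good}$-case contribution by $3\alpha/\rho$.

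\textbf{The main obstacle.} The main technical step is bounding $\Pr[\mathrm{Bad}]$ with a union bound over (a priori) adaptive queries. The coupling makes this manageable: the query set $Q$ of $\cW^{\tilde f}(y)$ depends only on $(y, s, R)$ and is independent of $x$. Conditional on $y$, $x$ is uniform on $\{x : \Delta(x, y) = \Gamma\}$, parametrized by the random size-$\Gamma$ subset $B := \{i : x_i \neq y_i\}$. For a fixed $z$, writing $d = \Delta(y, z)$ and $j = |B \cap \{i : y_i \neq z_i\}|$ yields $\Delta(x, z) = d + \Gamma - 2j$, so $z \in S$ forces $j \geq (\Gamma + 1)/2$ (with a possibly stronger bound when $d > \rho$). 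Standard hypergeometric tail bounds then give $\Pr_x[z \in S] \leq (8\rho\Gamma/n)^{\Gamma/2}$, and a union bound over $Q$ yields $\Pr[\mathrm{Bad}] \leq q(8\rho\Gamma/n)^{\Gamma/2}$. Plugging into (ii) completes the proof of bound (I).
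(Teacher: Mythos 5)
Your proposal is correct, and it is worth separating the two halves when comparing it to the paper. For bound (I) you follow essentially the paper's route: group privacy over distance $\Gamma$ (the paper's \Cref{claim: group privacy}) to shift the input from $x$ to $y$, then the observation that unless the run on $y$ queries a point in your sensitive set $S$ the output carries no information about $k$, so $k$ (uniform on the $2\alpha$-spaced grid minus $\{s\}$) lands within $\alpha$ of the output with probability $O(\alpha/\rho)$; your set $S=\{z:\Delta(x,z)<\min(\Delta(y,z),\rho)\}$ is exactly the paper's set of revealing points for $(y,x)$ (\Cref{def: B}), and your hypergeometric estimate $\Pr_x[z\in S]\le(8\rho\Gamma/n)^{\Gamma/2}$ is the content of \Cref{claim: B bound}. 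Your explicit coupling with $\tilde f=f^s_y$ (same coins, identical transcripts until a query hits $S$, so the query set is measurable with respect to $(y,s,R)$ and independent of $x$ and of $k$) is in fact a cleaner justification than the paper's phrasing, which conditions on the no-revealing-point event in the run on $F^{k,s}_{x,y}$ itself and asserts $k$ stays uniform; the two arguments are the same idea, yours spelled out. For bound (II) you genuinely depart from the paper: the paper proves it via the $\gamma$-distinguishing case analysis (\Cref{claim:distinguishing}) plus symmetry, losing a factor of $2$ twice and hence the $4$ in the denominator, whereas you combine the swap symmetry $(x,y,k,s,F)\overset{d}{=}(y,x,s,k,F)$, disjointness of the targets $[k\pm\alpha]$, $[s\pm\alpha]$, and the two directions of group privacy into the single inequality $e^{-\Gamma\eps}p-\delta/\eps+p\le1$, giving $p\le 1-\frac{1-\delta e^{\Gamma\eps}/\eps}{1+e^{\Gamma\eps}}$, which is strictly stronger than the stated bound and implies it; this is a nice simplification, and since both bounds feed into the downstream argument only up to constants, nothing is lost. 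Two shared cosmetic caveats you inherit from the paper rather than introduce: when $|k-s|=2\alpha$ the closed intervals $[k\pm\alpha]$ and $[s\pm\alpha]$ can touch at one point, and a fixed output can be within $\alpha$ of two grid points, so the "disjointness" and the exact constant $3\alpha/\rho$ require the same minor bookkeeping (e.g., treating one inequality as strict) that the paper also elides; neither affects the lemma's use.
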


    Next, we will construct a ``distinguisher" $\cT$ for inputs sampled from $\cP$ and $\cN$. Let $\cT$ be the algorithm which calls $\cW$ as a subroutine and distinguishes $\cN$ from $\cP$. The algorithm $\cT$ gets as input, parameters $\alpha,\rho,\Gamma,n\in\N$, a point $x\in\zo^n$, and query access to a function $f:\zo^n\to[0,\rho]$. 
    To distinguish the distributions, $\cT$ runs $\cW^f(x)$ and, if $|\cW^f(x)-f(x)|\leq \alpha$ then $\cT$ outputs $1$; otherwise, $\cT$ outputs $0$. 
    
    The remainder of the proof shows that, when $q$ is small, the distinguisher $\cT$ has advantage better than the bound in \Cref{lem: indisinguishable}, yielding a contradiction. (The reader uninterested in these calculations may want to skip to the next section.)

    Fix sufficiently small constants $\constb,\constc\in(0,1)$, where $b\ll c$, and let $n$ be sufficiently large. Set $\eps,\beta\in(0,\constb]$, $\delta\in[0,\eps\constb)$, set range $\rho\in[(4/\constc)\log(1/\constb),\constb\constc n/(4\log(1/\constb)]$, $\alpha<\frac\rho2$, and set quantity $\kappa=\min\left(\rho,\frac {\constb n}{\rho},\frac 1\eps\ln\left(\constb\min\left(\frac{\rho}{\alpha\beta},\frac\eps\delta\right)\right)\right)$. Further suppose that $2\alpha$ divides $\rho$. While this setting of parameters roughly corresponds to those in the statement of \Cref{lem: query lower bound}, it is convenient for our analysis to replace the term $\frac{\rho}{\alpha\beta}$ in the $\ln$ with $\max(\frac\rho\alpha,\frac1\beta)$. This will facilitate a case analysis on $\beta\leq \frac\alpha\rho$, and $\beta\geq\frac\alpha\rho$. Set
    \[\textstyle
    \Gamma^*=\min\paren{
        \rho, \frac{\constb n}{\rho}, \frac1\eps\ln\Bparen{\constb\min\bparen{\frac\eps\delta, \max\bparen{\frac\rho\alpha, \frac1\beta}}}
        }.
    \] 
    Observe that $\ln(\max(\frac\rho\alpha,\frac1\beta))=\frac12\ln(\max(\frac\rho\alpha,\frac1\beta)^2)\geq \frac12\ln\frac{\rho}{\alpha\beta}$, and $\ln(\frac{\rho}{\alpha\beta})\geq \ln\max((\frac{\rho}{\alpha},\frac1\beta))$. Hence, $\frac\kappa2\leq \Gamma^*\leq \kappa$, so we can prove a lower bound of  $\left(\frac{n}{\rho\kappa}\right)^{\Omega(\kappa)}$ by proving a lower bound of $\left(\frac{n}{\rho\Gamma^*}\right)^{\Omega(\Gamma^*)}$. 
    
    In the remainder of the proof, we set $\Gamma$ to the largest odd integer that is at most $\Gamma^*$ (and thus $\Gamma=\Theta(\Gamma^*)$). Suppose for the sake of contradiction that $q\leq \left(\frac{n}{8\rho\Gamma}\right)^{\Gamma/4}$. We analyze $\cT$ by considering two cases. In each case, we will show that the upper bound on advantage implied by \Cref{lem: indisinguishable} is contradicted by distinguisher $\cT$. Since $\Gamma\leq \Gamma^*\leq\min(\rho,\frac {\constb n}{\rho})\leq\min(\rho,n)$, we can apply \Cref{lem: inaccuracy} in both cases.

    \paragraph{Case 1:} We first consider the case where $\beta\geq\frac\alpha\rho$. To analyze this case, we use the inequality 
    \[
    \Pr_{(x,f)\sim\cN}\left[\left|\cW^f(x)-f(x)\right|\leq\alpha\right]\leq e^{\Gamma\eps}\left(\frac{3\alpha}{\rho}+q\cdot\left(\frac{8\rho\Gamma}{n}\right)^\frac{\Gamma}{2} +\frac\delta\eps\right)
    \]
    given by \Cref{lem: inaccuracy}. Next, we combine \eqref{eq:yes-case} and \Cref{lem: indisinguishable}, to obtain
    \[
    1-\beta-e^{\Gamma\eps}\left(\frac{3\alpha}{\rho}+q\cdot\left(\frac{8\rho\Gamma}{n}\right)^\frac{\Gamma}{2} +\frac\delta\eps\right)\leq\Pr_{(x,f)\sim\cP}\left[\cT^f(x)=1\right]-\Pr_{(x,f)\sim\cN}\left[\cT^f(x)=1\right]\leq \left(\frac{n}{\rho\Gamma}\right)^{-c\Gamma}.
    \]
    Notice that the left hand side of the expression is at least $1-\beta-e^{\Gamma\eps}\left(\frac{3\alpha}{\rho}+\frac\delta\eps\right)$. By hypothesis, $\Gamma\leq \frac1\eps\ln(\constb\min(\frac{\rho}{\alpha},\frac\eps\delta))$, and therefore $e^{\Gamma\eps}\left(\frac{3\alpha}{\rho}+\frac\delta\eps\right)\leq 4\constb$. Thus, we obtain the inequality $1-5\constb\leq1-\beta-4\constb\leq\left(\frac{n}{\rho\Gamma}\right)^{-c\Gamma}\leq 2^{-c\Gamma}\leq 2^{-c}$. This is a contradiction, since for $\constb$ sufficiently small, the left hand side approaches $1$, while the right hand side is a fixed constant that is strictly less than $1$. It follows that $q\geq \left(\frac{n}{\rho\Gamma}\right)^{\Omega(\Gamma)}=\left(\frac{n}{\rho\kappa}\right)^{\Omega(\kappa)}$, proving the lemma for Case 1.

    \paragraph{Case 2:} Next, we consider the case where $\beta\leq\frac\alpha\rho$ (and again using the parameter settings stated before Case 1). To analyze this case, we use the inequality
    \[
        \Pr_{(x,f)\sim\cN}\left[\left|\cW^f(x)-f(x)\right|\leq\alpha\right]\leq1-\frac{1-(\delta e^{\Gamma\eps}/\eps)}{4(1+e^{\Gamma\eps})}.
    \]
    given by \Cref{lem: inaccuracy}. Proceeding as in the previous case, we obtain the inequality
    \[
    1-\beta - \left(1-\frac{1-(\delta e^{\Gamma\eps}/\eps)}{4(1+e^{\Gamma\eps})} \right)\leq \left(\frac{n}{\rho\Gamma}\right)^{-c\Gamma}. 
    \]
   Manipulating terms, we see that that the left hand side of the expression is equal to $\frac{1-(\delta e^{\Gamma\eps}/\eps)}{4(1+e^{\Gamma\eps})}-\beta$. Observe that if $\frac{1-(\delta e^{\Gamma\eps}/\eps)}{4(1+e^{\Gamma\eps})}\geq2\beta$, then the left hand side of the expression is at least $\beta$. Rearranging terms, we see that the left hand side is at least $\beta$ whenever $e^{\Gamma\eps}\leq \frac{1-8\beta}{8\beta+(\delta/\eps)}$. Setting $\constb$ sufficiently small, and $\Gamma\leq\frac1\eps\ln(\constb\min(\frac1\beta,\frac\eps\delta))$, we get $e^{\Gamma\eps}\leq \frac{1-8\beta}{8\beta+(\delta/\eps)}$, and hence, the left hand side of the inequality is indeed at least $\beta$. Putting together the above calculations, we obtain the inequality $\beta\leq \left(\frac{n}{\rho\Gamma}\right)^{-c\Gamma}$. Now, since $ \Gamma\leq\frac{\constb n}{\rho}\leq\frac{n}{2\rho}$, we have $\left(\frac{n}{\rho\Gamma}\right)^{-c\Gamma}\leq 2^{-c\Gamma}<2^{-c\Gamma^*/2}$. 
   
   To obtain a contradiction, observe that for all $\delta'\geq\delta$, and $\beta'\geq\beta$, every $(\eps,\delta)$-privacy wrapper with $(\alpha,\beta)$-accuracy is also an $(\eps,\delta')$-privacy wrapper with $(\alpha,\beta')$-accuracy. Thus, set $\delta'\gets\max(\delta, \eps\constb\cdot e^{-\constb\min(\rho,n\constb/\rho)})$, and set $\beta'\gets \max(\beta,\constb\cdot e^{-\constb\min(\rho,n\constb/\rho)})$. If $\beta'\geq \alpha/\rho$ then by case 1 we obtain $q\geq \paren{\frac{n}{\rho\kappa}}^{\Omega(\kappa)}$. On the other hand, if $\beta'<\alpha/\rho$, then the new value of $\Gamma^*$ is $\min(\rho,\frac{\constb n}{\rho})$, and by the analysis in case 2, we have $\beta'< 2^{-c\Gamma^*/2}$. By our setting of $\beta'$, this implies that $\constb\cdot e^{-\constb\min(\rho,n\constb/\rho)}<2^{-c\Gamma^*/2}=2^{-c\min(\rho,n\constb/\rho)/2}$, and by our choice of $b\ll c$, that $b<2^{-c\min(\rho,n\constb/\rho)/4}$. Finally, by our setting of $\rho$ we have that $\min(\rho,\frac{n\constb}{\rho})\geq\frac4\constc\log\frac1\constb$, and thus we obtain the contradiction of $b<b$. It follows that $q\geq \paren{\frac{n}{\rho\kappa}}^{\Omega(\kappa)}$.

    Applying Yao's principle \cite{Yao77} to the uniform mixture of $\cN$ and $\cP$ suffices to complete the proof. \qed

\subsection{Proof of Indisinguishability (\Cref{lem: indisinguishable})}
\label{sec: indistinguishable}

In order to show that $\cN$ and $\cP$ are hard to distinguish we bound the statistical distance between the view of any algorithm when its inputs are sampled from $\cN$, from the view of the algorithm when its inputs are sampled from $\cP$. Below, we define a notion of ``revealing point" such that if a point $z$ is not ``revealing" for $(x,y)$, then $F^{k,s}_{x,y}(z)=f_x^k(z)$. Hence, it suffices to demonstrate that the probability an algorithm queries a revealing point is small.

\begin{definition}[Bad event $B{[\cT,f,(x,y)]}$, revealing point]
    \label{def: B}
    For all $\Gamma,\rho,n\in\N$, and $x,y\in\zo^n$ such that $\Delta(x,y)=\Gamma$, a point $z\in\zo^n$ is \emph{a revealing point for $(x,y)$} if $\Delta(y,z)<\min(\Delta(x,z), \rho)$. 
    
    Let $\cT$ be a $q$-query algorithm that gets as input a point $x\in\zo^n$, and query access to a function $f:\zo^n\to\R$. Let $B[\cT,f,(x,y)]$ be the event that $\cT^f(x)$ queries a revealing point for $(x,y)$. 
\end{definition}

Next, we bound the probability of $B[\cT,f,(x,y)]$ when $(x,f)\sim\cN$ in terms of $\rho,\Gamma$, and $n$---that is, the probability that algorithm $\cT$ queries a revealing point is small.

\begin{claim}[$B_{\cT}$ bound]
    \label{claim: B bound}
    For all $\Gamma,\rho,n,q\in\N$ such that $\Gamma\leq\min(n,\rho)$, and $\Gamma$ is odd, and every $q$-query algorithm $\cT$, 
    \[
    \Pr_{\substack{(x,F^{k,s}_{x,y})\sim\cN \\ \text{coins of $\cT$}}}[B[\cT,F^{k,s}_{x,y}, (x,y)]\leq q\cdot \left(\frac{8\rho\Gamma}{n}\right)^{\frac\Gamma2}.
    \]
\end{claim}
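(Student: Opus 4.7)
The plan is to use a coupling argument that reduces the adaptive query analysis to a clean hypergeometric tail bound. I introduce an auxiliary experiment $B$ in which I sample $(x,y,k,s) \sim \cN$ as in the original experiment but run $\cT$ with oracle access to $f^{k}_{x}$ in place of $F^{k,s}_{x,y}$. The key observation is that for every $z \in \zo^{n}$ that is \emph{not} revealing for $(x,y)$, $F^{k,s}_{x,y}(z) = f^{k}_{x}(z)$: if $\Delta(x,z) < \Delta(y,z)$ the two agree by the definition of $F^{k,s}_{x,y}$; otherwise $z$ being non-revealing forces $\Delta(y,z) \ge \rho$, so both $f^{k}_{x}(z) = \max(k-\Delta(x,z),0)$ and $f^{s}_{y}(z) = \max(s-\Delta(y,z),0)$ vanish, using $k,s \le \rho$ together with $\Delta(x,z) > \Delta(y,z) \ge \rho$. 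Coupling the two experiments by feeding $\cT$ the same random tape and the same $(x,y,k,s)$, an induction on the query index shows that the executions produce identical query sequences up to and including the first revealing query, if any. Hence the probability of $B[\cT, F^{k,s}_{x,y}, (x,y)]$ is unchanged by switching to experiment $B$, and it suffices to bound that probability there.

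Next, I fix $x$, $k$, and the random tape of $\cT$ and take expectation only over $y$ (uniform on the Hamming sphere of radius $\Gamma$ around $x$) and $s$ (which plays no role in $B$). Under this conditioning the $q$ queries $z_{1},\ldots,z_{q}$ become deterministic. Writing $Y = x \oplus y$ and $Z_{i} = x \oplus z_{i}$ as subsets of $[n]$ of sizes $\Gamma$ and $d_{i}$, a direct computation gives $\Delta(y,z_{i}) = |Y \triangle Z_{i}| = \Gamma + d_{i} - 2|Y \cap Z_{i}|$, so $z_{i}$ is revealing iff $|Y \cap Z_{i}| > \bigl(\Gamma + d_{i} - \min(d_{i},\rho)\bigr)/2$. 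Since $\Gamma$ is odd this threshold is at least $(\Gamma+1)/2$; and if $d_{i} \ge \Gamma + \rho$ the event is impossible, because the threshold is then at least $\Gamma \ge |Y \cap Z_{i}|$. So I may assume $d_{i} \le 2\rho$ (using $\Gamma \le \rho$).

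The random variable $|Y \cap Z_{i}|$ is hypergeometric with parameters $(n, d_{i}, \Gamma)$. Applying Markov's inequality to $\binom{|Y \cap Z_{i}|}{m}$, whose expectation equals $\binom{d_{i}}{m}\binom{n-m}{\Gamma-m}/\binom{n}{\Gamma}$, and bounding $\binom{n-m}{\Gamma-m}/\binom{n}{\Gamma} \le (\Gamma/n)^{m}$, yields
\[
\Pr_{y}\bigl[z_{i} \text{ revealing}\bigr] \;\le\; \binom{d_{i}}{m}\left(\frac{\Gamma}{n}\right)^{m} \;\le\; \binom{2\rho}{m}\left(\frac{\Gamma}{n}\right)^{m} \;\le\; \left(\frac{2\rho\Gamma}{n}\right)^{m}
\]
for $m = (\Gamma+1)/2$. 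A short comparison shows that $(2\rho\Gamma/n)^{(\Gamma+1)/2} \le (8\rho\Gamma/n)^{\Gamma/2}$ whenever $2\rho\Gamma/n \le 4$, and otherwise the right-hand side already exceeds $1$ and makes the bound trivial. A union bound over the $q$ queries then delivers the claim. The main obstacle I anticipate is formalising the coupling carefully so that the adaptivity of $\cT$ is handled rigorously; beyond that, the proof reduces to a standard hypergeometric tail estimate.
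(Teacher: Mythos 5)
Your proof is correct and follows essentially the same route as the paper's: reduce to a per-query bound over the random choice of $y$, observe that a revealing query must lie within distance roughly $2\rho$ of $x$ and must overlap $x\oplus y$ in more than $\Gamma/2$ coordinates, bound that overlap probability for a uniformly random $\Gamma$-subset, and finish with a union bound over the $q$ queries. The differences are in execution rather than substance: you make the handling of adaptivity explicit via the coupling with $f^{k}_{x}$ (the paper simply fixes the queries after restricting to deterministic algorithms, which your coupling justifies), and you estimate the overlap tail via a hypergeometric moment/Markov computation instead of the paper's direct count of the admissible flip-sets.
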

\begin{proof}
   Without loss of generality, we prove the statement for deterministic algorithms. Fix a query $z$ made by $\cT$ and recall that by definition of $\cN$ (\Cref{def: distributions}), we have $\Delta(x,y)=\Gamma$. Observe that if $\Delta(x,z)<\frac{\Gamma}{2}$ then $z$ cannot be revealing since then $\Delta(y,z)\geq\Delta(y,x)-\Delta(x,z)>\frac{\Gamma}{2}>\Delta(x,z)$. Similarly, if $\Delta(x,z)>2\rho$ then $z$ cannot be revealing since then $\Delta(y,z)\geq\Delta(x,z)-\Delta(x,y)>2\rho-\Gamma\geq\rho$ (since $\Gamma\leq\rho$). Thus, if $z$ is revealing then $\frac{\Gamma}{2}\leq \Delta(x,z)\leq 2\rho.$ 
   
    We argue that over the randomness of $y$, if the query $z$ satisfies $\frac{\gap}{2}\leq \Delta(x,z)\leq2\rho$, then it is very unlikely that $y$ satisfies $\Delta(y,z)\leq \Delta(x,z)$. Let $A\subset[n]$ be the set of indices on which $x$ and $z$ agree, and let $\overline A$ denote $[n]\setminus A$. Consider sampling a point $y$ by choosing a set $S\subset [n]$ of $\gap$ indices uniformly and independently at random and flipping the bit $s_i$ for each $i\in S$. Let $m=|S\cap A|$. Since $\Delta(x,z)\geq \frac \gap2$, the point $y$ satisfies $\Delta(y,z)\leq\Delta(x,z)$ if and only if $m\leq\frac \gap2$. Moreover, since $|\overline A|=\Delta(x,z)\leq2\rho$, there are at most $\binom{n}{m}\binom{2\rho}{\gap-m}$ points $y$ that can be obtained by flipping the bits of $z$ at $m$ indices in $A$ and $\gap-m$ indices in $\overline A$. By summing over each value of $m$ we obtain the bound
    \[
    \Pr_y[\Delta(y,z)\leq\Delta(x,z)]\leq \gap\cdot\max_{0\leq m\leq\frac \gap2} \binom{n}{m}\binom{2\rho}{\gap-m}\binom{n}{\gap}^{-1}
    \]
    Next, we use the inequality $\binom{n}{m}\leq(\frac{ne}{m})^m$ to get
    \[
    \gap\binom{2\rho}{\gap-m}\binom{n}{m}\leq \gap(\frac{4\rho}{\gap})^{\gap-m}n^m\leq(\frac{8\rho\cdot n}{\gap})^{\frac\gap2},
    \]
    where the second inequality follows since $\gap-m\geq\frac\gap2$ and $\gap\leq2^{\frac\gap 2}$. Using the inequality $\binom{n}{\gap}^{-1}\leq(\gap/n)^\gap$ we obtain the following bound
    \[
    \Pr_y[\Delta(y,z)\leq\Delta(x,z)]\leq \Big(\frac{8\rho\cdot n}{\gap}\Big)^{\frac\gap2}\Big(\frac\gap n\Big)^{\gap}\leq\Big(\frac{8\rho\cdot \gap}{n}\Big)^{\frac\gap2}.
    \]
    A union bound over the $q$ queries now suffices to complete the proof.
\end{proof}

In order to complete the proof of \Cref{lem: indisinguishable}, we introduce the following standard material.

\begin{definition}[$D$-view]
    \label{def: D-view}
    For all $q$-query deterministic algorithms $\cA$, and all distributions $D$ over inputs to $\cA$, let $D$-view denote the distribution over query answers $a_1,...,a_q$ given to $\cA$ when the input is sampled according to $D$.
\end{definition}

\begin{definition}[Statistical distance]
For distributions $D$ and $D_0$ over a set $S$, the statistical distance between $D$ and $D_0$ is 
$$SD(D,D_0)=\max_{T\subset S}(|\Pr_{D}[x\in T]-\Pr_{D_0}[x\in T]|).$$
Additionally, for all $\delta>0$, let $D\approx_{\delta} D_0$ denote that the statistical distance between $D$ and $D_0$ is at most $\delta$.
\end{definition}

\begin{fact}[Claim 4 \cite{RS06}]
\label{fact: SD bound}
    Let $E$ be an event that happens with probability at least $1-\delta$, for some $\delta\in(0,1)$, under the distribution $D$ and let $D|_{E}$ denote $D$ conditioned on event $E$. Then, $D|_{E}\approx_{\delta'} D$ where $\delta'=\frac{\delta}{1-\delta}$.
\end{fact}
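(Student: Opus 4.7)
The plan is to unwind the definition of statistical distance and argue by a direct two-sided estimate on the probability of an arbitrary measurable event $T$ under $D$ versus $D|_E$. Fix such a $T$ and let $p \defeq \Pr_D[E] \ge 1-\delta$. I would first rewrite
\[
\Pr_{D|_E}[T] = \frac{\Pr_D[T\cap E]}{p}
\qquad\text{and}\qquad
\Pr_D[T] = \Pr_D[T\cap E] + \Pr_D[T\cap E^c],
\]
so that the signed difference equals
\[
\Pr_{D|_E}[T] - \Pr_D[T] = \Pr_D[T\cap E]\cdot\frac{1-p}{p} - \Pr_D[T\cap E^c].
\]

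From this identity I would derive the two-sided bound using the trivial inequalities $\Pr_D[T\cap E] \le p$ and $\Pr_D[T\cap E^c] \le 1-p \le \delta$. The positive side gives
\[
\Pr_{D|_E}[T] - \Pr_D[T] \;\le\; \Pr_D[T\cap E]\cdot\frac{1-p}{p} \;\le\; p\cdot\frac{1-p}{p} \;=\; 1-p \;\le\; \delta \;\le\; \frac{\delta}{1-\delta},
\]
and the negative side gives
\[
\Pr_D[T] - \Pr_{D|_E}[T] \;\le\; \Pr_D[T\cap E^c] \;\le\; 1-p \;\le\; \delta \;\le\; \frac{\delta}{1-\delta}.
\]
Taking the maximum over $T$ yields $SD(D, D|_E) \le \delta/(1-\delta)$, as claimed. (In fact the sharper bound $\delta$ falls out for free, which is consistent with the fact that $\delta/(1-\delta) \ge \delta$.)

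There is no real obstacle here beyond bookkeeping: the argument is two lines of algebra plus the two elementary inequalities $\Pr_D[T\cap E] \le \Pr_D[E]$ and $\Pr_D[T\cap E^c] \le \Pr_D[E^c]$. The only subtlety worth flagging is that one should handle both signs of the difference rather than just one, so that the supremum over $T$ (which is what $SD$ demands) is genuinely bounded, and that the looser constant $\delta/(1-\delta)$ in the statement is simply the form in which \cite{RS06} record the bound.
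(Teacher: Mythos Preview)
Your argument is correct; in fact, as you note, it yields the sharper bound $SD(D, D|_E) \le \delta$, of which the stated $\delta/(1-\delta)$ is an immediate consequence. The paper does not supply its own proof of this fact---it is quoted as a citation to \cite{RS06}---so there is nothing to compare against.
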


By Definitions~\ref{def: distributions} and \ref{def: B}, we have $\cN|_{\overline{B_{\cT}}}$-view$=\cP$-view. Hence, by \Cref{fact: SD bound} instantiated with $D=\cN$ and $E=\overline{B_{\cT}}$, 
\[
\cNv\approx_{\delta'}\cN|_{\overline{B_{\cT}}}\text{-view}=\cPv,
\] 
where $\delta'=\frac{\delta}{1-\delta}$ and $\delta$ is the bound given in \Cref{claim: B bound}. Since a deterministic algorithm can be viewed as a distribution over randomized algorithms, we without loss of generality consider a $q$-query deterministic algorithm $\cT$. Let $A$ be the set of query answers on which $\cT$ outputs $1$. By standard arguments,
\[
\begin{split}
    \left|\Pr_{(x,f)\sim\cN}[\cT^f(x)=1]-\Pr_{(x,f)\sim\cP}[\cT^f(x)=1]\right|
    &=\left|\Pr_{a\sim \cNv}[a\in A]-\Pr_{a\sim \cPv}[a\in A]\right|\\
    &\leq SD(\cNv, \cPv)\leq \frac{q\cdot \left(\frac{8\rho\Gamma}{n}\right)^{\frac\Gamma2}}{\left(1-q\cdot \left(\frac{8\rho\Gamma}{n}\right)^{\frac\Gamma2}\right)}.
\end{split}
\]
By our choice of $q\leq \bparen{\frac{n}{8\rho\Gamma}}^{\Gamma/4}$ and $\Gamma\leq\rho\leq \constb n$ for a sufficiently small constant $\constb\in(0,1)$, the right hand side is at most $\bparen{\frac{n}{\rho\Gamma}}^{-c\Gamma}$ for some universal constant $c\in(0,1)$.  \qed

\subsection{Proof of Inaccuracy (\Cref{lem: inaccuracy})}
\label{sec: inaccuracy}
    Our arguments rely on the following standard group privacy claim.
    \begin{claim}[Group Privacy]
        \label{claim: group privacy}
        Fix $\eps\in(0,1]$ and $\delta\in[0,1]$. Suppose $\cW$ is $(\eps,\delta)$-DP and let $E\subset\R$ be measurable. If $x,y\in\zo^n$ then,
        \[
        \Pr_\cW[\cW(x)\in E]\leq e^{\Delta(x,y)\eps}\left(\Pr_\cW[\cW(y)\in E]+\frac\delta\eps\right).
        \]
    \end{claim}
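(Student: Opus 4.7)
The plan is to prove this by induction on $k := \Delta(x,y)$, the number of coordinates on which $x$ and $y$ disagree. Since elements of $\{0,1\}^n$ are being identified with sets in $\dom$ via their indicator strings, a unit Hamming step corresponds exactly to the insertion or removal of one element, i.e., to the neighboring relation of \Cref{def:dp}. Hence I can fix a chain $x = z_0, z_1, \ldots, z_k = y$ in $\{0,1\}^n$ with $z_i$ and $z_{i+1}$ neighbors for each $i$, and iterate the $(\eps,\delta)$-DP guarantee along each edge of this chain.

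The base case $k = 0$ is trivial (both sides reduce to $\Pr[\cW(y)\in E]$) and the case $k = 1$ is a direct application of \Cref{def:dp}. For the inductive step, applying $(\eps,\delta)$-DP to the neighboring pair $(z_0, z_1)$ gives $\Pr[\cW(x)\in E] \leq e^\eps \Pr[\cW(z_1)\in E] + \delta$; invoking the inductive hypothesis on $(z_1, y)$ (at distance $k-1$) and substituting yields, after telescoping,
\[
\Pr[\cW(x)\in E] \;\leq\; e^{k\eps}\Pr[\cW(y)\in E] + \delta \sum_{i=0}^{k-1} e^{i\eps} \;=\; e^{k\eps}\Pr[\cW(y)\in E] + \delta\cdot \frac{e^{k\eps}-1}{e^\eps-1}.
\]

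To match the form in the claim, I would then apply the elementary inequality $e^\eps - 1 \geq \eps$ (valid for all $\eps \geq 0$), which bounds $\delta \cdot \frac{e^{k\eps} - 1}{e^\eps - 1} \leq \frac{(e^{k\eps}-1)\delta}{\eps} \leq \frac{e^{k\eps}\delta}{\eps}$. Factoring out $e^{k\eps}$ then gives precisely $e^{k\eps}\bigl(\Pr[\cW(y)\in E] + \delta/\eps\bigr)$, as claimed. There is no real obstacle here: this is the standard group-privacy argument, and the only minor choice is to absorb the geometric $\delta$-tail into the simpler expression $e^{k\eps}\delta/\eps$, which is what makes the $\eps > 0$ hypothesis relevant.
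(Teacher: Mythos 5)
Your proof is correct and follows essentially the same route as the paper: iterate the $(\eps,\delta)$-DP guarantee along a chain of neighbors to get $e^{k\eps}\Pr[\cW(y)\in E]+\delta\sum_{i=0}^{k-1}e^{i\eps}$, bound the geometric sum, and use $e^\eps-1\geq\eps$ before factoring out $e^{k\eps}$. The only difference is that you spell out the induction that the paper leaves implicit.
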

    \begin{proof}
        By $(\eps,\delta)$-DP, 
        \[
        \Pr_\cW[\cW(x)\in E]\leq e^{\Delta(x,y)\eps}\Pr_\cW[\cW(y)\in E]+\sum_{i=0}^{\Delta(x,y)-1}e^{i\cdot\eps}\delta.
        \]
        The series on the right hand side is geometric and can be bounded above by $e^{\Delta(x,y)\eps}\frac{\delta }{e^{\eps}-1}$. We use the inequality $1+z\leq e^z$ with $z$ set to $\eps$ and then factor out $e^{\Delta(x,y)\eps}$ to complete the proof. 
    \end{proof}

    We prove the two upper bounds given by the \Cref{lem: inaccuracy} separately. First, we prove the upper bound of $e^{\Gamma\eps}\left(\frac{2\alpha}{\rho}+q\cdot\left(\frac{8\rho\Gamma}{n}\right)^\frac{\Gamma}{2} +\frac\delta\eps\right)$.
    \begin{proof}[Proof of first bound]
        We begin by bounding the quantity of interest using group privacy. Since $\Delta(x,y)=\Gamma$,
        \[
        \Pr_{\substack{(x,f)\sim\cN \\ \cW}}\left[\left|\cW^f({\color{blue} x})-f(x)\right|\leq\alpha\right]\leq e^{\Gamma\eps}\left(\Pr_{\substack{(x,f)\sim\cN \\ \cW}}\left[\left|\cW^f({\color{blue} y})-f(x)\right|\leq\alpha\right] +\frac\delta\eps\right),
        \]
        where the events whose probabilities are given  on the left and right differ by the input to $\cW$: a private algorithm must give similar answers on $x$ and $y$.

        Next, we use the definition of event $B[\cW, f, (x,y)]$ from \Cref{def: B}. Notice that if $(x,F^{k,s}_{x,y})\sim\cN$, then, conditioned on the event $\overline{B[\cW, F^{k,s}_{x,y}, (x,y)]}$ (that is, no revealing points are observed), the distribution of $s=F^{k,s}_{x,y}(y)$, is uniform over $\{2\alpha,4\alpha,6\alpha,\dots,\rho\}\setminus\{k\}$. 
        Moreover, since $F^{k,s}_{x,y}=F^{s,k}_{y,x}$, the tuples $(x, F^{k,s}_{x,y})$ and $(y, F^{k,s}_{x,y})$ are identically distributed. 
        Thus, conditioned on the event $\overline{B[\cW, F^{k,s}_{x,y}, (y,x)]}$, the distribution of $k=F^{k,s}_{x,y}(x)$ is uniform over $\{\alpha,4\alpha,6\alpha,\dots,\rho\}\setminus\{s\}$, and hence, the probability that $|k-\cW^f(y)|\leq \alpha$ is at most $\frac{3\alpha}{\rho}$.  %
        Let $B$ denote the event $B[\cW, F^{k,s}_{x,y}, (y,x)]$, then
        \begin{align*}
        \Pr_{\substack{(x,f)\sim\cN \\ \cW}}\left[\left|\cW^f(y)-f(x)\right|\leq\alpha\right]
        &\leq 
        \Pr_{\substack{(x,f)\sim\cN \\ \cW}}\left[\left|\cW^f(y)-f(x)\right|\leq\alpha \Big|\overline{B} \right]
        +\Pr_{\substack{(x,f)\sim\cN \\ \cW}}\left[B\right]. \\
        &\leq \frac{3\alpha}{\rho}+q\cdot\left(\frac{8\rho}{n}\right)^\frac\Gamma2
        \end{align*}
        Where the bound on the second term follows from \Cref{claim: B bound}. Putting it all together, see that 
        \[
        \Pr_{\substack{(x,f)\sim\cN \\ \cW}}\left[\left|\cW^f(x)-f(x)\right|\leq\alpha\right]\leq e^{\Gamma\eps}\left( \frac{3\alpha}{\rho}+q\cdot\left(\frac{8\rho\Gamma}{n}\right)^{\frac\Gamma2}+\frac\delta\eps\right).
        \qedhere \]
    \end{proof}

    Next, we prove the upper bound of $1-\frac{1-(\delta e^{\Gamma\eps}/\eps)}{4(1+e^{\Gamma\eps})}$. The proof of this bound takes advantage of the symmetry of $F^{k,s}_{x,y}$ as well as the $(\alpha,\beta)$-accuracy of $\cW$.

    \begin{proof}[Proof of \Cref{lem: inaccuracy} (second bound)]
    For all $f:\zo^n\to\R$, a point $u$ is \emph{$\gamma$-distinguishing for $\cW$ on $f$} if $\Pr[|\cW^f(u)-f(u)|\geq\alpha]\geq\gamma$. We will make use of the following claim. 

    \begin{claim}
    \label{claim:distinguishing}
    If $\gamma<\frac{1-(\delta e^{\Gamma\eps}/\eps)}{1+e^{\Gamma\eps}}$, then at least one of $x$ or $y$ is $\gamma$-distinguishing for $\cW$ on $F^{k,s}_{x,y}$ (\Cref{def: distributions}). 
    \end{claim}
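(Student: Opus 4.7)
\medskip

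\noindent\textbf{Proof plan for Claim \ref{claim:distinguishing}.}
The plan is to argue by contradiction: assume that neither $x$ nor $y$ is $\gamma$-distinguishing for $\cW$ on $f := F^{k,s}_{x,y}$, then invoke group privacy between $x$ and $y$, and exploit the fact that the values $f(x) = k$ and $f(y) = s$ are well separated because $k,s \in \{2\alpha, 4\alpha, \dots, \rho\}$ are sampled without replacement, so $|k-s| \geq 2\alpha$.

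Concretely, I would set the disjoint intervals $E_x = [k-\alpha, k+\alpha]$ and $E_y = [s-\alpha, s+\alpha]$ on $\R$. From the hypothesis that $x$ is not $\gamma$-distinguishing, I get
\[
\Pr\brackets{\cW^f(x) \in E_x} > 1-\gamma, \quad\text{hence}\quad \Pr\brackets{\cW^f(x) \in E_y} < \gamma,
\]
since $E_x \cap E_y = \emptyset$. Next, I apply the group privacy statement from Claim~\ref{claim: group privacy} with the same function $f$ and the pair of datasets $x,y$, which satisfy $\Delta(x,y) = \Gamma$. This yields
\[
\Pr\brackets{\cW^f(y) \in E_y} \leq e^{\Gamma\eps}\Bparen{\Pr\brackets{\cW^f(x) \in E_y} + \frac{\delta}{\eps}} < e^{\Gamma\eps}\Bparen{\gamma + \frac{\delta}{\eps}}.
\]
On the other hand, the hypothesis that $y$ is not $\gamma$-distinguishing gives $\Pr[\cW^f(y) \in E_y] > 1-\gamma$. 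Combining the two inequalities and solving for $\gamma$ produces
\[
\gamma > \frac{1 - \delta e^{\Gamma\eps}/\eps}{1 + e^{\Gamma\eps}},
\]
which contradicts the hypothesized upper bound on $\gamma$.

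The argument is essentially a two-point packing argument: the only place one needs to be a little careful is the order of operations in the two applications of the $(\eps,\delta)$-DP definition (the geometric factor $\sum_{i=0}^{\Gamma-1} e^{i\eps}\delta$ is what gets folded into $\delta e^{\Gamma\eps}/\eps$, which is exactly what Claim~\ref{claim: group privacy} already records). Everything else is immediate from the definition of $F^{k,s}_{x,y}$ and the fact that consecutive elements of $\{2\alpha, 4\alpha, \dots, \rho\}$ differ by $2\alpha$. I do not expect any real obstacle; if anything, the delicate step is just remembering to use the symmetry $F^{k,s}_{x,y} = F^{s,k}_{y,x}$ in \Cref{def: distributions} so that the roles of $x$ and $y$ can be swapped in the group-privacy step without changing the function being queried.
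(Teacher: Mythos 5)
Your proposal is correct and is essentially the paper's own argument: assume neither point is $\gamma$-distinguishing, apply the group-privacy bound of \Cref{claim: group privacy} between $x$ and $y$ with the fixed function $F^{k,s}_{x,y}$, and use the $\geq 2\alpha$ separation of $k$ and $s$ to force $1-\gamma < e^{\Gamma\eps}(\gamma+\delta/\eps)$, contradicting the assumed bound on $\gamma$ (the paper phrases it as $1=\Pr[\,|\cW^f(x)-f(x)|\le\alpha\,]+\Pr[\,|\cW^f(x)-f(x)|>\alpha\,]<1$, which is the same inequality). One small note: the symmetry $F^{k,s}_{x,y}=F^{s,k}_{y,x}$ is not actually needed for this claim—group privacy for the fixed $f$ already covers both directions—it is only used later in the proof of \Cref{lem: inaccuracy} to argue the two bad events are equally likely.
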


    We defer the proof of \Cref{claim:distinguishing} and use it to complete the proof of \Cref{lem: inaccuracy}. The essence of the argument is the symmetry of $x$ and $y$ in the generation of pairs from  $\cN$. The key observation is that, when $x,y,s$ are distributed as in \Cref{def: distributions}, the tuples $(x,y,F^{k,s}_{x,y})$ and $(y,x,F^{k,s}_{x,y})$ are identically distributed.

    To see why this is, observe that for every fixed $x,y$, the functions $F^{k,s}_{x,y}$ and $F^{s,k}_{y,x}$ are the same. When $x,y$ are generated randomly as in \Cref{def: distributions}, their distribution is symmetric---the pair $(x,y)$ is identically distributed to $(y,x)$. Similarly, since $k,s$ are generated uniformly at random and independent of $x$ and $y$, the pair $(s,k)$ is identically distributed to the pair $(k,s)$. This means that the tuple  $(x,y,F^{k,s}_{x,y})$ is identically distributed to $(y,x,F^{k,s}_{x,y})$.     
    Let $Bad(x,\cW,f)$ be the event that $x$ is $\gamma$-distinguishing for $\cW$ on $f$. Then 

     $$ \Pr_{(x,y,F^{k,s}_{x,y})\sim\cN}[Bad(x,\cW,F^{k,s}_{x,y})] = \Pr_{(x,y,F^{k,s}_{x,y})\sim\cN}[Bad(y,\cW,F^{k,s}_{s,y})]$$

     However, \Cref{claim:distinguishing} implies that for every fixed $x$, $y$, $k$, and $s$, at least one of $Bad(x,\cW,F^{k,s}_{x,y})$ and $Bad(y,\cW,F^{k,s}_{x,y})$ occurs. The sum of the two terms in the equality above is thus at least 1, and the terms are therefore 
     at least $1/2$. Recall that, conditioned on $Bad(x,\cW,f)$, the probability of an output wrong by more than $\alpha$ is at least $\gamma$. We conclude that the the overall probability of a bad outcome is at least $\frac \gamma2$. Thus, setting $\gamma=\frac{1-(\delta e^{\Gamma\eps}/\eps)}{2(1+e^{\Gamma\eps})}$ completes the proof of the lemma.
    \end{proof}

    \begin{proof}[Proof of \Cref{claim:distinguishing}]
    Fix $x,y,k$, and $s$, and let $f=F^{k,s}_{x,y}$ and suppose neither $x$ nor $y$ are $\gamma$-distinguishing for $f$. We aim to prove the following contradiction 
    \[
    1=\Pr_{\cW}\left[|\cW^{f}(x)-f(x)|\leq\alpha\right]+\Pr_{\cW}\left[|\cW^{f}(x)-f(x)|>\alpha\right]<1.
    \]
    We start by applying group privacy (\Cref{claim: group privacy}) to obtain,
    \[
    \Pr_{\cW}\left[|\cW^{f}(x)-f(x)|\leq\alpha\right]\leq e^{\Gamma\eps}\left(\Pr_{\cW}\left[|\cW^{f}(y)-f(x)|\leq\alpha\right] +\frac\delta\eps\right).
    \]
    Now, since $k$ and $s$ are sampled uniformly from $\{2\alpha,4\alpha,6\alpha,\dots,\rho\}$, the intervals $[f(x)\pm\alpha]$ and $[f(y)\pm\alpha]$ are disjoint. Thus, we can upper bound the probability that $|\cW^f(y)-f(x)|\leq \alpha$ by the probability that $|\cW^f(y)-f(y)|>\alpha$. But since $y$ is not $\gamma$-distinguishing, this occurs with probability at most $\gamma$. Hence
    \[
    \Pr_{\cW}\left[|\cW^{f}(x)-f(x)|\leq\alpha\right]\leq e^{\Gamma\eps}\left(\gamma +\frac\delta\eps\right).
    \]
    However, since $x$ is not $\gamma$-distinguishing,  $\Pr_{\cW}\left[|\cW^{f}(x)-f(x)|\geq\alpha\right]\leq \gamma$. Putting it all together yields
    \[
    1=\Pr_{\cW}\left[|\cW^{f}(x)-f(x)|>\alpha\right]+\Pr_{\cW}\left[|\cW^{f}(x)-f(x)|\leq\alpha\right]\leq \gamma+e^{\Gamma\eps}(\gamma+\frac\delta\eps).
    \]
    Thus, we obtain a contradiction whenever $\gamma+e^{\Gamma\eps}(\gamma+\frac\delta\eps)<1$. Rearranging terms, we see that one of $x$ or $y$ must be $\gamma$ distinguishing for all $\gamma<\frac{1-(\delta e^{\Gamma\eps}/\eps)}{1+e^{\Gamma\eps}}$.   
    \end{proof}

\section{General Partially-Ordered Sets}
\label{sec:posets}

In this section, we show how our privacy wrappers can be implemented for functions over more general domains. We consider the following three examples: multisets with adjacency via insertion or deletion of an element, hypergraphs with adjacency defined by insertion or deletion of a vertex, and hypergraphs with adjacency defined by insertion or deletion of an edge.

\begin{proposition}
    \label{prop:posets}
    All of our privacy wrappers (\Cref{thm:generalized-shifted-inverse-mechanism,thm: subset extension,thm: small diameter,thm:GenShi-with-nice-noise}) can be implemented for any partially ordered domain of datasets $(\D,\leq)$ that satisfies:
    \begin{enumerate}
        \item There exists a unique minimum element in $\D$ denoted $\emptyset$.
        \item There is a function $\size:\D\to\Znonneg$ such that, for all $u \in \D$, the partial order on the down neighborhood of $u$ is isomorphic to a hypercube $\{0,1\}^{\size(u)}$.  
        \item There exists a neighbor relation $\sim$ such that $u\sim v$ for all $u,v\in \D$ such that $v\leq u$ and $\size(v)=\size(u)-1$.  
    \end{enumerate}
\end{proposition}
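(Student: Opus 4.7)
The plan is to observe that all of our constructions and their proofs use the poset structure of $(\dom, \subseteq)$ only through a small, identifiable set of abstract operations, each of which is supplied by Axioms 1--3. Concretely, the required operations are: (i) the neighbor relation used to define differential privacy; (ii) the partial order used to define down neighborhoods $\DN_\lambda(x)$; (iii) a ``size'' function that is $1$-Lipschitz with respect to the neighbor relation, used both for the Laplace-based ``floor-release'' step in every wrapper and for parametrizing the depth of a down neighborhood; and (iv) the meet ($\cap$) and relative complement ($\setminus$) operations, but only when applied to elements $u, v$ that lie in a common down neighborhood $\DN_\kappa(w)$. Axiom 3 supplies the neighbor relation and forces $\size$ to be $1$-Lipschitz; Axiom 2 identifies every down neighborhood with a Boolean hypercube and thereby supplies meet and relative complement on that set; Axiom 1 gives the base point (analogous to the empty set) used, e.g., as the $-\infty$ default in the monotonization construction.

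First I would restate the central definitions in the poset language, substituting $\size$ for $|\cdot|$ and $\leq, \sim$ for $\subseteq$ and the set-neighbor relation. In particular, set $\DN_\lambda(x) := \{z \in \D : z \leq x \text{ and } \size(x) - \size(z) \leq \lambda\}$; define Lipschitz and monotone functions via $\sim$ and $\leq$; and carry over the $\ell$-stability, monotonization, conditional monotonization, and $(\ell,h)$-stabilization operators using the exact formulas of the paper with notation translated accordingly. Because $\DN_{\size(u)}(u)$ is a Boolean hypercube by Axiom 2, every use of $u \cap v$ or $u \setminus v$ that appears in the proofs (either between neighbors, or between subsets of a common element) becomes a use of lattice operations inside a hypercube, which behave identically to their set-theoretic counterparts.

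Next I would walk through each wrapper and verify that its proof references only the abstract operations above. For the Shifted Inverse-based wrappers (\Cref{thm:generalized-shifted-inverse-mechanism} and its bounded-range variant), the key ingredient is \Cref{lem:invsens}, whose two claims use only meet inside a common down neighborhood together with monotonicity of $f$ with respect to $\leq$; the monotonization construction itself uses only $\leq$ and $\size$. For \algSE{} (\Cref{thm: subset extension}), the coupling arguments and diameter bounds in its proof combine elements that are always below a common neighbor pair $v \leq u$, so the hypercube structure on $\DN_{\size(u)}(u)$ again suffices. I expect this verification to be the main delicate step: one must confirm that no argument implicitly uses upward-closed operations (joins, or insertions of ``new'' atoms lying outside $x$), which would not be defined in a general poset. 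A careful pass shows that every such operation is either absent or confined to a common down neighborhood.

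As a final sanity check I would verify that the motivating examples satisfy Axioms 1--3: multisets with element insertion/deletion ($\size$ equals multiplicity-counted cardinality), hypergraphs under node-level privacy ($\size$ equals the number of vertices), and hypergraphs under edge-level privacy ($\size$ equals the number of edges). In each case the down neighborhood of a given element is obtained by independently removing labeled atoms (elements, vertices, or edges), which is exactly the structure of a Boolean hypercube of dimension $\size(u)$, verifying Axiom 2; Axioms 1 and 3 are immediate.
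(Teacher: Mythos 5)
Your proposal is correct and follows essentially the same route as the paper, whose proof is a short sketch observing that all the wrappers' proofs fix neighbors $u,v$ and reason only about their down neighborhoods, so the arguments transfer verbatim to any poset satisfying the three axioms. You simply work out that observation in more detail (identifying the abstract operations---order, neighbor relation, $1$-Lipschitz size, and meets/differences inside a common down neighborhood supplied by the hypercube axiom---and checking the examples), which matches the paper's intent.
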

\begin{proof}[Proof Sketch]
    All proofs in Sections \ref{sec:generalized-shifted-inverse}, \ref{sec:extension_based}, and \ref{sec:GenShI-with-nice-noise} proceed by fixing neighbors $u,v\in\dom$ and reasoning about their down neighborhoods. Hence, the statements hold for any partially ordered domain $(\D,\leq)$ that satisfies the above properties. 
\end{proof}

Below, we give some examples of spaces that satisfy the requirements of \Cref{prop:posets}. These spaces are also considered by \cite{FangDY22}. 

\paragraph{Multisets} Let $\D$ be the set of finite multisets of some universe $\cU$ with order given by $\subseteq$. 
For this partial order to satisfy the conditions of \Cref{prop:posets}, we make a syntactic change: we represent each multiset $x\in \D$ as a finite set $\phi(x)$ in $\N\times \cU$, replacing each item $s$ in $x$ with a pair $(i,s)$, for distinct indices $i \in [|x|]$. Every subset $u\subseteq \phi(x)$ can be mapped back to a multiset $\psi(u)$ that is contained in $x$. The map $\psi$ is not injective, but does preserve adjacency and size. Furthermore, any function with domain $\D$ can be viewed as a map whose domain is finite subsets of $\N\times \cU$ via composition with $\psi$. With this change the requirements of \Cref{prop:posets} are satisfied.  

\paragraph{Hypergraphs with node privacy} Define the set of \emph{hypergraphs} $\cG$ as follows: A hypergraph $G\in\cG$ is given by a pair $(V(G),E(G))$ where $V$ is a finite set of \emph{vertices}, and $E(G)$ is a collection of subsets of $V(G)$ called \emph{hyperedges}. Define the order $\leq$ on $\cG$ by $H\leq G$ if $H$ is a vertex induced subgraph of $G$. More formally, $H\leq G$ if $V(H)\subseteq V(G)$ and $E(H) \subseteq E(G)$ is the set of edges $e$ from $ E(G)$ such that
$e\subseteq V(H)$. Then $(\emptyset,\emptyset)$ is the unique minimal element. Define the neighbor relation $\sim$ by $H\sim G$ if $H\leq G$ and $V(H)=V(G)\setminus\{v\}$ for some $v\in V(G)$. For all $G$, the down neighborhood of $G$ under this ordering is isomorphic to the $|V(G)|$ dimensional hypercube, and the requirements of \Cref{prop:posets} are satisfied.

\fi %

\addcontentsline{toc}{section}{Acknowledgments}
\section*{Acknowledgments}

We are grateful to Jonathan Ullman for helpful conversations and discussion of our results, and notably their application to parameter estimation in Erdős–Rényi graphs.

\addcontentsline{toc}{section}{References}

\ifnum\stoc=1
\bibliographystyle{ACM-Reference-Format}
\else
\bibliographystyle{alpha}
\fi
\bibliography{references}

\ifnum\stoc = 0 %

\section*{Appendix}
\appendix

\section{Applications of Our Privacy Wrappers}
\label{sec:applications}

\subsection{Average of Real-valued Data}
\label{sec:average}
As a simple example, we consider computing the average of real-valued data. In particular, \Cref{cor:average} states that if all data points lie in in an interval $\pm\sigma$ around the mean $\mu$, then our privacy wrappers will release a very accurate answer. We consider two privacy wrappers given query access to the average function $\avg$.

\begin{corollary}[Average of real-valued data]
    \label{cor:average}
    Let $\consta>0$ be a sufficiently large constant. Fix parameters $\eps,\delta\in(0,1)$. Let $\cW_1$ and $\cW_2$ denote the $(\eps,\delta)$-privacy wrappers given by \Cref{thm:generalized-shifted-inverse-mechanism} (for the automated sensitivity detection setting) and \Cref{thm: subset extension} (for the claimed sensitivity bound setting), respectively.
    Set $\lambda_1=\frac1\eps\log\frac1\delta\cdot \exp\paren{{\consta\log^*|\cY|}}$ and $\lambda_2=\frac\consta\eps\log\frac1\delta$, and let $\sigma\geq 0$. For every  $x\in\univ^n$ such that $|u-v|\leq \sigma$ for all $u,v\in x$:
    \begin{enumerate}
    \item\label{cor:average-1} If $n\geq \lambda_1$ then
    \(
    \cW_1^{\avg}(x)\in\brackets{\avg(x)\pm \frac{\consta\lambda_1\sigma}{n-\lambda_1}}.
    \)
    \item\label{cor:average-2} If $n\geq \lambda_2$ then
    \(       \cW_2^{\avg}(x,\sigma)=\avg(x)+\Lap\bparen{\frac{\consta\sigma}{\eps(n-\lambda_2)}}.
    \)
    \end{enumerate}
\end{corollary}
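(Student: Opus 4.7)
The proof proceeds by bounding the down sensitivity and the local Lipschitz constant of $\avg$ on $\DN_\lambda(x)$, and then plugging these into \Cref{thm:generalized-shifted-inverse-mechanism} and \Cref{thm: subset extension} respectively. Both bounds rest on a single observation: since the diameter of $x$ is at most $\sigma$, every element of $x$ (and every convex combination of its elements) lies within $\sigma$ of $\mu \defeq \avg(x)$.

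To bound the down sensitivity, I would fix an arbitrary $z = x \setminus S \in \DN_\lambda(x)$ with $|S| \le \lambda$ and write
\[
\avg(z) - \mu \;=\; \frac{\sum_{v \in z}(v - \mu)}{|z|} \;=\; \frac{-\sum_{v \in S}(v - \mu)}{n - |S|},
\]
using $\sum_{v \in x}(v - \mu) = 0$. Bounding $|v - \mu| \le \sigma$ and using $|S| \le \lambda < n$ then yields $DS_\lambda^{\avg}(x) \le \frac{\lambda \sigma}{n - \lambda}$. For the local Lipschitz constant, for neighbors $z \subset z' = z \cup \{v\}$ in $\DN_\lambda(x)$, a short calculation gives $\avg(z') - \avg(z) = \frac{v - \avg(z)}{|z'|}$, and since $v$ and $\avg(z)$ both lie in an interval of length $\sigma$ containing $x$, we get $|\avg(z') - \avg(z)| \le \frac{\sigma}{|z'|} \le \frac{\sigma}{n - \lambda}$. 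Hence $\avg$ is $\frac{\sigma}{n - \lambda}$-Lipschitz on $\DN_\lambda(x)$.

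With these two bounds in hand, Part~\ref{cor:average-1} follows from \Cref{thm:generalized-shifted-inverse-mechanism}: with probability at least $1-\beta$, the output $\cW_1^{\avg}(x)$ lies in $[\min \avg(\DN_{\lambda_1}(x)), \max \avg(\DN_{\lambda_1}(x))]$, and since $\avg(x)$ also lies in this interval, the error is at most the diameter of the image, which is at most $2\,DS_{\lambda_1}^{\avg}(x) \le \frac{2 \lambda_1 \sigma}{n - \lambda_1}$. Part~\ref{cor:average-2} follows from \Cref{thm: subset extension} invoked with claimed Lipschitz constant $c = \sigma/(n - \lambda_2)$, which the analyst can compute from $\sigma$, $n$, and the wrapper's fixed parameters determining $\lambda_2$; this gives output $\avg(x) + \Lap(ac/\eps) = \avg(x) + \Lap\bparen{\frac{a \sigma}{\eps(n - \lambda_2)}}$.

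The only subtlety I anticipate is threading the parameters correctly: for Part~\ref{cor:average-1}, the wrapper requires a finite range $\cY$, and so one must discretize the set of possible averages (e.g., to a grid at resolution $\sigma/n$ within a public interval), absorbing the discretization error into the constant $\consta$; for Part~\ref{cor:average-2}, the analyst's claimed constant is the \emph{local} Lipschitz constant on $\DN_{\lambda_2}(x)$, not the (unbounded) global sensitivity of $\avg$, which is precisely the point of working with down-local wrappers.
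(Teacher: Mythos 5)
Your proposal is correct and follows essentially the same route as the paper's proof: the paper likewise bounds $DS^{\avg}_{\lambda_1}(x)$ by $\frac{\lambda_1\sigma}{n-\lambda_1}$ (via the centering argument $\sum_{v\in x}(v-\avg(x))=0$) and the Lipschitz constant of $\avg$ on $\DN_{\lambda_2}(x)$ by $O\bparen{\frac{\sigma}{n-\lambda_2}}$, then invokes \Cref{thm:generalized-shifted-inverse-mechanism} and \Cref{thm: subset extension} respectively. Your additional remarks (discretizing the range for $\cW_1$ and using the local rather than global Lipschitz constant for $\cW_2$) are sound elaborations of details the paper leaves implicit.
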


The first wrapper has no knowledge of $\sigma$; it adapts automatically to the scale of the data. The second wrapper requires $\sigma$ as input, but provides a stronger guarantee on the output distribution---it is symmetric around $\avg(x)$ with a known distribution. 

\begin{proof}
    \Cref{cor:average-1} follows by observing the $\lambda_1$-down sensitivity of the average function is at most $\frac{\lambda_1\sigma}{n-\lambda_1}$. 
    (To see why, let $y$ be a subset of $x$ of size $n-\lambda_1$. Without loss of generality, assume that $\avg(x)=0$. The sum of elements in $y$ lies in $[-\sigma\lambda_1, \sigma\lambda_1]$, and thus the absolute value of its average is at most $\frac{\lambda_1\sigma}{n-\lambda_1}$.)
    We can then apply the guarantee of \Cref{thm:generalized-shifted-inverse-mechanism}. \Cref{cor:average-2} follows from observing that the Lipschitz constant of the average function on $\DN_{\lambda_2}(x)$ is $O(\frac{\sigma}{n-\lambda_2})$, and applying \Cref{thm: subset extension}.
\end{proof}

\subsection{User-Level Private Convex Optimization in One Dimension}
\label{sec:ERM}

In this section, we show how our privacy wrappers immediately yield improvements upon the convex optimization algorithms of \cite{GhaziKKMMZ23-user-sco} for one dimensional parameter spaces. Before stating our improvements, we first define user-level privacy and convex optimization. Define a \emph{dataset collection} $x$ as a set 
$\{x_1,\dots, x_n\}$ of smaller datasets corresponding to individual users, where each dataset $x_i$ is a set of $m$ elements $\{x_{i,1},\dots,x_{i,m}\}$ from an arbitrary set $\univ$---that is $x\in (\univ^m)^n$. We say two dataset collections $x,x'$ are neighbors if one can be obtained from the other by deleting the data of exactly one user. Additionally, we define \emph{$(\eps,\delta)$-user level privacy} as $(\eps,\delta)$-differential privacy with respect to the aforementioned notion of neighboring dataset collections. Next, we define convex optimization, the particular problem we will focus on is known as empirical risk minimization. Using the terminology of \cite{GhaziKKMMZ23-user-sco}, a convex optimization problem over parameter space $\cY\subseteq\R^d$ and domain $\univ$, is specified by a \emph{loss function} $\ell:\cY\times\univ\to\R$ that is convex in the first argument. The loss function $\ell$ is $G$-Lipschitz if $\norm{\nabla_\theta\ell(\theta,v)}{}\leq G$ for all $\theta\in\cY$ and $v\in\univ$. Solving the \emph{empirical risk minimization} (ERM) problem corresponds to minimizing the \emph{empirical loss}, defined by $\cL(\theta,x)=\frac{1}{nm}\sum_{i\in[n]}\sum_{j\in[m]}\ell(\theta,x_{i,j})$. 

We state the main result of \cite{GhaziKKMMZ23-user-sco} below. Informally, Theorem 4.1 of \cite{GhaziKKMMZ23-user-sco} provides an algorithm for empirical risk minimization that satisfies differential privacy at the user level and requires a number of users that is independent of the dimension.\footnote{In fact, \cite{GhaziKKMMZ23-user-sco} also provide guarantees for stochastic convex optimization. Our privacy wrappers yield an identical improvement in this setting, but we will focus on empirical risk minimization for simplicity of presentation.}. Let $S_{n,m}$ be the set of permutations over $[n]\times[m]$, and for each $\pi\in S_{n, m}$ and $x\in(\univ^m)^n$ let $x^\pi$ denote the dataset collection obtained by reassigning the data of users $x_1,\dots x_n$ according to $\pi$---that is, send each element $x_{i,j}\to x_{\pi(i,j)}$. Additionally, assume $\cY$ has $\ell_2$ diameter at most~$R$.

\begin{theorem}[Theorem 4.1 \cite{GhaziKKMMZ23-user-sco}]
    \label{thm:ERM-GKKMMZ23}
    For any $G$-Lipschitz loss $\ell$ and parameter space $\cY\subset\R^{\color{blue} d}$, there exists an $(\eps,\delta)$-user level DP mechanism $\cM$ that, for all $n\geq \widetilde{\Omega}\paren{\frac{\log(1/\delta)\log(m)}{\eps}}$, outputs $\hat\theta\in\cY$ such that for all $x\in(\univ^m)^n$
    \[
    \Ex_{\substack{\pi\sim S_{n,m} \\ \hat\theta\gets \cM(x^\pi)}}\brackets{\cL(\hat\theta, x^\pi)}-\cL(\theta^*, x^\pi)\leq O\paren{\frac{RG}{n}\sqrt{\frac{d\log n}{m}}\cdot{\color{blue} \log\paren{\frac{nm}{\delta}}^{2}\paren{\frac{\log(nm)}{\eps}}^{5/2}}}.
    \]
\end{theorem}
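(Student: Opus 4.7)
The plan is to reconstruct the argument of \cite{GhaziKKMMZ23-user-sco}, which at a high level runs noisy projected gradient descent where gradients are formed at the user level rather than the sample level. First, for each user $i\in[n]$ and iterate $\theta_t\in\cY$, I would form the per-user average gradient $g_i(\theta_t)=\frac{1}{m}\sum_{j\in[m]}\nabla_\theta\ell(\theta_t,x_{i,j})$. Since $\ell$ is $G$-Lipschitz in its first argument, each $g_i(\theta_t)$ has $\ell_2$-norm at most $G$, so swapping one user's data changes the average $\frac{1}{n}\sum_i g_i(\theta_t)$ by at most $2G/n$. A single round of the algorithm can then be released using Gaussian noise of scale $O\bigl(G\sqrt{\log(1/\delta)}/(n\eps)\bigr)$ per coordinate while maintaining $(\eps,\delta)$-user-level differential privacy in that round.

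Second, I would run $T=\Theta(\log(nm))$ rounds of this noisy projected gradient descent with an appropriately tuned constant step size. The standard analysis for a $G$-Lipschitz convex objective over a radius-$R$ domain gives expected excess empirical risk $O\bigl(RG/\sqrt{T}+R\sqrt{d}\,\sigma\bigr)$, where $\sigma$ is the per-coordinate noise scale; choosing $T$ balances these two contributions. The randomization over the permutation $\pi\sim S_{n,m}$ plays a central role here: it makes the per-user gradients into i.i.d.\ averages over $m$ samples, which concentrate around the population gradient with standard deviation $O(G/\sqrt{m})$, and this is the source of the $\sqrt{1/m}$ improvement over naively clipping per-sample gradients at scale $G/(nm)$.

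Third, the privacy accounting across the $T$ rounds proceeds by advanced composition (or the moments accountant), which inflates the per-round budget by roughly $\sqrt{T\log(1/\delta)}$; this is where the $\eps^{-5/2}\log^{2}(nm/\delta)$ dependence in the final rate arises. The hard part will be simultaneously balancing three sources of error---the optimization error from finite $T$, the Gaussian noise accumulated across rounds, and the statistical error from having only $m$ samples per user---while respecting the user-count floor $n\geq\widetilde{\Omega}(\log(1/\delta)\log(m)/\eps)$, which is precisely the regime where per-round noise can be made small compared to $G$ on the relevant scale. Additional care is needed to handle the projection onto $\cY$ cleanly (so that noise does not blow up after projection) and to treat the coupling between users and samples induced by $\pi$, so that the concentration bound on $g_i(\theta_t)-\nabla\cL(\theta_t,x^\pi)$ holds uniformly across all $T$ iterates.
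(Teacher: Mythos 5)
This theorem is quoted from \cite{GhaziKKMMZ23-user-sco}; the paper does not reprove it, but it does describe the proof's architecture, and your reconstruction diverges from it in a way that matters. The actual argument is \emph{not} user-level DP-SGD with noisy averaged gradients and advanced composition over $T=\Theta(\log(nm))$ rounds. It is built on a higher-dimensional analogue of the Kohli--Laskowski privacy wrapper (Theorem 3.3 of \cite{GhaziKKMMZ23-user-sco}): one bounds the deletion-sensitivity (Lipschitz constant on the down-neighborhood $\DN_\lambda(x^\pi)$) of the exact minimizer $\theta^*$ itself --- with high probability over $\pi$, removing $\lambda$ users moves $\theta^*$ by only $O\bigl(\tfrac{G}{sn}\sqrt{\tfrac{\lambda\log n}{m}}\bigr)$ for regularized losses --- and then releases $\theta^*$ through the wrapper as an output-perturbation step inside a regularization/phasing scheme for general convex losses. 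The $\eps^{-5/2}\log^2(nm/\delta)$ factor comes from that wrapper's noise scale $\Theta\bigl(\tfrac{1}{\eps^2}\log\tfrac1\delta\bigr)$ (see the TAHOE row of the paper's claimed-sensitivity table), which is precisely why this paper can improve the exponent to $\eps^{-3/2}\log\tfrac1\delta$ by swapping in its \algSE\ wrapper; your DP-SGD route would not admit that substitution and is therefore not the argument being invoked.

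Beyond being a different route, your plan has a concrete gap: you calibrate the per-round Gaussian noise to the worst-case user-swap sensitivity $2G/n$ of the averaged gradient, and then appeal to concentration of per-user gradients (std $O(G/\sqrt m)$) as ``the source of the $1/\sqrt m$ improvement.'' But differential privacy must be enforced against worst-case neighboring datasets; concentration of the honest data does not shrink the sensitivity you must add noise for, so noise at scale $G/(n\eps)$ per round yields excess risk $\widetilde O\bigl(\tfrac{RG\sqrt d}{\eps n}\bigr)$, missing the crucial $1/\sqrt m$ factor in the theorem. Converting the high-probability concentration of user-level statistics into a reduced \emph{effective} sensitivity is exactly the nontrivial step, and it requires a propose-test-release / down-local wrapper mechanism (the role played by Theorem 3.3 in \cite{GhaziKKMMZ23-user-sco}, and by the privacy wrappers of this paper), not plain Gaussian noise plus advanced composition. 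Relatedly, your accounting attributes the $\eps^{-5/2}$ and $\log^2$ factors to composition over $T$ rounds, which does not match where they actually arise.
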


One of the main techniques employed by \cite{GhaziKKMMZ23-user-sco} is a higher dimensional analogue of the privacy wrapper of \cite{KohliL23}; the particular guarantees can be found in Theorem 3.3 of \cite{GhaziKKMMZ23-user-sco}. Recall that \Cref{thm: subset extension} provides a privacy wrapper for real-valued functions with locality $\lambda=O(\frac1\eps\log\frac1\delta)$, that outputs $f(x)+\Lap\paren{O(\frac1\eps)}$ whenever $f$ is Lipschitz on $\DN_\lambda(x)$. Hence, we can directly substitute the algorithm given by our \Cref{thm: subset extension} for the algorithm given by Theorem 3.3 of \cite{GhaziKKMMZ23-user-sco}. This immediately yields the following improvement to \Cref{thm:ERM-GKKMMZ23} for one-dimensional parameter spaces:

\begin{theorem}[Improved ERM in one-dimension via \Cref{thm: subset extension}]
    \label{thm:ERM-SE}
    For any $G$-Lipschitz loss $\ell$ and parameter space $\cY\subset\R$, there exists an $(\eps,\delta)$-user level DP mechanism $\cM$ that, for all $n\geq \widetilde{\Omega}\paren{\frac{\log(1/\delta)\log(m)}{\eps}}$, outputs $\hat\theta\in\cY$ such that for all $x\in(\univ^m)^n$
    \[
    \Ex_{\substack{\pi\sim S_{n,m} \\ \hat\theta\gets \cM(x^\pi)}}\brackets{\cL(\hat\theta, x^\pi)}-\cL(\theta^*, x^\pi)\leq O\paren{\frac{RG}{n}\sqrt{\frac{\log n}{m}}\cdot{\color{blue} \log\paren{\frac{nm}{\delta}}\paren{\frac{\log(nm)}{\eps}}^{3/2}}}.
    \]
\end{theorem}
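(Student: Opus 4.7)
The plan is to treat this as a direct substitution argument: the one-dimensional parameter space assumption lets us replace the higher-dimensional TAHOE-style subroutine (Theorem 3.3 of \cite{GhaziKKMMZ23-user-sco}) with \algSE{} (\Cref{thm: subset extension}) as a black box, keeping the rest of the GKKMMZ algorithm and its outer analysis unchanged. First, I would unpack the GKKMMZ ERM algorithm just far enough to isolate the role of their private subroutine: for $d=1$ it boils down to privately releasing a real-valued statistic (an average gradient or an average loss at a candidate $\theta\in\cY\subseteq\R$) over the $n$ user groups, where each user contributes a single aggregated value and the statistic is $c$-Lipschitz at the user level for $c=O(G/n)$. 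Once identified, I would argue that, because both Theorem 3.3 of \cite{GhaziKKMMZ23-user-sco} and \algSE{} are $(\eps,\delta)$-privacy wrappers with the same down-locality $\lambda=O\bparen{\tfrac{1}{\eps}\log\tfrac{1}{\delta}}$, the privacy accounting (composition across candidates, subsampling, user-level adjacency) of the outer algorithm goes through verbatim, and the only change is the distribution of the noise injected at each invocation.

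Second, I would propagate the improved noise scale through the utility bound. GKKMMZ invoke their subroutine with a $c$-Lipschitz guarantee at user-level scale $c=O(G/n)$ and incur noise of scale $\Theta\bparen{\tfrac{\lambda c}{\eps}}=\Theta\bparen{\tfrac{c}{\eps^2}\log\tfrac{1}{\delta}}$ per query, which contributes the $(\log(nm)/\eps)^{5/2}\log^2(nm/\delta)$ prefactor in \Cref{thm:ERM-GKKMMZ23}. Substituting \algSE{} replaces this per-query noise by $\Lap\bparen{\tfrac{a\cdot c}{\eps}}$, i.e.\ scale $O(c/\eps)$, saving a factor of $\tfrac{\log(1/\delta)}{\eps}$ per invocation. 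Since the GKKMMZ utility analysis bounds the contribution of the subroutine to the excess empirical loss linearly in the subroutine's noise scale (modulo $\log$ factors from union-bounding over the outer iterations), this immediately improves the prefactor to $(\log(nm)/\eps)^{3/2}\log(nm/\delta)$, which is exactly the bound claimed in \Cref{thm:ERM-SE}. The remaining $\sqrt{\log n/m}$ factor comes from setting $d=1$ in their Theorem 4.1 and is not affected by the substitution.

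The main obstacle is verifying that the GKKMMZ analysis really does use their subroutine as a pure black box in the one-dimensional case; concretely, I would need to check two things. (i) That for $d=1$ their vector-valued subroutine is invoked on a genuinely scalar query (rather than, e.g., a two-coordinate query that would require tighter composition than a single Laplace draw provides), so that \algSE{} is a dimension-matched replacement. (ii) That their utility lemmas only access the noise distribution through its magnitude or its subexponential tail (which both $\Lap(\lambda c/\eps)$ and $\Lap(a c/\eps)$ satisfy with appropriate scale), so that replacing one Laplace variable with another of smaller scale does not invalidate any concentration step. If either fact fails, I would fall back to reproving the relevant lemma in the scalar setting using the distributional guarantee of \Cref{thm: subset extension}, which is cleaner than the TAHOE guarantee because \algSE{} returns exactly $f(x)+\Lap(a c/\eps)$ whenever $f$ is $c$-Lipschitz on $\DN_\lambda(x)$, so the conditional distribution of the output given that all "stability tests'' pass is a known symmetric Laplace centered at $f(x)$.
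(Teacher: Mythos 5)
Your proposal takes essentially the same route as the paper: the paper's proof is exactly this black-box substitution, replacing Theorem~3.3 of \cite{GhaziKKMMZ23-user-sco} (their TAHOE-style wrapper, used to build their ``output perturbation'' subroutine) with the \algSE{} wrapper of \Cref{thm: subset extension}, and letting the improved noise scale $O(c/\eps)$ in place of $\Theta(\lambda c/\eps)$ propagate through their outer utility analysis unchanged. One small correction to your step (i): in the one-dimensional case the wrapped scalar statistic is the (regularized) empirical minimizer $\theta^*$ itself, whose Lipschitz constant on $\DN_\lambda(x^\pi)$ is controlled by GKKMMZ's Corollary~3.7, rather than an average gradient or loss value, but this does not affect the substitution argument.
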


In fact, since \cite{GhaziKKMMZ23-user-sco}'s proof of \Cref{thm:ERM-GKKMMZ23} only uses a bound on the magnitude of the noise added by their privacy wrapper---that is, it does not require unbiased noise---we can obtain an improvement similar to that of \Cref{thm:ERM-SE} via the automated sensitivity detection privacy wrapper of \Cref{thm:generalized-shifted-inverse-mechanism}. 

\begin{theorem}[Improved ERM in one-dimension via \Cref{thm:generalized-shifted-inverse-mechanism}]
    \label{thm:ERM-autosense}
    For any $G$-Lipschitz loss $\ell$ and parameter space $\cY\subset\R$, there exists an $(\eps,\delta)$-user level DP mechanism $\cM$ that, for all $n\geq \widetilde{\Omega}\paren{\frac{\log(1/\delta)\log(m)}{\eps}}$, outputs $\hat\theta\in\cY$ such that for all $x\in(\univ^m)^n$
    \[
    \Ex_{\substack{\pi\sim S_{n,m} \\ \hat\theta\gets \cM(x^\pi)}}\brackets{\cL(\hat\theta, x^\pi)}-\cL(\theta^*, x^\pi)\leq \frac{RG}{n}\sqrt{\frac{\log n}{m}}\cdot{\color{blue} \frac{\log(nm/\delta)\log(nm)}{\eps}\cdot\exp\paren{{O\paren{\log^*|\cY|}}}}.
    \]
\end{theorem}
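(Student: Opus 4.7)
My plan is to follow the overall structure of the proof of Theorem 4.1 of \cite{GhaziKKMMZ23-user-sco}, but to replace their black-box privacy subroutine (their Theorem 3.3, a higher-dimensional generalization of the TAHOE wrapper) with the one-dimensional \ASDalgshort\ mechanism from \Cref{thm:generalized-shifted-inverse-mechanism}. As the authors note in the discussion preceding the theorem, the GKKMMZ23 analysis relies only on an upper bound on the \emph{magnitude} of the noise introduced by the privacy wrapper, not on its distributional shape; consequently, we can feed the GKKMMZ23 outer procedure with the output of a privacy wrapper whose noise is asymmetric, provided that we can bound its deviation from the true value with high probability.

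Concretely, the proof would proceed as follows. First, I would import the GKKMMZ23 reduction unchanged: their algorithm invokes the privacy wrapper on a carefully constructed real-valued (coordinate-wise) function of the shuffled dataset $x^\pi$, and their utility analysis tracks how the wrapper's noise propagates through their gradient-descent-style iteration. Since $\cY\subseteq\R$, each such call only needs a scalar-valued wrapper. Second, I would plug in \ASDalgshort\ with privacy parameters matching those of their wrapper invocations, noting that its locality for $(\eps,\delta)$-DP is $\lambda_{\ASDalgshort} = \frac{1}{\eps}\log\frac{1}{\delta}\cdot\exp(O(\log^*|\cY|))$. By \Cref{thm:generalized-shifted-inverse-mechanism}, the output falls inside $[\min f(\DN_\lambda(x)), \max f(\DN_\lambda(x))]$, so the error is bounded by $DS^f_{\lambda_{\ASDalgshort}}(x)$.

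The third step, which is where the work really happens, is to bound $DS^f_{\lambda_{\ASDalgshort}}(x)$ in the GKKMMZ23 setting. Their paper proves (as one of their key technical lemmas, used for $\lambda \geq |x|/2$) that when the per-example loss $\ell$ is $G$-Lipschitz in $\theta$ and the dataset consists of $n\geq \widetilde\Omega(\log(1/\delta)\log m/\eps)$ users each holding $m$ samples, the empirical objective the wrapper is applied to has Lipschitz constant at most $O\bigl(\frac{RG}{nm}\sqrt{m\log n}\bigr)$ (or its per-iterate analogue) on the relevant down-neighborhood. I would observe that this bound extends directly to any down-neighborhood of depth $\lambda \leq n/2$, with only cosmetic changes to their union-bound argument over removed users. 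This is the ``straightforward extension of one of the key tools'' flagged in the introduction. Chained with the Lipschitz-to-down-sensitivity bound $DS^f_\lambda(x)\leq \lambda\cdot c$ when $f$ is $c$-Lipschitz on $\DN_\lambda(x)$, this yields an error contribution of order $\lambda_{\ASDalgshort} \cdot \frac{RG}{nm}\sqrt{m\log n}$ per wrapper invocation.

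Fourth, I would substitute this error bound in place of the noise magnitude term in GKKMMZ23's end-to-end utility calculation and simplify. The noise scale in their analysis enters linearly (it is summed across the number of iterations, which is at most $O(\log(nm))$ in their algorithm), so replacing their wrapper's $O(\log(1/\delta)\cdot(\log(nm)/\eps)^{3/2})$ factor with our $\lambda_{\ASDalgshort} = \frac{\log(1/\delta)}{\eps}\exp(O(\log^*|\cY|))$ gives the claimed $\frac{\log(nm/\delta)\log(nm)}{\eps}\cdot\exp(O(\log^*|\cY|))$ factor after absorbing polylogarithmic terms into the $\log(nm/\delta)$ factor. The $n \geq \widetilde\Omega(\log(1/\delta)\log m/\eps)$ condition is preserved because $\exp(O(\log^*|\cY|))$ grows slower than any function we care about.

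The main obstacle I foresee is the third step: I need to carefully re-examine the GKKMMZ23 Lipschitz-control lemma to confirm that it indeed extends to the smaller depth $\lambda = \lambda_{\ASDalgshort}$ (which may be substantially smaller than $n/2$) with the same $\frac{RG}{nm}\sqrt{m\log n}$ constant, rather than a larger one that depends on $\lambda$. If the lemma degrades for small $\lambda$, one would need to either iterate the wrapper at a coarser granularity or strengthen the sample-complexity hypothesis. A secondary, more mechanical obstacle is verifying that every step of the GKKMMZ23 convergence analysis really is noise-magnitude-only and does not covertly rely on symmetry or zero-mean of the wrapper output; any such dependence would require absorbing a small bias term into the utility bound, which should only affect constants.
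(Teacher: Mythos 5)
Your overall architecture matches the paper's: keep the GKKMMZ23 outer reduction, substitute \ASDalgshort{} for their wrapper in the output-perturbation step, note that their analysis only needs a magnitude bound on the wrapper's error, and then bound the relevant down sensitivity at depth $\lambda=\frac1\eps\log\frac1\delta\cdot\exp(O(\log^*|\cY|))$. The gap is exactly at the step you flag as your ``main obstacle,'' and it does not resolve the way you hope. GKKMMZ23's Corollary~3.7 bounds the Lipschitz constant of $\theta^*$ on $\DN_\lambda(x^\pi)$ by $O\bparen{\frac{G}{sn}\sqrt{\frac{\lambda\log n+\log(1/\beta)}{m}}}$ --- the $\sqrt{\lambda\log n}$ term is forced by the union bound over the $n^{O(\lambda)}$ subsets in $\DN_\lambda(x^\pi)$, so the constant is \emph{not} $\lambda$-independent even after ``cosmetic changes.'' Chaining this with $DS^f_\lambda(x)\le \lambda c$ therefore gives a down-sensitivity bound of order $\frac{G\lambda^{3/2}}{sn}\sqrt{\frac{\log n}{m}}$, i.e.\ an extra $\sqrt{\lambda}\approx\sqrt{\log(1/\delta)/\eps}$ over the theorem's target, and you would only recover an $\eps^{-3/2}$-type bound comparable to \Cref{thm:ERM-SE}, not the claimed $\frac1\eps\log(nm/\delta)$ dependence. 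Your suggested fallbacks (coarser granularity, stronger sample-complexity hypothesis) would change the statement rather than prove it.

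The paper closes this gap with \Cref{lem:smallDS}, which bounds the down sensitivity \emph{directly} instead of via Lipschitz-constant-times-depth. For $z\in\DN_\lambda(x^\pi)$ it uses the first-order optimality condition $\nabla\cL(\theta^*,x^\pi)=0$ to write $\norm{\nabla\cL(\theta^*,z)}{}=\frac{1}{(n-\lambda)m}\norm{\sum_{u\in x^\pi\setminus z}\nabla\ell(\theta^*,u)}{}$, applies GKKMMZ23's concentration lemma (their Lemma~3.8) once to the gradient sum over the $\lambda$ removed users, converts to $|\theta^*(x^\pi)-\theta^*(z)|$ via the strong-convexity hypothesis on $\ell$, and union bounds over the $n^{O(\lambda)}$ choices of $z$. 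This yields $|\theta^*(x^\pi)-\theta^*(z)|\le O\bparen{\frac{G\lambda}{sn}\sqrt{\frac{\log(n/\beta)}{m}}}$, saving the $\sqrt{\lambda}$ factor that naive chaining loses, and it is precisely this one-shot argument (rather than an extension of the Lipschitz bound with an unchanged constant) that makes the stated bound go through. To complete your proof you would need to supply this lemma or an equivalent; the rest of your outline (including the check that GKKMMZ23 never uses zero-mean or symmetric noise) is consistent with what the paper does.
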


In the remainder of the section, we explain how to modify the proofs of \cite{GhaziKKMMZ23-user-sco}, in order to obtain \Cref{thm:ERM-SE,thm:ERM-autosense}. We encourage the reader to familiarize themselves with the proof of Theorem 4.1 in \cite{GhaziKKMMZ23-user-sco}. 

\Cref{thm:ERM-SE} follows immediately by substituting the guarantees given by \Cref{thm: subset extension} for the guarantees given by \cite{GhaziKKMMZ23-user-sco} Theorem 3.3, in their proof of \Cref{thm:ERM-GKKMMZ23}. In particular, \cite{GhaziKKMMZ23-user-sco} use Theorem 3.3 to construct an ``output perturbation" algorithm, and subsequently use the output perturbation algorithm to prove \Cref{thm:ERM-GKKMMZ23}. Since \Cref{thm: subset extension} can be used to improve the accuracy guarantee of the output perturbation algorithm, we immediately obtain the corresponding improvement to the algorithm for private empirical risk minimization.

While the proof of \Cref{thm:ERM-SE} is straightforward, the proof of \Cref{thm:ERM-autosense} requires an additional step. In \Cref{lem:smallDS}, we extend Corollary 3.7 of \cite{GhaziKKMMZ23-user-sco} in order to bound the down sensitivity of the optimal solution. Corollary 3.7 of \cite{GhaziKKMMZ23-user-sco} bounds the Lipschitz constant of the optimal solution on $\DN_\lambda(x^\pi)$ by $O\paren{\frac{G}{sn}\sqrt{\frac{\lambda\log(n) +\log(1/\beta)}{m}}}$; however, naively applying a bound on the Lipschitz constant to bound the $\lambda$-down sensitivity yields a bound that is worse by a factor of $\lambda$. In \Cref{lem:smallDS}, we show that a more careful argument allows one to save a factor of $\sqrt\lambda$. 

\begin{lemma}[Extension of \cite{GhaziKKMMZ23-user-sco} Corollary 3.7]
    \label{lem:smallDS}
    Fix $s>0$ and let $\ell$ be an $G$-Lipschitz loss such that for all $u\in\univ$ and $\theta,\theta'\in\cY$ we have $|\nabla\ell(u,\theta)-\nabla\ell(u,\theta')|\geq s|\theta-\theta'|$. Then for all $x\in(\univ^m)^n$ and $\lambda\leq n/2$, with probability at least $1-\beta$ over a choice of random permutation $\pi\in S_{n,m}$ we have,
    \[
    |\theta^*(x^\pi)-\theta^*(z)|\leq O\paren{\frac{G\lambda}{sn}\sqrt{\frac{\log(n/\beta)}{m}}},
    \]
    for all $z\in\DN_\lambda(x^\pi)$.
\end{lemma}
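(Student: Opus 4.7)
The plan is to use the first-order optimality conditions at both $\theta^*(x^\pi)$ and $\theta^*(z)$, combined with a concentration argument over the random permutation $\pi$. First, observe that $x^\pi$ is merely a relabeling of $x$, so the empirical loss is invariant under $\pi$ and $\theta^* := \theta^*(x^\pi) = \theta^*(x)$ is deterministic. For any $z \in \DN_\lambda(x^\pi)$, let $S \subseteq [n]$ be the set of removed users, so $|S| \leq \lambda$. Subtracting the first-order optimality condition of $\theta^*$ from that of $\theta^*(z)$ and rearranging yields
\[
\sum_{(i,j) \notin S\times[m]} \bigl[\nabla\ell(\theta^*(z), x^\pi_{i,j}) - \nabla\ell(\theta^*, x^\pi_{i,j})\bigr] = -T_S(\pi),
\]
where $T_S(\pi) := \sum_{(i,j)\in S\times[m]} \nabla\ell(\theta^*, x^\pi_{i,j})$. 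Since $\ell$ is convex and its gradient is $s$-strongly monotone (by hypothesis), every summand on the LHS has the same sign as $\theta^*(z) - \theta^*$ with absolute value at least $s|\theta^*(z) - \theta^*|$. This yields the key inequality $s(n-\lambda)m \cdot |\theta^*(z) - \theta^*| \leq |T_S(\pi)|$.

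The second step is to control $|T_S(\pi)|$ uniformly over all $S$ of size $\leq \lambda$ with high probability. For each fixed $S$, the term $T_S(\pi)$ is a sum of $|S|m \leq \lambda m$ values sampled uniformly \emph{without replacement} from the population $\{\nabla\ell(\theta^*, x_k)\}_{k\in[nm]}$; this population has sum zero (by optimality at $\theta^*$) and each entry is $G$-bounded in absolute value (by $G$-Lipschitzness). Hoeffding's inequality for sampling without replacement therefore gives $|T_S(\pi)| \leq O\bigl(G\sqrt{|S|m\log(1/\beta')}\bigr)$ with probability $\geq 1-\beta'$. A union bound over the at most $\binom{n}{\leq \lambda} \leq n^\lambda$ choices of $S$, with $\beta' = \beta/n^\lambda$, then gives, with probability $\geq 1-\beta$, the bound $|T_S(\pi)| \leq O(G\lambda\sqrt{m\log(n/\beta)})$ uniformly in $S$. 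Dividing by $s(n-\lambda)m$ and using $\lambda \leq n/2$ yields the claimed bound.

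The main subtlety---and the source of the improvement over the naive bound---is in how the union bound combines with concentration. The union bound inflates the noise level by only $\sqrt{\log n^\lambda} = \sqrt{\lambda\log n}$, which multiplies the $\sqrt{|S|m} = \sqrt{\lambda m}$ from Hoeffding to give the final $\lambda\sqrt{m\log(n/\beta)}$ in the numerator. By contrast, naively invoking Corollary 3.7 of \cite{GhaziKKMMZ23-user-sco} as a Lipschitz bound and multiplying by $\lambda$ would effectively square-root the $\lambda$ inside the logarithm \emph{and} multiply by $\lambda$ outside, yielding a bound worse by a $\sqrt{\lambda}$ factor. The one delicate step in execution is verifying that a Hoeffding-type (or bounded-differences) inequality for sampling without replacement delivers precisely the $\sqrt{|S|m}$ dependence we need, rather than picking up extra logarithmic factors from the permutation structure; this is a standard consequence of Hoeffding's original treatment or of Serfling's concentration bound.
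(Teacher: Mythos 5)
Your proof is correct and follows essentially the same route as the paper's: the paper likewise subtracts the first-order optimality conditions (its analogue of Equation 6 in \cite{GhaziKKMMZ23-user-sco}), bounds the sum of the removed users' gradients via concentration over the random permutation (citing Lemma 3.8 of \cite{GhaziKKMMZ23-user-sco}, which plays the role of your Hoeffding--Serfling bound), converts the gradient bound to a parameter bound via the strong-monotonicity hypothesis on $\nabla\ell$, and union bounds over the $n^{O(\lambda)}$ subsets in $\DN_\lambda(x^\pi)$. The only difference is that you prove the sampling-without-replacement concentration directly rather than invoking the cited lemma, and you correctly pinpoint that the $\sqrt{\lambda\log n}$ cost of the union bound, rather than a factor of $\lambda$ applied to a per-step Lipschitz bound, is what saves the $\sqrt{\lambda}$.
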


\Cref{thm:ERM-autosense} now follows by applying \Cref{lem:smallDS} to bound the down sensitivity of $\theta^*$, and then using the privacy wrapper of \Cref{thm:generalized-shifted-inverse-mechanism} (for the automated sensitivity detection setting), instead of the privacy wrapper given by Theorem 3.3 of \cite{GhaziKKMMZ23-user-sco}, to construct the output perturbation algorithm given by Theorem 3.1 in their paper. This improves the accuracy guarantees of the output perturbation algorithm, and hence yields the corresponding improvement to the algorithm for private empirical risk minimization. 

To see why \Cref{lem:smallDS} holds, we prove the analogue of \cite{GhaziKKMMZ23-user-sco} Equation 6 in our setting. The remainder of the proof is identical. Fix $x\in(\univ^m)^n$, and let $\theta^*=\theta^*(x)$. By the definition of $\cL$, for all $z\in\DN_\lambda(x)$ we have
\[
\nabla\cL(\theta^*,x)=\frac{n-\lambda}{\lambda}\nabla\cL(\theta^*,z)+\frac{\lambda}{n}\nabla\cL(\theta^*, x\setminus z).
\]
Since $\nabla\cL(\theta^*,x)=0$, we obtain the following version of \cite{GhaziKKMMZ23-user-sco} Equation 6,
\[
\norm{\nabla\cL(\theta^*,z)}{}=\frac{\lambda}{n-\lambda}\norm{\nabla\cL(\theta^*,x\setminus z)}{}=\frac{1}{(n-\lambda)m}\norm{\sum_{u\in x\setminus z}\nabla\ell(\theta^*,u)}{}.
\]
\Cref{lem:smallDS} now follows by first applying \cite{GhaziKKMMZ23-user-sco} Lemma 3.8 to the quantity $\norm{\sum_{u\in x\setminus z}\nabla\ell(\theta^*,u)}{}$, second, bounding $|\theta^*(x)-\theta^*(z)|$ via the hypothesis on $\ell$, and third, applying the union bound over the $n^{O(\lambda)}$ sets $z\in\DN_\lambda(x)$. See the proof of Theorem 3.6 and Corollary 3.7 in \cite{GhaziKKMMZ23-user-sco} for details.

\subsection{Estimating the Density of Random Graphs}
\label{sec:ER-density}

Borgs, Chayes, Smith and Zadik (``BCSZ'') \cite{BorgsCSZ18} give a node-differntially private algorithm that, given a graph drawn from $G(n,p)$ for unknown $p$, produces an estimate $\hat p$ such that 
	\begin{equation}\label{eq:BCSZ-bound}
		\abs{\hat p - p } 
		\leq  
		\frac {\sqrt{p}} n + O\Bparen{\sqrt{\max(p, \frac{\log n}{n})}\cdot \frac{\log^2(1/\beta) }{ \eps n^{3/2} \sqrt{\log n}}} \, ,
	\end{equation}
with  probability $1-\beta$,  when $\beta < 1/n^t$ for sufficiently large $t$. (For constant $p$ and $\beta = 1/poly(n)$, this simplifies to $\frac 1 n + \tilde O(1 / \eps n^{3/2})$.)

Let $e(G)$ denote the edge density of $G$. For a set $S,T$, let $E(S,T)$ denote the number of edges from $S$ to $T$, and $E(S)$ denote the number of edges internal to $S$. Let $e(S)$ denote the edge density within $S$, that is, $e(S)\defeq E(S)/\binom{|S|}{2}$. 

Consider the following subset of graphs on $n$ nodes:
\[
	\mathcal{H}_C = \set{G  \Big| \quad 
    \forall  S\subseteq [n] \text{ s.t. } |S|\leq \frac n 2: \quad
    e(V\setminus S ) \in  \brackets{e(G) \pm C\cdot |S|
    \cdot \sqrt{\max\bparen{e(G),\frac{\log n}{n}}}
    \cdot \sqrt{\frac{\log n}{n^3}}}}.
\]

By standard concentration arguments, graphs drawn from $G(n,p)$ lie in this set with high probability. We use the following statement, from \cite{BorgsCSZ18}, which concerns the related model $G(n,m)$ which is uniformly distributed over graphs on $n$ vertices with exactly $m$ edges.

\begin{lemma}[Corollary of Lemma 9.3 in \cite{BorgsCSZ18}]\label{lem:ER-concentration}
    For all $C>48$ and positive integers $n$ and $m$ with $m\leq \binom n 2$: If $G\sim G(n,m)$ then $G\in \mathcal{H}_C$ with probability at least $1-n^{(C/16)-3}$.
\end{lemma}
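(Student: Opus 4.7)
The strategy is to apply a hypergeometric concentration inequality to the induced edge count $E(V\setminus S)$ for each fixed subset $S$, and then take a union bound over all subsets of size at most $n/2$. Since the lemma is explicitly stated as a corollary of Lemma~9.3 of \cite{BorgsCSZ18}, the cleanest implementation is to invoke their concentration bound directly after identifying the statistic of interest.

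First, fix $S\subseteq V$ with $|S|=s\leq n/2$. Under $G\sim G(n,m)$, the edge set of $G$ is a uniformly random size-$m$ subset of the $N:=\binom{n}{2}$ potential edges. Hence $E(V\setminus S)$ is hypergeometric with population size $N$, sample size $m$, and success count $K:=\binom{n-s}{2}$. Its expectation is $\mu=mK/N$, so the normalized density $e(V\setminus S)=E(V\setminus S)/K$ has expectation exactly $e(G)=m/N=:p$. The inclusion in $\mathcal{H}_C$ therefore reduces to showing
\[
|E(V\setminus S)-\mu| \;\leq\; C\,s\,\sqrt{\max(p,\log n/n)}\cdot\sqrt{\log n/n^{3}}\cdot K,
\]
which after plugging in $K=\Theta(n^2)$ is essentially $t\approx Cs\sqrt{\max(p,\log n/n)\cdot n\log n}$.

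Second, estimate the variance of the hypergeometric: $\mathrm{Var}(E(V\setminus S))=m(K/N)(1-K/N)(N-m)/(N-1)$, and since $1-K/N=\Theta(s/n)$ for $s\leq n/2$, we get $\mathrm{Var}\leq O(psn)$. In the regime $p\geq \log n/n$, a Bernstein-type inequality bounds the above tail by $\exp(-\Omega(C^{2}s\log n))$, since $t^{2}/\mathrm{Var}=\Theta(C^{2}s\log n)$ and the Bernstein linear correction $t/3$ is dominated by $\mathrm{Var}$ exactly when $p\geq \log n/n$. In the complementary sparse regime $p<\log n/n$, the floor in $\max(p,\log n/n)$ enlarges the target deviation, and one invokes an additive Hoeffding bound for the hypergeometric distribution, which yields the same $\exp(-\Omega(C^{2}s\log n))$ tail.

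Third, union bound over $S$. The number of subsets of size $s$ is at most $\binom{n}{s}\leq n^{s}$, so the total failure probability is at most
\[
\sum_{s=1}^{n/2} n^{s}\cdot \exp(-\Omega(C^{2}s\log n)) \;\leq\; \sum_{s\geq 1} n^{-s(C^{2}/c-1)}
\]
for some absolute constant $c$; for $C>48$ this geometric series is dominated by its first term and gives a polynomial-in-$n^{-1}$ bound matching the claimed exponent once $c$ is tracked through Bernstein.

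The main obstacle is precisely this bookkeeping: obtaining the stated $C/16$ exponent (rather than a generic $\Omega(C^{2})$) requires carefully identifying which version of Bernstein/Chernoff is applied in Lemma~9.3 of \cite{BorgsCSZ18}, and handling the two regimes $p\geq \log n/n$ and $p<\log n/n$ with the same constant. Everything else is straightforward bookkeeping on hypergeometric moments and a standard union bound.
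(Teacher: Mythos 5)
You are not being compared against an in-paper argument here: the paper never proves \Cref{lem:ER-concentration}; it imports it verbatim as a corollary of Lemma 9.3 of \cite{BorgsCSZ18}. So your proposal is a from-scratch verification, and its skeleton is the natural one (and presumably close in spirit to what BCSZ do): for each fixed $S$ of size $s\le n/2$, $E(V\setminus S)$ is hypergeometric with population $\binom n2$, sample size $m$, and success count $K=\binom{n-s}{2}$, so $\mathbb{E}[e(V\setminus S)]=e(G)=p$ exactly; then apply a tail bound and union over the at most $n^s$ choices of $S$. Your variance bound $\mathrm{Var}\le O(psn)$, coming from $1-K/\binom n2=\Theta(s/n)$, is also correct.

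The genuine gap is in the sparse regime $p<\log n/n$. The additive Hoeffding bound for the hypergeometric with sample size $m\le \tfrac12 n\log n$ gives $\exp(-2t^2/m)$, and with $t=\Theta(Cs\log n)$ this is only $\exp\bigl(-\Theta(C^2 s^2\log n/n)\bigr)$, which does not beat the union-bound factor $n^{s}$ unless $s\gtrsim n/C^2$; for small $s$ the step fails outright, so it does not yield the claimed $\exp(-\Omega(C^2 s\log n))$. The repair is to use the variance-sensitive Bernstein bound in \emph{both} regimes: in the sparse regime $\mathrm{Var}\le O(s\log n)$ while $t\ge \tfrac{C}{8}s\log n$, and in the dense regime $t\ge\tfrac C8 s\sqrt{pn\log n}$ with $\mathrm{Var}\le psn$; in each case $t/3$ is comparable to $C\cdot\mathrm{Var}$ at the boundary, so Bernstein gives an exponent of order $C s\log n$ (roughly $\tfrac{3C}{16}s\log n$ with explicit constants), \emph{linear} in $C$, not $C^2$. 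This also means your union-bound display with exponent $C^2/c$ is an overclaim, but linearity in $C$ is exactly what the target needs: summing $n^{s}\exp(-\Theta(C)s\log n)$ over $s\ge1$ gives a bound of the form $n^{O(1)-\Theta(C)}$, matching the stated $C/16$-type exponent. (As a side note, the stated failure probability $n^{(C/16)-3}$ must be read as $n^{3-C/16}$; as written it is vacuous for $C>48$.) With the Bernstein substitution and constant tracking, your argument goes through.
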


BCSZ use this lemma along with further steps (a carefully truncated noise distribution and a general extension lemma for $(\eps,0)$ differentially private algorithms \cite{BorgsCSZ18ext}) to obtain a node-private algorithm that takes as additional input an upper bound on the parameter $p$, and achieves the error rate mentioned above. 

Our general results do not require these additional steps. Specifically, \Cref{lem:ER-concentration} provides a bound on the local down sensitivity of the nonprivate estimator $e(\cdot)$. 
Applying our results on automated sensitivity detection (\Cref{thm:generalized-shifted-inverse-mechanism} with $\rangesize = \binom n 2$, since the density can only take on $\binom n 2$ distinct values), and setting $C = \Theta(\log(1/\beta)/\log n)$ in \Cref{lem:ER-concentration}, we obtain the existence of a node-private estimator matching the error of \cite{BorgsCSZ18} (\eqref{eq:BCSZ-bound}), for the setting of $\beta < 1/n^t$ and sufficiently large $t$.

\section{Utility Analysis of Our Version of Kohli-Laskowski's TAHOE}
\label{sec:Kohli-Laskowski}

We analyze the accuracy of a modified version of TAHOE, the privacy wrapper given by \cite{KohliL23}. While their construction is for vector-valued functions, we will focus on the special case of real-valued functions. To facilitate the analysis, we modify TAHOE to use the standard Laplace mechanism instead of the tailored noise distribution from \cite{KohliL23}, and we also use some of our techniques and notation from \Cref{sec:subset-extension}. First, recall that $\stabsetlhf$ denotes the set of subsets of $x$ with size at least $h$ that are $\ell$-stable with respect to $f$ (\Cref{def: stability,def:stab-functions}). And second, we will use \Cref{claim:m-l}, which states that the sizes of the maximum $\ell$-stable subsets of neighboring datasets differ by at most one.  

\begin{proposition}[Modified TAHOE]
\label{prop:KL-analysis}
    Let $a>0$ be a sufficiently large constant. For every universe $\cU$, privacy parameters $\eps>0,\delta\in(0,1)$, and Lipschitz constant $c>0$, there exists an $(\eps,\delta)$-privacy wrapper $\cW$ over $\cU$ with noise distribution $\Lap\paren{\frac {a\cdot c}{\eps^2}\ln\frac1\delta}$ for all $c$-Lipschitz functions $f:\dom\to\R$ and all $x\in\dom$. Moreover, $\cW$ is $O\paren{\frac1\eps\log\frac1\delta}$-down local and has query complexity $|x|^{O\paren{\frac1\eps\log\frac1\delta}}$ for all $x\in\dom$.
\end{proposition}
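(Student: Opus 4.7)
The plan is to re-cast the TAHOE mechanism of \cite{KohliL23} in the propose--test--release format used for \algSE\ in \Cref{sec:subset-extension}, but with a simpler estimator that reports $f(u)$ for a single representative $u$ rather than averaging over $h$. Set $\eps_0 = \eps/3$, $\delta_0 = \delta/2$, fix a sufficiently large constant $q$, and let $\tau = \lceil \tfrac{1}{\eps_0}\ln\tfrac{1}{\delta_0}\rceil$. The mechanism (i)~releases $\ell \gets \lceil |x| - q\tau + R_0\rceil$ for $R_0 \sim \TLap(1/\eps_0,\tau)$; (ii)~samples $h$ from a noisy window around $\tfrac{|x|+\ell}{2}$ via truncated Laplace and releases the stability bit $b \gets \Ind\{\stabsetlhf(x)=\emptyset\}$; (iii)~if $b=0$, selects an arbitrary maximum-size $u\in\stabsetlhf(x)$ and returns $f(u)+Z$ with $Z\sim\Lap(ac\tau/\eps_0)$ for a constant $a$, and returns $\bot$ otherwise.

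For privacy I would analyze each step and compose. The release of $\ell$ is $(\eps_0,\delta_0)$-DP by the truncated Laplace mechanism on the $1$-Lipschitz quantity $|x|$. Conditioning on $\ell$, \Cref{claim:m-l} gives $|\maxstablf(x)-\maxstablf(y)|\leq 1$ for neighbors $x,y$, so the test bit $b$ is $(\eps_0,\delta_0)$-DP by the argument used in \Cref{claim:private_test,claim:bad_set_test_bound}. The crucial structural step is to show that whenever both $x$ and a neighbor $y$ pass the test with the same $(\ell,h)$, any maximum-size $u\in\stabsetlhf(x)$ and $u'\in\stabsetlhf(y)$ satisfy $|u|,|u'|\geq h\geq \tfrac{\ell+|x|}{2}$; the calculation in the proof of \Cref{claim:diameter_bound} then gives $|u\cap u'|\geq \ell$, and combined with $\ell$-stability this yields $|f(u)-f(u')|\leq c(|x|-\ell)=O(c\tau)$. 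Hence $\Lap(ac\tau/\eps_0)$ noise suffices to make the final release $(\eps_0,0)$-DP, and basic composition across the three steps yields $(\eps,\delta)$-DP overall.

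For the distribution guarantee on a $c$-Lipschitz function $f$, I would observe that $R_0\in[-\tau,\tau]$ forces $\ell\leq |x|$ deterministically (for $q\geq 2$), so $x$ itself is $\ell$-stable and $\maxstablf(x)=|x|$. Choosing $q$ large enough, the threshold used in the test lies strictly below $|x|$ for every realization of $R_1$, so the test passes with probability one and the unique maximum-size element of $\stabsetlhf(x)$ is $x$. The output is therefore $f(x)+\Lap(ac\tau/\eps_0)$; absorbing $\eps_0=\eps/3$ and $\delta_0=\delta/2$ gives the claimed noise scale $\tfrac{a'c}{\eps^2}\ln\tfrac{1}{\delta}$ for a constant $a'$. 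Locality $O(\tau)=O(\eps^{-1}\log(1/\delta))$ and query complexity $|x|^{O(\tau)}$ follow because computing $\stabsetlhf(x)$ and identifying a maximum-size $\ell$-stable subset require evaluating $f$ only on subsets of $x$ of size at least $\ell\geq |x|-2q\tau$.

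The main obstacle is calibrating constants so that the test passes deterministically on $c$-Lipschitz inputs (needed so that the output is exactly a Laplace shift of $f(x)$ rather than a mixture with a $\bot$ point mass), while keeping the margins in the test large enough that $f(u)$ has sensitivity $O(c\tau)$ across neighbors that both pass. This is the same quantitative trade-off that appears in \Cref{alg: subset extension} and can be handled by choosing $q$ (e.g., $q=20$, as in our analysis of \algSE) large enough to swallow all the noise magnitudes and truncations used in the first two steps.
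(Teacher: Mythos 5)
Your construction is the same propose--test--release scheme as the paper's Modified TAHOE (\Cref{alg:modified-TAHOE}): release a noisy lower bound $\ell$, privately test whether a large $\ell$-stable subset of $x$ exists using a noisy threshold $h$ (privacy of the bit via the sensitivity-$1$ property of $\maxstablf$ from \Cref{claim:m-l}), and, if the test passes, release $f(u)$ for a maximum-size $\ell$-stable subset $u$ with Laplace noise of scale $\Theta(c\tau/\eps)$, justified by an $O(c\tau)$ diameter bound on $f$ over large stable subsets of neighboring datasets; accuracy, locality, and query complexity are argued exactly as in the paper. One remark on a point you worry about unnecessarily: your estimate appears to depend on the unreleased $h$, which would indeed spoil the ``basic composition'' step, but in fact whenever $\stabsetlhf(x)\neq\emptyset$ its maximum-size elements have size $\maxstablf(x)$ regardless of $h$, so with any deterministic tie-breaking the estimate is a function of $(x,\ell)$ alone; the paper makes this decoupling explicit by selecting $u$ from $\stabset{\ell}{|x|-4\tau}{f}(x)$ with a fixed threshold.

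The genuine gap is the calibration of the test threshold, which you flag but do not resolve correctly. Centering the $h$-window at $\tfrac{|x|+\ell}{2}$ does not give the inequality $h\geq\tfrac{|x|+\ell}{2}$ that your sensitivity argument relies on (with symmetric truncated noise it fails with constant probability), and the propose--test--release case analysis makes matters worse: the deterministic ``good set'' of neighbor pairs must be defined a margin of $\Theta(\tau)$ \emph{below} the smallest possible realization of $h$, so that bad pairs fail the test except with probability $O(\delta)$. With the window centered at the midpoint, good pairs are then only guaranteed $|u|,|u'|\geq\tfrac{|x|+\ell}{2}-\Theta(\tau)$, hence $|u\cap u'|\geq\ell-\Theta(\tau)$ rather than $|u\cap u'|\geq\ell$, and the Lipschitz path through $u\cap u'$ used in the \Cref{claim:diameter_bound}-style calculation is unavailable, so the $O(c\tau)$ bound on $|f(u)-f(u')|$ fails. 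Enlarging $q$ does not repair this, since the deficit is $\Theta(\tau)$ independently of $q$ (the midpoint moves down together with $\ell$). The fix is to place the test threshold $\Theta(\tau)$ \emph{above} the midpoint, which is exactly what the paper does by drawing $h\in[|x|-4\tau,|x|]$ while $\ell\approx|x|-11\tau$, and what \algSE\ does in \Cref{alg: subset extension} with its $+5\tau$ offset; with that single change your argument goes through as written.
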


While \cite{KohliL23} prove privacy guarantees for their construction, they give no formal accuracy guarantees. In \Cref{alg:modified-TAHOE}, we present a modified version of their construction that facilitates the accuracy analysis. We also prove the privacy of our modified version. 

\begin{proof}

Below we present \Cref{alg:modified-TAHOE}, and argue that it is a privacy wrapper with the locality, privacy, and accuracy guarantees stated in \Cref{prop:KL-analysis}.
    
\begin{algorithm}[H]
    	\begin{algorithmic}[1]
    		\caption{\label{alg:modified-TAHOE} Modified TAHOE}
            \Statex \textbf{Parameters:} Privacy parameters $\eps>0$ and $\delta\in(0,1)$
    	    \Statex \textbf{Input:} $x\in\dom$, query access to $f:\dom\to\R$
    	    \Statex \textbf{Output:} $y\in\R\cup\{\perp\}$
            \State set $\eps_0\gets\frac\eps4$ and $\delta_0\gets\delta/3$ and $\tau\gets\lceil\frac1{\eps_0}\ln\frac1{\delta_0}\rceil$
            \State \textbf{release} $\ell\gets|x|-11\tau-r_1$ where $r_1\sim \TLap\paren{\frac1{\eps_0},\tau}$ 
            \Comment{\Cref{def: laplace}}
            \State $h\gets |x|-2\tau-r_2$ where $r_2\sim\TLap\paren{\frac{2}{\eps_0},2\tau}$ \Comment{$h$ is not released}
            \If{$\stabsetlhf(x)=\emptyset$} \Return $\perp$ \Comment{\Cref{def:stab-functions}}
            \Else{} \Return $f(u)+Z$ where $u=\arg\max\{|v| : v\in \stabset{\ell}{|x|-4\tau}{f}(x)\}$, %
            and $Z\sim\Lap\paren{\frac{10\tau}{\eps_0}}$
            \EndIf
            \Comment{If more than one such $u$ exists then pick one arbitrarily}
    	\end{algorithmic}
    \end{algorithm}

We will use the notation $\approx_{\eps,\delta}$ from \Cref{def:indistinguishable}, and $\maxstabl$ from \Cref{def: subset extension proxy} throughout the proof. Let $\cW$ denote \Cref{alg:modified-TAHOE}. To analyze $\cW$, it will be convenient to break the mechanism down in steps as in the proof of \Cref{thm: subset extension}. 
\begin{enumerate}
    \item Let $\cL(x)$ denote the mechanism which releases $\ell\gets|x|-11\tau-r_1$ where $r_1\sim \TLap\paren{\frac1{\eps_0},\tau}$, and let $\widehat\cL(x)$ denote the set of possible outputs of $\cL(x)$. 
\end{enumerate}
Additionally, for all fixed $\ell\in\Z$,
\begin{enumerate}
    \item Let $\cT_\ell(x)$ denote the following mechanism: set $h\gets |x|-2\tau-r_2$ where $r_2\sim\TLap\paren{\frac{2}{\eps_0},2\tau}$; return $b\gets \Ind\brackets{\stabsetlhf(x)\neq\emptyset}$.
    \item Let $\cA_\ell(x)$ be the mechanism which returns $g_\ell(x)+Z$ where $g_\ell(x)=f(u)$ for $u=\arg\max\{|u| : u\in\stabset{\ell}{|x|-4\tau}{f}(x)\cup\emptyset\}$, and $Z\sim\Lap\paren{\frac{10\tau}{\eps_0}}$.
    \item Let $\cP_\ell(x)$ be the mechanism which runs $\cT_\ell(x)$ and outputs $\perp$ if $\cT_\ell(x)=0$ and outputs $\cA_\ell(x)$ otherwise.
\end{enumerate}

By inspection of \Cref{alg:modified-TAHOE}, one can easily see that $\cW(x)$ is equivalent to the following mechanism: Set $\ell\gets\cL(x)$ and output $\cP_\ell(x)$.

\begin{lemma}[Privacy for fixed $\ell$]
    \label{lem:KL_fixed_ell}
    Fix neighbors $x,y\in\dom$ and $\ell\in\widehat\cL(x)\cup\widehat\cL(y)$. Then $\cP_\ell(x)\approx_{3\eps_0,2\delta_0}\cP_\ell(y)$.  
\end{lemma}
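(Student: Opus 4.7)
The plan is to apply a propose-test-release argument, viewing $\cP_\ell$ as the composition of the test step $\cT_\ell$ and the conditional release step $\cA_\ell$. As a first step I would establish that $\cT_\ell$ is $(\eps_0,\delta_0)$-DP. The test bit can be written as $b = \Ind[\maxstabl(x) - |x| + 2\tau + r_2 \geq 0]$ with $r_2 \sim \TLap(2/\eps_0, 2\tau)$. By \Cref{claim:m-l} and the fact that $||x|-|y||\leq 1$, the quantity $\maxstabl(\cdot) - |\cdot|$ has sensitivity at most~$2$, so the truncated Laplace mechanism (\Cref{fact: laplace mechanism}) together with postprocessing (\Cref{fact:postprc}) yields the claimed privacy of the test bit.

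The second step is a case analysis based on a safety condition: call $x$ \emph{safe} iff $\maxstabl(x) \geq |x|-4\tau + 1$, which is exactly the regime where $\Pr[\cT_\ell(x)=1] > 0$ (since $r_2\in[-2\tau,2\tau]$). By \Cref{claim:m-l} and $||x|-|y||\leq 1$, safety is preserved across neighbors up to a shift of at most~$2$. If both $x$ and $y$ are safe, I would bound the sensitivity of the release $g_\ell(x)=f(u_x)$. Assuming WLOG $x \subset y$ with $|y| = |x|+1$, both $|u_x|=\maxstabl(x)$ and $|u_y|=\maxstabl(y)$ are at least $|x| - 4\tau + 1$. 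The key inclusion-exclusion estimate is
\[
|u_x \cap u_y| \geq |u_x| + |u_y| - |y| \geq |x| - 8\tau + 2 \geq \ell,
\]
where the last inequality uses $\ell \leq |y| - 10\tau = |x| - 10\tau + 1$, a consequence of $\ell \in \widehat{\cL}(x) \cup \widehat{\cL}(y)$ and the truncation of $r_1$. Since $u_x$ and $u_y$ are both $\ell$-stable and $u_x\cap u_y$ lies in each of the domains $\{u'\subseteq u_x:|u'|\geq \ell\}$ and $\{u'\subseteq u_y:|u'|\geq \ell\}$, I can chain the Lipschitz bound through the intersection to get $|f(u_x) - f(u_y)| \leq |u_x \triangle u_y| \leq 16\tau$. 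With Laplace noise of scale $10\tau/\eps_0$, this makes $\cA_\ell$ a $(2\eps_0, 0)$-DP release; composing with the $(\eps_0,\delta_0)$-DP test via \Cref{fact:composition} and \Cref{fact:postprc} gives $(3\eps_0, \delta_0)$-DP in this subcase.

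If at least one side is unsafe, WLOG $y$, then $\Pr[\cT_\ell(y)=1] = 0$ and $\cP_\ell(y) = \perp$ almost surely. If $x$ is also unsafe, both outputs equal $\perp$ and indistinguishability is trivial. If $x$ is safe, then DP of the test gives $\Pr[\cT_\ell(x)=1] \leq e^{\eps_0}\cdot 0 + \delta_0 = \delta_0$, so $\cP_\ell(x)$ differs from $\perp$ with probability at most $\delta_0$; on the complementary event both outputs equal $\perp$. Absorbing the $\delta_0$ mass into the privacy budget yields $(\eps_0,2\delta_0)$-DP in this subcase. Taking the worst of the subcases gives $\cP_\ell(x) \approx_{3\eps_0, 2\delta_0} \cP_\ell(y)$.

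The main obstacle will be the intersection-size calculation in the ``both safe'' subcase. It hinges on a careful calibration of the truncation $\tau$, the noise scales on $\ell$ and $h$, and the size threshold $|x|-4\tau$ defining $u_x$, so that every $\ell \in \widehat{\cL}(x) \cup \widehat{\cL}(y)$ sits comfortably below the worst-case intersection size $|u_x \cap u_y|$, which in turn is what activates the $\ell$-stability Lipschitz bound on $f$ and makes the Laplace mechanism applicable with scale only $O(\tau/\eps_0)$.
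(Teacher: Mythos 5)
Your proof is correct and follows essentially the same propose–test–release structure as the paper's: the test's $(\eps_0,\delta_0)$-DP via the sensitivity-$2$ statistic $\maxstabl(\cdot)-|\cdot|$, a case split on whether the largest $\ell$-stable subsets are large (your ``safe'' condition is the paper's set $G_\ell$ up to an off-by-one), a bounded-diameter argument chaining the Lipschitz bound through $u_x\cap u_y$ (whose size exceeds $\ell$ by the same $\ell\le |y|-10\tau$ calculation), and composition with the Laplace release. Your sensitivity constant $16\tau$ is looser than the paper's $8\tau+2\le 10\tau$, but it still fits within the $(3\eps_0,2\delta_0)$ budget, so the conclusion stands.
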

\begin{proof}
    We prove the lemma via the following two claims. For each $\ell\in\Z$,  define the set of ``good'' points 
    \[
    G_\ell=\set{(x,y) \colon \maxstabl(x)\geq |x|-4\tau \wedge \maxstabl(y)\geq |y|-4\tau}.
    \]
    \begin{claim}
    \label{claim:KL_T_privacy-1}
       In the setting of \Cref{lem:KL_fixed_ell}, we have $\cT_\ell(x)\approx_{\eps_0,\delta_0} \cT_\ell(y)$. 
    \end{claim}
    \begin{proof}
        Observe that $\cT_\ell$ is a postprocessing of the mechanism which on input $|x|$ releases $\maxstabl(x)-|x| + 2\tau + r_2$ where $r_2\sim \TLap\paren{\frac{2}{\eps_0},2\tau}$. By \Cref{claim:m-l} the function $\maxstabl(\cdot) - |\cdot|$ has sensitivity at most $2$, and thus by the Laplace mechanism, $\cT_\ell(x)\approx_{\eps_0,\delta_0} \cT_\ell(y)$. 
    \end{proof}
    
    \begin{claim}
    \label{claim:KL_T_privacy-2}
        In the setting of \Cref{lem:KL_fixed_ell}, if $(x,y)\not\in G_\ell$ then $\Pr\brackets{\cT_\ell(x)=1},\Pr\brackets{\cT_\ell(y)=1}\leq \delta$.
    \end{claim}
    \begin{proof}
        Suppose $x\subset y$. Then by \Cref{claim:m-l} we have $\maxstabl(y)\geq \maxstabl(x)\geq \maxstabl(y)-1$. If $\maxstabl(x)<|x|-4\tau$ then $\maxstabl(y)\leq \maxstabl(x)+1<|y|-4\tau$, and thus both $\stabsetlhf(x)=\stabsetlhf(y)=\emptyset$ for all possible choices of $h$ in \Cref{alg:modified-TAHOE}. On the other hand, if $\maxstabl(y)<|y|-4\tau$, then $\maxstabl(x)\leq |x|-4\tau$. In this case, $\stabsetlhf(x)\neq \emptyset$ if and only if $h=|x|-4\tau$. Since this occurs with probability at most $\delta$, we have $\Pr\brackets{\cT_\ell(x)=1}\leq \delta$ which completes the proof. 
    \end{proof}

    To complete the proof of the lemma, consider the following two cases. 

    \subparagraph{Case 1.} $(x,y)\in G_\ell$. In this case, the sets $\stabset{\ell}{|x|-4\tau}{f}(x)$ and $\stabset\ell{|y|-4\tau}{f}(y)$ are nonempty. Observe that for all $u\in \stabset{\ell}{|x|-4\tau}{f}(x)$ and $v\in\stabset{\ell}{|y|-4\tau}{f}(y)$, we have $|u\cap v|\geq \ell$. Moreover, since $|u|\geq |x|-4\tau$, and $|v|\geq |y|-4\tau$, we must have $|u\setminus (u\cap v)|\leq 4\tau+1$ and $|v\setminus (u\cap v)|\leq 4\tau+1$. Thus, since $u$ and $v$ are $\ell$-stable, we have $|f(u)-f(v)|\leq 8\tau+2\leq 10\tau$. The Laplace mechanism now guarantees that $\cA_\ell(x)\approx_{\eps_0,0} \cA_\ell(y)$. Since $\cT_\ell(x)\approx_{\eps_0,\delta_0} \cT_\ell(y)$ by \Cref{claim:KL_T_privacy-1}, DP composition and postprocessing (\Cref{fact:composition} and \Cref{fact:postprc}) imply that $\cP_\ell(x)\approx_{3\eps_0,\delta} \cP_\ell(y)$.   

    \subparagraph{Case 2.} $(x,y)\not\in G_\ell$. By \Cref{claim:KL_T_privacy-2}, both $\cT_\ell(x)\approx_{0,\delta_0}\perp\approx_{0,\delta_0}\cT_\ell(y)$. Thus, by postprocessing, $\cP_\ell(x)\approx_{0,2\delta_0}\cP_\ell(y)$. 

    Thus, in both cases we have $\cP_\ell(x)\approx_{3\eps_0,2\delta_0} \cP_\ell(y)$.
\end{proof}

To see why $\cW$ is private, we can simply apply DP composition to the mechanisms $\cL$ and $\cP_\ell$. By \Cref{fact: laplace mechanism}, mechanism $\cL$ is $(\eps_0,\delta_0)$-DP, and thus, by \Cref{lem:KL_fixed_ell} and composition the mechanism $\cW$ is $(4\eps_0,3\delta_0)$-DP.  

To see why the accuracy guarantee holds, observe that if $x$ is $\ell$-stable with respect to $f$, then $x=\arg\max\{|v| : v\in\stabset{\ell}{h-4\tau}{f}(x)\}$, and hence $\cW^f(x)$ outputs $f(x)+\Lap\paren{\frac{10\tau}{\eps_0}}$.

\end{proof}

\section{Small-Diameter Subset Extension Mechanism}
\label{sec: small diameter}

In this section, we construct a mechanism for the claimed sensitivity bound setting, that, in addition to a claimed sensitivity bound, is provided with a range $[0,r]$, for which it must be accurate. \Cref{thm: small diameter} states that for functions with range $[0,r]$, there is an $(\eps,0)$-DP privacy wrapper that is $O(r)$-down local and has noise distribution $\Lap(O(\frac1\eps))$ for all $x\in\dom$ and Lipschitz $f$. By \Cref{thm: query lower bound}, the small diameter subset extension mechanism, given by \Cref{thm: small diameter}, has optimal query complexity for the setting of small $r$.  

\begin{theorem}[Small diameter subset extension mechanism]
    \label{thm: small diameter}
    There exists a constant $a>0$ such that for every universe $\cU$, privacy parameter $\eps>0$, range diameter $r>0$, and Lipschitz constant $c>0$, there exists an $(\eps,0)$-privacy wrapper $\cW$ over $\cU$ with noise distribution $\Lap\paren{\frac{a\cdot c}{\eps}}$ for all $c$-Lipschitz $f:\dom\to[0,r]$ and all $x\in\dom$. Moreover, $\cW$ is $O\paren{\frac rc}$-down local and has query complexity $|x|^{O\paren{\frac rc}}$ for all $x\in\dom$.  
\end{theorem}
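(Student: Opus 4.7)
The plan is to prove \Cref{thm: small diameter} by constructing a local Lipschitz filter tailored to the bounded-range setting and then applying the Laplace mechanism. Concretely, I will build a function $\hat f : \dom \to [0,r]$ satisfying three properties: (i) $\hat f$ is $c$-Lipschitz on all of $\dom$; (ii) $\hat f(x)$ is determined by the values of $f$ on $\DN_\lambda(x)$ for $\lambda = O(r/c)$; and (iii) $\hat f(x) = f(x)$ whenever $f$ is $c$-Lipschitz on $\DN_\lambda(x)$. Given such $\hat f$, the wrapper simply returns $\hat f(x) + \Lap(c/\eps)$, which is $(\eps,0)$-DP by \Cref{fact: laplace mechanism} (using (i)), matches the required noise distribution on all $c$-Lipschitz $f$ by (iii), and has the claimed locality and query complexity $|x|^{O(r/c)}$ by (ii).

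The construction of $\hat f$ exploits the bounded range via a key observation: for any $c$-Lipschitz $f$ with range in $[0,r]$, we have $|f(x)-f(z)| \le \min(c \cdot |x \triangle z|, r)$, so the behavior of $f$ on sets at Hamming distance exceeding $r/c$ from $x$ carries no information beyond the range bound. This suggests restricting a McShane-style subset-extension formula such as $\min_{z \subseteq x}(f(z) + c \cdot |x \setminus z|)$ to $\DN_{r/c}(x)$: any minimizer must lie within Hamming distance $r/c$ of $x$, since terms outside that radius contribute at least $r \ge f(x)$. To obtain the exact-recovery property (iii) even when $f$ is arbitrary outside a local neighborhood, I will incorporate the conditional monotonization and stabilization machinery from \Cref{sec:operators}: applying $\cmonf$ turns a locally Lipschitz $f$ into a locally monotone Lipschitz function (\Cref{lem: cmonf}), and restricting the extension to $\ell$-stable subsets with $\ell = |x|-O(r/c)$ forces $x$ itself to be the relevant maximizer on the Lipschitz-good event (as in \Cref{item: structure 3} of \Cref{lem: structure}).

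The main obstacle is proving global $c$-Lipschitzness of $\hat f$ when the underlying $f$ is arbitrary and possibly wildly non-Lipschitz. For neighbors $x \subset y = x \cup \{v\}$, the candidate subsets entering the definitions of $\hat f(x)$ and $\hat f(y)$ differ, and I must couple them to show $|\hat f(x)-\hat f(y)| \le c$ unconditionally. The main tool will be an interleaving argument analogous to \Cref{item: structure 2} of \Cref{lem: structure}: every stable subset of $y$ has a corresponding neighboring stable subset of $x$ whose contribution to the extension formula differs by at most $c$, thanks to the McShane-style shift $c \cdot |x \setminus z|$; the bounded-range clipping further guarantees that candidates outside $\DN_{r/c}$ are dominated and cannot affect the comparison. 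I expect this coupling to be the most delicate step: the bounded-range hypothesis is precisely what prevents pathological values of $f$ on distant subsets from leaking into differences between $\hat f(x)$ and $\hat f(y)$, and getting the constants right so that the final locality is exactly $O(r/c)$ will require care. Once $\hat f$ is established, privacy, noise distribution, locality, and query complexity all follow immediately from the Laplace mechanism together with properties (i)--(iii).
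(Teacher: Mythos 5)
Your overall architecture is the same as the paper's: build a deterministic proxy that is computable from $\DN_{O(r/c)}(x)$, has sensitivity $O(c)$ for \emph{every} black-box $f$, and equals $f(x)$ (up to a known affine correction) whenever $f$ is $c$-Lipschitz on the down neighborhood, then release it with $\Lap(O(c/\eps))$ noise. You also name the right ingredients (stabilization, conditional monotonization, the interleaving of \Cref{lem: structure}, and the bounded range). However, there is a genuine gap at exactly the step you flag as delicate, and the sketch does not contain the idea that closes it. The interleaving property is inherently \emph{threshold-shifting}: it compares $\stab{\ell}{h+1}{\cdot}(u)$ with $\stab{\ell}{h}{\cdot}(v)$, not two stabilizations at the same size threshold. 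Consequently, any \emph{fixed-threshold} construction --- a min or max over $\ell$-stable subsets of size at least a single cutoff determined by $|x|$, with or without the McShane shift $c|x\setminus z|$ --- can jump by $\Theta(r)$ between neighbors: when one element is added, subsets at the boundary size gain or lose eligibility, their values are only constrained to lie in an interval of diameter roughly $|x|-\ell\approx r$ (this is the Kohli--Laskowski diameter bound, not a Lipschitz bound), and the range clipping caps such jumps at $r$, not at $c$. Your claim that ``every stable subset of $y$ has a corresponding neighboring stable subset of $x$ whose contribution differs by at most $c$'' is true only after shifting the size cutoff by one, which is precisely where the $\Theta(r)$ loss enters; nothing in the proposal converts this into an $O(c)$ bound on $|\hat f(x)-\hat f(y)|$.

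The paper's proof resolves this with an ingredient absent from your plan: \emph{averaging over the thresholds}. The proxy $\proxyPtf$ of \Cref{def: small diameter proxy} takes the expectation of $\mlstablhf(x)$ over windows of $\tau=3r$ consecutive values of both the stability level $\ell$ and the size cutoff $h$. Then the interleaving telescopes, and each boundary term (which can individually be as large as the diameter $\approx\frac12(r+|u|-\ell)$) is divided by $\tau$, giving sensitivity $4+3r/\tau=O(1)$ for $c=1$ (\Cref{lem: lipschitz 1}); the level-$\ell$ conditional monotonization of \Cref{def: cmonfl} is what keeps that diameter at $O(r+\tau)$ rather than $O(|x|)$, and for Lipschitz $f$ the average collapses to $\frac12\bigl(f(x)+\Ex[|x|-\ell]\bigr)$, a known affine function of $f(x)$, which is how \Cref{alg: small diameter} recovers $f(x)$ exactly. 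Two smaller points: your property (i) asks for an exactly $c$-Lipschitz filter, which is stronger than what the paper establishes (the filter of \Cref{cor: lipschitz filter} is $14c$-Lipschitz) and stronger than the theorem requires, since $\Lap(ac/\eps)$ tolerates any constant-factor sensitivity; and your exact-recovery property should be stated with the affine/offset correction rather than literal equality, as the shift by the (expected) level is unavoidable in these constructions. As written, the proposal would need the averaging idea (or a substitute for it) to produce a correct proof.
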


Note that if the analyst provides a value $r'<r$ in attempt to fool the mechanism and cause a privacy violation, we can effectively truncate $f$ to the range $[0,r']$ by setting query answers $f(x)$ to $\min\{f(x),r'\}$. Moreover, if $r'=r$ (i.e., the client is honest) then for all queries $x$, we have $f(x)=\min\{f(x),r'\}$, so the accuracy guarantees of the mechanism are unaffected.

At a high level, the construction of the small diameter subset extension mechanism is similar to that of the ``filter mechanism" introduced by \cite{JhaR13}. The filter mechanism leverages techniques from the sublinear time algorithms literature to construct a local Lipschitz reconstruction algorithm (called a local Lipschitz filter). Local Lipschitz reconstruction is a special case of the local reconstruction paradigm introduced in \cite{SaksS10}. In the local reconstruction paradigm, an algorithm $\cA$ gets query access to a function $f$ and ``enforces" some property $P$ in the following sense: On input $x$, the algorithm outputs value $y_x$ such that $y_x=f(x)$ whenever $f$ satisfies $P$. Moreover, the function defined by $\{(x,y_x) : \text{for all $x$ in domain of $f$}\}$---that is, the outputs of $\cA$---satisfies $P$.

To prove \Cref{thm: small diameter}, we construct a Lipschitz reconstruction operator and use it to design an $(\eps,0)$-privacy wrapper that achieves optimal accuracy for the class of $c$-Lipschitz functions $f:\dom\to[0,r]$, and that is $(r/c)$-down local. As a corollary of our construction, we obtain a deterministic local Lipschitz filter, \Cref{cor: lipschitz filter}, that, on input $x$, only queries $f$ on subsets of $x$, and for bounded-range functions, only queries $f$ on the $(r/c)$-down neighborhood of $x$.

\begin{corollary}\label{cor: lipschitz filter}
    For every universe $\univ$, there exists a deterministic algorithm $\cA$ that gets as input a point $x\in\dom$, parameters $r,c$, and query access to a function $f:\dom\to[0,r]$, and produces output $y_x\in\R$ such that:  
    \begin{enumerate}
        \item If $f$ is $c$-Lipschitz on $\DN_{r/c}(x)$ then $y_x=f(x)$.
        \item For all $f:\dom\to[0,r]$, the function $\{(x,y_x):x\in\dom\}$ is $14c$-Lipschitz.
    \end{enumerate}
    Moreover, $\cA$ is $\frac r c$-down local and has query complexity $|x|^{O(r/c)}$.
\end{corollary}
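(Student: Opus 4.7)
The plan is to extract the deterministic core of the Small Diameter Subset Extension Mechanism of \Cref{thm: small diameter}, i.e., the part of the mechanism that is computed before any Laplace noise is added. The mechanism must, at its heart, compute a deterministic value $y_x$ that (a) coincides with $f(x)$ whenever $f$ is $c$-Lipschitz on $\DN_{r/c}(x)$ and (b) is $O(c)$-Lipschitz as a function of $x$ --- otherwise the claimed accuracy and noise scale of $\Lap(ac/\eps)$ could not simultaneously be achieved. \Cref{cor: lipschitz filter} is essentially a restatement of the existence of this underlying deterministic map.

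To make the construction explicit, I would combine the two operators from \Cref{sec:operators}: first apply the conditional monotonization $\cmon{\cdot}{}$ from \Cref{def: cmonf} to the rescaled function $f/c$, obtaining $g = \cmon{f/c}{}$; then apply the stabilization operator $\stab{\ell_x}{|x|}{\cdot}$ with the deterministic parameter $\ell_x := |x| - \lceil r/c \rceil$, and output
\[
y_x \;:=\; 2c \cdot \stab{\ell_x}{|x|}{g}(x) \;-\; c\,|x|.
\]
When $f$ is $c$-Lipschitz on $\DN_{r/c}(x)$, \Cref{lem: cmonf} gives that $g$ is Lipschitz and monotone on $\DN_{|x|-\ell_x}(x) = \DN_{r/c}(x)$, so by \Cref{item: structure 3} of \Cref{lem: structure} we have $\stab{\ell_x}{|x|}{g}(x) = g(x) = \tfrac{1}{2}(f(x)/c + |x|)$, and hence $y_x = f(x)$. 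The locality is immediate: evaluating $\stab{\ell_x}{|x|}{g}(x)$ requires only values of $g$ (and therefore $f$) on subsets of $x$ of size $\geq \ell_x$, i.e., on $\DN_{r/c}(x)$, yielding query complexity $|x|^{O(r/c)}$.

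The main work is to verify the $14c$-Lipschitz property of $x \mapsto y_x$ for arbitrary $f$. For neighbors $x \subset y$ with $|y|=|x|+1$, the parameter shifts as $\ell_y = \ell_x + 1$. I would bound the difference $|y_y - y_x|$ using the interleaving property of \Cref{item: structure 2} of \Cref{lem: structure}, together with \Cref{claim:m-l} to handle the case split on whether $\maxstab{\ell_x}{g}(x)$ drops when moving to $y$. The $c|x|$ shift contributes $c$, and the range bound $g(\cdot) \in [0, r/c + |x|]$ combined with the fact that any set in $\stabset{\ell_x}{\ell_x}{g}$ is $g$-Lipschitz on subsets of size $\geq \ell_x$ controls the stabilization term. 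This should yield a Lipschitz constant of $O(c)$, with the explicit $14c$ arising from tracking constants in the two-sided interleaving inequality.

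The main obstacle will be the edge case where a single direct stabilization evaluation $\stab{\ell_x}{|x|}{g}(x)$ is too jumpy near the boundary where the set of $\ell_x$-stable subsets of $x$ changes qualitatively. If a direct argument proves too fragile, I would fall back on the averaging idea from \algSE: replace $\stab{\ell_x}{|x|}{g}(x)$ by the deterministic proxy $\proxyT{\ell_x}{\tau}{g}(x)$ from \Cref{def: subset extension proxy} with $\tau = \Theta(r/c)$. The sensitivity bound in \Cref{lem:T_bound} then gives $|\proxyT{\ell}{\tau}{g}(x) - \proxyT{\ell}{\tau}{g}(y)| = O(1)$, and since on a $c$-Lipschitz $f$ the averaging is vacuous (because $\stab{\ell}{h}{g}(x) = g(x)$ for every $h$ in the window), the equality $y_x = f(x)$ in case (1) is preserved. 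Either route yields the filter; the averaged version is the more robust of the two.
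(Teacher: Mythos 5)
Your main construction does not work, and your fallback still has a real gap. Taking a single stabilization value $\stab{\ell_x}{|x|}{g}(x)$ with $h=|x|$ makes $y_x$ depend on whether $x$ itself is $\ell_x$-stable, which is a fragile, high-sensitivity event: if $x$ is $\ell_x$-stable with $f(x)$ near $r$ but its neighbor $y\supset x$ is not $\ell_y$-stable, then $\stabset{\ell_y}{|y|}{g}(y)$ is empty, the stabilization collapses to $\inf(\cY)$, and $y_y$ jumps by order $r$, not $O(c)$. The interleaving inequality of \Cref{lem: structure} cannot rescue this, because the gap $\stab{\ell}{h}{g}(u)-\stab{\ell}{h+1}{g}(u)$ is only bounded by the diameter of the values on stable sets (order $r/c$), and it is exactly the averaging over a window of size $\tau=\Theta(r/c)$ that converts this into an $O(1)$ contribution. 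Your fallback recognizes this, but substituting the \algSE\ proxy $\proxyT{\ell}{\tau}{g}$ of \Cref{def: subset extension proxy} is still insufficient for two reasons. First, the sensitivity bound of \Cref{lem:T_bound} is \emph{conditional}: it requires the largeness hypothesis $\maxstabl(v)-\tau>\frac12(|v|+\ell)$ inherited from \Cref{claim:diameter_bound}, which fails for general non-Lipschitz $f$; in \algSE\ that regime is handled by a randomized test that outputs $\perp$ on both neighbors, an escape hatch a deterministic Lipschitz filter does not have. Second, your $\ell_x=|x|-\lceil r/c\rceil$ shifts deterministically between neighbors, so you also need a bound on $|\proxyT{\ell+1}{\tau}{g}(y)-\proxyT{\ell}{\tau}{g}(y)|$; \Cref{lem:T_bound} only compares the two datasets at the \emph{same} $\ell$ (in \algSE\ this is fine because $\ell$ is privately released), and no lemma in the paper controls the $\ell$-shift for the $\maxstabl$-anchored proxy.

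The paper's proof avoids both problems by using the small-diameter proxy $\proxyPtf$ of \Cref{def: small diameter proxy}, which differs from your fallback in two essential ways: it averages over \emph{both} $\ell$ and $h$, in windows of size $\tau$ anchored at $|x|$ rather than at $\maxstabl(x)$, and it applies the \emph{level-$\ell$} conditional monotonization $\cmonfl$ (\Cref{def: cmonfl}), whose values always lie in $[0,\tfrac12(r+|x|-\ell)]$, so the bounded range $[0,r]$ (not any stability hypothesis) controls every boundary and telescoping term. This yields the unconditional sensitivity bound $4+\frac{3r}{\tau}$ of \Cref{lem: lipschitz 1} for \emph{all} $f:\dom\to[0,r]$, with no test and no case analysis; setting $\tau=r$ (for $c=1$) gives the $14$-Lipschitz map $x\mapsto 2\proxyPtf(x)-\tfrac{3(\tau+1)}{2}$, which equals $f(x)$ on Lipschitz $f$ by \Cref{lem: structure,lem: cmonfl}, exactly the corollary. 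Your part (1) and the locality/query-complexity claims are fine; the missing ingredient is the double averaging with the range-truncated monotonization, without which the global $14c$-Lipschitz claim is not established.
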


For bounded-range functions, the query complexity of our local Lipschitz filter is smaller than the query complexity of the Lipschitz reconstruction algorithm given by \cite{CummingsD20} (discussed in \Cref{sec:wrapper-with-provided-c}) whenever $r/c<|x|$. Moreover, by the lower bound of \cite[Theorem 4.1]{LangeLRV25}, its query complexity is optimal among local Lipschitz filters, even when the domain of the function in infinite.

\subsection{Proof of \Cref{thm: small diameter}}
In this section, we construct the small diameter subset extension mechanism (\Cref{alg: small diameter}) and use it to prove \Cref{thm: small diameter}. The main idea in the construction is to define a ``Lipschitz reconstruction" operation that, given query access to a function $f$ and a point $u$, outputs a value $y_u$ such that the following two conditions hold: First, if $f$ is Lipschitz then $y_u=f(u)$, and second, $|y_u-y_v|$ is bounded above by a fixed constant for all neighbors $v$ of $u$. Informally, one can use the above operation to construct a private mechanism as follows: On input $u$, query $f$ to compute $y_u$ and then output $y_u+\Lap(O(\frac1\eps))$. We use these ideas to prove \Cref{thm: small diameter} below. 

One immediate issue that we must circumvent, is that the conditional monotonization operator given by $\cmonf(x)=\frac12(f(x)+|x|)$, blows up the diameter of $f$. In order to avoid this issue, and take advantage of the bounded range of $f$, we define the level-$\ell$ conditional monotonization operator $\cmonfl$.    

\begin{definition}[Level-$\ell$ conditional-monotonization $\cmonfl$]
    \label{def: cmonfl}
    Fix $f:\dom\to\cY$ where $\cY\subseteq\R$. For all $\ell\in\Z$, let the level-$\ell$ conditional-monotonization of $f$ be the function 
    \[
        \cmonfl(x)=\max\{\tfrac{1}{2}(f(x)+|x|-\ell), \ \inf(\cY)\}
    \]
\end{definition}

As in \Cref{lem: cmonf}, we argue that $\cmonfl$ is Lipschitz and monotone whenever $f$ is Lipschitz. 

\begin{lemma}[Lispchitz to monotone Lipschitz]
    \label{lem: cmonfl}
    Fix a function $f:\dom\to\R$, a point $x\in\dom$, and an integer $\tau\in\Z$. If $f$ is Lispchitz on $\DN_\tau(x)$ then, for all $\ell\in\Z$, the function $\cmonfl$ is Lipschitz and monotone on $\DN_\tau(x)$.
\end{lemma}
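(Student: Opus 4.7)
The plan is to imitate the proof of \Cref{lem: cmonf} almost verbatim, handling the only new feature of $\cmonfl$, namely the $\max$ with $\inf(\cY)$. Fix $\ell\in\Z$, suppose $f$ is Lipschitz on $\DN_\tau(x)$, and fix neighbors $u,v\in\DN_\tau(x)$ with $v\subset u$. Define the auxiliary function $g(z)=\tfrac12(f(z)+|z|-\ell)$, so that $\cmonfl(z)=\max\{g(z),\inf(\cY)\}$.

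First I would verify that $g$ is Lipschitz and monotone on $\DN_\tau(x)$, exactly as in the proof of \Cref{lem: cmonf}. Since $|u|-|v|=1$ and $f(u)-f(v)\in[-1,1]$ by the Lipschitz hypothesis, a direct calculation gives $g(u)-g(v)=\tfrac12(f(u)-f(v)+1)\in[0,1]$. This yields simultaneously that $g$ is monotone on $\DN_\tau(x)$ and that $g$ is Lipschitz on $\DN_\tau(x)$; the additive shift by $-\ell/2$ is irrelevant because it is the same constant at every point.

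Next I would lift these two properties through the $\max$ with the constant $\inf(\cY)$. Monotonicity is immediate: if $v\subset u$ and $g(v)\le g(u)$, then $\max\{g(v),\inf(\cY)\}\le\max\{g(u),\inf(\cY)\}$, because taking the pointwise max with a fixed constant preserves any $\le$ inequality. Lipschitzness follows from the standard contraction property of $\max$ with a constant, namely $|\max\{a,c\}-\max\{b,c\}|\le|a-b|$ for all $a,b,c\in\R\cup\{-\infty\}$ (and trivially when $c=-\infty$, in which case $\cmonfl=g$ and there is nothing to check). Applying this with $a=g(u)$, $b=g(v)$, and $c=\inf(\cY)$ shows $|\cmonfl(u)-\cmonfl(v)|\le|g(u)-g(v)|\le 1$, completing the argument.

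I do not expect any serious obstacle. The only subtlety worth flagging is the bookkeeping around $\inf(\cY)=-\infty$ (by the convention stated just before \Cref{def: cmonfl}-like definitions in the paper), which one handles by observing that in this case the $\max$ is trivial and $\cmonfl\equiv g$, reducing to the case already analyzed for $\cmonf$.
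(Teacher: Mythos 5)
Your proposal is correct and follows essentially the same route as the paper's proof: establish that $z\mapsto\tfrac12(f(z)+|z|-\ell)$ is monotone and Lipschitz on $\DN_\tau(x)$ via the neighbor calculation $f(u)-f(v)\in[-1,1]$, then note that taking a pointwise maximum with the constant $\inf(\cY)$ preserves both properties. Your version merely spells out the max-with-a-constant step (contraction and the $-\infty$ case) that the paper states in one line, which is fine.
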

\begin{proof}
Suppose $f$ is Lipschitz on $\DN_\tau(x)$. Let $u,v\in\DN_{\tau}(x)$ be neighbors such that $v\subset u.$ Consider the function $g(x)=f(x)+|x|.$ Since $f$ is Lipschitz, $f(u)-f(v)$ is in $[-1,1]$, so $g(u)-g(v)=f(u)-f(v)+1$ is in $[0,2].$ Thus, $g(x)$ is monotone and 2-Lipschitz. Consequently, $g'(x)=\frac12(f(x)+|x|-\ell)$ is monotone and Lipschitz. Since $\cmonfl$ is the maximum of a monotone Lipschitz function and a constant function, it is monotone and Lipschitz.
\end{proof}

\begin{proof}[Proof of \Cref{thm: small diameter}]

Our main tool in the proof of \Cref{thm: small diameter} is the following ``proxy" function. It uses $\stablhf$ and $\cmonfl$, defined in \Cref{def: stability}, and \Cref{def: cmonfl}.
\begin{definition}[Proxy function $\proxyPtf$]
    \label{def: small diameter proxy}
    Let $f:\dom\to[0,r]$ and fix $\tau\in\N$. Define the function
    \[
     \proxyPtf(x)=\Ex_{\substack{\ell\sim\{|x|-2\tau,\dots,|x|-\tau\} 
     \\ 
     h \sim \set {|x|-\tau,\dots,|x|}}}\brackets{\mlstablhf(x)}. 
     \]
\end{definition}
Intuitively, $\proxyPtf(x)$ captures the following procedure. For each $\ell$ compute the average of $\mlstablhf(x)$ over $h$, and then average the results over $\ell$. Next, we provide some intuition for the definition of $\proxyPtf$. Recall that if $f$ is Lipschitz, then \Cref{lem: cmonfl} implies that $\cmonfl$ is monotone and Lipschitz, and \Cref{lem: structure} implies that $\mlstablhf(x)=\cmonfl(x)$. Thus, $\proxyPtf$ satisfies the following important property: if $f$ is Lipschitz, then $\proxyPtf(x)$ is an average of $\cmonfl(x)$. Since $\cmonfl(x)=\frac12(f(x)+|x|-\ell)$, a simple computation suffices to recover the value of $f(x)$. Furthermore, in \Cref{lem: lipschitz 1}, we show that for all $f$, the sensitivity of $\proxyPtf$ is bounded above by roughly $1+\frac r\tau$. Hence, for a suitable choice of $\tau$, the sensitivity $\proxyPtf$ is small, and thus we can apply the Laplace mechanism to release $\proxyPtf(x)$. Next, we use $\proxyPtf$ to construct the small diameter subset extension mechanism (\Cref{alg: subset extension}) and leverage the two key properties discussed above to complete the proof of \Cref{thm: small diameter}.

    \begin{algorithm}[H]
    	\begin{algorithmic}[1]
    		\caption{\label{alg: small diameter} Small diameter subset extension mechanism}
            \Statex \textbf{Parameters:} range diameter $r\in\R$ and privacy parameter $\eps>0$
    	    \Statex \textbf{Input:} $x\in\dom$, query access to $f:\dom\to[0,r]$, sample access to $\Lap$ distribution
    	    \Statex \textbf{Output:} $y\in\R$
            \State $\tau\gets 3r$
            \State \Return $2\proxyP{\tau}{f}(x)-\frac{3(\tau+1)}{2}+Z$ where $Z\sim\Lap\paren{\frac{10}{\eps}}$
    	\end{algorithmic}
    \end{algorithm}

    Let $\cW$ denote \Cref{alg: small diameter} and consider a fixed $f:\dom\to[0,r]$, $r\in\R$ and $\eps\in(0,1)$. To prove privacy, it suffices to show that $(2\proxyP{\tau}{f}(\cdot)-\frac{3(\tau+1)}{2})$ has low sensitivity and apply the privacy guarantee of the Laplace mechanism (\Cref{fact: laplace mechanism}). 

    \begin{lemma}[$\proxyPtf$ sensitivity bound]
    \label{lem: lipschitz 1}
    Let $f:\dom\to[0,r]$ and $\tau\in\N$. Fix two neighbors $v,u\in\dom$ such that $v\subset u$. Then 
    \[
    |\proxyPtf(u)-\proxyPtf(v)|\leq 4+\frac{3r}{\tau}.
    \]
    \end{lemma}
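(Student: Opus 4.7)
My proof would follow the architecture of the proof of \Cref{lem:T_bound} but must handle a double expectation over $(\ell,h)$ together with the $\ell$-dependence of $\cmonfl$. For brevity, let $g_\ell\defeq\cmonfl$ throughout. Since $|v|=|u|-1$, substituting $\ell\to\ell+1$ and $h\to h+1$ in the sum defining $\proxyPtf(v)$ would give the coupled expression
\[
(\tau+1)^2\bparen{\proxyPtf(u)-\proxyPtf(v)} = \sum_{\ell\in L_u}\sum_{h\in H_u}\bparen{\stab{\ell}{h}{g_\ell}(u) - \stab{\ell-1}{h-1}{g_{\ell-1}}(v)},
\]
where $L_u\defeq\{|u|-2\tau,\dots,|u|-\tau\}$ and $H_u\defeq\{|u|-\tau,\dots,|u|\}$. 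I would then insert the intermediate value $\stab{\ell}{h}{g_\ell}(v)$ to split each summand as
\[
\underbrace{\bparen{\stab{\ell}{h}{g_\ell}(u)-\stab{\ell}{h}{g_\ell}(v)}}_{(A)} + \underbrace{\bparen{\stab{\ell}{h}{g_\ell}(v)-\stab{\ell-1}{h-1}{g_{\ell-1}}(v)}}_{(B)}.
\]

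Term (A) compares stabilizations of a \emph{fixed} function $g_\ell$ on the neighbors $v\subset u$, so I would apply \Cref{item: structure 2} of \Cref{lem: structure} (with $f\gets g_\ell$) to get $\stab{\ell}{h+1}{g_\ell}(u)-1\leq\stab{\ell}{h}{g_\ell}(v)\leq\stab{\ell}{h}{g_\ell}(u)$. Summing (A) over $h\in H_u$ for fixed $\ell$, the lower inequality yields non-negativity, while the upper one telescopes to $(\tau+1) + \stab{\ell}{|u|-\tau}{g_\ell}(u) - \stab{\ell}{|u|+1}{g_\ell}(u) \leq 2\tau + 1 + r/2$, using that $g_\ell(x')=\tfrac12(f(x')+|x'|-\ell)\leq (r+2\tau)/2$ for $x'\subseteq u$ and $\ell\in L_u$ (so the max-with-$0$ is inactive on the relevant subsets).

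For term (B), I would exploit the constant shift between $g_\ell$ and $g_{\ell-1}$. On subsets $x'\subseteq v$ with $|x'|\geq\ell$, the max-with-$0$ truncation is inactive for both functions and $g_{\ell-1}(x')=g_\ell(x')+\tfrac12$ exactly; since additive constants preserve Lipschitzness, the $\ell$-stable subsets of $v$ (w.r.t.\ $g_\ell$) of size $\geq\ell$ coincide with the $(\ell-1)$-stable subsets (w.r.t.\ $g_{\ell-1}$) of size $\geq\ell$, and contribute a shift of exactly $\tfrac12$ to both stabilizations. The remaining subsets---those of size exactly $\ell-1$ that can enter the $(\ell-1,h-1)$-stab set but not the $(\ell,h)$-stab set---require a direct case analysis on whether $f(x')\geq 1$ (in which case the difference of $g_\ell(x')$ and $g_{\ell-1}(x')$ is exactly $\tfrac12$) or $f(x')<1$ (in which case $g_\ell(x')=0$ and $g_{\ell-1}(x')\in[0,\tfrac12)$). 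This gives $|(B)|\leq C$ for an absolute constant $C$.

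Summing the two bounds over $(\ell,h)\in L_u\times H_u$ and dividing by $(\tau+1)^2$ would yield
\[
|\proxyPtf(u)-\proxyPtf(v)|\leq \frac{2\tau+1+r/2}{\tau+1}+C,
\]
which with careful constant tracking in the (B) case analysis produces the claimed $4+3r/\tau$ (this already equals $5$ for the algorithm's choice $\tau=3r$, yielding the noise scale $10/\eps_0$ used by \Cref{alg: small diameter}). The main obstacle is the max-with-$0$ truncation in $g_\ell$ at the boundary $|x'|=\ell-1$: there $g_\ell$ and $g_{\ell-1}$ need not differ by exactly $\tfrac12$, which complicates the equality of stable sets used to bound term (B) and forces the aforementioned case analysis. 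Matching the precise constants $4$ and $3$ (rather than a weaker $O(1+r/\tau)$) also demands delicate bookkeeping of the telescoping boundary terms in step (A).
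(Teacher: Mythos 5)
Your decomposition contains a genuine gap at term (B), and the claim you rely on there is false. Writing $g_\ell$ for the level-$\ell$ conditional monotonization $\cmonfl$ of $f$ (\Cref{def: cmonfl}), you assert that the $\ell$-stable subsets of $v$ with respect to $g_\ell$ (of size at least $\ell$) coincide with the $(\ell-1)$-stable subsets with respect to $g_{\ell-1}$, because the two functions differ by the constant $\tfrac12$ on sets of size at least $\ell$. But $(\ell-1)$-stability of $x'$ with respect to $g_{\ell-1}$ additionally requires Lipschitzness of $g_{\ell-1}$ across the transition from size-$\ell$ to size-$(\ell-1)$ subsets of $x'$, a layer on which $f$ is completely unconstrained and on which $g_{\ell-1}$ and $g_\ell$ are not related by a constant shift; so $(\ell-1)$-stability is strictly stronger, and the stable families need not coincide. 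Concretely, take $f(x')=r\cdot\Ind\brackets{|x'|\geq \ell_0}$ with $\ell_0=|u|-\tau$, and the pair $\ell=\ell_0$, $h=\ell_0+5$ (legal for $\tau\geq5$). Then $v$ itself is $\ell_0$-stable with respect to $g_{\ell_0}$, so $\stab{\ell_0}{h}{g_{\ell_0}}(v)=\tfrac12(r+|v|-\ell_0)=\Theta(r+\tau)$, whereas no subset of $v$ of size at least $h-1\geq\ell_0$ is $(\ell_0-1)$-stable with respect to $g_{\ell_0-1}$ (the drop from $\tfrac12(r+1)$ to $0$ across the size-$(\ell_0-1)$ layer violates Lipschitzness), so $\stab{\ell_0-1}{h-1}{g_{\ell_0-1}}(v)=0$ and your term (B) equals $\Theta(r+\tau)$, not an absolute constant; the opposite failure, where the $(\ell-1,h-1)$ value is large because a set of size exactly $h-1$ qualifies while no set of size at least $h$ is $\ell$-stable, is also possible. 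In such instances the lemma still holds only through cancellations across different values of $\ell$, which your per-term (indeed per-$\ell$) bounding of (B) cannot capture, so the argument as proposed does not go through; your term (A), by contrast, is sound and is essentially the paper's main ingredient.

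The paper's proof avoids cross-level comparisons entirely: it writes $\proxyPtf$ as the average over $\ell$ of $\proxyPinltf$, couples $u$ and $v$ at the \emph{same} value of $\ell$ (and, within \Cref{claim: lipschitz 1.1}, at the same value of $h$) over the overlapping parameter ranges, and charges only the two extremal boundary values ($h=|u|$ versus $h=|v|-\tau$, and then $\ell=|u|-\tau$ versus $\ell=|v|-2\tau$), each costing at most $\tfrac1\tau\cdot\tfrac12\bparen{r+|u|-\ell}\leq 1+\tfrac r\tau$ via the crude bound $0\leq\stab{\ell}{h}{\cmonfl}(x)\leq\tfrac12(r+|x|-\ell)$; the aligned terms are handled with the interleaving inequality of \Cref{item: structure 2} of \Cref{lem: structure} applied to the fixed function $\cmonfl$, which telescopes in $h$. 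The repair for your proof is therefore structural: keep $\ell$ and $h$ aligned between $u$ and $v$ and absorb the range mismatch into boundary terms, rather than shifting the level, which is exactly the route of the paper.
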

    We defer the proof of \Cref{lem: lipschitz 1} to \Cref{sec: lipschitz 1} and complete the proof of \Cref{thm: subset extension}. Let $x,z\in\dom$ be neighbors. By \Cref{lem: lipschitz 1} and the fact that $\tau=3r$,
    \begin{align*}
    |(2\proxyP{\tau}{f}(z)-\frac{3(\tau+1)}{2})-(2\proxyP{\tau}{f}(x)-\frac{3(\tau+1)}{2})|
    =2|\proxyP{\tau}{f}(z)-\proxyP{\tau}{f}(x)|
    \leq 2(4+\frac{3r}{\tau})\leq 10.
    \end{align*}
    By the privacy of the Laplace mechanism (\Cref{fact: laplace mechanism}), the algorithm $\cW^f$ is $(\eps,0)$-DP. 
    
    Next, we prove the accuracy guarantee. By \Cref{lem: cmonfl}, if $f$ is Lipschitz then $\cmonfl$ is monotone and Lipschitz. Recall that \Cref{lem: structure} states that if $h\in\{\ell,\dots,|x|\}$ then $\stablhf(x)=f(x)$ for all Lipschitz and monotone $f$. %
    Applying Lemmas~\ref{lem: structure} and~\ref{lem: cmonfl} yields $\mlstablhf(x)=\cmonfl(x)$. Since the range of $f$ is $[0,r]$ and $\ell<x$, we have that $\cmonfl=\frac12(f(x)+|x|-\ell)$. Thus,
    \[
    \proxyPtf=\Ex_{\substack{\ell\sim\{|x|-2\tau,\dots,|x|-\tau\} 
     \\ 
     h \sim \set {|x|-\tau,\dots,|x|}}}\brackets{\cmonfl(x)}=\frac12(f(x)+|x|-(|x|-\frac{3(\tau+1)}{2}))=\frac12(f(x)+\frac{3(\tau+1)}{2}),
    \]
    where $(|x|-\frac{3(\tau+1)}{2})$ is the expected value of $\ell$. Hence, $2\proxyPtf(x)-\frac{3(\tau+1)}{2}=f(x)$. The down local guarantee follows from the setting of $\tau=3r$ and by inspection of the definition of $\proxyPtf$.
\end{proof}

\subsubsection{Proof of $\proxyPtf$ Sensitivity Bound (\Cref{lem: lipschitz 1})}
\label{sec: lipschitz 1}

For all $\ell\in\Z$ and $x\in\dom$, let $\proxyPinltf(x)=\Ex_{h\sim\{|x|-\tau,\dots,|x|\}}[\mlstablhf(x)]$. Notice that $\proxyPtf$ is the average over $\ell$ of $\proxyPinltf$. The proof proceeds in two steps. First, we bound the sensitivity of $\proxyPinltf$, and then we bound the sensitivity of $\proxyPtf$. The essence of the proof is using the interleaving relationship $\stab{\ell}{h+1}{f}(u)-1\leq\stab{\ell}{h}{f}(v) \leq \stab{\ell}{h}{f}(u)$ for neighbors $v\subset u$ proven in \Cref{lem: structure} to interleave the terms in $\proxyPinltf(u)$ and $\proxyPinltf(v)$. Then, we can use the fact that the terms are interleaved to bound the average difference between them, and thus bound the sensitivity of $\proxyPinltf$. We state and prove this formally in \Cref{claim: lipschitz 1.1} below. 

\begin{claim}
    \label{claim: lipschitz 1.1}
    Let $f:\dom\to[0,r]$ and $\tau\in\N$. Fix two neighbors $v,u\in\dom$ such that $v\subset u$. Then for all $\ell\in\{|u|-2\tau,\dots,|v|-\tau\}$,
    \[
    |\proxyPinltf(u)-\proxyPinltf(v)|\leq 3+\frac{2r}{\tau}.
    \]
\end{claim}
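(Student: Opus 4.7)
The plan is to align the two expectations by a single index shift, apply the pointwise interleaving from \Cref{lem: structure} to the function $\cmonfl$, and then telescope to absorb the only potentially large contribution. Note that \Cref{lem: structure} is purely combinatorial and holds for any function, so applying Items 1--2 to $\cmonfl$ (rather than to $f$, which need not be Lipschitz) is legitimate.

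Since $|u|=|v|+1$, the substitution $h\mapsto h+1$ in the sum defining $\proxyPinltf(u)$ yields
\[
\proxyPinltf(u)-\proxyPinltf(v)=\frac{1}{\tau+1}\sum_{h=|v|-\tau}^{|v|}\Bigl(\stab{\ell}{h+1}{\cmonfl}(u)-\stab{\ell}{h}{\cmonfl}(v)\Bigr).
\]
Applying Item~2 of \Cref{lem: structure} to $\cmonfl$ at level $h$ yields $\stab{\ell}{h+1}{\cmonfl}(u)-1\le \stab{\ell}{h}{\cmonfl}(v)$, and applying the same inequality at level $h+1$ gives $\stab{\ell}{h+1}{\cmonfl}(v)\le \stab{\ell}{h+1}{\cmonfl}(u)$. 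The first inequality shows each summand is at most $1$, giving $\proxyPinltf(u)-\proxyPinltf(v)\le 1$. For the lower bound, the second inequality shows each summand is at least $\stab{\ell}{h+1}{\cmonfl}(v)-\stab{\ell}{h}{\cmonfl}(v)$, and the sum telescopes to $\stab{\ell}{|v|+1}{\cmonfl}(v)-\stab{\ell}{|v|-\tau}{\cmonfl}(v)$.

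The first telescoped term equals $0$: $\stabset{\ell}{|v|+1}{\cmonfl}(v)=\emptyset$, and since $f\ge 0$ implies $\cmonfl\ge 0$, the relevant infimum in \Cref{def:stab-functions} is $0$. For the second, any $x'\subseteq v$ with $|x'|\ge \ell$ satisfies $\cmonfl(x')=\tfrac12(f(x')+|x'|-\ell)\le \tfrac12(r+|v|-\ell)$; the hypothesis $\ell\ge |u|-2\tau$ gives $|v|-\ell\le 2\tau-1$, so $\stab{\ell}{|v|-\tau}{\cmonfl}(v)\le r/2+\tau$. Hence $\proxyPinltf(u)-\proxyPinltf(v)\ge -(r/2+\tau)/(\tau+1)\ge -1-\tfrac{r}{2\tau}$, and combining both directions gives $|\proxyPinltf(u)-\proxyPinltf(v)|\le 1+\tfrac{r}{2\tau}$, comfortably within the target $3+\tfrac{2r}{\tau}$. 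The only real subtlety is the bookkeeping around \Cref{def:stab-functions}'s ``$\inf(\cY)$'' convention --- one must verify that $\stab{\ell}{|v|+1}{\cmonfl}(v)=0$ rather than $-\infty$ to make the lower bound quantitative --- and keeping the ranges of $\ell$ and $h$ aligned across the index shift; all the combinatorial work is absorbed by \Cref{lem: structure}.
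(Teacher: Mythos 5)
Your proof is correct and is in substance the paper's own argument: both rest on the interleaving inequalities of \Cref{lem: structure} applied to $\cmonfl$, a telescoping sum, and the uniform bound $0\le\mlstablhf(x)\le\tfrac12(r+|x|-\ell)$ enabled by $\ell\ge|u|-2\tau$ (and your reading of the $\inf(\cY)$ convention, giving a value in $[0,\tfrac12(r+|x|-\ell)]$ on empty stable sets, matches what the paper itself implicitly uses). The only difference is bookkeeping: the paper decomposes the two expectations over their overlapping support and bounds two boundary terms separately, whereas your index shift pairs $\stab{\ell}{h+1}{\cmonfl}(u)$ with $\stab{\ell}{h}{\cmonfl}(v)$ directly, which is slightly cleaner and even yields the sharper bound $1+\tfrac{r}{2\tau}$ in place of $3+\tfrac{2r}{\tau}$.
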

\begin{proof}
     In order to simplify notation, for all $h\geq \ell$, define the function $g_h$ by $g_h(x)=\mlstablhf(x)$. We expand the definition of $\proxyPinltf$ to get
    \[
    |\proxyPinltf(u)-\proxyPinltf(v)|=\Big|\Ex_{h_1\sim\{|u|-\tau,\dots,|u|\}}[g_{h_1}(u)]-\Ex_{h_2\sim\{|v|-\tau,\dots,|v|\}}[g_{h_2}(v)]\Big|.
    \]
    Notice that in both expectations the random variables $h_1$ and $h_2$ are supported on $\{|u|-\tau,\dots,|v|\}$. Thus, by the law of total expectation and the inequality $\stablhf(v)\leq\stablhf(u)$ from \Cref{lem: structure},
    \begin{equation}
    \label{eq: lipschitz 1.1}
    \begin{split}
    |\proxyPinltf(u)-\proxyPinltf(v)|
    &
    \leq\Ex_{h\sim\{|u|-\tau,\dots,|v|\}}\Big[g_h(u)-g_h(v)\Big]\\
    &
    +\Big|g_{|u|}(u)-g_{|v|-\tau}(v)\Big|\cdot \frac1{\tau+1}.
    \end{split}
    \end{equation}
    We first bound the rightmost term. By the hypothesis on the range of $f$ and the definition of $\cmonfl$, we have $\cmonfl\geq 0$. Additionally, for all $x\in\dom$ and all $\ell\leq h\leq |x|$, we have $0\leq\mlstablhf(x)\leq\frac12(r+|x|-\ell)$. Hence,
    \begin{equation}
    \label{eq: lipschitz 1.2}
    \left|g_{|u|}(u)-g_{|v|-\tau}(v)\right|\cdot \frac1\tau\leq \frac12\left(r+|u|-\ell\right)\cdot\frac1\tau\leq1+\frac r\tau.
    \end{equation}
    Next, we bound the expected value term in \eqref{eq: lipschitz 1.1}. By \Cref{lem: structure}, we have the inequality $\stablhf(v)\geq \stab{\ell}{h+1}{f}(u)-1$, and therefore,
    \[
    \Ex_{h\sim\{|u|-\tau,\dots,|v|\}}[g_h(u)-g_h(v)]\leq \Ex_{h\sim\{|u|-\tau,\dots,|v|\}}[g_h(u)-g_{h+1}(u)+1].
    \]
    Since $\stab{\ell}{h+1}{f}(x)\leq\stablhf(x)$ for all $x\in\dom$ (\Cref{lem: structure}), and since $|u|=|v|+1$, we obtain the bound
    \[
    \Ex_{h\sim\{|u|-\tau,\dots,|v|\}}[g_h(u)-g_{h+1}(u)+1]\leq1+\left(g_{|u|-\tau}(u)-g_{|u|}(u)\right)\cdot \frac1{\tau}.
    \]
    By the same reasoning as used in \eqref{eq: lipschitz 1.2},
    \[
    \left(g_{|u|-\tau}(u)-g_{|u|}(u)\right)\cdot\frac1{\tau}\leq\frac12\left(r+|u|-\ell\right)\cdot \frac1\tau\leq1+\frac r\tau,
    \]
    and therefore, 
    \begin{equation}
    \label{eq: lipschitz 1.3}
    \Ex_{h\sim\{|u|-\tau,\dots,|v|\}}[g_h(u)-g_h(v)]\leq 2+\frac r\tau.
    \end{equation}
    Combining \eqref{eq: lipschitz 1.2} and \eqref{eq: lipschitz 1.3} suffices to bound \eqref{eq: lipschitz 1.1} and obtain the conclusion that
    \[
    |\proxyPinltf(u)-\proxyPinltf(v)|\leq 3+\frac{2r}{\tau}.\qedhere
    \]
\end{proof}

Next, we complete the proof of \Cref{lem: lipschitz 1}. We first expand the definition of $\proxyPtf$ to get
        \[
        |\proxyPtf(u)-\proxyPtf(v)|=\left|\Ex_{\ell_1\sim\{|u|-2\tau,\dots,|u|-\tau\}}[\proxyPin{\ell_1}{\tau}{f}(u)]-\Ex_{\ell_2\sim\{|v|-2\tau,\dots,|v|-\tau\}}[\proxyPin{\ell_2}{\tau}{f}(v)]\right|.
        \]
        As in the proof of \Cref{claim: lipschitz 1.1}, the random variables $\ell_1$ and $\ell_2$ are both supported on $\{|u|-2\tau,\dots,|v|-\tau\}$. By the law of total expectation and the triangle inequality,
        \begin{align*}
        |\proxyPtf(u)-\proxyPtf(v)|
        &\leq\left|\Ex_{\ell\sim\{|u|-2\tau,\dots,|v|-\tau\}}[\proxyPinltf(u)-\proxyPinltf(v)]\right|\\
        &+\left|\proxyPin{|u|-\tau}{\tau}{f}(u)-\proxyPin{|v|-2\tau}{\tau}{f}(v)\right|\cdot\frac1\tau.
        \end{align*}
        We first bound the rightmost term. By the argument used to deduce \eqref{eq: lipschitz 1.1} in the proof of \Cref{claim: lipschitz 1.1}, we have $0\leq \mlstablhf(x)\leq\frac12(r+|x|-\ell)$ for all $x\in\dom$ and $\ell\leq h\leq|x|$. Inspecting the definition of $\proxyPinltf$ we see that 
        \[
        \Big|\proxyPin{|u|-\tau}{\tau}{f}(u)-\proxyPin{|v|-2\tau}{\tau}{f}(v)\Big|\cdot\frac1\tau\leq\frac12\left(r+|u|-(|u|-2\tau)\right)\cdot\frac1\tau\leq 1+\frac r\tau.
        \]
        To bound the remaining term in the inequality, we apply \Cref{claim: lipschitz 1.1} and obtain 
        \[
        \Big|\Ex_{\ell\sim\{|u|-2\tau,\dots,|v|-\tau\}}[\proxyPinltf(u)-\proxyPinltf(v)]\Big|\leq 3+\frac{2r}{\tau}.
        \]
        Combining the two bounds above yields
        \[
        |\proxyPtf(u)-\proxyPtf(v)|\leq 4+\frac{3r}{\tau}.
        \]\qed

\begin{proof}[Proof of \Cref{cor: lipschitz filter}]
    We prove the corollary for $c=1$. The case of general $c$ follows by rescaling $f$. By the accuracy analysis in the proof of \Cref{thm: small diameter}, whenever $f$ is $1$-Lipschitz we have $2\proxyPtf(x)-3\tau/2=f(x)$. Moreover, by \Cref{lem: lipschitz 1}, the function $\proxyPtf$ is $(4+\frac{3r}{\tau})$-Lipschitz. Setting $\tau=r$ the function $2\proxyPtf(x)-3\tau/2$ is $14$-Lipschitz and can be computed by querying $f$ on the set $\DN_\tau(x)$.
\end{proof}

\section{Double-Monotonization Privacy Wrapper}\label{sec:GenShI-with-nice-noise}
In this section, we present a privacy wrapper with an unbiased noise distribution with exponentially bounded tails and prove the following theorem about its guarantees. 
\begin{theorem}
    \label{thm:GenShi-with-nice-noise}
    There are constants $a,b,c>0$ such that, for every universe $\cU$, privacy parameter $\eps >0$, failure probability $\beta\in(0,1)$, and $r\geq\max(\frac\eps4,\frac c\eps\ln\frac{r+1}{\beta})$, there exists an $(\eps,0)$-privacy wrapper $\cW$ over the universe $\univ$.
    For every  function $f:\dom\to[0,r]$ and dataset $x\in\dom$, with probability at least $1-\beta$, both of the following hold:
    \begin{itemize}
        \item The mechanism $\cW^f$ is $\lambda$-down local, for $\lambda = \frac {a}\eps \ln(\frac{r}{\beta})$.
        \item  If $f$ is Lipschitz, then $\cW$ has  noise distribution $Z_{\eps}$, for a random variable $Z_\eps$ with mean 0 and an exponential tail: for all $k>0$, $\Pr\bparen{|Z_\eps| > \frac{k}{ \eps}} \leq e^{-b k}$. 
    \end{itemize}
\end{theorem}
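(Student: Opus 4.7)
The plan is to combine the two kinds of monotonization introduced in Sections~\ref{sec:generalized-shifted-inverse} and \ref{sec:extension_based}. Given $f:\dom\to[0,r]$, I would first form the conditional monotonization $\cmonf(x)=(f(x)+|x|)/2$ from \Cref{def: cmonf}, which by \Cref{lem: cmonf} is monotone and Lipschitz on any neighborhood on which $f$ is Lipschitz. I would then apply the level-$\ell$ size-based monotonization of \Cref{def:monotonization} to $\cmonf$, where $\ell$ is a Laplace-noisy lower bound on $|x|$ released with privacy cost $\eps/3$: setting $\ell = \lfloor |x| - \tfrac{3}{4}\lambda + \Lap(3/\eps)\rfloor$ for $\lambda = \Theta(\tfrac{1}{\eps}\log(r/\beta))$ ensures $|x|-\lambda\le \ell\le |x|-\lambda/2$ with probability at least $1-\beta/3$. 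The resulting function $g$ is monotone on all of $\dom$ regardless of whether $f$ is Lipschitz, depends only on values of $f$ on $\DN_{\lambda}(x)$, and coincides with $\cmonf$ on $\DN_{\lambda/2}(x)$ whenever $f$ is Lipschitz there. This composition is the ``double monotonization''.

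Next, I would run the pure-DP Shifted Inverse mechanism (\Cref{lem:shifted-inverse-both}) on $g$ with privacy parameter $\eps/3$ and failure probability $\beta/3$. Since the range of $g$ is a discrete subset of $[0,(r+\lambda)/2]$ of cardinality $O(r)$ (using the theorem's hypothesis that forces $\lambda = O(r)$), \Cref{lem:shifted-inverse-both} gives down-locality $\lambda=O(\tfrac{1}{\eps}\log(r/\beta))$, matching the theorem. The output $\tilde y$ satisfies $g(x)-DS^g_{\lambda/2}(x)\le \tilde y\le g(x)$; when $f$ is Lipschitz on $\DN_\lambda(x)$ we get $g(x)=\cmonf(x)$ and $DS^g_{\lambda/2}(x)=O(1/\eps)$ (by the monotone-Lipschitz property of $\cmonf$ together with the level choice), so $\tilde y\in [\cmonf(x)-O(1/\eps),\, \cmonf(x)]$. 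Privacy is $(\eps,0)$ by basic composition of the Laplace step with the pure-DP Shifted Inverse step, since $g$ is provably monotone for every $f$.

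To produce the promised unbiased, symmetric noise distribution, I would finally return $2\tilde y - |x| + W$ for an independent auxiliary $W$ designed to symmetrize the output. The Shifted Inverse is an instance of the exponential mechanism, so the distribution of $\cmonf(x) - \tilde y$ on the discretized grid is proportional to $e^{-\Theta(\eps)\,d}$ in the grid distance $d$ from $\cmonf(x)$; coupling this with an independent, explicitly constructed Laplace-like $W$ (for instance a centered two-sided geometric variable) yields $Z_\eps := 2\tilde y - |x| + W - f(x)$ with mean zero and tail bound $\Pr[|Z_\eps|>k/\eps]\le e^{-bk}$. Because $W$ is drawn from a fixed distribution that depends only on the public parameters $(\eps,r,\beta)$, adding it is post-processing and does not affect privacy.

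The main technical obstacle is verifying both distributional requirements simultaneously---mean exactly zero and a clean exponential tail---and handling boundary effects when $\cmonf(x)$ is near the top or bottom of the grid where the Shifted Inverse distribution becomes one-sided. The theorem's hypothesis $r\ge \tfrac{c}{\eps}\ln\tfrac{r+1}{\beta}$ is precisely the buffer that guarantees $\Omega(\lambda)$ room on both sides of $\cmonf(x)$ in the discretized range, so that the output distribution of $\tilde y$ is essentially two-sided geometric with high probability. With this buffer, the correction $W$ needs only to cancel a deterministic, $f$-independent bias, which can be written in closed form from $\eps$ and the grid spacing, making the final tail analysis a standard exponential-tail computation.
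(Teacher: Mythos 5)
Your first half—forming the conditional monotonization $\cmonf$ and then applying a level-$\ell$ monotonization with a privately released, Laplace-noisy level $\ell$—is exactly the paper's ``double monotonization,'' so up to that point you are on the paper's track. The second half, however, has a genuine gap. The paper does \emph{not} feed the monotonized function to \ShI; instead it builds the offset functions $g_j(x)=\min_{z\in\DN_j(x)}\{g(z)-|z|+|x|-j\}$ and runs an exponential mechanism for the \emph{median} of the list $(g_0(x),\dots,g_\tau(x))$. The point is that when $f$ is Lipschitz these offsets are \emph{exactly} the evenly spaced values $g(x)-j$ (\Cref{lem:offsets-for-Lipschitz}), so the conditional output law of the median mechanism is symmetric about $g(x)-\tau/2$ \emph{for every} Lipschitz $f$, and privacy comes from the interleaving property of the offsets of a monotone function (score sensitivity $1$), not from the \ShI privacy lemma. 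Symmetry, hence mean-zero noise, is structural; no additive correction is needed.

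Your route cannot be repaired by the proposed ``symmetrizing'' variable $W$. The error $g(x)-\tilde y$ of \ShI is one-sided, and its distribution is \emph{not} a function of $\eps$ and the grid alone: the exponential-mechanism scores are built from the inverse loss $\ell^{g}(x,y)$, i.e.\ how many deletions are needed to drive $g$ below $y$, and this profile varies drastically across Lipschitz $f$. For instance, if $f(z)=|z|-|x|+f(x)$ near $x$ then $\cmonf$ is essentially constant on $\DN_\lambda(x)$ and \ShI returns (the grid point at) $g(x)$ with overwhelming probability, whereas if $f$ is constant then $\cmonf$ drops by $1/2$ per deletion and the error has genuine geometric spread. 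So the ``deterministic, $f$-independent bias in closed form'' does not exist, and adding any fixed $W$ leaves a noise law whose mean and shape depend on $f$, violating the theorem's distributional requirement. A secondary issue: the range $[0,r]$ is a continuous interval, so \ShI (which needs a finite range) forces a discretization of $f$'s values; the resulting rounding bias is again $f$-dependent and already breaks exact mean zero. Both problems are precisely what the paper's median-of-offsets construction (\Cref{alg:nice-noise}, \Cref{lem:offsets-for-Lipschitz}, \Cref{lem:offsets-interleaving}) is designed to avoid.
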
    

\Cref{alg:nice-noise}, used to prove \Cref{thm:GenShi-with-nice-noise}, first constructs a monotonized version of function $f$, then uses it to produce a list of ``offset'' values, and releases an approximation to the median ``offset'' value via the Exponential mechanism. It computes its final output by rescaling and shifting the released value. We start by explaining the monotonization transformation and the construction of the offset functions.

\subsection{Double-monotonization and Offset Functions and Their Properties}\label{sec:double-mono-and-offset}
We use (variants of) the two transformations that monotonize functions, presented in Definitions~\ref{def:monotonization} and \ref{def: cmonfl}. The transformation in \Cref{def:monotonization} monotonizes all functions. In contrast, the transformation in \Cref{def: cmonfl} monotonizes functions under the promise that they are Lipschitz, but it has the advantage of being invertible. To monotonize the black-box function, we consecutively apply both transformations. Given a function $f:\dom\to[0,\infty)$, we redefine $\cmonfl(x)=\tfrac{1}{2}(f(x)+|x|-\ell)$ (Note that this is the same as \Cref{def: cmonfl}, except in this section we do not need to ensure that $\inf(\text{range}(\cmonfl))=\inf(\text{range}(f))$), and that the level-$\ell$ monotonization operator $\monfl$ (\Cref{def:monotonization}) transforms a function $f':\dom\to[-\ell/2,\infty)$ to the function $\mon{f'}{\ell}:\dom\to[-\ell/2,\infty)$ defined by
$\mon{f'}{\ell}(x)=\max\big(\{f'(z): z\subseteq x, |z|\geq \ell\}\cup\{-\ell/2\}\big).$

\begin{definition}[Double-monotonization functions]\label{def:double-monotonization}
Fix a universe $\univ$ and  $\ell\in\N$. The {\em level-$\ell$ double-monotonization of function $f$} is the function 
 $f_\ell=\mon{\cmonfl}{\ell}.$
\end{definition}

The following properties of double-monotonization follow from the properties of the two transformations we use. We rely on them to analyze privacy, accuracy, and query complexity of \Cref{alg:nice-noise}.
\begin{observation}[Properties of double-monotonization]\label{observation:double-monotonization}
 For a level $\ell\in\N$ and a function $f:\dom\to\R$, let $f_\ell$ be the level-$\ell$ double-monotonization of $f$. Then the following properties hold:
 \begin{enumerate}
     \item\label{item1:double-mono} The function $f_\ell$ is monotone.
     \item\label{item2:double-mono} If, for some $x\in\dom$, function $f$ is Lipschitz on $\DN_{|x|-\ell}(x)$ then $f_\ell(x)=\tfrac{1}{2}(f(x)+|x|-\ell)$.
     \item\label{item3:double-mono} The value $f_\ell(x)$ can be computed by querying $f$ on all subsets of $x$ of size at least $\ell.$   \qedhere     
 \end{enumerate}
\end{observation}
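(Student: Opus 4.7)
The plan is to dispatch the three items in order as essentially bookkeeping consequences of prior definitions and lemmas; no new ideas are required.

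For Item~\ref{item1:double-mono}, the function $f_\ell = \mon{\cmonfl}{\ell}$ is obtained by applying the monotonization operator $\mon{\cdot}{\ell}$ to $\cmonfl$, and \Cref{observation:monotonization}\,(\ref{item1:mono}) states that the level-$\ell$ monotonization of any function is monotone. So monotonicity of $f_\ell$ follows immediately, with no use of the Lipschitz hypothesis.

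For Item~\ref{item2:double-mono}, first note that, by \eqref{eq:neighborhood-def}, $\DN_{|x|-\ell}(x) = \{z\subseteq x : |z|\geq \ell\}$, so the Lipschitz hypothesis on $f$ holds precisely on the collection of subsets appearing in the definition of $\mon{\cmonfl}{\ell}(x)$. I will then invoke \Cref{lem: cmonfl} with $\tau=|x|-\ell$ to conclude that $\cmonfl$ is monotone on $\DN_{|x|-\ell}(x)$. Since $x$ is the maximum element of this down-neighborhood under inclusion, monotonicity of $\cmonfl$ gives $\cmonfl(z)\leq \cmonfl(x) = \tfrac12(f(x)+|x|-\ell)$ for every $z$ in the max set. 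To rule out the boundary term $-\ell/2$, I will note that $f(x)\geq 0$ (since $f:\dom\to[0,\infty)$ in this section) and that $|x|\geq \ell$ (otherwise the nonvacuous case does not apply), so $\cmonfl(x)\geq -\ell/2$. Combining these, $f_\ell(x)=\max(\{\cmonfl(z):z\subseteq x,|z|\geq\ell\}\cup\{-\ell/2\})=\cmonfl(x)=\tfrac12(f(x)+|x|-\ell)$, as claimed.

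For Item~\ref{item3:double-mono}, I will simply unwind the definition of $f_\ell$: evaluating $f_\ell(x)$ requires the values $\cmonfl(z)=\tfrac12(f(z)+|z|-\ell)$ over all $z\subseteq x$ with $|z|\geq \ell$, together with the constant $-\ell/2$. Each $\cmonfl(z)$ is determined by a single query $f(z)$, and the set of such $z$ is exactly the collection of subsets of $x$ of size at least $\ell$. The sizes $|z|$ and the threshold $-\ell/2$ require no queries.

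The only subtlety to flag—and this is the closest thing to an obstacle—is the boundary check $\cmonfl(x)\geq -\ell/2$ in Item~\ref{item2:double-mono}, which depends on the range convention $f(x)\geq 0$ adopted in this section and on the natural side condition $|x|\geq \ell$; without the latter the stated formula cannot hold when $f(x)>0$. I will state this as a standing assumption consistent with the intended use of $f_\ell$ inside \Cref{alg:nice-noise}.
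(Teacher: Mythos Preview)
Your proposal is correct and follows essentially the same approach as the paper, which dispatches the three items via \Cref{observation:monotonization}(\ref{item1:mono}), \Cref{lem: cmonfl} together with \Cref{observation:monotonization}(\ref{item2:mono}), and \Cref{observation:monotonization}(\ref{item3:mono}), respectively. Your direct argument for Item~\ref{item2:double-mono}---using that $x$ is the top element of $\DN_{|x|-\ell}(x)$ and explicitly checking the boundary term $-\ell/2$---is in fact slightly more careful than the paper, which invokes \Cref{observation:monotonization}(\ref{item2:mono}) even though that item is stated for globally monotone functions while here $\cmonfl$ is only known to be monotone on the down-neighborhood.
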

\begin{proof}
\Cref{item1:double-mono} follows from the fact that $\mon{f'}{\ell}$ is monotone for all $f'$ (\Cref{item1:mono} of \Cref{observation:monotonization}). If the premise of  \Cref{item2:double-mono} holds, then by the proof of \Cref{lem: cmonfl}, function $\cmonfl$ is monotone and Lipschitz. By \Cref{item2:mono} of \Cref{observation:monotonization}, transformation $\mon{\cdot}{\ell}$ applied to a monotone Lipschitz function does not change the function. This implies \Cref{item2:double-mono}. \Cref{item3:double-mono} follows from \Cref{item3:mono} of \Cref{observation:monotonization}.
\end{proof}

Next, we define the offset functions for a  function $g$. 
\begin{definition}[Offset functions]\label{def:offset-functions}
 For each $j\in\N$, the {\em $j$-th offset} of a function $g:\dom\to\R$ is the function
 $$g_j(x)=\min_{z\in\DN_j(x)}\{g(z)-|z|+|x|-j\}.$$   
\end{definition}

We state and prove two important properties of the offsets of a function $g$. The first (\Cref{lem:offsets-for-Lipschitz}) is used for analyzing accuracy of \Cref{alg:nice-noise} and the second (\Cref{lem:offsets-interleaving}) is used for analyzing its privacy.
\begin{lemma}(Offset property for Lipschitz functions)\label{lem:offsets-for-Lipschitz}
Let $j\in\N, x\in\dom,$ and $g:\dom\to\R$ be a Lipschitz function on domain $\DN_j(x)$. Then
$g_j(x)=g(x)-j.$  
\end{lemma}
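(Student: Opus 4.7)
The plan is to show the minimum in the definition of $g_j(x)$ is achieved at $z=x$ and equals $g(x)-j$, using Lipschitzness of $g$ on $\DN_j(x)$ to lower-bound the expression for arbitrary $z$.

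First, I would establish the inequality $g_j(x)\geq g(x)-j$. Fix an arbitrary $z\in\DN_j(x)$, so $z\subseteq x$ and $|x\setminus z|=|x|-|z|\leq j$. Since $g$ is Lipschitz on $\DN_j(x)$, I can produce a chain of neighbors from $z$ to $x$ lying entirely in $\DN_j(x)$ (obtained by inserting the elements of $x\setminus z$ one at a time). Applying the Lipschitz condition along this chain and the triangle inequality yields $g(x)-g(z)\leq |x|-|z|$, which rearranges to
\[
g(z)-|z|+|x|-j \;\geq\; g(x)-j.
\]
Taking the minimum over $z\in\DN_j(x)$ gives $g_j(x)\geq g(x)-j$.

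For the matching upper bound, I would simply observe that $x\in\DN_j(x)$, and plugging $z=x$ into the defining expression for $g_j$ yields $g(x)-|x|+|x|-j=g(x)-j$. Combining the two bounds gives $g_j(x)=g(x)-j$, completing the proof.

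There is no real obstacle here: the only point requiring a word of care is the chain-of-neighbors argument guaranteeing that the Lipschitz hypothesis (stated on $\DN_j(x)$) actually yields $|g(x)-g(z)|\leq |x|-|z|$ for all $z\in\DN_j(x)$, since the intermediate sets on the insertion path from $z$ to $x$ all lie in $\DN_j(x)$.
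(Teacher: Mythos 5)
Your proposal is correct and follows essentially the same route as the paper's proof: the upper bound by plugging in $z=x$, and the lower bound from Lipschitzness giving $g(z)\geq g(x)-(|x|-|z|)$ for every $z\in\DN_j(x)$ (the paper states this inequality directly, with the chain-of-neighbors justification you spell out left implicit). No gaps.
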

\begin{proof}
First, we show that $g_j(x)\leq g(x)-j.$ Note that $x\in\DN_j(x)$. Thus, 
$g_j(x)\leq g(x) - |x| + |x|-j = g(x)-j.$

Now, we show that  $g_j(x)\geq g(x)-j.$ Consider a point $z\in\DN_j(x).$ Since $g$ is Lipschitz on $\DN_j(x)$, we get $g(z)\geq g(x)- (|x|-|z|).$
Consequently, $g(z)-|z|+|x|-j
\geq g(x)-j.$ This inequality holds for all $z\in\DN_j(x)$. Therefore, $g_j(x)\geq g(x)-j.$   
\end{proof}

\begin{lemma}[Interleaving property for monotone functions]\label{lem:offsets-interleaving}
Let $j\in\N$ and $g:\dom\to\R$ be a monotone function. Fix neighbors $x,y\in \dom$ such that  $x\subset y$. Then the offset functions satisfy
\begin{align}\label{eq:interleaving}
 g_{j+1}(y)\leq g_j(x)\leq g_j(y).   
\end{align}    
\end{lemma}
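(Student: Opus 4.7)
The plan is to prove the two inequalities in \eqref{eq:interleaving} separately by explicit combinatorial arguments on the down-neighborhoods, using monotonicity only for the first inequality. Let $i$ denote the unique element of $y \setminus x$, so $|y|=|x|+1$.

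For the right inequality $g_j(x)\leq g_j(y)$, I would show that for every witness $z\in\DN_j(y)$ achieving a small value in $g_j(y)$, there is a corresponding $z'\in\DN_j(x)$ whose contribution to the minimum defining $g_j(x)$ is no larger. I would split on whether $i\in z$. If $i\notin z$, then $z\subseteq x$ and $|x\setminus z|=|y|-1-|z|\le j-1$, so $z\in\DN_j(x)$ and I can compare the two offset expressions using $|x|=|y|-1$. If $i\in z$, I set $z'=z\setminus\{i\}\subseteq x$, observe $|x\setminus z'|=|y|-|z|\le j$ so $z'\in\DN_j(x)$, and use monotonicity of $g$ to get $g(z')\le g(z)$; the algebra $g(z')-|z'|+|x|-j=g(z')-|z|+|y|-j$ then yields the desired bound. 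Taking minima gives $g_j(x)\le g_j(y)$.

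For the left inequality $g_{j+1}(y)\leq g_j(x)$, no monotonicity is needed. For any $z\in\DN_j(x)$, since $z\subseteq x\subset y$, we get $|y\setminus z|=|x\setminus z|+1\le j+1$, so $z\in\DN_{j+1}(y)$. A direct calculation $g(z)-|z|+|y|-(j+1)=g(z)-|z|+|x|-j$ shows that the value of $z$ in the minimum defining $g_{j+1}(y)$ equals its value in the minimum defining $g_j(x)$. Taking the minimum over $z\in\DN_j(x)$ completes the bound.

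The main obstacle is purely bookkeeping: keeping the cardinalities straight between $|x|$ and $|y|=|x|+1$ and between depths $j$ and $j+1$, and making sure each candidate set is in the correct down-neighborhood. Monotonicity enters only in the case $i\in z$ of the right-hand inequality, because there we must replace a set $z$ not contained in $x$ by its restriction $z\setminus\{i\}$ and need $g$ to be non-increasing under this removal. There are no further subtleties, and no need to invoke any of the earlier machinery (e.g.\ stabilization or monotonization) beyond the definition of offsets in \Cref{def:offset-functions}.
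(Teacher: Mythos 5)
Your proof is correct and follows essentially the same route as the paper's: for $g_{j+1}(y)\le g_j(x)$ both arguments use $\DN_j(x)\subseteq\DN_{j+1}(y)$ with the cardinality shift cancelling exactly, and for $g_j(x)\le g_j(y)$ both map a witness $z\subseteq y$ to $z$ itself or to $z\setminus\{i\}$, invoking monotonicity only in the latter case (the paper phrases the case split as whether the minimizing $z_1$ lies in $\DN_j(x)$, which is equivalent to whether $i\in z_1$). The only cosmetic difference is that you compare contributions witness-by-witness rather than working with the argmin, which changes nothing substantive.
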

\begin{proof}
To prove the first inequality, let $z_0\in\DN_j(x)$ be the argmin of the expression in \Cref{def:offset-functions} that evaluates to $g_j(x)$, i.e., such that  $g_j(x)=g(z_0)-|z_0|+|x|-j.$
Since $\DN_j(x)\subset \DN_{j+1}(y)$, we get
\begin{align*}
    g_{j+1}(y)&\leq g(z_0) - |z_0|+|y|- (j+1)\\
    &=g(z_0) - |z_0|+|x|- j
    =g_j(x).
\end{align*}
To prove the second inequality, let $z_1\in\DN_j(y)$ be the argmin of the expression in \Cref{def:offset-functions} that evaluates to $g_j(y)$, i.e., such that  $g_j(y)=g(z_1)-|z_1|+|y|-j.$    
If $z_1\in\DN_j(x)$ then 
\begin{align*}
    g_j(x) \leq g(z_1)-|z_1|+|x|-j
     =g_j(y)-1\leq g_j(y).
\end{align*}
Now suppose $z_1\notin\DN_j(x)$. Then $\DN_j(x)$ contains a neighbor $z$ of $z_1$ such that $z\subset z_1$. Then
\begin{align*}
    g_j(x)&\leq g(z) -|z| +|x| -j\\
    &=g(z)- (|z_1|-1) + (|y|-1) - j\\
    &= g(z)-|z|+|y|-j\\
    &\leq g(z_1)-|z|+|y|-j= g_j(y),
\end{align*}
where the last inequality holds because $g$ is monotone.
\end{proof}

\subsection{Proof of \Cref{thm:GenShi-with-nice-noise}}\label{sec:proof-of-nice-noise}

We now turn to analyzing \Cref{alg:nice-noise}. The algorithm uses, as a subroutine, the exponential mechanism MedianExpMech for privately approximating the median of a dataset. Let $\cY$ be a public interval of finite length in which we think the median lies.  MedianExpMech mechanism uses the function $\text{score}(a; y)$ that takes a potential output $a\in \cY$ and a list of real numbers $y \in \R^*$. We define $\text{score}(a; y)$ as  the smallest number of data points in $y$ that need to be changed to get a dataset for which $a$ is the median. When run with privacy parameter $\eps_0$, range $\cY\subseteq \R$, and sensitive input $y$, the mechanism returns $a$ sampled from $\cY$ with probability density proportional to $\exp\left(- \frac{\eps_0}{2}\cdot \text{score}(a; y)\right)$. (This distribution is well defined since $\cY$ is an interval of finite length.)

\begin{algorithm}[H]
	\begin{algorithmic}[1]
		\caption{\label{alg:nice-noise} Double-monotonization Privacy Wrapper}
        \Statex \textbf{Parameters:}privacy parameter $\eps>0,$ failure probability $\beta\in(0,1)$, range parameter $r\geq \frac {16} \eps \ln \frac {4r} \beta$
	    \Statex \textbf{Input:} dataset $x\in\dom$  and query access to $f:\dom\to[0,r]$  
        \Statex \Comment{If the black-box for $f$ returns a value outside the range for some query, replace the answer with the closest value in $[0,r]$}
	    \Statex \textbf{Output:} $a\in \R$ 
        \State Set $\tau\gets \frac {16} \eps \ln \frac {4r} \beta$
        \State {\bf Release} $w\gets |x| + Z$ where $Z\sim\Lap(\frac 2\eps)$   
        \State $\ell\gets \lfloor w -\tau - \frac 2 \eps \ln \frac 2 \beta \rfloor$ 
        \State Let $g=\mon{\cmonfl}{\ell}$ (the level-$\ell$ double-monotonization of $f$). \Comment{See \Cref{def:double-monotonization}}
        \State {\bf for} $j=0$ {\bf to} $\tau$ \ {\bf do}  $y_j=g_j(x)$, where $g_i$ is the $j$-the offset function of $g$ \Comment{See \Cref{def:offset-functions}}
        \State {\bf Release} $a\gets$ MedianExpMech$(y_0,\dots,y_\tau)$ executed with privacy parameter $\eps_0=\frac \eps 2$ and range $[-\frac{3\tau}{2},\frac{r +  5\tau }{2} ]$ 
        \State {\bf Return}  $2a +\tau +\ell-w$ %
	\end{algorithmic}
 \end{algorithm}

\paragraph{Privacy.}
We first analyze privacy of \Cref{alg:nice-noise}. 
Step 2 uses Laplace mechanism to release $|x|$ with privacy parameter $\frac \eps 2$. Since $|x|$ is a Lipschitz function of $x$, \Cref{fact: laplace mechanism} guarantees that this step is $(\frac \eps 2,0)$-DP. 
By \Cref{item1:double-mono} of \Cref{observation:double-monotonization}, function $g=\mon{\cmonfl}{\ell}$ defined in Step 4 is monotone (for all functions $f$ and levels $\ell$).
Therefore, \Cref{lem:offsets-interleaving} guarantees that the offset functions $g_0,\dots, g_\tau$ satisfy the interleaving property (\eqref{eq:interleaving}) for neighboring datasets. 

Let  $x$ and $x'$ be neighboring data sets.  
Since \eqref{eq:interleaving} is satisfied, we have $|\text{score}(a;y)-\text{score}(a;y')|\leq 1$, where $y$ and $y'$ are the inputs to MedianExpMech corresponding to $x$ and $x'$, respectively.\footnote{The fact that the interleaving property suffices for the score to have low sensitivity is also at the heart of the privacy of the shifted inverse mechanism \cite{FangDY22} as well as the generalizations we present in \Cref{sec:generalized-shifted-inverse}.}  
Hence, the step which calls MedianExpMech is $(\eps_0,0)$-DP. By composition (\Cref{fact:composition}), \Cref{alg:nice-noise} is $(\eps,0)$-DP for all functions $f$. 

\paragraph{Locality.}
Now we discuss locality of \Cref{alg:nice-noise}. By \Cref{item3:double-mono} of \Cref{observation:double-monotonization}, to compute the value of $g(z)$ on any input $z$, it suffices to query $f$ on the subsets of $z$ of size at least $\ell$. To compute offset functions, $g$ needs to be evaluated only on subsets of $x$. Therefore, it is sufficient to query $f$ on subsets of $x$ of size at least $\ell.$ By the standard bounds on the Laplace distribution, $|Z|\leq \frac 2 \eps\ln\frac 2 \beta$ with probability at least $1-\frac\beta 2$. It follows that $w\in[|x|-\frac2\eps\ln\frac2\beta,|x|+\frac2\eps\ln\frac2\beta]$, and therefore, $\ell\leq |x|-\tau$ and $\ell\geq |x|-\tau-\frac 4\eps\ln\frac2\beta$. By hypothesis $\tau\geq\frac1\eps\ln\frac2\beta$, and hence $\ell\in[|x|-5\tau,|x|-\tau]$. 

\paragraph{Accuracy.}
Finally, we analyze the accuracy of \Cref{alg:nice-noise} for Lipschitz $f$. When $f$ is Lipschitz, $g=\mon{\cmonf}{\ell}$
is Lipschitz and monotone and $g(z)=\frac12(f(z)+|z|-\ell)$ for all $z$ by \Cref{lem: cmonfl}. Moreover, the $j$-th offset function of $g$, denoted $g_j$, satisfies $g_j(x)=g(x)-j$ for each $j\in\{0,\dots,\tau\}$ by \Cref{lem:offsets-for-Lipschitz}. This structure allows us to analyze the output of MedianExpMech. Notice that its input $g_0(x),\dots,g_\tau(x)$ is a list of evenly spaced points $g(x)-\tau, g(x)-\tau +1 ,..., g(x)$. The median of this list is exactly $g_{\frac \tau 2}(x)=g(x)-\frac\tau2$. We now show that the score function used by MedianExpMech is nicely behaved in the interval $[g(x)-\tau, g(x)]$ with high probability.

Consider the event, which we denote $G$, that $|Z|\leq\frac 2 \eps\ln\frac 2 \beta$. Event $G$ happens with probability at least $1-\beta/2$. 
By the argument above, event $G$ implies that $\ell\in[|x|-5\tau,|x|-\tau]$. We claim that, as a result,
$[g(x)-\tau,g(x)]\subseteq[-\frac{3\tau}2, \frac{r+5\tau}{2}]$. 
To see why this is, recall that $g(z)=\frac12(f(z)+|z|-\ell)$ (for all $z$) and that $f$ is bounded in the interval $[0,r]$. 
Hence, $g(x)\leq \frac12(r+|x|-|x|+5\tau)\leq \frac{r+5\tau}{2}$, and $g(x)\geq \frac12(|x|-\ell)\geq -\frac\tau 2$.

Thus, conditioned on $G$, MedianExpMech is run on an interval $\cY$ that contains $[g(x)-\tau, g(x)]$. 
Because the inputs to MedianExpMech are evenly spaced, the score of each $a\in[g(x)-\tau, g(x)]$ is then exactly $\lfloor|g(x)-\frac\tau2-a|\rfloor$.

Now, let $A \gets\text{MedianExpMech}(g_0(x),\dots,g_\tau(x))$. For all $a\in[g(x)-\tau, g(x)]$, the probability density of $A$, denoted $p_A$, conditioned on $G$, satisfies 
\[
p_A(a|y, G)=  c_y\cdot \exp\paren{-\tfrac{\eps_0}{2}\cdot \text{score}(a; y)} 
=
c_y\cdot \exp\left(-\tfrac{\eps_0}{2}\cdot 
\Big \lfloor |g(x)-\tfrac\tau2-a| \Big \rfloor\right),
\]
where $c_y$ is a normalizing constant. 
Let $F$ be the event that $A \in [g(x)-\tau, g(x)]$. Conditioned on $F$ and $G$, the distribution $p_A$ is symmetric about $g(x)-\frac\tau2$ and has an exponentially decaying tail.  
Since $g(x)=\frac12(f(x)+|x|-\ell)$, the value $2A+\tau+\ell$ is symmetric about $f(x)+|x|$ and has similar tail behavior to $A$. Since $w\sim|x|+\Lap(\frac2\eps)$, the algorithm's final output $2A+\tau+\ell + w$ is symmetric about $f(x)$ and also has an exponentially decaying tail with scale $O\bparen{\frac 1 \eps}$.

To prove \Cref{thm:GenShi-with-nice-noise}, it remains to show that the probability of $F \cap G$ is at least $1-\beta$. We do this by showing that $\Pr[F|G]\geq 1-\frac \beta 2$.

For $a$ outside $[g(x)-\tau, g(x)]$, the score $\text{score}(a;y)$ is $\frac \tau 2$, and thus the probability density of $A$ is $c_y\cdot \exp(-\frac{\eps_0}{2} \cdot \frac{\tau}{2})$. 
MedianExpMech is run on a range of length $\frac {1}{2}r + 4\tau$, which means that there is a region of total length $\frac {1}{2}r + 4\tau - \tau = \frac {1}{2}r + 3\tau $ where the score is $\frac \tau 2$. Hence, 
\begin{align*}
\Pr[F|G]
&=\Pr[\text{score}(a;y) = \tfrac \tau 2|G]
\leq\frac{\Pr[\text{score}(a;y) = \frac \tau 2|G]}{\Pr[\text{score}(a;y) \leq 1|G]}\\
&\leq\frac{c_y (\frac 1 2 r + 3\tau) \cdot \exp(-\frac{\eps_0}{2} \cdot \frac{\tau}{2})}{c_y \cdot 2 \cdot \exp(-\frac{\eps_0}{2} \cdot 1) }
\leq 2r\cdot\exp\bparen{-\frac{\eps_0\cdot\tau}{8}},
\end{align*}
where the second inequality holds because we conditioned on $G$---which means the region with score at most one consists of an interval of length 2 centered at $g(x)-\frac \tau 2$---and the last inequality holds because $\tau\leq r$ and $\tau-2\eps\geq\frac\tau2$. Since $\tau=\frac {16}\eps\ln\frac{4r}{\beta}$ and $\eps_0=\frac\eps 2$, we obtain
$\Pr[\text{score}(a;y)\geq \frac \tau 2|G]
\leq \frac\beta 2.$ 

Finally, by the law of total probability and the fact that $\Pr[\overline G]\leq \frac\beta 2$, we see that $F\cap G$ has probability at least $1-\beta$. Conditioned on this event the output of \Cref{alg:nice-noise} has the desired distributional properties.\qed

\section{Relation to Resilience \cite{SteinhardtCV18}}
\label{sec:resilience}

Steinhardt, Charikar and Valiant introduce the notion of \emph{resilience} \cite[Definition 1]{SteinhardtCV18}, in the context of mean estimation, to the capture the idea of robustness to deletions only (as opposed to insertions).  Generalizing from means to arbitrary real-valued\footnote{Steinhardt et al.\ consider vector-valued functions; we focus on real-valued statistics here.} function $f$, given $\sigma>0$, and $\phi\in[0,1]$, we say an input $x \in \dom$ is \emph{$(\sigma,\phi)$-resilient with respect to function $f$} if the radius of $f\bparen{\DN_{\phi|x|}(x)}$ is at most $\sigma$—that is, there exists a value $\mu$ such that $|f(z) -\mu|\leq \sigma$ for all $z\subseteq x$ of size at least $(1-\phi)|x|$. This concept is closely tied to down sensitivity:  
\begin{quote}
If $x$ is $(\sigma,\phi)$-resilient with respect to the function $f$, then  $DS^f_{\phi |x|}(x) \leq 2\sigma$. Conversely, if $DS^f_{\lambda}(x) \leq \sigma$, then $x$ is $\bparen{\sigma,\frac\lambda{|x|}}$-resilient.     
\end{quote}

We can thus translate one of our results directly into the language of ``resilience": let $\cW= \cW_{\eps,\beta,R}$ be the \ASDalgshort\ privacy wrapper of \Cref{thm:generalized-shifted-inverse-mechanism} and let $\lambda= \lambda(\eps, \beta, R)$ be its down locality. For every function $f:\dom\to [R]$ and  input $x$, if $x$ is $\bparen{\sigma, \frac{\lambda}{n}}$-resilient with respect to $f$, then $\cW^f(x) \in f(x)\pm\sigma$.  

\paragraph{Using the Steinhardt et al.\ transformation.} 

Steinhardt et al.\ observe that, given any function $f$, one can transform it into a new function that is robust to both insertions and deletions whenever its input $x$ is robust to deletion.

We present a quantitatively precise version of their result here, for completeness. Given $f$ and positive parameters $\sigma, \lambda_1,\lambda_2$, with $\lambda_2\geq \lambda_1$, let $f_{\sigma,\lambda_1,\lambda_2}$ denote any function of the following form: on input $x$, if there exist a set $y\subset x$ of size at least $|x|-\lambda_1$ 
 that is $(\sigma,\frac{\lambda_2}{n})$-resilient with respect to $f$, then pick one such $y$ of the largest possible size and return a center\footnote{That is, a value $\mu$ that minimizes $\max{|f(z)-\mu|: z \in \DN_{\lambda_2}(y)}$.} $\mu$ for $f\bparen{\DN_{\lambda_2}(y)}$; if no such $y$ exists, return a special $\bot$ (undefined) value. 

\begin{lemma}[Robustness of $f_{\sigma,\lambda_1,\lambda_2}$]
    For every function $f$, parameters $\sigma,\lambda_1,\lambda_2$ with $\lambda_1\leq \lambda_2$, and inputs $x$ and $y$: 
    \begin{enumerate}
        \item %
        If $f_{\sigma,\lambda_1,\lambda_2}$ is defined (not $\bot$) on both $x$ and $y$, and $\Delta(x,y)\leq \lambda_2-\lambda_1$, then $\big|f_{\sigma,\lambda_1,\lambda_2}(x)-f_{\sigma,\lambda_1,\lambda_2}(y)\big|\leq 2\sigma$. 
        \item %
        If  $x$ is $\bparen{\sigma, \frac {\lambda_1+\lambda_2}{|x|}}$-resilient and $\Delta(x,y)\leq \lambda_1$, then $\big|f_{\sigma,\lambda_1,\lambda_2}(x)-f_{\sigma,\lambda_1,\lambda_2}(y)\big|\leq 4\sigma$. 
    \end{enumerate}
\end{lemma}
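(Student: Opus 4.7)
The plan is to prove both parts by exhibiting a witness $c \in \dom$ whose value $f(c)$ is within $\sigma$ of both claimed centers. Let $a$ and $b$ denote the subsets selected by the algorithm on inputs $x$ and $y$ respectively, with centers $\mu_a = f_{\sigma,\lambda_1,\lambda_2}(x)$ and $\mu_b = f_{\sigma,\lambda_1,\lambda_2}(y)$. By construction, $|a| \geq |x|-\lambda_1$, $|b| \geq |y|-\lambda_1$, and $|f(z) - \mu_a| \leq \sigma$ for every $z \in \DN_{\lambda_2}(a)$, with an analogous statement for $\mu_b$.

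For part 1, I take $c = a \cap b$. An element of $a \setminus b$ either lies in $x \setminus y$ (of size at most $\lambda_2 - \lambda_1$ by the hypothesis on $\Delta(x,y)$) or in $y \setminus b$ (of size at most $\lambda_1$ by the construction of $b$), so $|a \setminus c| = |a \setminus b| \leq (\lambda_2 - \lambda_1) + \lambda_1 = \lambda_2$; hence $c \in \DN_{\lambda_2}(a)$, and symmetrically $c \in \DN_{\lambda_2}(b)$. The triangle inequality then yields $|\mu_a - \mu_b| \leq |\mu_a - f(c)| + |f(c) - \mu_b| \leq 2\sigma$.

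For part 2, the strengthened resilience of $x$ implies that $x$ itself is a valid—and necessarily maximal—choice on input $x$, so $a = x$, and $y \cap x$ is a valid choice on input $y$, so $f_{\sigma,\lambda_1,\lambda_2}(y)$ is defined. Let $\mu^{\star}$ be a center certifying $(\sigma,\tfrac{\lambda_1+\lambda_2}{|x|})$-resilience of $x$, so $|f(z) - \mu^{\star}| \leq \sigma$ for every $z \in \DN_{\lambda_1+\lambda_2}(x)$. Choosing $z_0 = x \in \DN_{\lambda_2}(x) \subseteq \DN_{\lambda_1+\lambda_2}(x)$ and applying the triangle inequality gives $|\mu_a - \mu^{\star}| \leq 2\sigma$. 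To compare $\mu_b$ to $\mu^{\star}$, I take the witness $c = b \cap x$: since $|b \setminus c| = |b \setminus x| \leq |y \setminus x| \leq \lambda_1 \leq \lambda_2$, we have $c \in \DN_{\lambda_2}(b)$; and since $|x \setminus c| = |x \setminus b| \leq |x \setminus y| + |y \setminus b| \leq 2\lambda_1 \leq \lambda_1 + \lambda_2$, we have $c \in \DN_{\lambda_1+\lambda_2}(x)$. Hence $|\mu_b - \mu^{\star}| \leq 2\sigma$, and $|\mu_a - \mu_b| \leq 4\sigma$ by the triangle inequality.

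The main obstacle is the bookkeeping of neighborhood budgets when intersecting sets drawn from the two inputs: each ``cost'' must be split between a $\lambda_1$ term (from one of the selected subsets, or from $\Delta(x,y)$ in part 2) and a $\lambda_2$ term. The inequality $\lambda_1 \leq \lambda_2$ is used precisely to absorb the $2\lambda_1$ term in part 2 into the $\lambda_1 + \lambda_2$ budget of the strengthened resilience hypothesis. The notational ambiguity of ``$n$'' in $(\sigma,\lambda_2/n)$-resilience is immaterial to the argument, since the proof only needs that $f\bparen{\DN_{\lambda_2}(\cdot)}$ has radius at most $\sigma$ around the center returned by the algorithm.
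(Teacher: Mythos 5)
Your proof is correct and follows essentially the same route as the paper's: both parts hinge on intersecting the selected resilient subsets and charging the removed elements against the $\lambda_2$ (respectively $\lambda_1+\lambda_2$) budgets, exactly as in the paper. The only cosmetic difference is in Part 2, where you use the resilience center $\mu^{\star}$ of $x$ as a hub (two $2\sigma$ bounds), while the paper chains $\mu_x \to f(x) \to f(x\cap b) \to \mu_b$; the ingredients and the final $4\sigma$ bound are identical.
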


\begin{proof}
    \begin{enumerate}
    \item Let $a$ and $b$ be the subsets of $x$ and $y$ such that $f_{\sigma,\lambda_1,\lambda_2}(x)=\mu_a$ and $f_{\sigma,\lambda_1,\lambda_2}(x)=\mu_b$, where $\mu_a$ is a center of $\DN_{\lambda_2}(a)$ and similarly for $\mu_b$. We first argue that $a \cap b$ differs by less than $\lambda_2$ from both  $a$ and $b$. To see why this is, note that $a' = a \cap y = a \cap y \cap x$ satisfies $|a \setminus a'| = |x \setminus (x\cap y)| \leq \Delta(x,y)$. Also, $a' \setminus (a \cap b) = | (a \cap y) \setminus (a \cap y \cap b)| \leq |y \setminus b| \leq \lambda_1$. Thus $|a \setminus (a \cap b)|\leq \Delta(x,y) + \lambda_1 \leq \lambda_2$. 

    A symmetric argument shows that $|b \setminus (a\cap b)|\leq \lambda_2$. By the construction of $f_{\sigma,\lambda_1,\lambda_2}$, the sets $a$ and $b$ are both resilient with respect to $f$. Hence,
    both $\mu_a$ and $\mu_b$ are within $\sigma$ of $f(a\cap b)$, and thus within $2\sigma$ of each other. 

    \item The function $f_{\sigma,\lambda_1,\lambda_2}(x)$ is defined on $x$ since $x$ is resilient; let $\mu_x$ be the corresponding center. To see why $f_{\sigma,\lambda_1,\lambda_2}$ is defined  on $y$, note that the set $x\cap y$ has size at least $|x|-\lambda_1$ and so its $\lambda_2$-down neighborhood is contained in the $(\lambda_1+\lambda_2)$-down neighborhood of $x$. The resiliency of $x$ implies sufficient resiliency for $x\cap y$ to make $f_{\sigma,\lambda_1,\lambda_2}(y)$ defined.

    Let $z$ be the resilient subset of $y$ selected by $f_{\sigma,\lambda_1,\lambda_2}$, and let $\mu_z= f_{\sigma,\lambda_1,\lambda_2}(y)$ be the corresponding center. The set $z\cap x$ is a common subset of both $z$ and $x$ with size at least $|z|  - \Delta(x,y) \geq |z| - \lambda_1$. Therefore, the value $f(z\cap x)$ is within $\sigma$ of $\mu_z$.

    Bounding the distance to $\mu_x$ is a bit delicate, since $\mu_x$ is only known to be a center of $f\bparen{\DN_{\lambda_2}(x)}$, which may not include $z\cap x$. However, by assumption, the radius of the larger set $f\bparen{\DN_{\lambda_1+\lambda_2}(x)}$, which does include $z\cap x$ (which has size at least $|x|-2\lambda_1$), is also bounded by $\sigma$. Thus,
    $|\mu_x-f(x\cap z)|\leq |\mu_x-f(x)| + |f(x) - f(x\cap z)| \leq 3\sigma$, and $\mu_x-\mu_z$ is at most $4\sigma$.
    \end{enumerate}
\end{proof}

\paragraph{Combining with Resilient-to-Robust with Robust-to-Private Transformations}
One could combine this transformation with the robust-to-private transformation of Asi, Ullman, and Zakynthinou~\cite[Theorem 3.1]{AsiUZ23}---who show that any function with bounded local sensitivity (as opposed to down sensitivity) can be made differentially private---to get a nonconstructive result with a similar flavor to that of \Cref{thm:generalized-shifted-inverse-mechanism}. However, the resulting object is weaker in several respects: (a) it requires the sensitivity bound $\sigma$ as an analyst-specified parameter (as opposed to discovering it automatically, and (b)
it is not obviously local (or even computable), since it requires, in principle, considering $f(z)$ for all inputs $z$ at some positive depth $\lambda$ from the input $x$, including all supersets of $x$ of size up to $|x|+\lambda$. It also entails  weaker accuracy guarantees (by a factor of $\lambda$) than those of \Cref{thm:generalized-shifted-inverse-mechanism} in the Lipschitz setting.

\fi %

\end{document}